\titleformat{\section}[block]{\filcenter\normalfont\bfseries\large}{\thesection.}{.5em}{}\titlespacing*{\section}{0pt}{2\baselineskip}{1\baselineskip}
\titleformat{\subsection}[runin]{\normalfont\bfseries}{\thesubsection.}{.4em}{}[.]\titlespacing{\subsection}{0pt}{2ex plus .1ex minus .2ex}{.8em}
\titleformat{\subsubsection}[runin]{\normalfont\itshape}{\thesubsubsection.}{.3em}{}[.]\titlespacing{\subsubsection}{0pt}{1ex plus .1ex minus .2ex}{.5em}
\titleformat{\paragraph}[runin]{\normalfont\itshape}{\theparagraph.}{.3em}{}[.]\titlespacing{\paragraph}{0pt}{1ex plus .1ex minus .2ex}{.5em}
\let\originalleft\left
\let\originalright\right
\renewcommand{\left}{\mathopen{}\mathclose\bgroup\originalleft}
\renewcommand{\right}{\aftergroup\egroup\originalright}
\definecolor{darkred}{rgb}{0.9,0,0.3}
\definecolor{darkblue}{rgb}{0,0.3,0.9}
\definecolor{vdarkred}{rgb}{0.6,0,0.2}
\definecolor{vdarkblue}{rgb}{0,0.2,0.6}
\numberwithin{equation}{section}
\numberwithin{figure}{section}
\theoremstyle{plain} 
\newtheorem{theorem}{Theorem}[section]
\newtheorem*{theorem*}{Theorem}
\newtheorem{lemma}[theorem]{Lemma}
\newtheorem*{lemma*}{Lemma}
\newtheorem{corollary}[theorem]{Corollary}
\newtheorem*{corollary*}{Corollary}
\newtheorem{proposition}[theorem]{Proposition}
\newtheorem*{proposition*}{Proposition}
\newtheorem*{conjecture*}{Conjecture}
\theoremstyle{definition} 
\newtheorem{definition}[theorem]{Definition}
\newtheorem*{definition*}{Definition}
\newtheorem*{example*}{Example}
\newtheorem{remark}[theorem]{Remark}
\newtheorem*{remark*}{Remark}
\newtheorem{assumption}[theorem]{Assumption}
\newtheorem*{assumption*}{Assumption}
\newcommand{\f}[1]{\boldsymbol{\mathrm{#1}}} 
\newcommand{\bb}{\mathbb} 
\renewcommand{\cal}{\mathcal} 
\newcommand{\fra}{\mathfrak} 
\newcommand{\ol}[1]{\overline{#1} \!\,} 
\newcommand{\wh}{\widehat}
\newcommand{\wt}{\widetilde}
\newcommand{\op}{\operatorname}
\renewcommand{\P}{\mathbb{P}}
\newcommand{\E}{\mathbb{E}}
\newcommand{\R}{\mathbb{R}}
\newcommand{\C}{\mathbb{C}}
\newcommand{\N}{\mathbb{N}}
\newcommand{\Z}{\mathbb{Z}}
\newcommand{\ee}{\mathrm e}
\newcommand{\ii}{\mathrm i}
\newcommand{\dd}{\mathrm d}
\newcommand{\col}{\vcentcolon}
\newcommand*{\deq}{\mathrel{\vcenter{\baselineskip0.65ex \lineskiplimit0pt \hbox{.}\hbox{.}}}=}
\newcommand*{\eqd}{=\mathrel{\vcenter{\baselineskip0.65ex \lineskiplimit0pt \hbox{.}\hbox{.}}}}
\renewcommand{\leq}{\leqslant}
\renewcommand{\geq}{\geqslant}
\renewcommand{\epsilon}{\varepsilon}
\newcommand{\oo}{\mathrm o}
\newcommand{\floor}[1] {\lfloor #1 \rfloor}
\newcommand{\ceil}[1]  {\lceil  #1 \rceil}
\newcommand{\ind}[1]{\f 1_{#1}}
\newcommand{\p}[1]{(#1)}
\newcommand{\pb}[1]{\bigl(#1\bigr)}
\newcommand{\pB}[1]{\Bigl(#1\Bigr)}
\newcommand{\pbb}[1]{\biggl(#1\biggr)}
\newcommand{\pBB}[1]{\Biggl(#1\Biggr)}
\newcommand{\q}[1]{[#1]}
\newcommand{\qbb}[1]{\biggl[#1\biggr]}
\newcommand{\qBB}[1]{\Biggl[#1\Biggr]}
\newcommand{\h}[1]{\{#1\}}
\newcommand{\hb}[1]{\bigl\{#1\bigr\}}
\newcommand{\hbb}[1]{\biggl\{#1\biggr\}}
\newcommand{\hBB}[1]{\Biggl\{#1\Biggr\}}
\newcommand{\abs}[1]{\lvert #1 \rvert}
\newcommand{\absb}[1]{\bigl\lvert #1 \bigr\rvert}
\newcommand{\absB}[1]{\Bigl\lvert #1 \Bigr\rvert}
\newcommand{\absbb}[1]{\biggl\lvert #1 \biggr\rvert}
\newcommand{\absBB}[1]{\Biggl\lvert #1 \Biggr\rvert}
\newcommand{\norm}[1]{\lVert #1 \rVert}
\newcommand{\normb}[1]{\bigl\lVert #1 \bigr\rVert}
\newcommand{\normbb}[1]{\biggl\lVert #1 \biggr\rVert}
\newcommand{\ang}[1]{\langle #1 \rangle}
\newcommand{\scalar}[2]{\langle#1 \mspace{2mu}, #2\rangle}
\newcommand{\scalarb}[2]{\bigl\langle#1 \mspace{2mu}, #2\bigr\rangle}
\newcommand{\scalarbb}[2]{\biggl\langle#1 \,\mspace{2mu},\, #2\biggr\rangle}
\DeclareMathOperator{\diag}{diag}
\DeclareMathOperator{\tr}{tr}
\DeclareMathOperator{\Tr}{Tr}
\DeclareMathOperator{\re}{Re}
\DeclareMathOperator{\spec}{spec}
\newcommand{\wick}[1]{{\col\!#1\!\col}}
\newcommand{\ops}{b}
\title{The mean-field limit of quantum Bose gases \\ at positive temperature}
\author{	
J\"urg Fr\"ohlich
\and Antti Knowles
\and Benjamin Schlein
\and Vedran Sohinger
}
\begin{document}

\maketitle

\begin{abstract}
We prove that the grand canonical Gibbs state of an interacting quantum Bose gas converges to the Gibbs measure of a nonlinear Schr\"odinger equation in the mean-field limit, where the density of the gas becomes large and the interaction strength is proportional to the inverse density. Our results hold in dimensions $d \leq 3$. For $d > 1$ the Gibbs measure is supported on distributions of negative regularity and we have to renormalize the interaction.
More precisely, we prove the convergence of the relative partition function and of the reduced density matrices in the $L^r$-norm with optimal exponent $r$. Moreover, we prove the convergence in the $L^\infty$-norm of Wick-ordered reduced density matrices, which allows us to control correlations of Wick-ordered particle densities as well as the asymptotic distribution of the particle number. Our proof is based on a functional integral representation of the grand canonical Gibbs state, in which convergence to the mean-field limit follows formally from an infinite-dimensional stationary phase argument for ill-defined non-Gaussian measures. We make this argument rigorous by introducing a white-noise-type auxiliary field, through which the functional integral is expressed in terms of propagators of heat equations driven by time-dependent periodic random potentials and can, in turn, be expressed as a gas of interacting Brownian loops and paths. When the gas is confined by an external trapping potential, we control the decay of the reduced density matrices using excursion probabilities of Brownian bridges.
\end{abstract}

\tableofcontents

\section{Introduction}

\subsection{Overview}
A quantum system of $n$ spinless bosons of mass $m$ in a region $\Lambda$ of physical space $\R^d$ is described by its Hamiltonian
\begin{equation*}
\bb H_n \deq - \sum_{i = 1}^n \frac{\Delta_i}{2m} + \frac{\lambda}{2} \sum_{i,j = 1}^n v(x_i - x_j)
\end{equation*}
acting on the space $\cal H_n$ of square-integrable wave functions that are symmetric in their arguments $x_1, \dots, x_n$ and supported in $\Lambda^n$. Here $\Delta_i$ is the Laplacian in the variable $x_i$ (with appropriate boundary conditions), $\lambda$ is a coupling constant, and $v$ is the two-body interaction potential. For conciseness, in this overview we assume that $\Lambda$ is a torus. We describe the system in the grand canonical ensemble at positive temperature, characterized by the density matrix
\begin{equation} \label{gc_density}
\frac{1}{Z} \bigoplus_{n \in \N} \ee^{-\beta(\bb H_n - \mu n)}
\end{equation}
acting on Fock space $\cal F = \bigoplus_{n \in \N} \cal H_n$, where $\beta < \infty$ is the inverse temperature, $\mu$ is the chemical potential, and $Z$ is a normalization factor.

In the grand canonical ensemble \eqref{gc_density}, the parameters characterizing the state of the system are the mass of the particles $m$, the coupling constant $\lambda$, the inverse temperature $\beta$, and the chemical potential $\mu$. Henceforth, we set $\beta = 1$ because it can be absorbed into the three remaining parameters. Moreover, we replace the parameters $m$ and $\mu$ with the new parameters $\nu \deq 1/m$ and $\kappa \deq -\mu m$, so that the grand canonical ensemble \eqref{gc_density} can be written as
\begin{equation} \label{H_gc}
\frac{1}{Z} \bigoplus_{n \in \N} \ee^{-H_n}\,,
\end{equation}
where
\begin{equation}
\label{Hamiltonian_H_n}
H_n
\deq \nu \sum_{i = 1}^n h_i + \frac{\lambda}{2} \sum_{i,j = 1}^n v(x_i - x_j)\,,
\end{equation}
and $h_i$ denotes the one-particle Hamiltonian $h = -\Delta / 2 + \kappa$ in the variable $x_i$. (As we shall see, in dimensions $d > 1$ it is necessary to renormalize the interaction, which amounts to a shift in the value of $\kappa$.)

The goal of this paper is to analyse the \emph{mean-field} limit $\nu \rightarrow 0$, with $\lambda = \nu^2$, of the grand canonical ensemble \eqref{H_gc} in dimensions $d = 1,2,3$. Physically, the mean-field limit $\nu \to 0$ corresponds to a high-density limit where the mass $m$ (or the temperature $1/\beta$) tends to infinity. We refer to the end of this subsection for a more detailed physical interpretation of our choices of parameters and of the mean-field limit. The reason for choosing to scale $\lambda = \nu^2$ is apparent already for $d=1$, where the expected number of particles is proportional to $1/\nu$ (this is an immediate consequence of Theorem \ref{thm:L2} below); in order to obtain an interaction energy in \eqref{Hamiltonian_H_n} that is comparable to the kinetic energy, we therefore require $\lambda \propto 1/n^2 \propto \nu^2$.

The grand canonical ensemble \eqref{H_gc} can be conveniently written using bosonic creation and annihilation operators $a^*(x)$ and $a(x)$ (see \eqref{def_b2} and \eqref{def_b1} below) as $\frac{1}{Z} \ee^{-H}$, where $H$ denotes the Hamiltonian on Fock space $\cal F$ given by
\begin{equation} \label{H_intro}
H = \bigoplus_{n \in \N} H_n = \nu \int_\Lambda \dd x \, a^*(x) (h a) (x)  + \frac{\lambda}{2} \int_\Lambda \dd x \, \dd \tilde x \, a^*(x) a(x) \, v(x - \tilde x) \, a^*(\tilde x) a (\tilde x)\,.
\end{equation}
Under the mean-field scaling $\lambda = \nu^2$, $H$ can be written in terms of the \emph{rescaled creation and annihilation operators} $a_\nu^\# \deq \sqrt{\nu} a^\#$ as
\begin{equation} \label{H_nu}
H = \int_\Lambda \dd x \, a^*_\nu(x) (h a_\nu) (x)  + \frac{1}{2} \int_\Lambda \dd x \, \dd \tilde x \, a_\nu^*(x) a_\nu(x)\,v(x - \tilde x)\, a_\nu^*(\tilde x) a_\nu (\tilde x)\,.
\end{equation}
The rescaled creation and annihilation operators satisfy the canonical commutation relations
\begin{equation} \label{CCR_intro}
[a_\nu(x), a_\nu^*(\tilde x)] = \nu \delta(x - \tilde x) \,, \qquad [a_\nu(x), a_\nu(\tilde x)] = [a_\nu^*(x), a_\nu^*(\tilde x)] = 0\,.
\end{equation}
Thus, the parameter $\nu$ appears as a \emph{semiclassical parameter} of our theory, playing the role of Planck's constant $\hbar$ in semiclassical analysis. Here, the physical interpretation of $\nu$ is the inverse temperature or the inverse mass of the particles.

In this semiclassical picture, we expect the quantum many-body theory to converge to a classical field theory in the limit $\nu \to 0$, the latter being obtained by replacing the quantum field $a$ by a commuting classical field $\phi$. The commutators \eqref{CCR_intro} then become Poisson brackets
\begin{equation} \label{Poisson}
\{\phi(x),\bar \phi(\tilde x)\}=\ii \delta(x-\tilde x)\,,\quad \{\phi(x),\phi(\tilde x)\}=\{\bar \phi(x),\bar \phi(\tilde x)\}=0\,.
\end{equation}
The theory of the classical field $\phi$ is governed by the \emph{classical action}\footnote{In the present context, $s$ is actually a classical Hamilton function defined on an affine complex phase space; it is sometimes called the Hartree energy functional. We call it \emph{action} because that is what it is called in Euclidean field theory; see e.g.\ \cite{glimm2012quantum, Simon74}.}
\begin{equation} \label{intro_interaction}
s(\phi) \deq \int_\Lambda \dd x \, \bar \phi(x) (h \phi)(x) + \frac{1}{2} \int_\Lambda \dd x \, \dd \tilde x \, \abs{\phi(x)}^2 \, v(x - \tilde x) \, \abs{\phi(\tilde x)}^2\,,
\end{equation}
which is the classical analogue of the Hamiltonian \eqref{H_nu}. At positive temperature, the equilibrium state of the classical field is (formally) described by the probability measure
\begin{equation} \label{intro_field}
\frac{1}{c} \ee^{- s(\phi)} \, \mathrm D \phi\,,
\end{equation}
where $\mathrm D \phi$ is the formal uniform measure $\prod_{x \in \Lambda} \dd \phi(x)$ (with $\dd \phi(x)$ denoting the Lebesgue measure on $\C$) and $c$ is a normalization constant. 
Moreover, the equation of motion generated by the Hamilton function \eqref{intro_interaction}, together with the Poisson bracket \eqref{Poisson}, is the \emph{nonlinear Schr\"{o}dinger (NLS)} equation 
\begin{equation} \label{NLS}
\ii \partial_t \phi (x)= (h \phi) (x)+\int_\Lambda \dd \tilde x\,|\phi(\tilde x)|^2\,v(x-\tilde x)\,\phi(x)\,.
\end{equation}
Formally, the measure \eqref{intro_field} is invariant under the Hamiltonian flow generated by \eqref{NLS}.

Hence, we expect the mean-field limit of an interacting Bose gas at positive temperature to be governed by a \emph{scalar Euclidean field theory} with an action of the form \eqref{intro_interaction}, corresponding to a quartic self-interaction. The construction of measures of the form \eqref{intro_field} is a central topic in many areas of mathematics. We list three main ones, the first two of which are closely related.
\begin{enumerate}[label=(\roman*)]
\item
\emph{Constructive quantum field theory.} When formulated in the Euclidean region, the construction of scalar quantum field models amounts to constructing non-Gaussian measures on distribution spaces; see the classic works \cite{glimm2012quantum,nelson1973free,Simon74} and references given there. The mathematical methods developed in this context have inspired some of the constructions used in the present paper.

\item
\emph{Stochastic nonlinear partial differential equations.} A measure of the form \eqref{intro_field} is the invariant measure of a nonlinear heat equation driven by space-time white noise, which can be regarded as the Langevin equation for a time-dependent field $\phi$ with potential given by the action \eqref{intro_interaction}. For \eqref{intro_field}, this is the dynamical $\phi^4_d$ model with nonlocal interaction. Constructing measures of the form \eqref{intro_field} by exhibiting them as invariant measures of stochastic nonlinear partial differential equations is the goal of stochastic quantization \cite{parisi1981perturbation, lebowitz1988statistical, nelson1966derivation}. See e.g.\ \cite{hairer2014theory,gubinelli2015paracontrolled, kupiainen2016renormalization,da2003strong} for recent developments.

\item
\emph{Probabilistic Cauchy theory of nonlinear dispersive equations.} Gibbs measures for the NLS have proven to be a powerful tool for constructing almost sure global solutions with random initial data of low regularity. One considers the flow of the NLS \eqref{NLS} with random initial data distributed according to \eqref{intro_field}. The invariance of the measure \eqref{intro_field} under the NLS flow (in low dimensions) serves as a substitute for energy conservation, which is not available owing to the low regularity of the solutions.
See for instance the seminal works \cite{bourgain1994periodic,bourgain1994_z,bourgain1996_2d,bourgain1997invariant,bourgain2000_infinite_volume,lebowitz1988statistical} as well as \cite{Carlen_Froehlich_Lebowitz_2016, Carlen_Froehlich_Lebowitz_Wang_2019,GLV1,GLV2,McKean_Vaninsky2,NORBS,NRBSS,BourgainBulut4,BrydgesSlade,BurqThomannTzvetkov,FOSW,Thomann_Tzvetkov} and references given there for later developments.

\end{enumerate}
The main difficulty in all of these works, which we also face in this paper, is that, in dimensions larger than one, the field $\phi$ is almost surely a distribution of negative regularity, and hence the interaction term in \eqref{intro_interaction} is ill-defined. This is an \emph{ultraviolet} problem: a divergence for large momenta producing small-scale singularities in the field.
The solution to this problem is to \emph{renormalize} the interaction by subtracting infinite counterterms that render the interaction term well defined.

In this paper we focus on the analysis of interacting Bose gases in dimensions $d = 1,2,3$, and we renormalize the interaction term in \eqref{intro_interaction} by \emph{Wick ordering} the factors $\abs{\phi}^2$. In contrast to the measure in \eqref{intro_field}, the many-body grand canonical ensemble \eqref{H_gc} is well defined \emph{without} renormalization, because it possesses an intrinsic ultraviolet cutoff at energies of order $1/\nu$. However, without a $\nu$-dependent, finite renormalization, it does not have a well defined mean-field limit as $\nu \to 0$. Thus, we also need to Wick order the interaction term of the quantum Hamiltonian \eqref{H_intro}. By varying $\kappa$ in the definition of $h$, we may (at least for systems confined to a torus; see Section \ref{sec:traps_intro} below for details) rewrite \eqref{H_nu} as

\begin{equation*}
H = \int_\Lambda \dd x \, a^*_\nu(x) (h a_\nu) (x)  + \frac{1}{2} \int_\Lambda \dd x \, \dd \tilde x \, \pb{a_\nu^*(x) a_\nu(x) - \varrho} \, v(x - \tilde x) \, \pb{ a_\nu^*(\tilde x) a_\nu (\tilde x) - \varrho}\,,
\end{equation*}
where $\varrho$ is a $\nu$-dependent parameter. In order to obtain a well-defined limit $\nu \to 0$, keeping the parameter 
$\kappa$ fixed, we choose $\varrho$ to be the mean density of particles in the gas, which, for $d > 1$, \emph{diverges} as $\nu \to 0$. This suggests a plausible physical interpretation of the Wick ordering of the factors $\abs{\phi}^2$ in \eqref{intro_interaction}, which amounts, formally, to replacing $\abs{\phi}^2$ with $\abs{\phi}^2 - \infty$. Under our assumptions, the analysis of the classical field theory \eqref{intro_field} is straightforward and can essentially be found in the existing literature. In contrast, the analysis of the convergence of the grand canonical ensemble \eqref{H_gc} to the classical field theory \eqref{intro_field} is rather involved, owing mainly to the non-commutativity of the field $a$.

Our main result is the convergence of the quantum state \eqref{H_gc} to the classical state \eqref{intro_field} in dimensions $d = 1,2,3$. More precisely, we prove that the relative partition function of \eqref{H_gc} converges to that of \eqref{intro_field}, and that the reduced density matrices of \eqref{H_gc} converge to the correlation functions of \eqref{intro_field} in the $L^r$-norm with optimal exponent $r$ (depending on $d$). Our only assumptions on the interaction potential $v$ are that it be continuous and of positive type, which guarantees that the gas is thermodynamically stable (i.e.\ that the interaction term in \eqref{intro_interaction} is nonnegative). Furthermore, we prove convergence in the $L^\infty$-norm of \emph{renormalized} (or \emph{Wick-ordered}) reduced density matrices to the corresponding renormalized correlation functions of the field $\phi$. Without renormalization, these objects exhibit small-scale singularities at coinciding arguments in dimensions $d > 1$. This convergence allows us to analyse correlations of the renormalized particle density
as well as the asymptotic distribution of the particle number.
We state and prove our results for translation-invariant systems, where the particles move on a torus, as well as for particles moving in Euclidean space confined by an external trapping potential, in which case we obtain precise control on the decay of the renormalized reduced density matrices. We believe our decay estimates to be optimal.

We now review previous results on mean-field limits in thermal equilibrium. The first result establishing convergence of \eqref{H_gc} to \eqref{intro_field} is \cite{lewin2015derivation}, where an approach based on the Gibbs variational principle and the quantum de Finetti theorem is used. For $d=1$, the authors of \cite{lewin2015derivation} prove convergence of the relative partition function and of all correlation functions; these results are extended by the same authors to sub-harmonic traps in \cite{lewin2018gibbs}. In \cite{lewin2015derivation}, they also present some results in dimensions $d=2,3$, but only for systems of particles interacting through a smooth, non translation-invariant potential, which obviates the need for renormalization. A different approach to these problems is developed in \cite{frohlich2017gibbs}; it is based on a perturbative expansion of the state in powers of the interaction term for both the quantum and the classical theories, on the comparison of (fully expanded) terms in the two series, and on Borel resummation. This approach yields a new proof of convergence for the one-dimensional systems, and, for $d = 2,3$, proves convergence towards the Wick-ordered version of \eqref{intro_field}, starting, however, from a slightly modified grand canonical state. In \cite{S19}, the results of \cite{frohlich2017gibbs} for $d=2,3$, which are restricted to bounded interactions, are extended to potentials in the optimal $L^r$ spaces. Moreover, in \cite{frohlich2019microscopic}, convergence of the \emph{time-dependent} correlation functions is proved for $d = 1$. In \cite{LNR2}, by a highly nontrivial extension of the the techniques of \cite{lewin2015derivation}, the authors establish convergence of the grand canonical state to the Wick-ordered measure \eqref{intro_field} for $d = 2$. Subsequently, the authors of \cite{LNR2} announced an extension to $d = 3$; this is the content of the preprint \cite{LNR3} which appeared simultaneously with the current paper.

In this paper we develop a new approach to the problem, which is based on a functional integral formulation of the grand canonical ensemble \eqref{H_gc}. It is inspired, in part, by methods developed in Euclidean field theory, in \cite{ginibre1971some} on quantum gases at high temperature, and in \cite{Sym68}, where the Euclidean $\phi^4$ field theory is represented as a gas of interacting Brownian loops and paths.
Our starting point is to formally rewrite the grand canonical ensemble \eqref{H_gc} as a complex measure, $\frac{1}{C} \ee^{-\cal S(\Phi)} \, \mathrm D \Phi$, for a Euclidean field theory, where $\Phi$ is a complex field on space-time $[0,\nu] \times \Lambda$, $\cal S$ is an action given by
\begin{equation} \label{S_intro}
\cal S(\Phi) \deq \int_0^\nu \dd \tau \int_\Lambda \dd x \, \bar \Phi(\tau,x) \pb{\partial_\tau + \kappa - \Delta/2} \Phi(\tau,x) + 
\frac{\lambda}{2 \nu} \int_0^\nu \dd \tau \int_\Lambda \dd x \, \dd \tilde x \, \abs{\Phi(\tau,x)}^2 \, v(x - \tilde x) \, \abs{\Phi(\tau, \tilde x)}^2\,,
\end{equation}
and $C$ is a normalization constant. Thus, we replace the non-commuting quantum field $a$ with a commuting classical field $\Phi$, at the expense of having a field on space-time with a complex action. The latter property represents in fact a serious complication, and it is known (see Section \ref{sec:fct_int} below for details) that even if $\lambda=0$, i.e.\ for an ideal Bose gas, the action \eqref{S_intro} does not give rise to a well-defined Gaussian measure on the space of fields, owing to the unbounded anti-self-adjoint operator $\partial_\tau$ appearing in \eqref{S_intro}. However, heuristically, the formal functional integral determined by the action $\mathcal{S}(\Phi)$ in \eqref{S_intro} is very useful in guessing what ought to be true: by a simple rescaling of the field $\Phi$ and applying a formal stationary phase argument, we see that the measure $\frac{1}{C} \ee^{-\cal S(\Phi)} \mathrm D \Phi$ should converge, as $\lambda = \nu^2 \to 0$, to the measure \eqref{intro_field} (which is a well-defined probability measure). Thus, our proof is inspired by a formal infinite-dimensional stationary phase argument applied to (non-existent) measures of the form $\frac{1}{C} \ee^{-\cal S(\Phi)} \, \mathrm D \Phi$.
In order to make this argument rigorous, we introduce, in the form of a Hubbard-Stratonovich transformation, an auxiliary Gaussian field, which is white noise in the time direction (and hence has to be suitably regularized). We then express the resulting functional integral in terms of propagators of heat equations driven by periodic time-dependent random potentials, which can, in turn, be expressed in terms of interacting Brownian loops and paths. This allows us to bypass the actual construction of the (non-existent) measure $\frac{1}{C} \ee^{-\cal S(\Phi)} \mathrm D \Phi$, replacing it with a linear functional on a suitably large class of functions of the field $\Phi$. In particular, we express the reduced density matrices of \eqref{H_gc} as correlation functions of the field $\Phi$. As explained above, for $d > 1$ both field theories $\frac{1}{c} \ee^{-s(\phi)} \mathrm D \phi$ and $\frac{1}{C} \ee^{-\cal S(\Phi)} \mathrm D \Phi$ have to be renormalized by Wick ordering the actions \eqref{intro_interaction} and \eqref{S_intro}, respectively; this can be achieved by including suitably chosen divergent phases in the integrals over $\phi$ and $\Phi$. In this new formulation, the proof of convergence of well-defined objects corresponding to the formal measures $\frac{1}{C} \ee^{-\cal S(\Phi)} \mathrm D \Phi$ can be reduced to proving convergence of certain functional integrals involving Riemann sums to their limiting expressions. This is established by using simple continuity properties of Brownian bridges. When the particles of the Bose gas are confined by an external trap in Euclidean space, we obtain precise decay estimates and integrability of the reduced density matrices from excursion probabilities of Brownian bridges. For a more detailed overview of our proof, we refer to Section \ref{sec:fct_int} below.

Aside from its conceptual simplicity, the strengths of our approach may be summarized as follows. First, it naturally yields control of Wick-ordered reduced density matrices in the supremum norm. As a consequence, it controls all moments of the particle number, hence also characterizing its asymptotic distribution. Second, it works for an arbitrary continuous interaction potential. Third, in the presence of an external trap in Euclidean space, it yields precise bounds on the rate of decay of the Wick-ordered reduced density matrices, which we believe to be optimal. Hence, we obtain sharp control of both the ultraviolet and infrared behaviour of the Wick-ordered reduced density matrices.

Next, we clarify the physical interpretation of our results. We begin by noting that the mean-field scaling, $\lambda = \nu^{2}$, $\nu \rightarrow 0$, corresponds to a high-temperature and high-density limit of the gas confined to a container of fixed volume. This regime places the gas just above the critical temperature of Bose-Einstein condensation, or, equivalently, just below the critical density. To see this, we reintroduce the inverse temperature $\beta$ in front of the Hamiltonian \eqref{H_intro} that appears in the grand canonical ensemble $\frac{1}{Z} e^{-\beta H}$ from \eqref{H_gc} and in the measure $\frac{1}{c} \ee^{- \beta s(\phi)} \, \mathrm D \phi$ from \eqref{intro_field}. Taking $\beta \to \infty$ after having taken the limit $\nu \to 0$ then yields the delta measure at the minimizer of \eqref{intro_interaction}, which corresponds precisely to Bose-Einstein condensation.
An alternative interpretation of our results is obtained by considering a gas of particles on a torus with sides of length $L$ at a fixed temperature and small chemical potential, in the thermodynamic limit $L \to \infty$. For simplicity, we consider the ideal Bose gas, $\lambda = 0$ (the interaction potential can be easily included by an appropriate rescaling). Consider particles of unit mass at inverse temperature $\beta$ with chemical potential $\mu = -\kappa / L^2$ confined to a torus with sides of length $L$. Anticipating the tools summarized in Section \ref{sec:qf} below, we find (see \eqref{quasi_free} with $b = \ee^{-\beta (-\Delta / 2 + \kappa/L^2)}$ and use \eqref{QWT_i_1}) that the expected number of particles is
\begin{equation} \label{R_sum}
\tr \frac{1}{\ee^{\beta (-\Delta / 2 + \kappa/L^2)} - 1} =  \sum_{k \in \Z^d} \frac{1}{\ee^{\frac{\beta}{L^2} ((2 \pi)^2 \abs{k}^2/2 + \kappa)} - 1} \asymp
\begin{cases}
L^2 & \text{if } d = 1
\\
L^2 \log L & \text{if }d = 2
\\
L^3 & \text{if }d = 3\,.
\end{cases}
\end{equation}
We conclude that, in the thermodynamic limit $L \to \infty$, the particle density, which is equal to $\text{\eqref{R_sum}}$ divided by the volume $L^d$, diverges to $\infty$ if $d < 3$, and if $d = 3$ it converges from below to 
the critical density for Bose-Einstein condensation, $\rho_c = \beta^{-3/2} \int_{\R^3} \frac{\dd k}{(2 \pi)^3} \frac{1}{\ee^{\abs{k}^2/2} - 1}< \infty$. By rescaling the torus to have unit side length, we can write the left-hand side of \eqref{R_sum} as $\tr \frac{1}{\ee^{\nu h} - 1}$, where $\nu \deq \beta/L^2$ and $h = -\Delta/2 + \kappa$ acts on wave functions with support in the unit torus. This is precisely the expected number of particles in the state \eqref{H_gc} with $\lambda = 0$, and hence it approaches the Gaussian field limit as $\nu \to 0$ (i.e.\ $L \to \infty$ for fixed 
$\beta$). We conclude that a Bose gas on the torus with sides of length $L$ at a fixed temperature, in the limit where $L \to \infty$, is equivalently described by a Bose gas on the unit torus in the limit $\nu \to 0$. For $d = 3$, it has a particle density below, but asymptotically equal to, the critical density for Bose-Einstein condensation. The same claim remains valid for interacting Bose gases provided that the interaction potential is suitably rescaled (we omit further details).

We believe that the functional integral developed in this paper has potential to address several other questions of physical interest. Some of them are reviewed in the survey \cite{frohlich2020path}. In particular, we hope that it will prove useful in understanding Bose-Einstein condensation for an interacting Bose gas at low temperature. Our techniques also provide a natural regularization of Euclidean field theories in terms of a physically realistic microscopic model of statistical mechanics.

We conclude this overview with a brief survey of various related results. The mean-field limit of an interacting Bose gas has also been studied at zero temperature, i.e.\ for $\beta = \infty$. At zero temperature, the ground state energy per particle 
of \eqref{Hamiltonian_H_n}, with $\lambda=\nu^2 = 1/n^2$, converges to the minimum of the Hartree energy functional \eqref{intro_interaction} on the space of fields $\phi \in L^2 (\Lambda)$ with $\| \phi \| = 1$. Moreover, the ground state vector exhibits complete Bose-Einstein condensation into the minimizer $\phi_*$ of \eqref{intro_interaction}. This means that all particles, up to a fraction that vanishes in the limit of large $n$, can be described by the same one-particle wave function $\phi_*$. These and other results are reviewed in the book \cite{lieb2006mathematics} and in \cite{Bach,BL,Kies,sol,LY,LNR0,FSV,RW}.
Results on the low-energy excitation spectrum can be found in \cite{froh_houches2, seir,GrSei,LNSS}. In \cite{Pizzo17}, a convergent expansion is obtained for the ground state of a truncation of the Bose gas in the vicinity of the mean-field limit. The mean-field limit of a Bose gas has also been studied in a dynamical setting, where one analyses the time evolution $\ee^{-\ii t H_n / \nu} \phi^{\otimes n}$ of a product state $\phi^{\otimes n}$ under the dynamics generated by the Hamiltonian \eqref{Hamiltonian_H_n}. In the seminal work \cite{hepp} it is shown that $\ee^{-\ii t H_n / \nu} \phi^{\otimes n} \approx \phi_t^{\otimes n}$ in some appropriate sense, where $\phi_t$ is the solution of the NLS \eqref{NLS} with initial data $\phi_0 = \phi$. We refer to the introduction of \cite{frohlich2017gibbs} for references to later developments.

\subsection{One-particle space and interaction potential}

Let $d = 1,2,3$ and let $\Lambda \subset \R^d$ be a domain, which, in the sequel, will be either a torus or all of Euclidean space. We use the shorthand $\int \dd x \, (\cdot)\deq \int_{\Lambda} \dd x \, (\cdot)$ to denote integration over $\Lambda$ with respect to Lebesgue measure. We define the \emph{one-particle space} $\cal H \deq L^2(\Lambda; \C)$. We denote by $\scalar{\cdot}{\cdot}$ the inner product of the space $\cal H$, which is by definition linear in the second argument. Moreover, we denote by $\tr$ the trace on operators acting on $\cal H$.

For $n \in \N$, we denote by $P_n$ the orthogonal projection onto the symmetric subspace of $\cal H^{\otimes n}$; explicitly, for $\Psi_n \in \cal H^{\otimes n}$,
\begin{equation} \label{def_Pn}
P_n \Psi_n(x_1, \dots, x_n) \deq \frac{1}{n!} \sum_{\pi \in S_n} \Psi_n(x_{\pi(1)}, \dots, x_{\pi(n)})\,,
\end{equation}
where $S_n$ is the group of permutations on $\{1, \dots, n\}$. For $n \in \N^*$, we define the $n$-particle space as $\cal H_n \deq P_n \cal H^{\otimes n}$.

We identify a closed operator $\ops$ on $\cal H_n$ with its \emph{Schwartz operator kernel}, denoted by $\ops_{\f x, \tilde {\f x}}$, which satisfies $(\ops f)(\f x) = \int \dd \tilde {\f x} \, \ops_{\f x,\tilde {\f x}} f(\tilde {\f x})$ for $f$ in the domain of $\ops$. The kernel $\ops_{\f x, \tilde {\f x}}$ is in general a tempered distribution (see e.g.\ \cite[Corollary V.4.4]{RS1}).
For $1 \leq r \leq \infty$, we denote by $\norm{\ops}_{L^r}$ the $L^r(\Lambda^n \times \Lambda^n)$-norm of the kernel $\ops_{\f x, \tilde {\f x}}$ in the arguments $\f x, \tilde {\f x}$.

The \emph{one-particle Hamiltonian} $h$ is a positive operator on $\cal H$.  We always assume that $h$ has a compact resolvent and that
\begin{equation} \label{tr_h_assump}
\tr h^{-s} \;<\; \infty
\end{equation}
for some $s > 0$. In the sequel, we shall always impose \eqref{tr_h_assump} for $s = 2$, in which case the condition \eqref{tr_h_assump} will be satisfied by our Assumption \ref{ass:torus} when $\Lambda$ is the torus and Assumption \ref{ass:trap} when $\Lambda$ is Euclidean space.

The \emph{two-body interaction potential} $v \col \Lambda \to \R$ is always assumed to be even, meaning that $v(x) = v(-x)$ for all $x \in \Lambda$, and of positive type, meaning that its Fourier transform is a nonnegative measure.

\subsection{Classical field theory} \label{sec:classical_field}
In this subsection we define the classical interacting field theory and its correlation functions. We begin by defining the classical free field. For $r \in \R$ denote by $\cal H^r$ the Hilbert space of complex-valued tempered distributions on $\Lambda$ with inner product $\scalar{f}{g}_{\cal H^r} \deq \scalar{f}{h^r g}$. In particular, $\cal H^0 = \cal H$.
We define the \emph{classical free field} as the abstract Gaussian process indexed by the Hilbert space $\cal H^{-1}$. 

An explicit construction of the classical free field goes as follows. Let
\begin{equation} \label{h_spectrum}
h = \sum_{k \in \N} \lambda_k u_k u_k^*
\end{equation}
be the spectral decomposition of $h$, with eigenvalues $\lambda_k > 0$ and normalized eigenfunctions $u_k \in \cal H$.
Let $(X_k)_{k \in \N}$ be a family of independent standard complex Gaussian random variables\footnote{We recall that $X$ is a standard complex Gaussian if it is Gaussian and satisfies $\E X = 0$, $\E X^2 = 0$, and $\E \abs{X^2} = 1$, or, equivalently, if it has law $\pi^{-1} \ee^{- \abs{z}^2} \dd z$ on $\C$, where $\dd z$ denotes Lebesgue measure.}.
From \eqref{tr_h_assump} we easily find that the random series
\begin{equation} \label{phi_series}
\phi = \sum_{k \in \N} \frac{X_k}{\sqrt{\lambda_k}} u_k\,,
\end{equation}
converges\footnote{In fact, an application of Wick's rule shows that the convergence holds in $L^m$ for any $m < \infty$.} in $L^2$ in the topology of $\cal H^{1 - s}$, where $s$ is as in \eqref{tr_h_assump}. We call $\phi$ the \emph{classical free field}. We denote the law of $\phi$ by $\mu_{h^{-1}}$, since by construction $\phi$ is a complex Gaussian field with covariance $h^{-1}$. Indeed, for $f,g \in \cal H^{-1}$ we have
\begin{equation} \label{cov_h}
\int \mu_{h^{-1}}(\dd \phi) \, \scalar{f}{\phi} \, \scalar{\phi}{g}  = \scalar{f}{h^{-1} g}\,.
\end{equation}

Next, we define the interacting field theory. In dimensions $d = 2,3$, the interaction has to be renormalized by Wick ordering. To that end, for $K \in \N$ we define the finite-dimensional orthogonal projection
\begin{equation} \label{def_Pk}
\cal P_K \deq \sum_{k = 0}^{K} u_k u_k^*\,.
\end{equation}
Define the classical truncated density
\begin{equation} \label{def_varrho_K}
\varrho_K(x) \deq \int \mu_{h^{-1}}(\dd \phi) \, \abs{\cal P_K \phi(x)}^2
\end{equation}
and the Wick-ordered truncated interaction
\begin{equation} \label{def_W_K}
W_K^v \deq \frac{1}{2} \int \dd x \, \dd \tilde x \, \pB{\abs{\cal P_K \phi(x)}^2 - \varrho_K(x)} \, v(x - \tilde x)\, \pB{\abs{\cal P_K \phi(\tilde x)}^2 - \varrho_K(\tilde x)}\,.
\end{equation}
It is well known (see Lemma \ref{lem:constr_W} (i) below) that if $v$ is bounded and $h$ satisfies \eqref{tr_h_assump} with $s = 2$ then $W_K^v$ converges in $L^2(\mu_{h^{-1}})$, as $K \to \infty$, to a nonnegative random variable which we denote by $W^v \equiv W$.

We define the (relative) partition function of the classical field theory as
\begin{equation} \label{def_z}
\zeta \deq \int \mu_{h^{-1}}(\dd \phi) \, \ee^{-W(\phi)}\,,
\end{equation}
For $p \in \N$, we define the \emph{correlation function}
\begin{equation} \label{def_gamma}
(\gamma_p)_{x_1 \dots x_p, \tilde x_1 \dots \tilde x_p} \deq \frac{1}{\zeta} \int \mu_{h^{-1}}(\dd \phi) \, \ee^{-W(\phi)} \, \bar \phi(\tilde x_1) \cdots \bar \phi(\tilde x_p) \phi(x_1) \cdots \phi(x_p)\,,
\end{equation}
which is the $2p$-th moment of the field $\phi$ under the probability measure $\frac{1}{\zeta} \mu_{h^{-1}}(\dd \phi) \, \ee^{-W(\phi)}$. This measure is sub-Gaussian, and is hence determined by its moments $(\gamma_p)_{p \in \N^*}$. (Note that any moment containing a different number of $\bar \phi$s and $\phi$s vanishes by invariance of the measure $\mu_{h^{-1}}(\dd \phi) \, \ee^{-W(\phi)}$ under the gauge transformation $\phi \mapsto \alpha \phi$, where $\abs{\alpha} = 1$.)

\subsection{Quantum many-body system} \label{sec:quantum_definitions}
In this subsection we define the quantum many-body system and its reduced density matrices.
For $n \in \N^*$ we consider an $n$-body Hamiltonian on $n$-particle space $\cal H_n$ of the form \eqref{Hamiltonian_H_n}.
We define Fock space as the Hilbert space $\cal F \equiv \cal F (\cal H) \deq \bigoplus_{n \in \N} \cal H_n$. We denote by $\Tr_{\cal F}(X)$ the trace of an operator $X$ acting on $\cal F$. On $\cal F$ we define the Hamiltonian $H$ of the system through
\begin{equation} \label{def_H_Fock}
H \deq \bigoplus_{n \in \N} H_n\,.
\end{equation}
We define the \emph{quantum grand canonical density matrix} as the operator $\ee^{-H} / Z$, where
\begin{equation} \label{def_Z}
Z \deq \Tr_{\cal F} (\ee^{-H})
\end{equation}
is the \emph{grand canonical partition function}\footnote{In the physics literature this grand canonical partition function is sometimes denoted by $\Xi$.}.
Throughout the following, we use the notations $H^0 = \bigoplus_{n \in \N} H_n^0$ and $Z^0 = \Tr_{\cal F}(\ee^{-H^0})$ to denote the corresponding free quantities, for which $\lambda = 0$.

On Fock space it is convenient to use creation and annihilation operators. For $f \in \cal H$ we define the bosonic annihilation and creation operators $a(f)$ and $a^*(f)$ on $\cal F$ through their action on a dense set of vectors $\Psi = (\Psi_n)_{n \in \N} \in \mathcal{F}$ as
\begin{align}
\label{def_b2}
\pb{a(f) \Psi}^{(n)}(x_1, \dots, x_n) &= \sqrt{n+1} \int \dd x \, \bar f(x) \, \Psi^{(n+1)} (x,x_1, \dots, x_n)\,,
\\
\label{def_b1}
\pb{a^*(f) \Psi}^{(n)}(x_1, \dots, x_n) &= \frac{1}{\sqrt{n}} \sum_{i = 1}^n f(x_i) \Psi^{(n - 1)}(x_1, \dots, x_{i - 1}, x_{i+1}, \dots, x_n)
\,.
\end{align}
The operators $a(f)$ and $a^*(f)$ are unbounded closed operators on $\cal F$, and are each other's adjoints. They satisfy the canonical commutation relations
\begin{equation} \label{CCR_b}
[a(f), a^*(g)] = \scalar{f}{g} \, 1 \,, \qquad [a(f), a(g)] = [a^*(f), a^*(g)] =0\,,
\end{equation}
where $[X,Y] \deq XY - YX$ denotes the commutator. We regard $a$ and $a^*$ as operator-valued distributions and use the notations
\begin{equation} \label{phi_tau f}
a(f) = \scalar{f}{a} = \int \dd x \, \bar f(x) \, a(x)\,, \qquad
a^*(f) = \scalar{a}{f} = \int \dd x \, f(x) \, a^*(x)\,.
\end{equation}
The distribution kernels $a^*(x)$ and $a(x)$ satisfy the canonical commutation relations
\begin{equation} \label{CCR}
[a(x),a^*(\tilde x)] = \delta(x - \tilde x) \,, \qquad
[a(x),a(\tilde x)] = [a^*(x),a^*(\tilde x)] = 0\,.
\end{equation}
Using the creation and annihilation operators, we may conveniently rewrite $H \equiv H(h)$ from \eqref{def_H_Fock} and \eqref{Hamiltonian_H_n} as
\begin{equation} \label{H_sq}
 H(h)  = \nu \int \dd x \, \dd \tilde x \,  h_{x, \tilde x} \, a^*(x) a(\tilde x) + \frac{\lambda}{2} \int \dd x \, \dd \tilde x \, a^*(x) a(x) \, v(x - \tilde x)\, a^*(\tilde x) a(\tilde x)\,.
\end{equation}

In dimensions $d > 1$, we have to renormalize \eqref{H_sq}, and define the \emph{Wick-ordered Hamiltonian}
\begin{equation} \label{H_ren}
H = \nu \int \dd x \, \dd \tilde x \, h_{x, \tilde x}\, a^*(x) a(\tilde x) + \frac{\lambda}{2} \int \dd x \, \dd \tilde x \, \pbb{a^*(x) a(x) - \frac{\varrho(x)}{\nu}}  \, v(x - \tilde x) \, \pbb{a^*(\tilde x) a(\tilde x) - \frac{\varrho(\tilde x)}{\nu}}\,,
\end{equation}
where
\begin{equation} \label{def_rho}
\varrho(x) \deq \nu \Tr_{\cal F} \pbb{a^*(x) a(x) \, \frac{\ee^{-H^0}}{Z^0}}
\end{equation}
is the particle density in the equilibrium state of the ideal quantum gas. A simple computation shows that, up to an irrelevant additive constant, \eqref{H_ren} is of the form \eqref{H_sq} with the modified one-particle Hamiltonian $\tilde h \deq h - \frac{\lambda}{\nu^2} v * \varrho$: 
\begin{equation} \label{H_ren_H_sq}
H = H(\tilde h) + \text{constant}\,.
\end{equation}
For a detailed discussion on the relationship between \eqref{H_sq} and \eqref{H_ren}, we refer to the discussion of the counterterm problem below, after Remark \ref{rem:decay_hat_gamma} in Section \ref{sec:traps_intro}.  We remark that in first quantized notation, the Wick-ordered Hamiltonian \eqref{H_ren_H_sq} is $H = \bigoplus_{n \in \N} H_n$, with $H_n$ given in \eqref{H_n_first_quantized} below.

For $p \in \N^*$ we define the reduced $p$-particle density matrix $\Gamma_p$ with kernel
\begin{equation} \label{def_gammap}
(\Gamma_p)_{x_1 \dots x_p, \tilde x_1 \dots \tilde x_p} \deq \Tr_{\cal F} \pbb{a^*(\tilde x_1) \cdots a^*(\tilde x_p) a(x_1) \cdots a(x_p) \, \frac{\ee^{-H}}{Z}}\,.
\end{equation}

\subsection{Results I: torus} \label{sec:results1}
Our results are on the \emph{mean-field limit} $\lambda = \nu^2 \to 0$. In this subsection we state our results for the domain $\Lambda$ being a torus. In particular, the problem is translation invariant. We make the following assumption on the one-particle Hamiltonian.

\begin{assumption} \label{ass:torus}
The domain $\Lambda \equiv \Lambda_{L,d} = [-L/2,L/2]^d$ is the $d$-dimensional torus, with $d = 1,2,3$, of side length $L>0$, and $h = \kappa - \Delta/2$, where $\kappa > 0$ and $\Delta$ is the (negative definite) Laplacian on $\Lambda$ with periodic boundary conditions.
\end{assumption}

It is easy to check that under Assumption \ref{ass:torus} the condition \eqref{tr_h_assump} holds for $s = 2$.
We make the following assumption on the interaction potential.

\begin{assumption} \label{ass:v}
The interaction potential $v : \Lambda \to \R$ is even, periodic, of positive type, and continuous.
\end{assumption}

We begin by stating the convergence of the relative partition function and the reduced density matrices, Theorem \ref{thm:L2}. Then we state the convergence of the Wick-ordered particle densities, Theorem \ref{thm:density}. Both are special cases of our main result on Wick-ordered reduced density matrices, Theorem \ref{thm:main}.

\begin{theorem}[Convergence of partition function and reduced density matrices] \label{thm:L2}
Suppose that Assumptions \ref{ass:torus} and \ref{ass:v} hold. Let the Hamiltonian $H$ be given as in \eqref{H_ren} and set $\lambda \deq \nu^2$. Then $Z/Z^0 \to \zeta$ as $\nu \to 0$. Moreover, suppose that $1 \leq r \leq \infty$ if $d = 1$, that $1 \leq r < \infty$ if $d = 2$, and that $1 \leq r < 3$ if $d = 3$. Then for all $p \in \N^*$, as $\nu \to 0$,
\begin{equation} \label{conv_Lr}
\nu^p \, \Gamma_p \overset{L^r}{\longrightarrow} \gamma_p\,.
\end{equation}
The convergence is also locally uniform for distinct arguments $x_1, \dots, x_p, \tilde x_1, \dots, \tilde x_p$.
\end{theorem}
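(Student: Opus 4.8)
The strategy is to express each quantum quantity --- the relative partition function $Z/Z^0$ and the rescaled reduced density matrices $\nu^p\Gamma_p$ --- as a functional integral, and to identify its $\nu\to0$ limit with the corresponding classical object ($\zeta$, resp.\ $\gamma_p$) by a rigorous version of the stationary phase heuristic from the introduction. First I would pass to a first-quantized picture via the Feynman--Kac/loop representation of $\Tr_{\cal F}$: one has $Z^0=\det(1-\ee^{-\nu h})^{-1}$, and more generally every trace of a (Wick-ordered) product of $a^\#$ against $\ee^{-H^0}/Z^0$ becomes a sum over configurations of Brownian loops (for the trace) and Brownian bridges (one for each pair $a^*(\tilde x_j),a(x_j)$), in which a loop or bridge of combinatorial length $m$ carries Euclidean time $m\nu$ and a weight involving $\ee^{-m\nu\kappa}$. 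Since $\ee^{-m\nu h}$ is a heat kernel, these are ordinary Brownian paths of time-length $m\nu$; a loop of bounded length has diameter $\OO(\sqrt{\nu})$, while loops of length $m\asymp1/\nu$ carry Euclidean time of order one and dominate, reflecting that $\sum_{m\geq1}\nu\,\ee^{-m\nu h}$ is a Riemann sum for $\int_0^\infty\ee^{-th}\,\dd t=h^{-1}$, the covariance of the classical free field.

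The quartic interaction is linearized by a Hubbard--Stratonovich transformation: using $\hat v\geq0$, so that the interaction equals $\tfrac12\norm{v^{1/2}*\abs{\Phi}^2}^2$ in space-time $L^2$, I would replace $\ee^{-(\textup{interaction})}$ by an integral over a real auxiliary field $\sigma$ that is white noise in the Euclidean-time direction with spatial covariance $v$, regularized in time at a scale $\eta$. For each realization of $\sigma$ the interacting model becomes a \emph{quadratic} model with the time-dependent one-particle operator $h-\ii(v^{1/2}*\sigma)(\tau,\cdot)$, so that $Z/Z^0$ and $\nu^p\Gamma_p$ become expectations over $\sigma$ of ratios of Fredholm determinants and of loop/bridge sums built from the propagator of the $\nu$-periodic heat equation driven by the random potential $\ii(v^{1/2}*\sigma)$; this is the gas of interacting Brownian loops and paths. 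The renormalization needed for $d=2,3$ is built in: the counterterm $\varrho(x)/\nu$ in \eqref{H_ren} is precisely the subtraction of the self-contraction of each loop at its base point, the quantum analogue of replacing $\abs{P_K\phi}^2$ by $\abs{P_K\phi}^2-\varrho_K$ with $K$ effectively of order $1/\nu$.

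The limit $\nu\to0$ is then taken in three coupled steps. (i) \emph{Convergence of the auxiliary-field integral:} after rescaling Euclidean time to $[0,1]$, the regularized time-white-noise field converges (jointly as $\nu\to0$ and $\eta\to0$) to a static Gaussian field whose integral reproduces, by the inverse Hubbard--Stratonovich identity, the classical weight $\ee^{-W}$; here Lemma \ref{lem:constr_W} provides that $W$ is well defined and nonnegative. (ii) \emph{Convergence of the loop/bridge gas to the classical field:} since $\nu\Gamma_1^0=\nu(\ee^{\nu h}-1)^{-1}\to h^{-1}$ and, more generally, the free Wick-ordered $2p$-point functions are (multi-index) Riemann sums $\sum\nu^p(\cdots)$ for time integrals of Brownian-bridge expectations that converge to the Gaussian moments with covariance $h^{-1}$, the free loop gas converges to $\mu_{h^{-1}}$; and the interaction, being $v$ evaluated along loops of vanishing diameter, converges to the classical interaction evaluated at the loop base points. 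The only analytic input is the continuity of Brownian-bridge expectations in the bridge's time-length and endpoints --- the ``simple continuity properties of Brownian bridges'' announced in the introduction. Combining (i)--(ii) yields $Z/Z^0\to\zeta$ and the convergence of $\nu^p\Gamma_p$ to $\gamma_p$ pointwise and locally uniformly for distinct arguments, since off the diagonals the bridge kernels are jointly continuous and the loop-length sums converge uniformly on compacta.

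Finally, to upgrade this to $L^r$ convergence on the stated range, I would establish uniform-in-$\nu$ bounds $\norm{\nu^p\Gamma_p}_{L^r}\leq C_{p,r}$: the positive type of $v$ makes the Wick-ordered interaction nonnegative, so stability bounds the interacting loop/bridge sums by free ones, which are permanents of $\nu\Gamma_1^0$; and $\nu\Gamma_1^0(x,\tilde x)$ is bounded when $d=1$, $\lesssim\abs{\log\abs{x-\tilde x}}$ when $d=2$, and $\lesssim\abs{x-\tilde x}^{-1}$ when $d=3$, uniformly in $\nu$ --- in each case lying in $L^r_{\mathrm{loc}}$ exactly on the claimed range. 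Uniform $L^{r'}$ bounds for some $r'>r$ (take $r'<3$ for $d=3$; automatic for $d=2$), together with locally uniform convergence off the diagonals, give $L^r$ convergence by equi-integrability, and at the endpoint $r=\infty$ for $d=1$ the kernels are equicontinuous, so the convergence is uniform. \textbf{The main obstacle} is making the Hubbard--Stratonovich step rigorous despite the non-existence of the underlying measures: the time-direction white noise must be regularized, the whole loop/bridge representation and all determinant manipulations must be carried out and controlled \emph{uniformly} in $\eta$ and $\nu$, and the regularization must then be removed --- equivalently, the interacting Brownian loop gas, together with the $d=2,3$ renormalization, must be controlled tightly enough to justify dominated convergence in the auxiliary-field expectation.
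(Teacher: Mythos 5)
Your core machinery coincides with the paper's: the Feynman--Kac loop representation, the Hubbard--Stratonovich transformation with a time-regularized white-noise field $\sigma$, the periodic propagators driven by $-\kappa+\ii\sigma$, and dominated convergence uniform in the regularization $\eta$ are exactly the content of Sections \ref{sec:fct_quantum}--\ref{sec:mf}. Where you genuinely diverge is the final extraction of \eqref{conv_Lr}. The paper never proves $L^r$ bounds on the interacting $\nu^p\Gamma_p$ directly: it first establishes sup-norm convergence of the \emph{Wick-ordered} matrices $\nu^p\wh\Gamma_p\to\wh\gamma_p$ (Theorem \ref{thm:main}) and then gets \eqref{conv_Lr} purely algebraically from the inversion formulas \eqref{gamma_renormalized}, \eqref{def_Gamma_renormalized} together with an induction in $p$, which reduces everything to the explicit free computation $\nu(\ee^{\nu h}-1)^{-1}\to h^{-1}$ in $L^r$ (see \eqref{h_nu_conv}); the diagonal singularities are thus carried entirely by free factors. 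You instead keep the non-Wick-ordered $\Gamma_p$, dominate it by the free permanent via $\abs{(\ee^{-H_n})_{\f x,\tilde{\f x}}}\leq(\ee^{-H_n^0})_{\f x,\tilde{\f x}}$ (this is \eqref{kernel_est}, legitimate since $Z/Z^0$ stays bounded away from $0$), and upgrade off-diagonal locally uniform convergence to $L^r$ convergence through uniform $L^{r'}$ bounds, $r'>r$, and Vitali/equi-integrability on the compact torus. Both routes work: the paper's buys the stronger Theorem \ref{thm:main} (needed anyway for Theorem \ref{thm:density}) and avoids any uniform-integrability argument, while yours is more self-contained at the level of Theorem \ref{thm:L2} alone, at the cost of redoing off-diagonal convergence for the unsubtracted kernels.

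One caveat on your step (i): it is phrased as if the auxiliary field itself converged to a static Gaussian field as $\nu\to0$; it does not. The paper's mechanism is that the time average $\ang{\sigma}$ has exactly the law of the static field $\xi$ (Lemma \ref{lem:sigma_zeta}), and one shows that the relevant functionals of $\sigma$ (such as $F_2(\sigma)$ and the correlation integrands) are close in $L^2(\mu_{\cal C_\eta})$ to the corresponding functionals of $\ang{\sigma}$, uniformly in $\eta$, via Riemann-sum and Brownian-bridge continuity estimates. Since you explicitly flag this uniform-in-$(\eta,\nu)$ control as the main obstacle, I read this as a looseness of formulation rather than a gap in the argument.
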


The assumption imposed in Theorem \ref{thm:L2} on the exponent $r$ is optimal. To see this, consider the ideal quantum gas, where $v = 0$. Then \eqref{cov_h} yields the singular behaviour near the diagonal
\begin{equation} \label{gamma_1_free}
(\gamma_1)_{x, \tilde x} = (h^{-1})_{x, \tilde x} = ((\kappa - \Delta/2)^{-1})_{x, \tilde x} \asymp
\begin{cases}
1 & \text{if } d = 1
\\
\log\abs{x - \tilde x}^{-1} &\text{if } d = 2
\\
\abs{x - \tilde x}^{-1} & \text{if } d = 3\,.
\end{cases}
\end{equation}
Moreover, by the Wick theorem (see Lemma \ref{Quantum Wick theorem} below) we have $\nu \Gamma_1 = \nu \frac{\ee^{-\nu h}}{1 - \ee^{-\nu h}} = \nu \sum_{n \geq 1} \ee^{-n \nu h}$, whose operator kernel we can write using the heat kernel $\psi^t(x) \deq (\ee^{t\Delta / 2})_{x,0}$ as
\begin{equation} \label{Gamma_1_free}
\nu (\Gamma_1)_{x, \tilde x} = \nu \sum_{n \geq 1} \ee^{-\kappa \nu n} \psi^{n \nu}(x  - \tilde x)\,,
\end{equation}
which is in fact continuous (but of course not uniformly in $\nu$). We conclude that \eqref{conv_Lr} can only hold for $r$ as in Theorem \ref{thm:L2}.

In the grand canonical ensemble \eqref{gc_density}, the \emph{pressure}\footnote{In the mathematical literature, this quantity is sometimes somewhat incorrectly referred to as the free energy.} of the gas is equal to $p = \frac{1}{\beta \abs{\Lambda}} \log Z$. Thus, setting $\beta = 1$, Theorem \ref{thm:L2} yields the first correction to the pressure of an interacting Bose gas above the critical temperature,
\begin{equation*}
p = \frac{1}{\abs{\Lambda}} \log Z = \frac{1}{\abs{\Lambda}} \log Z^0 + \frac{1}{\abs{\Lambda}} \log \zeta + \oo(1)\,.
\end{equation*}

For $d = 1$, Theorem \ref{thm:L2} implies that correlation functions of \emph{the particle density}
\begin{equation*}
\fra N(x) \deq a^*(x) a(x)
\end{equation*}
are, after scaling by $\nu$, asymptotically given by corresponding correlation functions of the classical \emph{mass density}
\begin{equation*}
\fra n(x) \deq \abs{\phi(x)}^2\,,
\end{equation*}
i.e.
\begin{equation*}
\Tr_{\cal F} \pbb{\nu^p \, \fra N(x_1) \cdots \fra N(x_p) \, \frac{\ee^{-H}}{Z}} \overset{L^\infty}{\longrightarrow}
\frac{1}{\zeta} \int \mu_{h^{-1}}(\dd \phi) \, \ee^{-W(\phi)} \, \fra n(x_1) \cdots \fra n(x_p)\,.
\end{equation*}
However, this result is wrong for $d > 1$. In fact, for $d > 1$, $\fra n(x)$ is ill-defined, as the pointwise product of distributions $\phi$ and $\ol \phi$ of negative regularity does not make sense. This is also apparent in that the diagonal of $\gamma_1$ in the free case is divergent for $d > 1$; see \eqref{gamma_1_free}. On the quantum side, we find from \eqref{Gamma_1_free} that the expected particle density in the free case is
\begin{equation} \label{exp_fraN1}
\Tr_{\cal F} \pbb{\nu \, \fra N(x) \, \frac{\ee^{-H}}{Z}} = \nu (\Gamma_1)_{x, x} = \nu \sum_{k \geq 1} \ee^{-\kappa \nu k} \psi^{k \nu}(0)\,.
\end{equation}
An elementary analysis of the right-hand side with the explicit formula \eqref{heat_kernel} for $\psi^{t}(0)$ shows that, as $\nu \to 0$, \eqref{exp_fraN1} converges to $\int \mu_{h^{-1}}(\dd \phi) \, \fra n(x) = (h^{-1})_{x,x}$ if $d = 1$, diverges logarithmically if $d = 2$, and diverges like $\nu^{-1/2}$ if $d = 3$.

Thus, a major shortcoming of Theorem \ref{thm:L2} is that it says nothing about particle densities for $d > 1$. To overcome this shortcoming, we tame the divergencies by considering \emph{Wick-ordered} densities. The Wick-ordered quantum particle density is by definition
\begin{equation*}
\wick{\fra N(x)} = \fra N(x) - \Tr_{\cal F} \pbb{\fra N(x) \, \frac{\ee^{-H^0}}{Z^0}}\,.
\end{equation*}
For the classical mass density we define, for each $K \in \N^*$,
\begin{equation} \label{def_n_K}
\wick{\fra n_K(x)} = \fra n_K(x) - \int \mu_{h^{-1}}(\dd \phi) \, \fra n_K(x) \,, \qquad \fra n_K(x) = \abs{\cal P_K \phi(x)}^2\,.
\end{equation}
It is not hard to see (see Lemma \ref{lem:conv_n_Wick}) that for any bounded function $f$ the random variable $\scalar{f}{\wick{\fra n_K}}$ converges as $K \to \infty$ in $L^2(\mu_{h^{-1}})$ to a random variable, which we denote by $\scalar{f}{\wick{\fra n}}$. This defines the Wick-ordered classical mass density $\wick{\fra n(x)}$, which is a priori a random distribution.

With these notations, the classical interaction constructed in Lemma \ref{lem:constr_W} (i) can be written as $W = \frac{1}{2} \int \dd x \, \dd \tilde x  \, \wick{\fra n(x)} \, v(x - \tilde x) \, \wick{\fra n(\tilde x)}$ and the quantum interaction from \eqref{H_ren} can be written as $\frac{1}{2} \int \dd x \, \dd \tilde x \, \wick{\fra N(x)} \, v(x - \tilde x) \,\wick{\fra N(\tilde x)}$.

\begin{theorem}[Convergence of Wick-ordered particle densities] \label{thm:density}
Suppose that Assumptions \ref{ass:torus} and \ref{ass:v} hold. Let $H$ be given as in \eqref{H_ren} and set $\lambda \deq \nu^2$. Suppose that $1 \leq r \leq \infty$ if $d = 1$, that $1 \leq r < \infty$ if $d = 2$, and that $1 \leq r < 3/2$ if $d = 3$. Then for all $p \in \N^*$, as $\nu \to 0$,
\begin{equation} \label{conv_Lr_density}
\Tr_{\cal F} \pbb{\nu^p \, \wick{\fra N(x_1)} \cdots \wick{\fra N(x_p)} \, \frac{\ee^{-H}}{Z}} \overset{L^r}{\longrightarrow}
\frac{1}{\zeta} \int \mu_{h^{-1}}(\dd \phi) \, \ee^{-W(\phi)} \, \wick{\fra n(x_1)} \cdots \wick{\fra n(x_p)}\,.
\end{equation}
The convergence is also locally uniform for distinct arguments $x_1, \dots, x_p$.
\end{theorem}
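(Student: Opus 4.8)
The plan is to deduce Theorem~\ref{thm:density} from Theorem~\ref{thm:main} on Wick-ordered reduced density matrices, together with Theorem~\ref{thm:L2} and the explicit free-gas formulas \eqref{Gamma_1_free}, \eqref{exp_fraN1}. Since $\wick{\fra N(x)} = a^*(x) a(x) - (\Gamma_1^0)_{x,x}$, the product $\nu^p\,\wick{\fra N(x_1)}\cdots\wick{\fra N(x_p)}$ is a product of density operators which, unlike a Wick-ordered reduced density matrix, is neither normal-ordered nor jointly Wick-ordered. The first step is to move all annihilation operators to the right using the canonical commutation relations \eqref{CCR} and then subtract the free self-contractions; keeping track of both operations at once yields a finite combinatorial expansion of $\nu^p\,\wick{\fra N(x_1)}\cdots\wick{\fra N(x_p)}$ as a sum, indexed by the splittings of $\{1,\dots,p\}$ into a ``core'' subset together with ordered ``free-contracted'' pairs and ``$\delta$-contracted'' pairs, whose generic term is a product of: a scaled jointly-Wick-ordered monomial on the core sites (whose $\Tr_{\cal F}(\cdot\,\ee^{-H}/Z)$ is the kernel of a Wick-ordered reduced density matrix $\wick{\Gamma_q}$ restricted to the diagonal $\tilde x_\ell = x_{i_\ell}$), free kernels $\nu\,(\Gamma_1^0)_{x_i,x_j}$, and Dirac deltas $\delta(x_i-x_j)$ (each carrying an extra factor of $\nu$). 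The point of this reorganisation is that the manifestly divergent objects ($\nu^q\Gamma_q$ on the diagonal and the diagonal kernel $(\Gamma_1^0)_{x,x}$, both of which blow up for $d>1$) must be made to recombine, block by block, into the \emph{convergent} Wick-ordered quantities furnished by Theorems~\ref{thm:main} and~\ref{thm:L2}.

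On the classical side the same partition formula holds, but simplifies: the fields commute, so no $\delta$-contractions occur, and the free kernel is just $(h^{-1})_{x_i,x_j}$. Writing $\wick{\fra n_K(x_1)}\cdots\wick{\fra n_K(x_p)}$ and letting $K\to\infty$ under $\frac{1}{\zeta}\ee^{-W}\mu_{h^{-1}}$ expresses the right-hand side of \eqref{conv_Lr_density} as the corresponding combination of Wick-ordered classical correlation functions $\wick{\gamma_q}$ on the diagonal, of correlation functions $\gamma_q$, and of products of $(h^{-1})_{x_i,x_j}$. I would then match the quantum and classical expansions term by term as $\nu\to 0$, using $\nu^q\wick{\Gamma_q}\to\wick{\gamma_q}$ in $L^\infty$ (Theorem~\ref{thm:main}), which restricted to the diagonal is uniform convergence since $\wick{\gamma_q}$ is continuous for $d\le3$ (the Wick ordering removes the coincidence singularities of $\gamma_q$); $\nu^q\Gamma_q\to\gamma_q$ in $L^r$, $r<3$ (Theorem~\ref{thm:L2}); and $\nu\,(\Gamma_1^0)_{x,\tilde x}\to(h^{-1})_{x,\tilde x}$ in $L^r$, $r<3$, which follows from the heat-kernel representation \eqref{Gamma_1_free} and \eqref{gamma_1_free}. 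Products of such factors are controlled by Hölder's inequality, and this is precisely where the restriction $1\le r<3/2$ for $d=3$ (rather than $r<3$) is forced: the limiting coefficients contain products of two factors each comparable to the singular kernel $(h^{-1})_{x,\tilde x}\asymp\abs{x-\tilde x}^{-1}$, hence they lie in $L^r(\Lambda^p)$ only for $r<3/2$; conversely, inspecting the limiting expression via \eqref{gamma_1_free} confirms that it does lie in $L^r(\Lambda^p)$ for this whole range.

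Finally, the terms carrying Dirac deltas must be shown to be negligible. Each has a prefactor of the form $\nu^a$ times a (free or interacting) reduced density matrix evaluated on a diagonal, and a power count using the divergence rates established around \eqref{exp_fraN1} (bounded for $d=1$, of order $\log(1/\nu)$ for $d=2$, of order $\nu^{-1/2}$ for $d=3$) shows that the compensating power $\nu^a$ always wins, so the prefactor tends to $0$; these terms therefore vanish in the limit — in particular they contribute nothing for distinct arguments, which gives the asserted local uniformity, and, being supported on the diagonals, they do not obstruct the $L^r$ convergence of the regular part, which is the content of \eqref{conv_Lr_density}. The main obstacle is the combinatorics of the first step: producing the expansion in a form in which the individually divergent diagonal contributions are correctly regrouped into Wick-ordered blocks while the $\delta$-contractions are tracked with their $\nu$-weights, so that the power count of the last step goes through in every dimension $d\le3$. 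Once that identity is established, the convergence follows entirely from the inputs of Theorems~\ref{thm:L2} and~\ref{thm:main}.
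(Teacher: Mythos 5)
Your high-level plan is the same as the paper's: reduce \eqref{conv_Lr_density} to an algebraic identity expressing the Wick-ordered density correlations in terms of Wick-ordered reduced density matrices paired with \emph{off-diagonal} free kernels, then invoke Theorem \ref{thm:main}, the convergence $\nu\Gamma_1^0\to\gamma_1^0$ in $L^r$ for $r<3$ (this is \eqref{h_nu_conv}), and H\"older; your explanation of the restriction $r<3/2$ in $d=3$ (two factors with the $\abs{x-\tilde x}^{-1}$ singularity multiplying each other) is exactly the paper's, as is the tacit restriction to distinct arguments. The genuine gap is that the central combinatorial identity is neither proven nor even stated in the correct form. Your generic term --- a jointly Wick-ordered block on the ``core'' sites \emph{restricted to the diagonal}, times free kernels and deltas --- is not the right structure: already for $p=2$ and distinct $x_1,x_2$ one has
\begin{equation*}
\Tr_{\cal F}\pbb{\nu^2\,\wick{\fra N(x_1)}\wick{\fra N(x_2)}\,\frac{\ee^{-H}}{Z}}
= \nu^2(\wh\Gamma_2)_{x_1x_2,x_1x_2}
+ \nu(\wh\Gamma_1)_{x_1,x_2}\,\nu(\Gamma_1^0)_{x_2,x_1}
+ \nu(\wh\Gamma_1)_{x_2,x_1}\,\nu(\Gamma_1^0)_{x_1,x_2}
+ \nu(\Gamma_1^0)_{x_1,x_2}\,\nu(\Gamma_1^0)_{x_2,x_1}\,,
\end{equation*}
so \emph{off-diagonal} entries of the interacting (Wick-ordered) one-particle density matrix necessarily appear as blocks and cannot be absorbed into diagonal-restricted Wick blocks plus c-number kernels. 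The general identity has the form $\sum_{J,\tilde J\subset[p]}\nu^{\abs{J}}\wh\Gamma_{\f x_J,\f x_{\tilde J}}\sum_{\delta}\prod_{i\in J^c}\nu\Gamma^0_{x_i,x_{\delta(i)}}$, the sum running over \emph{fixed-point-free} bijections $\delta\colon J^c\to\tilde J^c$, and the entire content of the step you defer as ``the main obstacle'' is that the divergent diagonal factors $(\Gamma_1^0)_{x_i,x_i}$ cancel exactly, leaving only fixed-point-free bijections; in the paper this is a short alternating count ($\sum_k\binom{n}{k}(-1)^k=0$ for $n>0$) applied after inserting \eqref{def_Gamma_renormalized} and Wick's theorem for $\Gamma^0$. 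Without stating and proving this identity, the term-by-term matching and H\"older estimates have nothing to act on, so the proposal stops short of a proof.

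A second, dependent issue concerns your treatment of the commutator terms. The power count you invoke is itself contingent on performing the same cancellation \emph{inside} the delta-carrying terms: if one normal-orders first and counts divergences naively, a term with a single delta and $p-1$ remaining density factors carries diagonal free factors of size $T_\nu$ each (cf.\ \eqref{def_T_nu}, so $\nu^{-1/2}$ per factor in $d=3$), which for large $p$ overwhelms the explicit powers of $\nu$; only after the diagonal divergences are regrouped into Wick-ordered blocks do the delta terms acquire bounded coefficients times $\nu T_\nu\to0$. So the delta analysis cannot be decoupled from the missing identity. (Alternatively, one can avoid the deltas altogether, as the paper implicitly does, by working from the outset with the correlation functions at distinct arguments, where $\Tr_{\cal F}\pb{\fra N(x_{i_1})\cdots\fra N(x_{i_k})\,\ee^{-H}/Z}=\Gamma_{\f x_I,\f x_I}$ with no commutator contributions, which also immediately yields the locally uniform convergence for distinct arguments via Theorem \ref{thm:main} and \eqref{h_nu_conv}.)
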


Again, our assumptions on the exponent $r$ are optimal, as can be checked in the free case $v = 0$ and $p = 2$.

\begin{remark}
Define the number of particles $\fra N \deq \int \dd x \, \fra N(x)$ and the total mass $\fra n \deq \int \dd x \, \fra n(x)$ of the classical field. For any $d \leq 3$, using Theorem \ref{thm:density} we obtain the convergence
\begin{equation} \label{moments_conv}
\Tr_{\cal F} \pbb{(\nu \wick{\fra N})^p \, \frac{\ee^{-H}}{Z}} \longrightarrow
\frac{1}{\zeta} \int \mu_{h^{-1}}(\dd \phi) \, \ee^{-W(\phi)} \, (\wick{\fra n})^p
\end{equation}
of the moments of the Wick-ordered number of particles to the corresponding moments of the Wick-ordered mass. We can use this result to characterize the asymptotic distribution of the particle number $\fra N$. To that end, define the probability measures $\theta_\nu, \theta$ on $\R$ as the distributions of $\nu \wick{\fra N}$ and $\wick{\fra n}$ respectively:
\begin{equation*}
\theta_\nu(B) \deq \Tr_{\cal F} \pbb{\ind{\nu :\fra N: \in B} \, \frac{\ee^{-H}}{Z}} \,,
\qquad
\theta(B) \deq
\frac{1}{\zeta} \int \mu_{h^{-1}}(\dd \phi) \, \ee^{-W(\phi)} \,\ind{: \fra n : \in B}\,.
\end{equation*}
By \eqref{moments_conv}, $\theta_\nu$ converges to $\theta$ in the sense of moments. In fact, this convergence holds also in distribution, since $\theta$ is determined by its moments (because $\theta$ has subexponential tails by Lemma \ref{lem:conv_n_Wick} below). Using Lemma \ref{Quantum Wick theorem} below, we conclude that in the state $\frac{1}{Z} \ee^{-H}$ the centred particle number $\nu \fra N - \tr \pb{\frac{\nu}{\ee^{\nu h} - 1}}$ has asymptotically distribution $\theta$. Moreover, we have $\tr \pb{\frac{\nu}{\ee^{\nu h} - 1}} \asymp T_\nu$, where
\begin{equation} \label{def_T_nu}
T_\nu \deq
\begin{cases}
1 & \text{if } d = 1 
\\
\log \nu^{-1} & \text{if } d = 2
\\
\nu^{-1/2} & \text{if } d = 3\,.
\end{cases}
\end{equation}
Since all moments of $\theta_\nu$ are asymptotically of order one for all $d$, we conclude that, in the mean-field limit $\nu \to 0$, the particle number $\fra N$ \emph{concentrates} for $d > 1$ and does not concentrate for $d = 1$. Here by concentration we mean that the standard deviation of $\fra N$ divided by its expectation tends to zero.
\end{remark}

\begin{remark}
More generally, using Theorem \ref{thm:density} we can characterize the joint distribution of the family of operators $(\scalar{f}{\nu\wick{\fra N}})_{f \in L^\infty(\Lambda)}$ in the mean-field limit: for $f_1, \dots, f_p \in L^\infty(\Lambda)$ we have
\begin{equation*}
\Tr_{\cal F} \pbb{\scalar{f_1}{\nu\wick{\fra N}} \cdots \scalar{f_p}{\nu\wick{\fra N}} \, \frac{\ee^{-H}}{Z}} \longrightarrow
\frac{1}{\zeta} \int \mu_{h^{-1}}(\dd \phi) \, \ee^{-W(\phi)} \, \scalar{f_1}{\wick{\fra n}} \cdots \scalar{f_p}{\wick{\fra n}}\,.
\end{equation*}
For example, if $D,\tilde D \subset \Lambda$ are two regions of space we can characterize the limiting behaviour of the covariance of the number of particles in $D$ and the number of particles in $\tilde D$,
\begin{multline*}
\Tr_{\cal F} \pbb{\int_{D} \dd x \, \fra N(x) \, \int_{\tilde D} \dd \tilde x \, \fra N(\tilde x) \, \frac{\ee^{-H}}{Z}}
- \Tr_{\cal F} \pbb{\int_{D} \dd x \, \fra N(x) \, \frac{\ee^{-H}}{Z}}
\Tr_{\cal F} \pbb{\int_{\tilde D} \dd \tilde x \, \fra N(\tilde x) \, \frac{\ee^{-H}}{Z}}
\\
= \frac{1 + \oo(1)}{\nu^2} \op{Cov} \pbb{\int_{D} \dd x \, \wick{\fra n(x)} \, , \, \int_{\tilde D} \dd \tilde x \, \wick{\fra n(\tilde x)}}\,,
\end{multline*}
where $\op{Cov}$ is the covariance with respect to the measure $\frac{1}{\zeta} \mu_{h^{-1}}(\dd \phi) \, \ee^{-W(\phi)}$.
\end{remark}

Both Theorems \ref{thm:L2} and \ref{thm:density} are in fact special cases of our main result, Theorem \ref{thm:main}, which states that the Wick-ordered reduced density matrices converge in $C(\R^{2p})$, the space of continuous functions equipped with the supremum norm, to the Wick-ordered classical correlation functions. To state it, we first recall the definition of Wick ordering with respect to a Gaussian measure.

Let $\mu_{\cal C}$ be a Gaussian measure with mean zero and covariance $\cal C$ on $\R^n$. We define the \emph{Wick ordering} of the polynomial $u_1 \cdots u_n$ with respect to $\mu_{\cal C}$ as
\begin{equation} \label{def_Wick}
\wick{u_1 \cdots u_n} = \frac{\partial^n}{\partial \lambda_1 \dots \partial \lambda_n}  \frac{\ee^{\lambda \cdot u}}{\int \mu_{\cal C}(\dd u) \, \ee^{\lambda \cdot u}} \bigg\vert_{\lambda = 0}\,.
\end{equation}
Writing $\int \mu_{\cal C}(\dd u) \, \ee^{\lambda \cdot u} = \ee^{\frac{1}{2} \lambda \cdot \cal C \lambda}$ and computing the derivatives, we obtain
\begin{equation} \label{wick_expanded}
\wick{u_1 \cdots u_n} = \sum_{\Pi \in \fra M([n])} \prod_{i \in [n] \setminus [\Pi]} u_i \prod_{\{i,j\} \in \Pi} \pbb{- \int \mu_{\cal C}(\dd u) \, u_i u_j}\,,
\end{equation}
where we defined $[n] \deq \{1, \dots, n\}$, and $\fra M([n])$ is the set of partial pairings of the set $[n]$ (i.e.\ a set of disjoint unordered pairs of elements of $[n]$) with $[\Pi] \deq \bigcup_{\{i,j\} \in \Pi} \{i,j\}$. We extend \eqref{wick_expanded} to complex Gaussian random variables by linearity.

In analogy to \eqref{def_gamma}, we define the \emph{Wick-ordered correlation function}
\begin{equation} \label{def_gamma_hat}
(\wh \gamma_p)_{x_1 \dots x_p, \tilde x_1 \dots \tilde x_p} \deq \frac{1}{\zeta} \int \mu_{h^{-1}}(\dd \phi) \, \ee^{-W(\phi)} \, \wick{\bar \phi(\tilde x_1) \cdots \bar \phi(\tilde x_p) \phi(x_1) \cdots \phi(x_p)}\,,
\end{equation}
where the Wick ordering is performed with respect to the measure $\mu_{h^{-1}}$. Defining the free correlation function $\gamma_p^0$ as \eqref{def_gamma} with $v = 0$, we have
\begin{equation} \label{gamma_renormalized}
\wh\gamma_p = \sum_{k = 0}^p \binom{p}{k}^2 (-1)^{p - k} \, P_p \pb{\gamma_{k} \otimes \gamma_{p-k}^0} P_p\,;
\end{equation}
see Lemma \ref{lem:wh_gamma_id} below.
For instance,
\begin{equation} \label{gamma_ex1}
\wh \gamma_1 = \gamma_1 - \gamma_1^0
\end{equation}
and
\begin{multline} \label{gamma_ex2}
(\wh \gamma_2)_{x_1 x_2, \tilde x_1 \tilde x_2} = (\gamma_2)_{x_1 x_2, \tilde x_1 \tilde x_2}
- (\gamma_1)_{x_1, \tilde x_1} (\gamma_1^0)_{x_2, \tilde x_2}
- (\gamma_1)_{x_1, \tilde x_2} (\gamma_1^0)_{x_2, \tilde x_1}
\\
- (\gamma_1)_{x_2, \tilde x_1} (\gamma_1^0)_{x_1, \tilde x_2}
- (\gamma_1)_{x_2, \tilde x_2} (\gamma_1^0)_{x_1, \tilde x_1}
+ (\gamma_2^0)_{x_1 x_2, \tilde x_1 \tilde x_2}\,.
\end{multline}

In analogy to \eqref{gamma_renormalized}, we define the \emph{Wick-ordered\footnote{This formula can be interpreted as the obvious analogue of \eqref{gamma_renormalized}. In addition, in the formal functional integral representation of the quantum many-body theory explained in Section \ref{sec:fct_int}, it is nothing but a correlation function that is Wick ordered with respect to the free field; see \eqref{formal_whGamma} below.} reduced density matrix} through
\begin{equation} \label{def_Gamma_renormalized}
\wh\Gamma_p \deq \sum_{k = 0}^p \binom{p}{k}^2 (-1)^{p - k} \, P_p \pb{ \Gamma_{k} \otimes \Gamma_{p-k}^0} P_p\,,
\end{equation}
where the free reduced density matrix $\Gamma_p^0$ is given by \eqref{def_gammap} with $v = 0$. Direct analogues of the examples \eqref{gamma_ex1} and \eqref{gamma_ex2} hold for $\wh \Gamma_1$ and $\wh \Gamma_2$.

\begin{theorem}[Convergence of Wick-ordered reduced density matrices] \label{thm:main}
Suppose that Assumptions \ref{ass:torus} and \ref{ass:v} hold. Let $H$ be given as in \eqref{H_ren} and set $\lambda \deq \nu^2$. Then for all $p \in \N^*$, as $\nu \to 0$,
\begin{equation*}
\nu^p \, \wh \Gamma_p \overset{C}{\longrightarrow} \wh \gamma_p\,,
\end{equation*}
where $\overset{C}{\longrightarrow}$ denotes convergence in the space of continuous functions with respect to the supremum norm.
\end{theorem}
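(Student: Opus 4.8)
The plan is to pass through the functional-integral representation of the grand canonical ensemble built in Section~\ref{sec:fct_int}, in which $\Tr_{\cal F}(\,\cdot\,\ee^{-H})$ is written as an integral over a complex field $\Phi$ on space-time $[0,\nu]\times\Lambda$ with the action \eqref{S_intro}. Using that $v$ is of positive type, one first performs a Hubbard--Stratonovich transformation decoupling the quartic term, at the expense of an auxiliary Gaussian field $\omega$ which is white noise in the imaginary-time variable $\tau\in[0,\nu]$ (and hence must be regularized in $\tau$); after this step the $\Phi$-integral is Gaussian, and carrying it out together with the quantum Wick theorem (Lemma~\ref{Quantum Wick theorem}) expresses $\nu^p\Gamma_p$ as an expectation over $\omega$ of a Fredholm determinant times a sum over pairings of one-particle propagator kernels $(G^\omega)_{x,\tilde x}$, where $G^\omega$ is the imaginary-time-ordered exponential of $h$ perturbed by the random potential $\omega$. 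By the Feynman--Kac formula each such propagator is an average over Brownian bridges of length a multiple of $\nu$ carrying the potential $\omega$, so that $\nu^p\Gamma_p$ becomes an expectation, over $\omega$ and over a gas of Brownian loops, of a sum of diagrams: the $p$ external legs are joined into open Brownian bridges running from $x_i$ to $\tilde x_{\pi(i)}$, decorated by closed loops, and weighted by the $\omega$-interaction factors and by the counterterm $\varrho/\nu$ of \eqref{H_ren}, \eqref{def_rho}. An entirely parallel construction on the classical side --- now with the Hubbard--Stratonovich field living in space only, and the projection $P_K$ of \eqref{def_Pk} playing the role that the discretization $\nu$ plays on the quantum side --- yields the analogous (Symanzik-type polymer) representation of $\gamma_p$, hence of $\wh\gamma_p$.

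Next I would insert these representations into the defining combination \eqref{def_Gamma_renormalized}, respectively \eqref{gamma_renormalized}. That combination is precisely tuned to cancel the divergent self-contraction (``tadpole'') subdiagrams --- those in which a propagator attached to an external leg runs between coinciding arguments and produces the $\nu^{1-d/2}$ and $\log\nu^{-1}$ divergences visible in \eqref{gamma_1_free} and \eqref{exp_fraN1} --- against the free factors $\Gamma^0_{p-k}$. After this cancellation one obtains a representation of $\nu^p\wh\Gamma_p$ as an $\omega$-expectation of a sum over diagrams with no self-contractions, and the analogous statement holds for $\wh\gamma_p$ with the parameter $K$. This is the step that turns the merely $L^r$-bounded object $\nu^p\Gamma_p$ into one whose kernel, as a function of $(x_1,\dots,\tilde x_p)$, is bounded and continuous \emph{uniformly in} $\nu$.

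It then remains to establish two things. First, uniform bounds and equicontinuity: since $v$ is of positive type the Hubbard--Stratonovich field is a genuine Gaussian and $W\geq 0$ by Lemma~\ref{lem:constr_W}, so $\ee^{-W}\leq 1$, $\zeta\leq 1$, and a corresponding bound holds for $Z/Z^0$; combining this with Gaussian heat-kernel estimates on the torus (no decay estimates being needed here, in contrast to the trapped case) and elementary moment/continuity bounds for Brownian bridges gives $\sup_\nu\norm{\nu^p\wh\Gamma_p}_{L^\infty}<\infty$ together with a modulus of continuity in the external arguments that is uniform in $\nu$. Second, pointwise convergence: for fixed arguments one lets $\nu\to 0$ inside the tadpole-free representation, using that a Riemann sum $\nu\sum_{n\geq 1}F(n\nu)$ converges to $\int_0^\infty F(t)\,\dd t$, that a Brownian bridge of length $n\nu\to t$ converges to the bridge of length $t$, and that the $\omega$-expectation passes to the limit by dominated convergence thanks to the uniform bounds; the limit is then identified, via the classical polymer representation and Lemma~\ref{lem:wh_gamma_id}, with $\wh\gamma_p$. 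Pointwise convergence together with the uniform equicontinuity upgrades this to convergence in $C(\R^{2p})$ by the Arzel\`a--Ascoli theorem.

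The main obstacle is the construction and control of the white-noise-type auxiliary field: it is too rough for the propagators $G^\omega$ to be defined classically, so $\omega$ must be regularized in the time variable, and all of the estimates above --- in particular the uniform-in-$\nu$ bounds that must survive the removal of this regularization --- have to be carried out in the presence of both the regularization and the divergent counterterm $\varrho/\nu$. The bookkeeping that matches the combinatorics of \eqref{def_Gamma_renormalized} to the exact cancellation of tadpole subdiagrams is delicate but essentially algebraic once the loop representation is in hand; the analytic heart of the argument is making the formal functional integral, and the stationary-phase heuristic it encodes, rigorous and stable as $\nu\to 0$.
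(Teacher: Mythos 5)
Your proposal follows essentially the same route as the paper: a Hubbard--Stratonovich decoupling with a time-regularized white-noise auxiliary field, the quasi-free Wick theorem together with Feynman--Kac giving a Brownian-loop representation in which the Wick-ordering subtraction of \eqref{def_Gamma_renormalized} cancels the divergent self-contractions, and then Riemann-sum and Brownian-bridge continuity arguments to pass to the limit $\nu \to 0$. The only cosmetic difference is the endgame: the paper proves sup-norm convergence directly and uniformly in the regularization parameter $\eta$, concluding by a triangle inequality in $\eta$ and $\nu$ (Propositions \ref{prop:gamma_hat}, \ref{prop:gamma_regularization}, and \ref{prop:main_conv}), whereas you would go through pointwise convergence plus a uniform-in-$\nu$ modulus of continuity and Arzel\`a--Ascoli, a viable but slightly more roundabout finish on the compact torus.
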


\begin{remark} \label{rem:rho}
The parameter $\kappa$ corresponds to a negative chemical potential, and can be used to adjust the density of particles of the quantum system. Decreasing $\kappa$ increases the particle density. We always assume that $\kappa > 0$ for convenience, but in fact this is not important for our result and we can make $\kappa$ as negative as we wish, also depending on $\nu$. This is possible thanks to the interaction term, through which we can, at no extra cost, introduce a very large chemical potential. We refer to Appendix \ref{sec:rho} for a more detailed explanation.
\end{remark}

\subsection{Results II: trapping potentials} \label{sec:traps_intro}
In this subsection we formulate the analogue of Theorem \ref{thm:main} in infinite volume, $\Lambda = \R^d$, under the presence of a trapping potential. In particular, the translation invariance of Section \ref{sec:results1} is broken and replaced with the potential well of a trap. In addition to the small scale (ultraviolet) behaviour of the reduced density matrices, which is the main subject of this paper, here we also have to address the large scale (infrared) behaviour. We shall do so by establishing convergence in supremum norm with respect to a decaying weight function, given in \eqref{def_Upsilon} below. This will allow us to obtain convergence in $L^r$ spaces with an optimal range of exponents $r$ (see Remark \ref{rem:trap_Lr} below).

\begin{assumption} \label{ass:trap}
The domain $\Lambda = \Lambda_{\infty, d} = \R^d$ is the $d$-dimensional Euclidean space, with $d = 1,2,3$. We take a possibly $\nu$-dependent external potential $U \equiv U_\nu \col \Lambda \to \R$, which is continuous and satisfies
\begin{equation} \label{U_lower_bound}
U(x) \geq b \abs{x}^\theta
\end{equation}
for some constants $b > 0$ and $\theta \geq 2$. The one-particle Hamiltonian is
\begin{equation}
h \equiv h^U \deq \kappa - \Delta/2 + U\,,
\end{equation}
where $\kappa > 0$. We suppose that \eqref{tr_h_assump} holds for $s = 2$, uniformly in $\nu$.
\end{assumption}

By the Lieb-Thirring inequality, \cite[Theorem 1]{BFLP06}, for $h$ satisfying Assumption \ref{ass:trap} the condition \eqref{tr_h_assump} holds for $s > \frac{d}{2} + \frac{d}{\theta}$. In particular, Assumption \ref{ass:trap} holds provided that \eqref{U_lower_bound} holds with
\begin{equation} \label{theta_bound}
\theta > \frac{2d}{4-d}\,.
\end{equation}

\begin{assumption} \label{ass:U_conv}
We suppose that $U_\nu$ converges locally uniformly, as $\nu \to 0$, to a function $U_0$, and that $U_\nu \geq U_0 / C$ for all $\nu > 0$, for some fixed $C > 0$.
\end{assumption}

For instance, we may simply take $U = U_0$ not to depend on $\nu$. We include the possible $\nu$-dependence to allow the combination of our main result with the solution of the associated counterterm problem where $U$ depends on $\nu$. The counterterm problem is discussed after Theorem \ref{thm:trap} below, and the relevant results, proved in \cite{frohlich2017gibbs}, are summarized in Proposition \ref{prop:counterterm} below. In particular, Assumption \ref{ass:U_conv} is sufficiently general to be verified by the solution of the counterterm problem; see Corollary \ref{cor:trap} below.

\begin{assumption} \label{ass:v_trap}
The interaction potential $v : \Lambda \to \R$ is even, of positive type, uniformly continuous, and bounded.
\end{assumption}

We construct the classical field exactly as in Section \ref{sec:classical_field}. The free field measure $\mu_{h^{-1}}$ is constructed in terms of $h^{U_0}$ with the limiting external potential $U_0$. Under Assumptions \ref{ass:trap} and \ref{ass:v_trap}, it is easy to check that the Wick-ordered interaction \eqref{def_W_K} has a limit $W$, and we define the partition function and the correlation functions as in \eqref{def_z} and \eqref{def_gamma}. For $c > 0$ and $\theta > 0$ we define the weight function
\begin{equation} \label{def_Upsilon}
\Upsilon_{\theta,c}(x, \tilde x) \deq (1 + \abs{x} + \abs{\tilde x})^{-\theta (2 - d/2)} \, \ee^{-c \abs{x - \tilde x}}\,,
\end{equation}
and for $\f x, \tilde {\f x} \in \Lambda^p$ we define the symmetrized weight
\begin{equation} \label{def_Upsilon_p}
\Upsilon_{\theta,c}(\f x, \tilde {\f x}) \deq \sum_{\pi \in S_p} \prod_{i = 1}^p \Upsilon_{\theta,c}(x_i, \tilde x_{\pi(i)})\,,
\end{equation}
which we shall use to control the decay of Wick-ordered correlation functions.

\begin{theorem} \label{thm:trap}
Suppose that Assumptions \ref{ass:trap}, \ref{ass:U_conv}, and \ref{ass:v_trap} hold. Let $H$ be as in \eqref{H_ren} and set $\lambda = \nu^2$. Then $Z/Z^0 \to \zeta$ as $\nu \to 0$. Moreover, for any $p \in \N^*$ there exists a constant $c > 0$ such that for any $\epsilon > 0$ we have
\begin{equation} \label{Gamma_hat_conv_trap}
\lim_{\nu \to 0} \sup_{\f x, \tilde {\f x}} \frac{\abs{\nu^p (\wh \Gamma_p)_{\f x, \tilde {\f x}} - (\wh \gamma_p)_{\f x, \tilde {\f x}}}}{\Upsilon_{\theta - \epsilon,c}(\f x, \tilde {\f x})} = 0\,.
\end{equation}
\end{theorem}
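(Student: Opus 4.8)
The plan is to reduce Theorem~\ref{thm:trap} to the functional-integral representation of the grand canonical Gibbs state outlined in the overview and developed in Section~\ref{sec:fct_int}. Since $v$ is of positive type (Assumption~\ref{ass:v_trap}), I would first apply a Hubbard--Stratonovich transformation to the Wick-ordered quartic interaction in \eqref{H_ren}, writing the interacting Gibbs weight as a Gaussian expectation over an auxiliary field on the imaginary-time circle $[0,\nu]$ that is white noise in the time direction and hence requires a regularization at some scale $\eta \to 0$. After this decoupling, $Z$ and the reduced density matrices $\Gamma_p$ become expectations over the auxiliary field of traces of products of heat propagators $\ee^{-t(h + \text{random potential})}$, which by the Feynman--Kac formula are expressed as a gas of interacting Brownian loops (representing $Z$) together with $p$ additional Brownian paths joining the external arguments $\f x$ and $\tilde{\f x}$ (representing $\Gamma_p$). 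The Wick-ordered matrix $\wh\Gamma_p$ of \eqref{def_Gamma_renormalized} then corresponds to subtracting the contributions of loops that shrink to points in the limit; equivalently, as noted after \eqref{def_Gamma_renormalized}, it is a correlation function Wick-ordered with respect to the free field.

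With this representation, the mean-field limit $\nu \to 0$ is taken as for Theorem~\ref{thm:main}: an $n$-fold loop has total imaginary-time length $n\nu$, discretizing $[0,\nu]$ into $\OO(1)$ steps of size $\nu$ turns the functional integral into Riemann sums over loop and path configurations, and pointwise convergence of $\nu^p(\wh\Gamma_p)_{\f x,\tilde{\f x}}$ to $(\wh\gamma_p)_{\f x,\tilde{\f x}}$ follows from simple continuity properties of Brownian-bridge paths (bridges of length $n\nu\to0$ collapse to their endpoints) together with the local uniform convergence $U_\nu \to U_0$ of Assumption~\ref{ass:U_conv}. In parallel one constructs the classical objects $W$, $\zeta$, $\wh\gamma_p$ from \eqref{def_W_K}, \eqref{def_z}, \eqref{def_gamma_hat} and checks their existence under Assumptions~\ref{ass:trap} and \ref{ass:v_trap}; the partition-function statement $Z/Z^0\to\zeta$ is the $p = 0$ instance and is obtained along the way.

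The new feature relative to the torus case is uniformity in $\f x, \tilde{\f x}$, which is exactly what the weight $\Upsilon_{\theta,c}$ encodes, and both of its factors are read off from Brownian-bridge estimates. The exponential factor $\ee^{-c\abs{x-\tilde x}}$ comes from combining the Gaussian off-diagonal decay of the free heat kernel with the spectral gap $\kappa>0$: a path connecting $x$ and $\tilde x$ over total imaginary time $\sum n_i\nu$ carries weight at most $\ee^{-\kappa\sum n_i\nu}$ times a Gaussian in the displacement, and summing over path lengths is bounded uniformly in $\nu$ by $\ee^{-c\abs{x-\tilde x}}$. The polynomial factor $(1+\abs{x}+\abs{\tilde x})^{-\theta(2-d/2)}$ is the genuinely new ingredient: a Brownian bridge between $x$ and $\tilde x$ reweighted by $\ee^{-\int U(\text{path})}$ with $U\geq b\abs{\,\cdot\,}^\theta$ is forced to make a large excursion, and the excursion probability of the bridge in such a confining potential produces precisely the stated polynomial decay; the loss of an arbitrarily small $\epsilon$ in the trap exponent reflects a borderline case in these estimates, tied to the integrability assumption \eqref{tr_h_assump} with $s=2$. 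Summing these path-by-path estimates over the loop gas, and using the positivity of $v$ to dominate the interacting weight $\ee^{-W}$ after the Hubbard--Stratonovich step, then upgrades the pointwise convergence to \eqref{Gamma_hat_conv_trap}.

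I expect the main obstacle to be carrying out the limit and the decay estimates \emph{simultaneously}: one must show that the $\nu$-dependent, spatially inhomogeneous counterterms $\varrho(x)/\nu$ in \eqref{H_ren} (equivalently the shifted one-particle Hamiltonian $\tilde h$ in \eqref{H_ren_H_sq}) together with the Wick-ordering cancellations do not destroy the decay in $1+\abs{x}+\abs{\tilde x}$, and that the $L^2(\mu_{h^{-1}})$ construction of $W$ and the attendant moment bounds survive in a form uniform in the spatial variables. Controlling the effect of the interaction on the decay --- that inserting $\ee^{-W}$ does not worsen $\Upsilon_{\theta,c}$ --- is the technical heart, and is where the positivity of $v$ (so that after Hubbard--Stratonovich the interacting weight is dominated by a tractable Gaussian expectation) must be combined with the excursion estimates applied to each Brownian loop and path in the gas.
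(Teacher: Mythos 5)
Your overall route coincides with the paper's: Hubbard--Stratonovich decoupling, the representation in terms of Brownian loops and paths weighted by $\ee^{-\int \dd t\, U(\omega(t))}$, re-running the torus convergence argument (Riemann sums and continuity of bridges, with $U_\nu \to U_0$ handled via $U_\nu \geq U_0/C$ and dominated convergence), and Brownian-bridge excursion estimates for the spatial decay. However, two specific points in your plan are off, and the first concerns the step that actually yields \eqref{Gamma_hat_conv_trap}.

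First, you misattribute the loss of $\epsilon$. In the paper the decay estimate holds with the \emph{full} exponent, $\abs{\nu^p(\wh\Gamma_p)_{\f x,\tilde{\f x}}} \leq C\,\Upsilon_{\theta,c}(\f x,\tilde{\f x})$ uniformly in $\nu$ (Proposition \ref{prop:trap_bound}); there is no borderline case in the excursion estimates and no connection with $s=2$ in \eqref{tr_h_assump}. The $\epsilon$ enters only in the final assembly, which your proposal leaves vague (``upgrades the pointwise convergence''): the convergence statement one can actually prove by carrying over Sections \ref{sec:fct_quantum}--\ref{sec:proofs_conclusion} is only \emph{locally} uniform in $\f x, \tilde{\f x}$ (Proposition \ref{prop:trap_conv}). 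One then splits the supremum at a radius $R$: for $\abs{\f x}+\abs{\tilde{\f x}} > R$ the uniform bound with weight $\Upsilon_{\theta,c}$ for both $\nu^p\wh\Gamma_p$ and $\wh\gamma_p$, together with $\Upsilon_{\theta,c}/\Upsilon_{\theta-\epsilon,c} \leq C R^{-\epsilon(2-d/2)}$, makes the ratio small; for $\abs{\f x}+\abs{\tilde{\f x}} \leq R$ one invokes local uniform convergence. Without this two-scale argument (or a genuinely space-uniform convergence proof, which you do not supply) the weighted limit \eqref{Gamma_hat_conv_trap} does not follow from ``pointwise convergence plus decay''.

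Second, the excursion probabilities alone do not ``produce precisely the stated polynomial decay''. The excursion estimate (Lemma \ref{lem:prop_est}) yields factors such as $\ee^{-c(\abs{x}+\abs{\tilde x})^\theta r}$ multiplying $\psi^{r}(x-\tilde x) \sim r^{-d/2}$; integrating only these over $r$ would give an exponent $\theta(1-d/2)$, which is no decay at all for $d \geq 2$. The missing ingredient is an extra factor of $r$ per external path, coming from Wick ordering: each path carries the weight $\ee^{\ii\int \dd t\,\sigma([t]_\nu,\omega(t))} - 1$ rather than $\ee^{\ii\int\sigma}$, and its $\mu_{\cal C_\eta}$-average is $O(r)$ by Cauchy--Schwarz and Lemma \ref{lem:T_p} (this is the first estimate in \eqref{Y_est2}); one then integrates $r^{1-d/2}\ee^{-c(\abs{x}+\abs{\tilde x})^\theta r}$ and obtains $\theta(2-d/2)$. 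This is exactly the content of Remark \ref{rem:decay_hat_gamma}: the improved infrared behaviour is an effect of Wick ordering, which your derivation of the weight does not exploit. (Your worry about the counterterm $\varrho(x)/\nu$ and the insertion of $\ee^{-W}$ is, by contrast, resolved for free: after Hubbard--Stratonovich the $\sigma$-dependent prefactor is $\ee^{F_2(\sigma)}$ with $\re F_2 \leq 0$, so the decay is controlled path by path.)
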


\begin{remark} \label{rem:trap_Lr}
Suppose that \eqref{theta_bound} holds. An immediate consequence of Theorem \ref{thm:trap} and the fact that $\Upsilon_{\theta - \epsilon,c} \in L^1(\R^{2dp})$ for small enough $\epsilon > 0$ is that $\nu^p \wh \Gamma_p$ converges to $\wh \gamma_p$ in $L^1 \cap L^\infty$. Here the space $L^1$ controls the infrared (large scale) behaviour and $L^\infty$ the ultraviolet (small scale) behaviour.

Moreover, if \eqref{theta_bound} holds then the analogues of Theorems \ref{thm:L2} and \ref{thm:density} hold under the assumptions of Theorem \ref{thm:trap}. Note that this means convergence in $L^1 \cap L^r$.
This follows easily from Theorem \ref{thm:trap} as well as the observation that $\nu \Gamma_1^0$ converges to $\gamma^0_1$ in $L^1 \cap L^r$ with $r$ satisfying the assumptions of Theorem \ref{thm:L2}, as can be seen from a simple estimate using Lemma \ref{lem:prop_est} below.
\end{remark}

\begin{remark} \label{rem:decay_hat_gamma}
From Theorem \ref{thm:trap}, more precisely from Proposition \ref{prop:trap_bound} below, we find that $\abs{\nu^p (\wh \Gamma_{p})_{\f x, \tilde {\f x}}} \leq C \Upsilon_{\theta,c}(\f x, \tilde {\f x})$. Recalling the form \eqref{def_Upsilon}, \eqref{def_Upsilon_p} and comparing to standard estimates on the decay of the free density matrix $\nu^p \Gamma^0_p$, this implies that for $d > 1$ the Wick-ordered reduced density matrix $\nu^p \wh \Gamma_p$ decays much faster than the free density matrix $\nu^p \Gamma^0_p$ for large arguments $\f x, \tilde {\f x}$. In other words, $\nu^p \wh \Gamma_p$ has a much better infrared behaviour than $\nu^p \Gamma^0_p$. This observation complements the observation, already made in Theorem \ref{thm:main}, that $\nu^p \wh \Gamma_p$ has much more regular ultraviolet behaviour than $\nu^p \Gamma^0_p$. These estimates are all uniform in $\nu$ and hence hold also for $\wh \gamma_p$ and $\gamma^0_p$. A simple instance of this phenomenon is that, for $d >1$, the condition \eqref{theta_bound} which ensures that $\gamma_1^0$ is in $L^2$ implies that $\wh \gamma_1$ is even in $L^1$ (as follows from \eqref{def_Upsilon} and Proposition \ref{prop:trap_bound} below).
\end{remark}

We conclude this subsection with a discussion on the \emph{counterterm problem}, which relates the Wick-ordered Hamiltonian \eqref{H_ren} to the Hamiltonian without Wick ordering \eqref{H_sq}. Under the assumptions of Theorem \ref{thm:trap}, the Hamiltonian \eqref{H_ren} reads
\begin{equation} \label{Hc_ren}
H = \nu \int \dd x \, a^*(x) \pbb{\kappa - \frac{\Delta}{2} + U(x)} a(x) 
+ \frac{1}{2} \int \dd x \, \dd \tilde x  \, \p{\nu a^*(x) a(x) - \varrho^U(x)}  \, v(x - \tilde x) \, \p{\nu a^*(\tilde x) a(\tilde x) - \varrho^U(\tilde x)}\,,
\end{equation}
where we can write the free quantum density $\varrho^U(x)$ from \eqref{def_rho} using the Wick theorem (see Lemma \ref{Quantum Wick theorem} below) as
\begin{equation*}
\varrho^U(x) = \pbb{\frac{\nu}{\exp(\nu h^U) - 1}}_{x,x}\,.
\end{equation*}
We compare the Wick-ordered Hamiltonian \eqref{Hc_ren} to the Hamiltonian without Wick ordering
\begin{equation} \label{Hc_phys}
\tilde H = \nu \int \dd x \, a^*(x) \pbb{\tilde \kappa - \frac{\Delta}{2} + V(x)} a(x) 
+ \frac{\nu^2}{2} \int \dd x \, \dd \tilde x  \,  a^*(x) a(x)  \, v(x - \tilde x)\, a^*(\tilde x) a(\tilde x)\,,
\end{equation}
where $\tilde \kappa \in \R$ and $V \col \Lambda \to \R$ is a \emph{bare} external potential. A simple calculation shows that $H$ and $\tilde H$ differ by an irrelevant constant $H - \tilde H = \frac{1}{2} \int \dd x \, \dd \tilde x \, \varrho^U(x) v(x - \tilde x) \varrho^U(\tilde x)$ under the conditions
\begin{subequations}
\begin{align} \label{kappa_relation}
\tilde \kappa &= \kappa - \hat v(0) \varrho^0
\\ \label{counterterm-problem}
V &= U - v * (\varrho^U - \varrho^0)\,.
\end{align}
\end{subequations}
Here
\begin{equation*}
\varrho^0 = \pbb{\frac{\nu}{\exp(\nu h^0) - 1}}_{0,0}
\end{equation*}
is the (homogeneous) free particle density in the absence of the external potential, and $\wh v(0) = \int \dd x \, v(x)$. It is easy to verify the asymptotic behaviour
\begin{equation} \label{rho0_behaviour}
\varrho^0  \asymp T_\nu\,,
\end{equation}
where $T_\nu$ was defined in \eqref{def_T_nu}.

The equation \eqref{counterterm-problem} is the \emph{counterterm problem}, which relates the bare external potential $V$ to the dressed, or renormalized, external potential $U$. It is a nonlinear integral equation for the potential $U$, which is solved in \cite[Section 5]{frohlich2017gibbs}. We quote the relevant result.

\begin{proposition}[Theorem 5.2 and Remark 5.3 of \cite{frohlich2017gibbs}] \label{prop:counterterm}
Let $V \col \Lambda \to \R$ be a locally bounded nonnegative function satisfying $V(x) \geq c \abs{x}^\theta$ for some constants $c > 0$ and $\theta$ satisfying \eqref{theta_bound}. Suppose moreover that there exists a constant $C> 0$ such that
\begin{equation*}
V(x+y) \leq C (1 + V(x)) (1 + V(y))
\end{equation*}
as well as
\begin{equation*}
\int \dd x \, v(x) \, (1 + V(x)^2) < \infty\,.
\end{equation*}
Then the following holds for large enough $\kappa > 0$. 
\begin{enumerate}[label=(\roman*)]
\item
The counterterm problem \eqref{counterterm-problem} has a unique solution $U_\nu$ for each $\nu > 0$.
\item
The equation
\begin{equation} \label{ct_limit}
V(x) = U_0(x) - \int \dd \tilde x \, v(x - \tilde x) \pbb{\frac{1}{\kappa - \Delta/2 + U_0} - \frac{1}{\kappa - \Delta/2}}_{\tilde x, \tilde x}
\end{equation}
has a unique solution $U_0$, and we have
\begin{equation*}
\lim_{\nu \to 0} \norm{(U_\nu - U_0)/V}_{L^\infty} = 0\,.
\end{equation*}
\item
There is a constant $C > 0$ such that $V / C \leq U_\nu \leq C V$ for all $\nu \geq 0$.
\end{enumerate}
\end{proposition}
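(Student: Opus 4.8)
The plan is to recast the counterterm problem \eqref{counterterm-problem} as a fixed-point equation and solve it by a contraction mapping argument whose contraction constant is made small by taking $\kappa$ large, uniformly in $\nu$. Writing $h^U = \kappa - \Delta/2 + U$ and $\varrho^U(x) = \pbb{\frac{\nu}{\ee^{\nu h^U} - 1}}_{x,x}$, a potential $U$ solves \eqref{counterterm-problem} if and only if it is a fixed point of
\begin{equation*}
\mathcal T_\nu(U) \deq V + v * \pb{\varrho^U - \varrho^0}\,.
\end{equation*}
I would work in the Banach space $X$ of continuous functions $U$ on $\Lambda$ for which $\norm{U}_X \deq \sup_x \abs{U(x)}/(1 + V(x))$ is finite; the weight $1 + V$ is dictated by the hypotheses $\int \dd x \, v(x)(1 + V(x)^2) < \infty$ and $V(x + y) \leq C(1 + V(x))(1 + V(y))$, and $V$ itself lies in $X$ with $\norm V_X \leq 1$.

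The analytic core is the estimate that $U \mapsto v * (\varrho^U - \varrho^0)$ is Lipschitz on $X$ with a constant $\epsilon(\kappa)$ that tends to $0$ as $\kappa \to \infty$, uniformly in $\nu$. Expanding $\frac{\nu}{\ee^{\nu h} - 1} = \nu \sum_{n \geq 1} \ee^{-n \nu h}$ and using the Feynman--Kac formula, one has the heat-kernel bound $0 \leq (\ee^{-t h^U})_{x,y} \leq \ee^{-\kappa t} \psi^t(x - y)$ and the monotonicity $U_1 \leq U_2 \implies \varrho^{U_1} \geq \varrho^{U_2}$ (in particular $0 \leq \varrho^U \leq \varrho^0 \asymp T_\nu$). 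A Duhamel expansion then gives, for $U_1 \leq U_2$,
\begin{equation*}
0 \leq \varrho^{U_1}(x) - \varrho^{U_2}(x) \leq \nu \sum_{n \geq 1} \ee^{-\kappa n \nu} \int_0^{n \nu} \dd t \int \dd y \, \psi^{n\nu - t}(x - y) \, \psi^{t}(x - y) \, \pb{U_2(y) - U_1(y)}\,,
\end{equation*}
and inserting $U_2(y) - U_1(y) \leq \norm{U_2 - U_1}_X (1 + V(y))$, moving the weight $1 + V(y)$ onto the base point $x$ via submultiplicativity, carrying out the Gaussian integral in $y$ and summing over $t$ and $n$ — which produces a factor $\asymp \int_0^\infty \ee^{-(\kappa - C)s} s^{1 - d/2} \, \dd s$, finite for $\kappa$ large and vanishing as $\kappa \to \infty$ — one obtains $\abs{\varrho^{U_1}(x) - \varrho^{U_2}(x)} \leq C \epsilon(\kappa)(1 + V(x)) \norm{U_1 - U_2}_X$. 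A final convolution with $v$, using submultiplicativity once more and $\int \dd x \, v(x)(1 + V(x)) < \infty$, gives $\norm{v * (\varrho^{U_1} - \varrho^{U_2})}_X \leq C \epsilon(\kappa) \norm{U_1 - U_2}_X$ for all $U_1, U_2 \in X$ (the general case reducing to the monotone one by comparing each $U_i$ to $U_1 \wedge U_2$). Hence, for $\kappa$ large, $\mathcal T_\nu$ is a contraction of a ball around $V$ into itself, and Banach's theorem yields the unique solution $U_\nu$ of \eqref{counterterm-problem}, which is (i). Evaluating the same estimate at the fixed point gives $\norm{U_\nu - V}_X = \norm{v * (\varrho^{U_\nu} - \varrho^0)}_X \leq C \epsilon(\kappa)$, so $\abs{U_\nu(x) - V(x)} \leq C \epsilon(\kappa)(1 + V(x))$; combined with $U_\nu \geq 0$ and the growth $V(x) \geq c\abs{x}^\theta$, this places $U_\nu$ between $V/C'$ and $C'V$ for $\kappa$ large, which is (iii).

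For the limit $\nu \to 0$, part (ii), the key observation is that, although for $d \geq 2$ the diagonal $(\ee^{-t h^U})_{x,x}$ is not integrable in $t$ near $t = 0$, the renormalized difference $\varrho^U - \varrho^0 = \nu \sum_{n \geq 1} \pb{(\ee^{-n\nu h^U})_{x,x} - (\ee^{-n\nu h^0})_{x,x}}$ is a Riemann sum for $\int_0^\infty (\ee^{-t h^U} - \ee^{-t h^0})_{x,x} \, \dd t = \bigl((h^{U})^{-1} - (h^{0})^{-1}\bigr)_{x,x}$, whose finiteness and continuity follow from the resolvent identity $(h^U)^{-1} - (h^0)^{-1} = -(h^0)^{-1} U (h^U)^{-1}$ and the heat-kernel bounds. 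I would show this convergence holds uniformly for $U$ in the relevant ball of $X$ and that the weighted Lipschitz bounds pass to the limit, so that the natural $\nu = 0$ analogue of $\mathcal T_\nu$, obtained by replacing $\varrho^U - \varrho^0$ with the function $x \mapsto \bigl((h^{U})^{-1} - (h^{0})^{-1}\bigr)_{x,x}$ (whose fixed-point equation is \eqref{ct_limit}), is again a contraction on $X$ with constant $\epsilon(\kappa) < 1$ and has a unique fixed point $U_0$. Finally, the standard stability estimate for fixed points of uniformly contracting maps, $\norm{U_\nu - U_0}_X \leq (1 - \epsilon(\kappa))^{-1} \norm{\mathcal T_\nu(U_0) - \mathcal T_0(U_0)}_X \to 0$, together with $U_0 \geq V/C'$ (which allows trading the weight $1 + V$ for $V$), gives $\norm{(U_\nu - U_0)/V}_{L^\infty} \to 0$, proving (ii).

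I expect the \emph{main obstacle} to be the uniform-in-$\nu$ weighted Lipschitz estimate on $U \mapsto \varrho^U$: one must simultaneously control the ultraviolet behaviour for $d \geq 2$ (which works only because the difference $\varrho^U - \varrho^0$ enters and the Duhamel bound above is integrable near $t = 0$ once a factor of the potential has been peeled off), propagate the polynomial weight through the short-range convolution with $v$ using submultiplicativity and the moment hypothesis on $v$, and extract a genuinely small constant — the latter coming from the decay $\ee^{-\kappa t}$ of the semigroup, which is exactly why the statement requires $\kappa$ large. Once this estimate is in hand, the Feynman--Kac monotonicity, the Duhamel expansion, and the Riemann-sum convergence needed for the $\nu \to 0$ limit are routine.
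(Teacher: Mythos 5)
You should first note that this paper never proves Proposition \ref{prop:counterterm}: it is imported verbatim from \cite{frohlich2017gibbs} (Theorem 5.2 and Remark 5.3), so there is no in-paper argument to compare with. The proof given in that reference is, like yours, a Banach fixed-point argument for $U \mapsto V + v * (\varrho^U - \varrho^0)$ in a $(1+V)$-weighted uniform norm, with the Lipschitz constant made small, uniformly in $\nu$, by taking $\kappa$ large (via Feynman--Kac/Duhamel bounds on $\varrho^{U_1}-\varrho^{U_2}$), and with the $\nu \to 0$ limit obtained from the convergence of the Riemann sum $\nu \sum_{n \geq 1} (\ee^{-n\nu h^U} - \ee^{-n\nu h^0})_{x,x}$ to the diagonal of the resolvent difference together with stability of fixed points of uniformly contracting maps. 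Your core estimates — the monotone Duhamel bound, the reduction of the general Lipschitz bound to the monotone case via $U_1 \wedge U_2$, the smallness $\epsilon(\kappa) \asymp (\kappa - C)^{-(2-d/2)}$ (which is where $d \leq 3$ and the subtraction of $\varrho^0$ enter), and the propagation of the weight through the convolution with $v$ using submultiplicativity and $\int \dd x\, \abs{v(x)}(1+V(x)) < \infty$ — are sound and are essentially the ingredients of the cited proof, so parts (i) and the existence/uniqueness statements in (ii) are in good shape.

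There is, however, a concrete gap in the last step: your contraction argument delivers the $(1+V)$-weighted conclusions, namely $\abs{U_\nu - V} \leq C\epsilon(\kappa)(1+V)$ and $\norm{(U_\nu - U_0)/(1+V)}_{L^\infty} \to 0$, but the statement asserts the $V$-weighted versions, (iii) $V/C \leq U_\nu \leq CV$ and $\norm{(U_\nu - U_0)/V}_{L^\infty} \to 0$ in (ii). Your justification for the upgrade — ``combined with $U_\nu \geq 0$'' and ``trading the weight $1+V$ for $V$ using $U_0 \geq V/C$'' — fails wherever $V$ is small: for $V(x) = \abs{x}^\theta$ (exactly the case used in Corollary \ref{cor:trap}) the ratio $(1+V)/V$ blows up at the origin, so the $(1+V)$-weighted convergence does not imply the $V$-weighted one; and $U_\nu \geq 0$ is never established and is not automatic, since at a zero of $V$ the fixed-point equation gives $U_\nu(x) = \qb{v * (\varrho^{U_\nu} - \varrho^0)}(x)$, a quantity that has no reason to vanish (and is $\leq 0$ when $v \geq 0$), so near the zeros of $V$ you only get $\abs{U_\nu - V} \leq C\epsilon(\kappa)$, not proportionality to $V$. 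Closing this requires either a strictly positive lower bound on $V$ (equivalently, stating (ii)--(iii) with the weight $1+V$, or restricting the comparison to $\abs{x} \geq 1$, which is what is actually needed for Assumptions \ref{ass:trap} and \ref{ass:U_conv}), or an additional argument about $v * (\varrho^{U}-\varrho^0)$ near the zeros of $V$; as written, the deduction of (iii) and the final norm conversion in (ii) do not follow from your estimates.
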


The assumptions of Proposition \ref{prop:counterterm} are for instance satisfied for the external potential $V(x) = \abs{x}^\theta$, with $\theta$ satisfying \eqref{theta_bound}, and $\abs{v(x)} \leq C (1 + \abs{x})^{-2 \theta -d  - \delta}$ for some $C,\delta > 0$. In this case, it is easy to check that Assumptions \ref{ass:trap} and \ref{ass:U_conv} hold for $U_\nu$ constructed in Proposition \ref{prop:counterterm}, so that Theorem \ref{thm:trap} can be applied to the grand canonical Gibbs state defined by the physical Hamiltonian \eqref{Hc_phys}. More precisely, we have the following result.

\begin{corollary} \label{cor:trap}
Suppose that $\Lambda = \R^d$ for $d = 1,2,3$ and suppose that $V(x) = \abs{x}^\theta$, with $\theta$ satisfying \eqref{theta_bound}, and $\abs{v(x)} \leq C (1 + \abs{x})^{-2 \theta -d  - \delta}$ for some $C,\delta > 0$. Let $U_0$ be the unique solution of the limiting counterterm problem \eqref{ct_limit} associated with $V$. Fix $\kappa > 0$ and let $\tilde \kappa$ be given by \eqref{kappa_relation}. Let $\wh \Gamma_p$ be the Wick-ordered reduced density matrix \eqref{def_Gamma_renormalized}, \eqref{def_gammap} associated with the Hamiltonian \eqref{Hc_phys} with external potential $V$, and let $\wh \gamma_p$ be the Wick-ordered correlation function \eqref{def_gamma_hat}, \eqref{def_gamma} associated with the external potential $U_0$. Then \eqref{Gamma_hat_conv_trap} holds for some $c > 0$ and any $\epsilon > 0$.
\end{corollary}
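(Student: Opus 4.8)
The plan is to deduce Corollary~\ref{cor:trap} from Theorem~\ref{thm:trap} by solving the counterterm problem \eqref{counterterm-problem} for the prescribed bare data $(V,\tilde\kappa,v)$ via Proposition~\ref{prop:counterterm}, and then observing that the grand canonical Gibbs state of the physical Hamiltonian \eqref{Hc_phys} coincides with that of the renormalized Hamiltonian \eqref{Hc_ren} built from the renormalized external potential $U_\nu$ produced by that solution.

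\emph{Step 1 (solving the counterterm problem).} First I would check that $(V,v)$ satisfies the hypotheses of Proposition~\ref{prop:counterterm}. The potential $V(x)=\abs x^\theta$ is continuous, nonnegative, and bounded below by $\abs x^\theta$; it obeys $V(x+y)=\abs{x+y}^\theta\leq 2^\theta(\abs x^\theta+\abs y^\theta)\leq 2^\theta(1+V(x))(1+V(y))$; and since $V(x)^2\lesssim(1+\abs x)^{2\theta}$, the decay hypothesis $\abs{v(x)}\leq C(1+\abs x)^{-2\theta-d-\delta}$ gives $\int\dd x\,v(x)(1+V(x)^2)\lesssim\int\dd x\,(1+\abs x)^{-d-\delta}<\infty$. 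Hence, for $\kappa$ above the threshold of Proposition~\ref{prop:counterterm}, there is a unique solution $U_\nu$ of \eqref{counterterm-problem} for each $\nu>0$ and a unique solution $U_0$ of \eqref{ct_limit}, with $\norm{(U_\nu-U_0)/V}_{L^\infty}\to0$ as $\nu\to0$ and $V/C'\leq U_\nu\leq C'V$ for all $\nu\geq0$.

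\emph{Step 2 (verifying the hypotheses of Theorem~\ref{thm:trap}).} Next I would verify Assumptions~\ref{ass:trap},~\ref{ass:U_conv}, and~\ref{ass:v_trap} for $U_\nu$ and $v$. Continuity of $U_\nu=V+v*(\varrho^{U_\nu}-\varrho^0)$ follows from that of $V$ together with $v\in L^1$ and the boundedness of $\varrho^{U_\nu},\varrho^0$; the latter follows from a standard Feynman--Kac heat-kernel estimate which, using $U_\nu\geq0$, dominates the diagonal heat kernel of $h^{U_\nu}$ by that of $\kappa-\Delta/2$ and hence gives $\varrho^{U_\nu}\leq\varrho^0\asymp T_\nu<\infty$ for each fixed $\nu$. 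The lower bound $U_\nu(x)\geq b\abs x^\theta$ holds with $b=1/C'$ by Step~1, and $\theta$ satisfies \eqref{theta_bound}, so by the Lieb--Thirring inequality \cite[Theorem~1]{BFLP06} we get $\tr(h^{U_\nu})^{-2}<\infty$, uniformly in $\nu$ because $U_\nu\geq V/C'$ uniformly; thus Assumption~\ref{ass:trap} holds. Assumption~\ref{ass:U_conv} follows since $\norm{(U_\nu-U_0)/V}_{L^\infty}\to0$ yields locally uniform convergence $U_\nu\to U_0$ ($V$ being locally bounded), while $U_\nu\geq V/C'\geq U_0/(C')^2$ uses $U_0\leq C'V$. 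Finally, $v$ is even and of positive type by the standing assumptions, bounded by the decay hypothesis, and hence uniformly continuous (a bounded function of positive type being the Fourier transform of a finite nonnegative measure), so Assumption~\ref{ass:v_trap} holds.

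\emph{Step 3 (identification and conclusion).} With this $U_\nu$, the relations \eqref{kappa_relation}--\eqref{counterterm-problem} hold by construction (the latter by definition of $U_\nu$, the former by definition of $\tilde\kappa$). By the computation preceding Proposition~\ref{prop:counterterm}, the physical Hamiltonian \eqref{Hc_phys} with potential $V$ and chemical potential $\tilde\kappa$ therefore equals the renormalized Hamiltonian \eqref{Hc_ren} with potential $U_\nu$ and chemical potential $\kappa$ up to the additive constant $\frac{1}{2}\int\dd x\,\dd\tilde x\,\varrho^{U_\nu}(x)v(x-\tilde x)\varrho^{U_\nu}(\tilde x)$. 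This constant drops out of $\ee^{-H}/Z$, so the two Hamiltonians define the same grand canonical Gibbs state, hence the same reduced density matrices $\Gamma_p$ of \eqref{def_gammap}; and, with the free reference Hamiltonian understood as the $\lambda=0$ part $\nu\int\dd x\,a^*(x)(h^{U_\nu}a)(x)$ of the renormalized Hamiltonian, the free matrices $\Gamma_p^0$ and hence the Wick-ordered $\wh\Gamma_p$ of \eqref{def_Gamma_renormalized} also coincide with the objects of Theorem~\ref{thm:trap}. Since the classical $\wh\gamma_p$ built from $h^{U_0}$ is exactly the limit object of Theorem~\ref{thm:trap}, applying that theorem with $U=U_\nu$ yields \eqref{Gamma_hat_conv_trap} for some $c>0$ and every $\epsilon>0$.

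I expect the only genuinely non-mechanical point to be the \emph{uniformity in $\nu$} of the trace condition $\tr(h^{U_\nu})^{-2}<\infty$ in Assumption~\ref{ass:trap}, which rests on the $\nu$-independent two-sided comparison $V/C'\leq U_\nu\leq C'V$ from Proposition~\ref{prop:counterterm}(iii) rather than on any $\nu$-dependent bound; the rest is bookkeeping, most notably keeping straight the chemical-potential shift \eqref{kappa_relation} and the identification of the free reference Hamiltonian when passing between the bare form \eqref{Hc_phys} and the Wick-ordered form \eqref{Hc_ren}.
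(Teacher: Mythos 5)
Your proposal is correct and follows essentially the same route as the paper, which simply notes that the hypotheses of Proposition \ref{prop:counterterm} are satisfied for this choice of $V$ and $v$, that Assumptions \ref{ass:trap} and \ref{ass:U_conv} (and \ref{ass:v_trap}) then hold for the resulting $U_\nu$, and that the Gibbs states of \eqref{Hc_phys} and \eqref{Hc_ren} coincide up to the irrelevant constant so that Theorem \ref{thm:trap} applies. Your verification of the assumptions and of the uniformity in $\nu$ via the two-sided bound $V/C \leq U_\nu \leq CV$ is exactly the bookkeeping the paper leaves implicit.
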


We remark that, by \eqref{rho0_behaviour}, the relation \eqref{kappa_relation} means that $\tilde \kappa \to -\infty$ as $\nu \to 0$ for $d >1$. The physical interpretation is that, in order to obtain a well-defined field limit of the quantum many-body  system, the chemical potential of the Hamiltonian \eqref{Hc_phys} has to be chosen very large to compensate the very large repulsive interaction energy of the particles.

\subsection{The functional integral and strategy of proof} \label{sec:fct_int}
We conclude the introduction by explaining the main ideas behind our proof. Our strategy is based on a functional integral representation of quantum many-body theory. It is particularly transparent if one is willing to forgo mathematical rigour, as we shall do in this subsection.  In fact, an important contribution of our paper is to put the following formal discussion on a rigorous footing.

Let us first consider the many-body Hamiltonian \eqref{H_sq} without Wick ordering.
Let $\Phi \col [0,\nu] \times \Lambda \to \C$ be a field, i.e.\ a function of the time variable $\tau \in [0,\nu]$ and the spatial variable $x \in \Lambda$. We define the \emph{free action}
\begin{equation*}
\cal S^0(\Phi) \deq \int_0^\nu \dd \tau \int_\Lambda \dd x \, \bar \Phi(\tau,x) \pb{\partial_\tau + \kappa - \Delta/2} \Phi(\tau,x)\,,
\end{equation*}
which is complex valued (because $\partial_\tau$ is not self-adjoint) with a positive real part. Moreover, we define the \emph{interaction}
\begin{equation} \label{def_W_quantum}
\cal W(\Phi) \deq \frac{\lambda}{2 \nu} \int_0^\nu \dd \tau \int_\Lambda \dd x \, \dd \tilde x \, \abs{\Phi(\tau,x)}^2 \, v(x - \tilde x) \, \abs{\Phi(\tau, \tilde x)}^2\,,
\end{equation}
and write $\cal S(\Phi) \deq \cal S^0(\Phi) + \cal W(\Phi)$ for the action of the interacting field theory.
Then the statistical mechanics of the grand canonical ensemble $\ee^{-H}/Z$ associated with the quantum many-body problem can be expressed in terms of the field theory
\begin{equation*}
\frac{1}{C} \, \ee^{-\cal S(\Phi)} \, \mathrm D \Phi\,,
\end{equation*}
where $\mathrm D \Phi = \prod_{\tau \in [0,\nu]} \prod_{x \in \Lambda} \dd \Phi(\tau, x)$ is the (nonexisting) uniform measure over the space of fields, and $C$ is an (infinite) normalization constant. More precisely, the relative partition function has the representation
\begin{equation} \label{formal_Z}
\cal Z \deq \frac{Z}{Z^0} = \frac{\int \mathrm D \Phi \, \ee^{-\cal S(\Phi)}}{\int \mathrm D \Phi \, \ee^{-\cal S^0(\Phi)}}\,,
\end{equation}
where the partition function $Z$ was defined in \eqref{def_Z} and $Z^0$ is the free partition function. Moreover, the reduced density matrices defined in \eqref{def_gammap} are given as correlation functions of the field $\Phi$ at time zero:
\begin{equation} \label{formal_Gamma}
(\Gamma_p)_{x_1 \dots x_p, \tilde x_1 \dots \tilde x_p} = \frac{1}{\int \mathrm D \Phi \, \ee^{-\cal S(\Phi)}} \int \mathrm D \Phi \, \ee^{-\cal S(\Phi)} \, 
\bar \Phi(0,\tilde x_1) \cdots \bar \Phi(0, \tilde x_p) \Phi(0, x_1) \cdots \Phi(0, x_p)\,.
\end{equation}
The formulas \eqref{formal_Z} and \eqref{formal_Gamma} are the functional integral representation underlying our proofs. We remark that in this functional integral formulation, the Wick-ordered reduced density matrix from \eqref{def_Gamma_renormalized} can be written as
\begin{equation} \label{formal_whGamma}
(\wh \Gamma_p)_{x_1 \dots x_p, \tilde x_1 \dots \tilde x_p} = \frac{1}{\int \mathrm D \Phi \, \ee^{-\cal S(\Phi)}} \int \mathrm D \Phi \, \ee^{-\cal S(\Phi)} \, 
\wick{\bar \Phi(0,\tilde x_1) \cdots \bar \Phi(0, \tilde x_p) \Phi(0, x_1) \cdots \Phi(0, x_p)}\,,
\end{equation}
where the Wick ordering is performed with respect to the free field $\ee^{-\cal S^0(\Phi)} \mathrm D\Phi$. The representations \eqref{formal_Gamma} and \eqref{formal_whGamma} are the quantum functional integral counterparts of the classical field expressions \eqref{def_gamma} and \eqref{def_gamma_hat}.

Analogously, the classical field theory from Section \ref{sec:classical_field}, here without Wick ordering, can be written formally as
\begin{equation*}
\frac{1}{c} \, \ee^{-s(\phi)} \, \mathrm D \phi\,,
\end{equation*}
where $\phi \col \Lambda \to \C$ is the classical field which depends on the spatial variable $x \in \Lambda$ only, $\mathrm D \phi = \prod_{x \in \Lambda} \dd \phi(x)$ is the (nonexisting) uniform measure on the space of classical fields, $c$ is an (infinite) normalization constant, and $s(\phi) \deq s^0(\phi) + w(\phi)$ is the classical action, with the classical free action
\begin{equation*}
s^0(\phi) \deq \int_\Lambda \dd x \, \bar \phi(x) (\kappa -\Delta/2) \phi(x)
\end{equation*}
and the classical interaction
\begin{equation*}
w(\phi) \deq \frac{1}{2} \int_\Lambda \dd x \, \dd \tilde x \, \abs{\phi(x)}^2 \, v(x - \tilde x) \, \abs{\phi(\tilde x)}^2\,.
\end{equation*}
Analogously to the identities \eqref{formal_Z} and \eqref{formal_Gamma}, we can write the (relative) partition function \eqref{def_z} as
\begin{equation} \label{formal_z}
\zeta = \frac{\int \mathrm D \phi \, \ee^{-s(\phi)}}{\int \mathrm D \phi \, \ee^{-s^0(\phi)}}
\end{equation}
and the correlation functions \eqref{def_gamma}
\begin{equation} \label{formal_gamma}
(\gamma_p)_{x_1 \dots x_p, \tilde x_1 \dots \tilde x_p} = \frac{1}{\int \mathrm D \phi \, \ee^{-s(\phi)}} \int \mathrm D \phi \, \ee^{-s(\phi)} \, \bar \phi(\tilde x_1) \cdots \bar \phi(\tilde x_p) \phi(x_1) \cdots \phi(x_p)\,.
\end{equation}

To analyse the mean-field limit $\lambda = \nu^2 \to 0$ of \eqref{formal_Z} and \eqref{formal_Gamma}, it is convenient to introduce the rescaled field $\Phi'(t,x) \deq \sqrt{\nu} \Phi(\nu t,x)$, with $t\in [0,1]$. Thus we have $\cal S(\Phi) = \cal S'(\Phi')$, where
\begin{equation*}
\cal S'(\Phi') \deq \int_0^1 \dd t \int_\Lambda \dd x \, \bar \Phi'(t,x) \pb{\partial_t / \nu+ \kappa - \Delta/2} \Phi'(t,x)
+ \frac{1}{2} \int_0^1 \dd t \int_\Lambda \dd x \, \dd \tilde x \, \abs{\Phi'(t,x)}^2 \, v(x - \tilde x) \, \abs{\Phi'(t, \tilde x)}^2\,.
\end{equation*}
Moreover, \eqref{formal_Gamma} becomes
\begin{equation*}
\nu^p (\Gamma_p)_{x_1 \dots x_p, \tilde x_1 \dots \tilde x_p}
= \frac{1}{\int \mathrm D \Phi' \, \ee^{-\cal S'(\Phi')}} \int \mathrm D \Phi' \, \ee^{-\cal S'(\Phi') } \, 
\bar \Phi'(0,\tilde x_1) \cdots \bar \Phi'(0, \tilde x_p) \Phi'(0, x_1) \cdots \Phi'(0, x_p)\,.
\end{equation*}
In terms of the field $\Phi'$, we see that the $\nu$-dependence appears only in front of the time derivative $\partial_t$ in the free part of $\cal S'$. Hence, as $\nu \to 0$, the time-dependence of $\Phi'$ is suppressed by the strong oscillations in the factor $\ee^{-\cal S'(\Phi')}$, so that, by a stationary phase argument, we expect the dominant contribution in the integral over $\Phi'$ to arise from fields that are constant in time. This is an indication that
the expressions \eqref{formal_Z} and $\nu^p \cdot$\eqref{formal_Gamma} will converge to \eqref{formal_z} and \eqref{formal_gamma}, respectively. This gives an appealing heuristic for the emergence of the mean-field limit from the quantum many-body problem. Thus, our proof can be regarded as a rigorous implementation of an infinite-dimensional stationary phase argument for such ill-defined functional integrals.

The discussion above was performed for the bare theory without Wick ordering of the interaction. We implement the Wick ordering in the quantum many-body problem by replacing the right-hand side of \eqref{def_W_quantum} with
\begin{equation} \label{Wick_formal}
\frac{\lambda}{2 \nu} \int_0^\nu \dd \tau \int_\Lambda \dd x \, \dd \tilde x \, \pb{\abs{\Phi(\tau,x)}^2 - \varrho(x)} \, v(x - \tilde x) \, \pb{\abs{\Phi(\tau, \tilde x)}^2 - \varrho(\tilde x)}\,, \qquad \varrho(x) =
\frac{\int \mathrm D \Phi \, \ee^{-\cal S^0(\Phi)} \,  \abs{\Phi(0,x)}^2}{\int \mathrm D \Phi \, \ee^{-\cal S^0(\Phi)}} \,,
\end{equation}
and an analogous procedure for the classical field $\phi$, which was already carefully explained in Section \ref{sec:classical_field}.

Next, we outline how we make the above formal construction rigorous and establish convergence to the mean-field limit. We start by noting that already the Gaussian measure associated with the free theory presents a major obstacle. In the classical case, the formal Gaussian measure $\ee^{-s^0(\phi)} \mathrm D \phi$ can be easily constructed using standard arguments, and we gave such a construction in Section \ref{sec:classical_field}. For this construction, it is crucial that the covariance $(\kappa - \Delta/2)^{-1}$ is self-adjoint. In contrast, in the quantum case, any attempt to construct the free Gaussian measure $\ee^{-\cal S^0(\Phi)} \mathrm D\Phi$ with covariance $(\partial_\tau  + \kappa - \Delta /2)^{-1}$ is doomed to fail. This is because the covariance is not self-adjoint, although it is a well-defined normal operator with strictly positive real part. In finite dimensions, it is easy to construct explicitly a Gaussian measure with a non-self-adjoint covariance, as long as the real part of the covariance is positive. In infinite dimensions, such a measure does in general not exist, and this is in particular the case for $\ee^{-\cal S^0(\Phi)} \mathrm D\Phi$. Indeed, as explained in \cite{balaban2008functional}, a ``no-go'' theorem from \cite{cameron1962ilstow} shows that $\ee^{-\cal S^0(\Phi)} \mathrm D\Phi$ \emph{cannot} lead to a well-defined complex measure on the space of fields. The problem is that the imaginary part of the exponent $\cal S^0(\Phi)$ is unbounded and it gives rise to uncontrollable oscillations.

We remark that the formal integral representation \eqref{formal_Z}, \eqref{formal_Gamma} is also used as the starting point of a major and ongoing programme \cite{balaban2008functional, balaban2008functional2} (see also \cite{froh_houches,Moshe_Zinn-Justin}) with the goal of establishing Bose-Einstein condensation for an interacting Bose gas. The approach of \cite{balaban2008functional, balaban2008functional2} to make \eqref{formal_Z}, \eqref{formal_Gamma} rigorous is a coherent state functional integral and a discretization of the time direction $\tau \in [0,\nu]$. As we now explain, in this paper we take very different approach.

Our solution to the construction of the measure $\ee^{-\cal S^0(\Phi)} \mathrm D \Phi$ is not to attempt to find an actual measure (which, as explained above, does not exist), but to define it as a linear functional on a sufficiently large class of functions of the field $\Phi$. Let us first explain our construction for the relative partition function without Wick ordering, \eqref{formal_Z}. Our starting point is a Hubbard-Stratonovich transformation using an auxiliary real field $\sigma : [0,\nu] \times \Lambda \to \R$, with law $\mu_{\cal C}$, which is centred and has covariance
\begin{equation*}
\int \mu_{\cal C}(\dd \sigma) \, \sigma(\tau,x) \, \sigma(\tilde \tau, \tilde x) = \frac{\lambda}{\nu}\,\delta(\tau - \tilde \tau) \, v(x - \tilde x)\,.
\end{equation*}
In practice, as such a field is white noise in the time direction, we need to introduce a regularization into our covariance which ensures that $\sigma$ is almost surely continuous (see \eqref{Fourier_regularization2} and \eqref{cov_C_eta} below). By the Hubbard-Stratonovich transformation (see \eqref{Hubbard_Stratonovich_formula} below) we obtain from \eqref{formal_Z} that
\begin{equation} \label{DPhi_formal_int}
\cal Z = \frac{\int \mathrm D \Phi \, \ee^{-\cal S(\Phi)}}{\int \mathrm D \Phi \, \ee^{-\cal S^0(\Phi)}}
= \int \mu_{\cal C}(\dd \sigma) \, \frac{\int \mathrm D \Phi \, \exp\pb{- \scalarb{\Phi}{(\partial_\tau -  \Delta/2 + \kappa - \ii \sigma) \Phi}}}{\int \mathrm D \Phi \, \exp\pb{- \scalarb{\Phi}{(\partial_\tau - \Delta/2 + \kappa) \Phi}}}\,.
\end{equation}
For any function $u \col [0,\nu] \times \Lambda \to \C$ whose real part is always negative, we introduce the operator $K(u) \deq \partial_\tau  - \Delta/2 - u$. Thus, we obtain
\begin{equation} \label{calZ_formal}
\cal Z = \int \mu_{\cal C}(\dd \sigma) \frac{\int \mathrm D \Phi \, \exp\pb{- \scalarb{\Phi}{K(-\kappa + \ii \sigma) \Phi}}}{\int \mathrm D \Phi \, \exp\pb{-\scalarb{\Phi}{K(-\kappa) \Phi}}} = \int \mu_{\cal C}(\dd \sigma) \frac{\det K(-\kappa + \ii \sigma)^{-1}}{\det K(-\kappa)^{-1}}
=
\int \mu_{\cal C}(\dd \sigma) \, \ee^{F_1(\sigma)}\,,
\end{equation}
where we defined
\begin{equation}
F_1(\sigma) \deq - \Tr \pB{\log K(-\kappa + \ii \sigma) - \log K(-\kappa)} = \int_0^\infty \dd t \, \Tr \pbb{\frac{1}{t + K(-\kappa + \ii \sigma)}-  \frac{1}{t + K(-\kappa)}}\,,
\end{equation}
and the last step follows by an integral representation of the logarithm (see \eqref{integral_identity} below).

Next, we observe that for any function $u : [0,\nu] \times \Lambda \to \C$ the Green function $(K(u)^{-1})^{\tau, \tilde \tau}_{x, \tilde x}$ of $K(u)$ can be expressed in terms of the propagator $W^{\tau, \tilde \tau}(u)$ of a heat flow driven by a periodic time-dependent potential $u([\tau]_\nu)$, where $[\tau]_\nu$ is the $\nu$-periodic representative of $\tau \in \R$ in $[0,\nu]$. More precisely, we have the relation
\begin{equation} \label{K_inv_intro}
(K(u)^{-1})^{\tau, \tilde \tau}_{x, \tilde x} = \sum_{r \in \nu \N} \ind{\tau  + r > \tilde \tau}  \, W^{\tau + r, \tilde \tau}_{x,\tilde x}(u)\,,
\end{equation}
where
\begin{equation} \label{def_W_intro}
\partial_\tau W^{\tau, \tilde \tau}(u) = \pbb{\frac{1}{2} \Delta + u([\tau]_\nu)} W^{\tau, \tilde \tau}(u) \,, \qquad W^{\tau, \tau}(u) = 1\,.
\end{equation}
Using the Feynman-Kac formula, we represent the propagator $W^{\tau, \tilde \tau}(u)$ as
\begin{equation} \label{FK_intro}
W^{\tau, \tilde \tau}_{x, \tilde x}(u) = \int  \bb W^{\tau, \tilde \tau}_{x, \tilde x}(\dd \omega)  \, \ee^{\int_{\tilde \tau}^\tau \dd t \, u([t]_\nu, \omega(t))}\,,
\end{equation}
where $\bb W^{\tau, \tilde \tau}_{x, \tilde x}$ is the usual (unnormalized) path measure of the Brownian bridge from the space-time point $(\tilde \tau, \tilde x)$ to $(\tau, x)$. Putting all of these ingredients together, we then show that the right-hand side of \eqref{calZ_formal} is well defined and equals precisely $Z/Z^0$ with $Z$ is given by \eqref{def_Z} and $Z^0$ is its free version.

Thus, our rigorous construction of the formal functional integrals in \eqref{formal_Z} is summarized by the relations \eqref{DPhi_formal_int}--\eqref{FK_intro}, along with an appropriate regularization of the measure $\mu_{\cal C}(\dd \sigma)$. This allows us to represent the left-hand side of \eqref{DPhi_formal_int} rigorously in terms of an integral over the field $\sigma$ and Brownian loops $\omega$.

In the above formal discussion, the Wick ordering from \eqref{Wick_formal} is very easy to implement: we simply introduce a phase $\exp \pb{-\frac{\ii}{\nu}\int_0^\nu \dd \tau \int \dd x \, \sigma(\tau, x) \, \varrho(x)}$
in the integral on the right-hand side of \eqref{DPhi_formal_int}, and find that this gives rise to the correctly Wick-ordered interaction on the left-hand side. After introducing this phase, a calculation (see Lemma \ref{lem:quantum_density} below) shows that the representation \eqref{calZ_formal} becomes, upon Wick ordering,
\begin{equation*}
\cal Z 
=
\int \mu_{\cal C}(\dd \sigma) \, \ee^{F_2(\sigma)}\,, \qquad F_2(\sigma) \deq F_1(\sigma) - \int_0^\infty \dd t \, \Tr \pbb{\frac{1}{t + K(-\kappa)} \, \ii \sigma \, \frac{1}{t + K(-\kappa)}}\,.
\end{equation*}
Here we observe the regularizing effect of Wick ordering: by a simple resolvent expansion we notice a strong cancellation between the two terms of $F_2(\sigma)$, which will ensure the boundedness of $F_2(\sigma)$ as $\nu \to 0$ in all dimensions $d = 1,2,3$, unlike $F_1(\sigma)$, which is well behaved only for $d = 1$.

Next, using the above representation we can write (for the non-Wick-ordered interaction for simplicity, as above Wick-ordering the interaction amounts to replacing $F_1$ by $F_2$) the reduced density matrices \eqref{formal_Gamma} as
\begin{equation} \label{Gamma_intro_formal}
\Gamma_p = \frac{1}{\int \mu_{\cal C}(\dd \sigma) \, \ee^{F_1(\sigma)}} \, p! P_p \int \mu_{\cal C}(\dd \sigma)\, \ee^{F_1(\sigma)} \, \pB{\pb{K(-\kappa + \ii \sigma)^{-1}}^{0,0}}^{\otimes p}\,.
\end{equation}
To see why \eqref{Gamma_intro_formal} indeed reproduces \eqref{formal_Gamma}, we apply a Hubbard-Stratonovich transformation to \eqref{formal_Gamma}, which yields
\begin{multline*}
(\Gamma_p)_{x_1 \dots x_p, \tilde x_1 \dots \tilde x_p} = \frac{1}{\cal Z}\, \frac{1}{\int \mathrm D \Phi \, \exp\pb{-\scalarb{\Phi}{K(-\kappa) \Phi}}}
\\
\times \int \mu_{\cal C}(\dd \sigma) \int \mathrm D \Phi \, \exp\pb{- \scalarb{\Phi}{K(-\kappa + \ii \sigma) \Phi}} \, \bar \Phi(0,\tilde x_1) \cdots \bar \Phi(0, \tilde x_p) \Phi(0, x_1) \cdots \Phi(0, x_p)\,.
\end{multline*}
Now applying Wick's rule for the Gaussian measure $\exp\pb{- \scalarb{\Phi}{K(-\kappa + \ii \sigma) \Phi}} \, \mathrm D \Phi$ yields the kernel of \eqref{Gamma_intro_formal}. In order to make a rigorous link between \eqref{Gamma_intro_formal} and the actual definition \eqref{def_gammap}, we use the Wick theorem for quasi-free bosonic states (see Lemma \ref{Quantum Wick theorem} below). We remark that Wick-ordering of the reduced density matrices, as in \eqref{def_Gamma_renormalized}, is particularly transparent in the form \eqref{Gamma_intro_formal}:
\begin{equation} \label{wh_Gamma_intro}
\wh \Gamma_p = \frac{1}{\int \mu_{\cal C}(\dd \sigma) \, \ee^{F_1(\sigma)}} \, p! P_p \int \mu_{\cal C}(\dd \sigma)\, \ee^{F_1(\sigma)} \, \pB{\pb{K(-\kappa + \ii \sigma)^{-1}}^{0,0} - \pb{K(-\kappa)^{-1}}^{0,0}}^{\otimes p}\,,
\end{equation}
as can be seen by expanding the product over the $p$ operators and applying \eqref{Gamma_intro_formal} to each resulting term.

Having derived a rigorous version of the functional integral representation \eqref{formal_Z}, \eqref{formal_Gamma}, we derive an analogous representation of the classical field theory. The starting point is again the Hubbard-Stratonovich transformation, except that this time the auxiliary field $\xi$ depends on spatial variables only, and has covariance $\int \mu_{v}(\dd \xi) \, \xi(x) \, \xi(\tilde x) = v(x - \tilde x)$. A simple but important observation is that the time-averaged field $\ang{\sigma}(x) = \frac{1}{\nu} \int_0^\nu \dd \tau \, \sigma(\tau,x)$ has the same law as $\xi$. As it turns out, the expressions we obtain for the classical field theory mirror those of the quantum theory described above, except that the time-dependence of the field $\sigma$ is suppressed through the time averaging $\ang{\sigma}$. Similarly, the time-dependent propagator $W^{\tau, \tilde \tau}(-\kappa + \ii \sigma)$ is replaced with the time-homogeneous propagator $\ee^{-(\tau - \tilde \tau) (\kappa - \Delta / 2 - \ii \xi)}$, and sums of the form \eqref{K_inv_intro}, with $u = -\kappa + \ii \sigma$ and after multiplication by $\nu$, are given by the corresponding Riemann integrals
\begin{equation} \label{Riemann_integral}
\int_0^\infty \dd r \, \ee^{-r (\kappa-  \Delta/2 - \ii \xi)} = \frac{1}{\kappa - \Delta / 2 - \ii \xi}\,.
\end{equation}

Once the rigorous functional integral representation in terms of the auxiliary field $\sigma$ and the Brownian paths $\omega$ (see \eqref{FK_intro}) is set up, the proof of convergence to the mean-field limit entails establishing convergence of Riemann sums of the kind \eqref{K_inv_intro} to integrals of the form \eqref{Riemann_integral}, as well as quantitative continuity properties of the propagator $W^{\tau, \tilde \tau}$ in the time variables $\tau, \tilde \tau$. These estimates represent the main analytical work of our proof. All of these estimates are seriously complicated by the fact that the field $\sigma$ is singular, white noise in time after removal of the regularization in the measure $\mu_{\cal C}$.

In all of these estimates, the representation in terms of Brownian loops proves very useful. Indeed, we can use quantitative continuity properties of Brownian motion to control the convergence to the mean-field limit. Moreover, when considering particles in Euclidean space confined by an external potential (as in Section \ref{sec:traps_intro}), the infrared (i.e.\ long-range) properties of the Wick ordered correlation functions can be very effectively estimated using basic excursion estimates for Brownian bridges.

\subsection{Outline of the paper}

The rest of this paper is devoted to the proofs our our main results, Theorems \ref{thm:L2}, \ref{thm:density}, \ref{thm:main}, and \ref{thm:trap}. The main argument is the proof of Theorem \ref{thm:main}, which is given in Sections \ref{sec:preliminaries}--\ref{sec:proofs_conclusion}, in which we work on the torus and make Assumptions \ref{ass:torus} and \ref{ass:v}. In Section \ref{sec:preliminaries} we collect a variety of standard tools and notations that we use throughout the proof. In Section \ref{sec:fct_quantum} we derive the functional integral representation for the quantum many-body system, and in Section \ref{sec:classical} we do the same for the classical field theory. In Section \ref{sec:mf}, we prove the convergence of the quantum many-body theory to the classical theory in the mean-field limit. In Section \ref{sec:proofs_conclusion}, we conclude the proof of Theorem \ref{thm:main}, and use it to deduce Theorems \ref{thm:L2} and \ref{thm:density} as corollaries. Finally, in Section \ref{sec:traps} we explain how to extend the analysis of Sections \ref{sec:fct_quantum}--\ref{sec:proofs_conclusion} to the case of an external trapping potential in Euclidean space, and prove Theorem \ref{thm:trap}.

\section{Preliminaries} \label{sec:preliminaries}
In this section we collect various tools and notations that we shall use throughout the paper.

\subsection{Basic notations}
We use $1$ to denote the identity operator on a Hilbert space, and we sometimes write $1/\ops$ for the inverse of an operator $\ops$. We use $\ind{A}$ to denote the indicator function of a set $A$. We use the notation $\mu_{\cal C}$ for a Gaussian measure with covariance $\cal C$. We use the letters $C,c >0$ to denote generic positive constants, which can change from one expression to the next ($C$ should be thought of as being large enough and $c$ small enough). If $C$ depends on some parameter $\alpha$, we indicate this by writing $C \equiv C_\alpha$. For vectors $\f r \in [0,\infty)^n$ with nonnegative entries we use the notation $\abs{\f r} \deq \sum_{i = 1}^n r_i$. For points $x \in \Lambda$ we use $\abs{x}$ to denote the Euclidean norm on $\Lambda$.

For a separable Hilbert space $\cal H$ and $p \in [1,\infty]$, the Schatten space $\fra S^p(\cal H)$ is the set of bounded operators $\ops$ on $\cal H$ satisfying $\|\ops\|_{\fra S^p} < \infty$, where
\begin{equation*}
\|\ops\|_{\fra S^p} \deq
\begin{cases}
( \tr \, \abs{\ops}^p)^{1/p}  &\mbox{if }p<\infty\\
\sup \spec \, \abs{\ops} &\mbox{if } p=\infty\,,
\end{cases}
\end{equation*}
and $\abs{\ops} \deq \sqrt{\ops^* \ops}$. With these notations, $\norm{\ops}_{\fra S^2} = \norm{\ops}_{L^2}$, and $\fra S^1(\cal H)$ is the space of trace class operators.

\subsection{Brownian paths}

Let $0 \leq \tilde \tau < \tau$ and denote by $\Omega^{\tau, \tilde \tau}$ the space of continuous paths $\omega : [\tilde \tau, \tau] \to \Lambda$.
For $\tilde x \in \Lambda$ and $0 \leq \tilde \tau < \tau$, let $\P^{\tau, \tilde \tau}_{\tilde x}$ denote the law on $\Omega^{\tau, \tilde \tau}$ of standard Brownian motion equal to $\tilde x$ at time $\tilde \tau$, with periodic boundary conditions on $\Lambda$. For $\tilde x,x \in \Lambda$ and $0 \leq \tilde \tau < \tau$, let $\P^{\tau, \tilde \tau}_{x, \tilde x}$ denote the law on $\Omega^{\tau, \tilde \tau}$ of the Brownian bridge equal to $\tilde x$ at time $\tilde \tau$ and equal to $x$ and time $\tau$, with periodic boundary conditions on $\Lambda$.

We can characterize the measures $\P_{\tilde x}^{\tau, \tilde \tau}$ and $\P^{\tau, \tilde \tau}_{x, \tilde x}$ explicitly through their finite-dimensional distributions. To that end, let
\begin{equation} \label{heat_kernel}
\psi^t(x) \deq (\ee^{\Delta t/2})_{x,0} = \sum_{n \in \Z^d} (2 \pi t)^{-d/2} \ee^{-\abs{x - Ln}^2 / 2t}
\end{equation}
be the periodic heat kernel on $\Lambda \equiv \Lambda_{L,d}$. (We allow $L = \infty$ for the case $\Lambda = \R^d$.) Let $n \in \N^*$, $\tilde \tau < t_1 < \cdots < t_n < \tau$, and $f : \Lambda^n \to \R$ be a continuous function. Then the law $\P^{\tau, \tilde \tau}_{\tilde x}$ is characterized through its finite-dimensional distribution
\begin{multline*}
\int \bb P^{\tau, \tilde \tau}_{\tilde x}(\dd \omega)\, f(\omega(t_1), \dots, \omega(t_n))
\\
= \int \dd x_1 \cdots \dd x_n \, \psi^{t_1 - \tilde \tau}(x_1 - \tilde x) \psi^{t_2 - t_1} (x_2 - x_1) \cdots \psi^{t_n - t_{n - 1}} (x_n - x_{n-1}) \, f(x_1, \dots, x_n)\,.
\end{multline*}
Similarly, the positive measure
\begin{equation} \label{def_W}
\bb W^{\tau, \tilde \tau}_{x,\tilde x}(\dd \omega) \deq \psi^{\tau - \tilde \tau}(x - \tilde x) \, \bb P^{\tau, \tilde \tau}_{x,\tilde x}(\dd \omega)
\end{equation}
is characterized by its finite-dimensional distribution
\begin{multline} \label{W_dist}
\int \bb W^{\tau, \tilde \tau}_{x,\tilde x}(\dd \omega)\, f(\omega(t_1), \dots, \omega(t_n))
\\
= \int \dd x_1 \cdots \dd x_n \, \psi^{t_1 - \tilde \tau}(x_1 - \tilde x) \psi^{t_2 - t_1} (x_2 - x_1) \cdots \psi^{t_n - t_{n - 1}} (x_n - x_{n-1}) \psi^{\tau - t_n} (x - x_n) \, f(x_1, \dots, x_n)\,.
\end{multline}
Hence,
\begin{equation} \label{W_P_relation}
\bb W^{\tau, \tilde \tau}_{x,\tilde x}(\dd \omega) =
\P^{\tau, \tilde \tau}_{\tilde x}(\dd \omega) \, \delta\pb{\omega(\tau) - x} \,.
\end{equation}

The following result is well known.

\begin{lemma}[Feynman-Kac] \label{FK_continuous}
Let $I \subset \R$ be a closed bounded interval and $V \col I \times \Lambda \to \C$ be continuous and bounded from below. Let $(W^{\tau, \tilde \tau})_{\tilde \tau \leq \tau \in I}$ be the propagator satisfying
\begin{equation*}
\partial_\tau W^{\tau, \tilde \tau} = \pbb{\frac{1}{2} \Delta - V(\tau)} W^{\tau, \tilde \tau} \,, \qquad W^{\tau, \tau} = 1\,.
\end{equation*}
Then $W^{\tau, \tilde \tau}$ has the operator kernel
\begin{equation*}
W^{\tau,\tilde \tau}_{x,\tilde x} = \int \bb W^{\tau, \tilde \tau}_{x,\tilde x} (\dd \omega) \, \ee^{-\int_{\tilde \tau}^\tau \dd s \, V(s, \omega(s))}\,.
\end{equation*}
\end{lemma}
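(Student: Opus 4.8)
The plan is to show that the right-hand side of the claimed formula, which we denote by $\tilde W^{\tau,\tilde\tau}_{x,\tilde x}$, is the kernel of an operator family $(\tilde W^{\tau,\tilde\tau})_{\tilde\tau\leq\tau\in I}$ solving the same propagator equation as $(W^{\tau,\tilde\tau})$, and then to conclude by uniqueness. Throughout one uses that $V$ is bounded on $I\times\Lambda$ — automatic when $\Lambda$ is a torus, while in the non-compact case it suffices that $\sup\re V<\infty$, which covers the applications, where $V=-\kappa+\ii\sigma$ — so that $\absa{\ee^{\int_{\tilde\tau}^{\tau}\dd s\,V(s,\omega(s))}}\leq\ee^{(\tau-\tilde\tau)\sup\re V}$; combined with $\norma{\ee^{t\Delta/2}}_{\fra S^\infty}\leq 1$, this makes every series and iterated integral below absolutely convergent. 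Since $\bb W^{\tau,\tilde\tau}_{x,\tilde x}(\Omega^{\tau,\tilde\tau})=\psi^{\tau-\tilde\tau}(x-\tilde x)\to\delta(x-\tilde x)$ as $\tau\downarrow\tilde\tau$, we have $\tilde W^{\tau,\tau}=1$.

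The heart of the argument is the Feynman--Kac computation, which can be organized in two equivalent ways. The first is to insert the scalar Duhamel identity
\begin{equation*}
\ee^{\int_{\tilde\tau}^{\tau}\dd s\,V(s,\omega(s))}=1+\int_{\tilde\tau}^{\tau}\dd r\,V(r,\omega(r))\,\ee^{\int_{\tilde\tau}^{r}\dd s\,V(s,\omega(s))}
\end{equation*}
into the definition of $\tilde W^{\tau,\tilde\tau}$ and to split the Brownian bridge at the intermediate time $r$ using the Markov property: concretely, via \eqref{W_P_relation} and the semigroup property of $\psi$ one has $\int\bb W^{\tau,\tilde\tau}_{x,\tilde x}(\dd\omega)\,G(\omega|_{[\tilde\tau,r]})=\int\dd y\,\psi^{\tau-r}(x-y)\int\bb W^{r,\tilde\tau}_{y,\tilde x}(\dd\omega_1)\,G(\omega_1)$. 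This yields the integral equation
\begin{equation*}
\tilde W^{\tau,\tilde\tau}=\ee^{(\tau-\tilde\tau)\Delta/2}+\int_{\tilde\tau}^{\tau}\dd r\,\ee^{(\tau-r)\Delta/2}\,V(r)\,\tilde W^{r,\tilde\tau}\,,
\end{equation*}
where $V(r)$ denotes multiplication by $V(r,\cdot)$; differentiating in $\tau$ gives exactly $\partial_\tau\tilde W^{\tau,\tilde\tau}=(\tfrac12\Delta+V(\tau))\tilde W^{\tau,\tilde\tau}$. The second, equivalent route is to expand the exponential as its Dyson series $\sum_{n\geq 0}\int_{\tilde\tau<s_1<\cdots<s_n<\tau}\dd s_1\cdots\dd s_n\,V(s_1,\omega(s_1))\cdots V(s_n,\omega(s_n))$, integrate term by term against $\bb W^{\tau,\tilde\tau}_{x,\tilde x}(\dd\omega)$ using the finite-dimensional distribution \eqref{W_dist}, and recognize the $n$-th term as the kernel of $\ee^{(\tau-s_n)\Delta/2}V(s_n)\ee^{(s_n-s_{n-1})\Delta/2}\cdots V(s_1)\ee^{(s_1-\tilde\tau)\Delta/2}$; summing on $n$ gives the Dyson series of the propagator, which is nothing but the iteration of the integral equation above. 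In either case the interchange of the sum, the time integrals and the $\omega$-integral is justified by Tonelli applied to $\abs V$ together with the bounds recalled above.

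It remains to observe that the propagator $W^{\tau,\tilde\tau}$ itself satisfies the same integral equation, by Duhamel's principle (variation of constants) applied to $\partial_\tau W^{\tau,\tilde\tau}=(\tfrac12\Delta+V(\tau))W^{\tau,\tilde\tau}$, $W^{\tau,\tau}=1$. Subtracting the two integral equations and applying Gronwall's inequality — using $\norma{\ee^{t\Delta/2}}_{\fra S^\infty}\leq 1$ and the boundedness of $V$ — shows $\tilde W^{\tau,\tilde\tau}=W^{\tau,\tilde\tau}$, which is the assertion (and along the way re-proves existence of $W$, should one wish).

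I expect the main obstacle to be purely one of book-keeping: turning the Duhamel/Markov manipulations and the term-by-term integration into honest identities of operator kernels, in particular handling the degeneration $\psi^{\tau-\tilde\tau}\to\delta$ as $\tau\downarrow\tilde\tau$, and, in the non-compact case, checking that the Dyson series defines a bounded operator at all. All of this reduces to the two elementary estimates $\norma{\ee^{t\Delta/2}}_{\fra S^\infty}\leq 1$ and $\absa{\ee^{\int V}}\leq\ee^{(\tau-\tilde\tau)\sup\re V}$ together with Fubini--Tonelli; there is no genuine analytic difficulty beyond these.
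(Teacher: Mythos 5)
Your argument is correct, but note that the paper itself offers no proof of this lemma: it is stated as ``well known'' and used as a black box, so there is nothing in the text for your proposal to diverge from. What you wrote is the standard textbook derivation --- scalar Duhamel identity (equivalently, Dyson expansion of the exponential), term-by-term integration against $\bb W^{\tau,\tilde\tau}_{x,\tilde x}$ using the finite-dimensional distributions \eqref{W_dist} and the Markov/Chapman--Kolmogorov splitting of the bridge, then identification with the Volterra integral equation for the propagator and uniqueness via Gronwall --- and all the interchanges are indeed licensed by the two bounds you isolate, $\norm{\ee^{t\Delta/2}}_{\fra S^\infty}\leq 1$ and $\absb{\ee^{\int V}}\leq \ee^{(\tau-\tilde\tau)\sup\re V}$. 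Your caveat about the non-compact case is also the right one: as stated the lemma tacitly assumes enough boundedness of $V$, and in every application in the paper one has $\re V$ bounded above (e.g.\ $V=-\kappa-U+\ii\sigma$ with $U\geq 0$ and $\sigma$ smooth and bounded under $\mu_{\cal C_\eta}$), so your hypothesis $\sup\re V<\infty$ covers all uses.
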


We recall the following elementary heat kernel estimate.
\begin{lemma} \label{lem:heat_estimate}
There is a constant $C$ such that
\begin{equation*}
\sup_{x, \tilde x} \int \bb W^{\tau,\tilde \tau}_{x, \tilde x}(\dd \omega) \leq C \pb{L^{-d} + (\tau - \tilde \tau)^{-d/2}}\,.
\end{equation*}
\end{lemma}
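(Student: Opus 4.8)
The plan is to reduce the statement to a uniform pointwise bound on the periodic heat kernel $\psi^t$, and then to establish that bound by Poisson summation together with an elementary estimate for one-dimensional theta sums.

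First I would observe that, by the definition \eqref{def_W}, one has $\bb W^{\tau, \tilde \tau}_{x,\tilde x}(\dd \omega) = \psi^{\tau - \tilde \tau}(x - \tilde x) \, \bb P^{\tau, \tilde \tau}_{x, \tilde x}(\dd \omega)$ with $\bb P^{\tau, \tilde \tau}_{x, \tilde x}$ a probability measure; hence $\int \bb W^{\tau, \tilde \tau}_{x,\tilde x}(\dd \omega) = \psi^{\tau - \tilde \tau}(x - \tilde x)$, and the claim is equivalent to $\sup_x \psi^t(x) \leq C \pb{L^{-d} + t^{-d/2}}$ for all $t > 0$, with $C$ independent of $t$ and $L$. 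When $\Lambda = \R^d$ (i.e.\ $L = \infty$) this is immediate, since then $\psi^t(x) = (2\pi t)^{-d/2} \ee^{-\abs{x}^2/2t} \leq (2\pi t)^{-d/2}$ and $L^{-d} = 0$.

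For $L < \infty$, I would use Poisson summation: the series \eqref{heat_kernel} is the $L\Z^d$-periodization of the Gaussian whose Fourier transform is $k \mapsto \ee^{-t\abs{k}^2/2}$, so $\psi^t(x) = L^{-d} \sum_{k \in (2\pi/L)\Z^d} \ee^{-t\abs{k}^2/2}\, \ee^{\ii k \cdot x}$. All coefficients being nonnegative, the triangle inequality gives $\psi^t(x) \leq \psi^t(0) = L^{-d}\sum_{k \in (2\pi/L)\Z^d}\ee^{-t\abs{k}^2/2}$ for every $x$ (in particular the supremum is attained at $x = 0$). Factoring the sum over the $d$ coordinates and applying the elementary bound $\sum_{m \in \Z} \ee^{-a m^2} \leq 1 + \sqrt{\pi/a}$, valid for every $a > 0$ by comparing $\sum_{m \geq 1}\ee^{-am^2}$ with $\int_0^\infty \ee^{-ax^2}\,\dd x$, with $a = 2\pi^2 t / L^2$, I obtain $\sum_{k \in (2\pi/L)\Z^d}\ee^{-t\abs{k}^2/2} = \bigl(\sum_{m\in\Z}\ee^{-2\pi^2 t m^2/L^2}\bigr)^d \leq \bigl(1 + L(2\pi t)^{-1/2}\bigr)^d \leq C\bigl(1 + L^d t^{-d/2}\bigr)$. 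Multiplying by $L^{-d}$ gives $\psi^t(0) \leq C\pb{L^{-d} + t^{-d/2}}$, which is the desired bound.

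This is essentially a bookkeeping argument and I do not foresee a genuine obstacle; the only point demanding a little care is that the constant $C$ must be uniform in both $L$ and $t$, which is taken care of by the two-regime form $1 + L^d t^{-d/2}$ of the theta-sum estimate (the regimes $t \lesssim L^2$ and $t \gtrsim L^2$). As an alternative to Poisson summation one can bound the shifted lattice sum in \eqref{heat_kernel} directly by the integral $\int_{\R^d}\ee^{-\abs{y}^2/4t}\,\dd y$ plus a boundary correction, but the Fourier-side argument is cleaner and has the bonus of identifying $x = 0$ as the maximizer of $\psi^t$.
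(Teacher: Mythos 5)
Your argument is correct, and the first step — reducing the claim via \eqref{def_W} to the uniform bound $\sup_x \psi^{\tau-\tilde\tau}(x) \leq C(L^{-d} + (\tau-\tilde\tau)^{-d/2})$, since $\bb P^{\tau,\tilde\tau}_{x,\tilde x}$ is a probability measure — is exactly the paper's. Where you diverge is in how the lattice sum is estimated: the paper stays in direct space and treats the sum in \eqref{heat_kernel} by a Riemann sum approximation (for each $x$, the sum over $n \in \Z^d$ of $(2\pi t)^{-d/2}\ee^{-\abs{x-Ln}^2/2t}$ is bounded by a constant multiple of $t^{-d/2}$ plus $L^{-d}$ times the Gaussian integral), whereas you pass to the Fourier side by Poisson summation and then bound the dual theta sum coordinatewise by comparison with an integral. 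The two routes are of comparable length and both yield the constant uniformly in $L$ and $t$ (your two-regime bound $1 + L^d t^{-d/2}$ does the same job as the Riemann-sum error term). The Fourier route has the small bonus you point out — nonnegativity of the Fourier coefficients shows that $\sup_x \psi^t(x) = \psi^t(0)$, which the direct-space argument neither needs nor immediately gives — while the paper's direct bound avoids invoking Poisson summation and controls the supremum over $x$ without identifying the maximizer. Your handling of the case $L = \infty$ and of the uniformity of the constants is fine.
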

\begin{proof}
The left-hand side is equal to $\sup_x \psi^{\tau - \tilde \tau}(x)$, and the claim is an easy consequence of \eqref{heat_kernel} and a Riemann sum approximation.
\end{proof}

We shall also need the following quantitative $L^2$-continuity result for $\P^{\tau, \tilde \tau}_{x, \tilde x}$. Denote by $\abs{x}_L \deq \min_{n \in \Z^d} \abs{x - L n}$ the periodic Euclidean norm of $x \in \Lambda$.

\begin{lemma} \label{lem:P_cont}
There exists a constant $C > 0$ such that
\[ \int \mathbb{P}_{x, \tilde{x}}^{\tau, \tilde{\tau}} \, ({\rm d} \omega)  |\omega (t) - \omega (s)|_L^2 \leq C \qbb{(t-s) + \abs{x - \tilde x}_L^2 \frac{(t - s)^2}{(\tau - \tilde \tau)^2}}\]
for any $\tilde{\tau} \leq s \leq t \leq \tau$.
\end{lemma}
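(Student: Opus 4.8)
The strategy is to compute the expectation $\int \mathbb{P}^{\tau,\tilde\tau}_{x,\tilde x}(\dd\omega) \, \abs{\omega(t) - \omega(s)}_L^2$ by reducing to the Euclidean ($L = \infty$) case and then performing an explicit Gaussian computation using the decomposition of a Brownian bridge into its endpoints and an independent fluctuation. First I would note that the periodic Brownian bridge $\P^{\tau,\tilde\tau}_{x,\tilde x}$ on $\Lambda_{L,d}$ is the image, under the projection $\R^d \to \Lambda_{L,d}$, of a \emph{mixture} of Euclidean Brownian bridges: namely one sums over the winding sectors $n \in \Z^d$ the bridge on $\R^d$ from $\tilde x$ to $x + Ln$, weighted by the heat kernel $\psi^{\tau-\tilde\tau}$ via \eqref{heat_kernel}. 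Since $\abs{\cdot}_L \leq \abs{\cdot}$ on $\R^d$ after lifting, each term of this mixture is bounded above by the corresponding \emph{Euclidean} bridge expectation with endpoint displacement $\abs{x - \tilde x - Ln}$; to get the clean bound stated one must check that replacing $\abs{x-\tilde x - Ln}$ by $\abs{x-\tilde x}_L$ (i.e.\ keeping only the nearest sector and absorbing the rest into the constant $C$) is legitimate, which follows because the extra winding sectors carry Gaussian-small heat-kernel weight and their contributions are summable.

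Second, for the Euclidean bridge on $\R^d$ from $y$ to $z$ over the time interval $[\tilde\tau,\tau]$, I would use the standard representation $\omega(u) = z + \frac{u - \tilde\tau}{\tau - \tilde\tau}(y - z) + B^0_{u}$, where $B^0$ is a standard Brownian bridge pinned at $0$ at both ends of $[\tilde\tau,\tau]$, independent of the endpoints (here the drift term interpolates linearly the displacement $y-z$). Then
\[
\omega(t) - \omega(s) = \frac{t - s}{\tau - \tilde\tau}(y - z) + \bigl(B^0_t - B^0_s\bigr),
\]
so by independence and $\E B^0 = 0$,
\[
\E \absb{\omega(t) - \omega(s)}^2 = \frac{(t-s)^2}{(\tau - \tilde\tau)^2}\absb{y - z}^2 + \E\absb{B^0_t - B^0_s}^2.
\]
The covariance of a pinned Brownian bridge on $[\tilde\tau,\tau]$ gives $\E\abs{B^0_t - B^0_s}^2 = (t-s) - \frac{(t-s)^2}{\tau - \tilde\tau} \leq (t-s)$ (in each of the $d$ coordinates, summing the $d$ of them changes only the constant). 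Taking $y - z$ to be the lifted endpoint displacement and assembling the winding sum yields exactly the claimed bound with $\abs{y - z}_L = \abs{x - \tilde x}_L$.

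The only mildly delicate point — and the one I'd expect to need the most care — is the treatment of the winding sectors: one needs that $\sum_{n \in \Z^d} \psi^{\tau-\tilde\tau}(x - \tilde x - Ln) \cdot \bigl[(t-s) + \abs{x-\tilde x - Ln}^2 (t-s)^2/(\tau-\tilde\tau)^2\bigr]$, divided by $\psi^{\tau-\tilde\tau}(x-\tilde x)^{}$ (i.e.\ by $\sum_n \psi^{\tau-\tilde\tau}(x-\tilde x - Ln)$ — note $\bb W$ versus $\P$), is $\leq C\bigl[(t-s) + \abs{x-\tilde x}_L^2 (t-s)^2/(\tau-\tilde\tau)^2\bigr]$ uniformly in all parameters. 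For the $(t-s)$ piece this is immediate since the $\psi$-weights sum to the denominator. For the quadratic piece one splits according to whether $\tau - \tilde\tau \lesssim L^2$ or not: in the small-time regime the nearest sector dominates overwhelmingly and $\abs{x - \tilde x - Ln}^2 \psi^{\tau-\tilde\tau}(x-\tilde x - Ln) \lesssim \abs{x-\tilde x}_L^2 \psi^{\tau-\tilde\tau}(x-\tilde x) + (\tau-\tilde\tau)$-type Gaussian tail contributions, which after division are absorbed; in the large-time regime $\psi^{\tau-\tilde\tau}$ is essentially flat on the scale $L$ and one uses that $\abs{x-\tilde x - Ln}^2/(\tau-\tilde\tau)^2 \lesssim 1/(\tau-\tilde\tau)$ for the relevant $n$ while $\abs{x - \tilde x}_L \leq L$ makes the target's quadratic term comparable to $(t-s)/(\tau-\tilde\tau)$, hence dominated by the already-controlled $(t-s)$ term. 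This is a routine but slightly fussy Gaussian-tail estimate; everything else is an exact computation.
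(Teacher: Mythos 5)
Your proposal is correct, but it takes a different route from the paper's proof. You decompose the periodic bridge over winding sectors (covering-space mixture with weights given by the Euclidean heat kernel) and then use the exact representation of each Euclidean bridge as linear drift plus a pinned bridge, which gives the identity $\E\abs{\omega(t)-\omega(s)}^2 = \frac{(t-s)^2}{(\tau-\tilde\tau)^2}\abs{y-z}^2 + d\bigl[(t-s)-\frac{(t-s)^2}{\tau-\tilde\tau}\bigr]$ sector by sector; the remaining work is the weighted average of $\abs{x-\tilde x-Ln}^2$ over sectors, which you correctly reduce to a two-regime Gaussian-sum estimate ($\tau-\tilde\tau$ small or large compared to $L^2$), absorbing the far-sector quadratic contributions into the $(t-s)$ term via $(t-s)^2/(\tau-\tilde\tau)\leq t-s$. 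The paper instead never decomposes the path measure: it writes the expectation as a three-fold convolution of periodic heat kernels, uses the pointwise identity $\abs{y}_L^2\,\psi^{t-s}(y)\leq 2(t-s)^2\,\partial_t\psi^{t-s}(y)+d(t-s)\,\psi^{t-s}(y)$, and then collapses the convolution by Chapman--Kolmogorov so that everything reduces to bounding the logarithmic derivative $\partial_\tau\psi^\tau(x)/\psi^\tau(x)$, estimated by the same kind of two-case periodic Gaussian analysis. The two endgames are in fact equivalent, since
\begin{equation*}
\frac{\partial_\tau\psi^{\tau}(x)}{\psi^{\tau}(x)} = -\frac{d}{2\tau} + \frac{1}{2\tau^{2}}\,\frac{\sum_{n\in\Z^d}\abs{x-Ln}^{2}\,\ee^{-\abs{x-Ln}^{2}/2\tau}}{\sum_{n\in\Z^d}\ee^{-\abs{x-Ln}^{2}/2\tau}}\,,
\end{equation*}
so your ``fussy tail estimate'' is the same quantity the paper controls. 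What your route buys is an exact and transparent Gaussian formula (including the sharp Euclidean answer) at the cost of justifying the winding-sector mixture representation and doing the sector-weighted estimate by hand; the paper's route avoids any bridge decomposition and localizes the periodic-lattice bookkeeping in a single heat-kernel quantity. One cosmetic slip: in your bridge representation ``from $y$ to $z$'' the parametrization actually starts at $z$ and ends at $y$, but since only $\abs{y-z}$ enters the increment's second moment this is harmless.
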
 

The proof of Lemma \ref{lem:P_cont} is given in Appendix \ref{Proof_of_lem:P_cont}.

\subsection{Gaussian integration}
We record the following generalization of Wick's rule for a real Gaussian measure. 

\begin{lemma} \label{lem:Gaussian}
Let $\cal C > 0$ be a positive real $n \times n$ matrix. Define the Gaussian probability measure on $\R^n$ with covariance $\cal C$ through
\begin{equation*}
\mu_{\cal C}(\dd u) \deq \frac{1}{\sqrt{(2 \pi)^{n} \det \cal C }} \, \ee^{-\frac{1}{2} \scalar{u}{\cal C^{-1} u}} \, \dd u\,.
\end{equation*}
For any $k$ and $f, f_1, \dots, f_k \in \R^n$ we have
\begin{equation*}
\int \mu_{\cal C}(\dd u) \prod_{i = 1}^k \scalar{f_i}{u} \, \ee^{\ii \scalar{f}{u}} = \ee^{-\frac{1}{2} \scalar{f}{\cal C f}} \sum_{\Pi \in \fra M([k])} \prod_{\{i,j\} \in \Pi} \scalar{f_i}{\cal C f_j} \prod_{i \in [k] \setminus [\Pi]} \ii \scalar{f_i}{\cal C f}\,,
\end{equation*}
where we recall the notations introduced after \eqref{wick_expanded}.
\end{lemma}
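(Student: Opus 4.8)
The plan is to prove the identity by induction on $k$, peeling off one factor $\scalar{f_k}{u}$ at a time via an integration by parts adapted to the Gaussian weight. For $k = 0$ the claim is the standard characteristic-function formula $\int \mu_{\cal C}(\dd u)\, \ee^{\ii \scalar{f}{u}} = \ee^{-\frac12 \scalar{f}{\cal C f}}$, obtained by completing the square in the exponent of $\mu_{\cal C}$.

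For the inductive step, I would use that $\nabla_u \mu_{\cal C}(\dd u) = -(\cal C^{-1} u)\, \mu_{\cal C}(\dd u)$, so that for any $w \in \R^n$ we have $\scalar{w}{\nabla_u} \mu_{\cal C}(\dd u) = -\scalar{w}{\cal C^{-1} u}\, \mu_{\cal C}(\dd u)$; taking $w = \cal C f_k$ and using symmetry of $\cal C$ gives $\scalar{\cal C f_k}{\nabla_u}\mu_{\cal C}(\dd u) = -\scalar{f_k}{u}\, \mu_{\cal C}(\dd u)$. Writing $G(u) \deq \prod_{i=1}^{k-1} \scalar{f_i}{u}\, \ee^{\ii \scalar{f}{u}}$ and integrating by parts (the boundary terms vanish by the Gaussian decay), one obtains
\[
\int \mu_{\cal C}(\dd u) \prod_{i=1}^k \scalar{f_i}{u}\, \ee^{\ii \scalar{f}{u}} = \int \mu_{\cal C}(\dd u)\, \scalar{\cal C f_k}{\nabla_u G(u)}\,.
\]
Computing $\nabla_u G$ by the Leibniz rule yields $\scalar{\cal C f_k}{\nabla_u G} = \sum_{j=1}^{k-1} \scalar{f_k}{\cal C f_j} \prod_{i \in [k-1] \setminus \{j\}} \scalar{f_i}{u}\, \ee^{\ii\scalar{f}{u}} + \ii \scalar{f_k}{\cal C f} \prod_{i=1}^{k-1} \scalar{f_i}{u}\, \ee^{\ii\scalar{f}{u}}$. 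Applying the induction hypothesis to each of the $k$ resulting integrals, the $j$-th term contributes $\scalar{f_k}{\cal C f_j}$ times $\ee^{-\frac12\scalar{f}{\cal C f}}$ times the sum over $\fra M([k-1] \setminus \{j\})$, which is precisely the sum over those $\Pi \in \fra M([k])$ in which $k$ is paired with $j$; the last term contributes $\ii\scalar{f_k}{\cal C f}$ times $\ee^{-\frac12\scalar{f}{\cal C f}}$ times the sum over $\fra M([k-1])$, accounting for the $\Pi \in \fra M([k])$ in which $k$ is unpaired. Since every $\Pi \in \fra M([k])$ falls into exactly one of these cases, summing reproduces the right-hand side and closes the induction.

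There is no substantial obstacle here: the only points requiring (routine) care are the interchange of differentiation and integration and the vanishing of the boundary terms, both of which follow from dominated convergence since the integrand is a polynomial times $\ee^{\ii\scalar{f}{u}}$ against the rapidly decaying Gaussian weight. An equivalent route, which I would mention as an alternative, is the generating-function argument: differentiate $\partial_{\lambda_1}\cdots\partial_{\lambda_k}\big|_{\lambda = 0}$ of $\int \mu_{\cal C}(\dd u)\, \ee^{\scalar{\sum_i \lambda_i f_i + \ii f}{\,u}} = \exp\!\pb{\tfrac12 \scalar{\sum_i \lambda_i f_i + \ii f}{\cal C(\sum_i \lambda_i f_i + \ii f)}}$, and expand using Fa\`a di Bruno's formula; since the exponent is a polynomial of degree two in $\lambda$, only set partitions into blocks of size at most two survive, the size-two blocks producing the factors $\scalar{f_i}{\cal C f_j}$ and the singletons the factors $\ii\scalar{f_i}{\cal C f}$, with the factor $\tfrac12$ cancelled against the two orderings of each pair. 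Either way the content is the same elementary bookkeeping.
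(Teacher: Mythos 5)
Your proof is correct, but it takes a genuinely different route from the paper. The paper completes the square in the exponent, shifts the integration variable by $\ii \cal C f$ using Cauchy's theorem to deform the contour back to $\R^n$, and then invokes Wick's rule for $\mu_{\cal C}$ to evaluate $\int \mu_{\cal C}(\dd u) \prod_i \scalar{f_i}{u + \ii \cal C f}$, splitting the expanded product according to which factors carry $u$ and which carry $\ii \cal C f$. Your primary argument instead runs an induction on $k$ via the Gaussian integration-by-parts identity $\scalar{f_k}{u}\,\mu_{\cal C}(\dd u) = -\scalar{\cal C f_k}{\nabla_u}\,\mu_{\cal C}(\dd u)$, and the recursion you obtain (pair $k$ with some $j$, or leave $k$ unpaired and pick up $\ii\scalar{f_k}{\cal C f}$) is exactly the recursion satisfied by the sum over $\fra M([k])$, so the combinatorics closes without ever invoking Wick's theorem as a black box — in effect you reprove it along the way. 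What each approach buys: the paper's argument is shorter given that Wick's rule for real Gaussians is already in its toolkit, and the complex shift cleanly factors out $\ee^{-\frac12\scalar{f}{\cal C f}}$ in one stroke; yours is more self-contained and avoids the contour deformation in the inductive step, though note that your base case $k=0$ (the characteristic function) is precisely where the paper's complex-shift ingredient would reappear — if you want to avoid it entirely, the same integration by parts gives the ODE $\phi'(t) = -t\,\scalar{f}{\cal C f}\,\phi(t)$ for $\phi(t) = \int\mu_{\cal C}(\dd u)\,\ee^{\ii t \scalar{f}{u}}$, which settles it. Your alternative generating-function derivation is also fine and is closest in spirit to how the paper defines Wick ordering in \eqref{def_Wick}--\eqref{wick_expanded}, but it is not the route the paper takes for this lemma.
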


\begin{proof}
By completing the square, we find
\begin{equation*}
\int \mu_{\cal C}(\dd u) \prod_{i = 1}^k \scalar{f_i}{u} \, \ee^{\ii \scalar{f}{u}}
= \frac{1}{\sqrt{(2 \pi)^{n} \det \cal C }}
\int \dd u \, \ee^{-\frac{1}{2} \scalar{u - \ii \cal C f}{\cal C^{-1} (u - \ii \cal C f)}} \ee^{-\frac{1}{2} \scalar{f}{\cal C f}} \prod_{i = 1}^k \scalar{f_i}{u}\,.
\end{equation*}
By a change of variables $u \mapsto u + \ii \cal C f$ and using Cauchy's theorem to deform the integration path in $\R^n$, we obtain
\begin{equation*}
\int \mu_{\cal C}(\dd u) \prod_{i = 1}^k \scalar{f_i}{u} \, \ee^{\ii \scalar{f}{u}} =
\ee^{-\frac{1}{2} \scalar{f}{\cal C f}} \int \mu_\cal C(\dd u) \, \prod_{i = 1}^k \scalar{f_i}{u + \ii \cal C f}\,.
\end{equation*}
Using Wick's rule for $\mu_{\cal C}$, we therefore obtain
\begin{align*}
\int \mu_{\cal C}(\dd u) \prod_{i = 1}^k \scalar{f_i}{u} \, \ee^{\ii \scalar{f}{u}} &= \ee^{-\frac{1}{2} \scalar{f}{\cal C f}} \sum_{I \subset [k] \text{ even}} \int \mu_{\cal C}(\dd u)\prod_{i \in I} \scalar{f_i}{u} \prod_{i \in [k] \setminus I} \ii \scalar{f_i}{\cal C f}
\\
&=
\ee^{-\frac{1}{2} \scalar{f}{\cal C f}} \sum_{I \subset [k] \text{ even}} \sum_{\Pi \in \fra M_c(I)} \prod_{\{i,j\} \in \Pi} \scalar{f_i}{\cal C f_j} \prod_{i \in [k] \setminus I} \ii \scalar{f_i}{\cal C f}\,,
\end{align*}
where $\fra M_c(I)$ denotes the set of complete pairings of the set $I$. The claim now follows.
\end{proof}

Next, we record some basic results on complex Gaussians. We use the notation $\scalar{z}{w} = \sum_i \ol z_i w_i$ for the complex inner product and $\dd z$ for the Lebesgue measure on $\C^n$. For a complex $n \times n$ matrix $\cal C$ we define $\re \cal C \deq \frac{1}{2}(\cal C + \cal C^*)$.

\begin{lemma}
\label{complex_Gaussian}
Let $\cal C$ be a complex $n \times n$ matrix with $\re \cal C>0$. Then
\begin{equation*}
\int_{\C^n} \dd z \, \ee^{- \scalar{z}{\cal C^{-1} z}} = \pi^n \det \cal C\,.
\end{equation*}
\end{lemma}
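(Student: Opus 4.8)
The plan is to prove the identity first for Hermitian $\cal C$, where it is the classical complex Gaussian integral, and then propagate it to all $\cal C$ with $\re\cal C > 0$ by analytic continuation in one complex variable.

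First I would dispose of the Hermitian case. If $\cal C = \cal C^* > 0$, diagonalize $\cal C = \f U^* \f D \f U$ with $\f U$ unitary and $\f D = \diag(d_1, \dots, d_n)$, $d_j > 0$, and substitute $w = \f U z$; this has unit Jacobian on $\C^n \cong \R^{2n}$ because $\f U$ is unitary, and $\scalar{z}{\cal C^{-1}z} = \scalar{w}{\f D^{-1} w} = \sum_j \abs{w_j}^2/d_j$. The integral then factorizes into $\prod_{j} \int_\C \dd w_j \, \ee^{-\abs{w_j}^2/d_j}$, and the elementary scalar identity $\int_\C \dd w \, \ee^{-\abs{w}^2/d} = \pi d$ gives the value $\pi^n \prod_j d_j = \pi^n \det\cal C$.

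Next I would set up the continuation. For general $\cal C$ with $\re\cal C > 0$, writing $v = \cal C^{-1} z$ and using $\re\scalar{\cal C v}{v} = \scalar{v}{(\re\cal C)\,v}$ one obtains $\re\scalar{z}{\cal C^{-1}z} \geq c\,\abs{z}^2$ for some $c = c(\cal C) > 0$, and $c$ stays bounded below locally uniformly in $\cal C$ on the domain $\{\re\cal C > 0\}$ (there $\re\cal C$ is bounded below and $\norm{\cal C}$ bounded above). Hence $g(\cal C) \deq \int_{\C^n} \dd z \, \ee^{-\scalar{z}{\cal C^{-1}z}}$ converges, and since $\cal C \mapsto \cal C^{-1}$ is holomorphic where $\det\cal C \neq 0$ and the convergence is locally uniform, $g$ is holomorphic in the entries of $\cal C$ on $\{\re\cal C > 0\}$; so is $\det\cal C$, which is nonzero there. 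Now fix $\cal C$ with $\re\cal C > 0$, write $\cal C = H + \ii K$ with $H \deq \tfrac12(\cal C + \cal C^*) > 0$ and $K \deq \tfrac{1}{2\ii}(\cal C - \cal C^*)$ both Hermitian, and choose $\epsilon > 0$ so that $H + tK > 0$ for all real $\abs{t} < \epsilon$ (possible since $H > 0$; if $K = 0$ we are already done by the Hermitian case). Then $\re(H + \zeta K) = H + (\re\zeta)K > 0$ for all $\zeta$ in the strip $S \deq \{\abs{\re\zeta} < \epsilon\}$, so $\phi(\zeta) \deq g(H + \zeta K)/\det(H + \zeta K)$ is holomorphic on the connected set $S$, and by the Hermitian case $\phi(t) = \pi^n$ for every real $\zeta = t \in (-\epsilon,\epsilon)$ (since $H + tK$ is then Hermitian positive definite). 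By the identity theorem $\phi \equiv \pi^n$ on $S$; evaluating at $\zeta = \ii \in S$ gives $g(\cal C) = \pi^n\det\cal C$, as claimed.

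The one step that needs genuine, if modest, care is the holomorphy of $g$ in the matrix entries, which rests on the locally uniform lower bound $\re\scalar{z}{\cal C^{-1}z} \geq c\abs{z}^2$ and hence on a dominated-convergence argument; exhibiting a complex line through the given $\cal C$ that meets the Hermitian locus while staying inside $\{\re\cal C > 0\}$ is the other point to get right, and the line $\zeta \mapsto H + \zeta K$ does exactly that. Everything else is routine.
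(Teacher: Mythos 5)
The paper states Lemma \ref{complex_Gaussian} without proof, treating it as a standard Gaussian identity, so there is no internal argument to compare against; judged on its own, your proof is correct and complete. The Hermitian case is the classical computation (unitary change of variables has unit Jacobian on $\C^n\cong\R^{2n}$, and the integral factorizes). The continuation step is also sound: with $v=\cal C^{-1}z$ one has $\re\scalar{z}{\cal C^{-1}z}=\scalar{v}{(\re\cal C)\,v}\geq \lambda_{\min}(\re\cal C)\,\abs{v}^2\geq c\abs{z}^2$ with $c$ bounded below on compact subsets of $\{\re\cal C>0\}$, which gives exactly the locally uniform Gaussian domination needed for holomorphy of $g$ under the integral sign; the line $\zeta\mapsto H+\zeta K$ satisfies $\re(H+\zeta K)=H+(\re\zeta)K>0$ on the strip $\abs{\re\zeta}<\epsilon$, so $H+\zeta K$ is invertible and $\det(H+\zeta K)\neq 0$ there, making $\phi(\zeta)=g(H+\zeta K)/\det(H+\zeta K)$ holomorphic on a connected set containing both the real interval (where the Hermitian case gives $\phi=\pi^n$) and the point $\zeta=\ii$ (which recovers $\cal C$), and the identity theorem finishes the argument. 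This is the standard route for non-self-adjoint covariances (Hermitian case plus analytic continuation in the covariance), comparable in spirit to the contour deformation the paper uses in its proof of the real-covariance Lemma \ref{lem:Gaussian}; your one-complex-parameter family through the Hermitian locus is a clean way to organize it, and I see no gap.
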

In light of Lemma \ref{complex_Gaussian}, for a complex $n \times n$ matrix $\cal C$ with $\re \cal C >0$, we define the Gaussian probability measure on $\C^n$ with covariance $\cal C$ through 
\begin{equation}
\label{mu_A_complex}
\mu_{\cal C}(\dd z) \deq \frac{1}{\pi^{n} \det \cal C } \, \ee^{-\scalar{z}{\cal C^{-1} z}} \, \dd z\,.
\end{equation}
We state Wick's rule for a complex Gaussian measure.

\begin{lemma}
\label{complex_Wick_theorem}
Let $\cal C$ be a complex $n \times n$ matrix with $\re \cal C >0$ and let $\mu_{\cal C}$ be given by \eqref{mu_A_complex}. For $f_1, \dots, f_p, g_1, \dots, g_p \in \C^n$ we have
\begin{equation*}
\int_{\mathbb{C}^n} \mu_{\cal C}(\dd z) \, \prod_{i = 1}^p \scalar{f_i}{z} \scalar{z}{g_i} 
= \sum_{\pi \in S_p} \prod_{i=1}^{p} \scalar{f_i}{\cal C g_{\pi(i)}}\,.
\end{equation*} 
\end{lemma}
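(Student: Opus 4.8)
The plan is to reduce the lemma to a single generating‑function identity and then read off the result by differentiation. For $f,g\in\C^n$ set
\[
\Phi(f,g)\deq\int_{\C^n}\mu_{\cal C}(\dd z)\,\ee^{\scalar{f}{z}+\scalar{z}{g}}\,.
\]
This is well defined and entire in $(f,g)\in\C^{2n}$: since $\re\cal C>0$ one has $\re(\cal C^{-1})=(\cal C^{-1})^*(\re\cal C)\,\cal C^{-1}>0$, so the integrand has Gaussian decay uniform on compacts, and analyticity follows by dominated convergence. The key claim is $\Phi(f,g)=\ee^{\scalar{f}{\cal C g}}$. To prove it I would complete the square, using the elementary identity
\[
-\scalar{z}{\cal C^{-1}z}+\scalar{f}{z}+\scalar{z}{g}=-\scalarb{z-\cal C^* f}{\cal C^{-1}(z-\cal C g)}+\scalar{f}{\cal C g}\,,
\]
so that $\Phi(f,g)=\ee^{\scalar{f}{\cal C g}}\,(\pi^n\det\cal C)^{-1}\int_{\C^n}\dd z\,\ee^{-\scalarb{z-\cal C^* f}{\cal C^{-1}(z-\cal C g)}}$. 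It then remains to check that the shifted Gaussian integral is unchanged, i.e.\ equals $\int_{\C^n}\dd z\,\ee^{-\scalar{z}{\cal C^{-1}z}}=\pi^n\det\cal C$ by Lemma \ref{complex_Gaussian}. This is done by deforming the contour of integration in $\C^n\cong\R^{2n}$, exactly as in the proof of Lemma \ref{lem:Gaussian}: the integrand extends to an entire function of $z$ whose exponent has negative‑definite quadratic real part, so Cauchy's theorem allows shifting the real and imaginary parts of each coordinate independently to cancel the linear terms.

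Granting $\Phi(f,g)=\ee^{\scalar{f}{\cal C g}}$, I would apply it with $f=\sum_i\bar\lambda_i f_i$ and $g=\sum_i\mu_i g_i$ for formal parameters $\lambda_i,\mu_i\in\C$, which by (conjugate‑)linearity gives
\[
\int_{\C^n}\mu_{\cal C}(\dd z)\,\exp\pbb{\sum_{i=1}^p\lambda_i\scalar{f_i}{z}+\sum_{i=1}^p\mu_i\scalar{z}{g_i}}=\exp\pbb{\sum_{i,j=1}^p\lambda_i\mu_j\scalar{f_i}{\cal C g_j}}\,.
\]
Applying $\partial_{\lambda_1}\cdots\partial_{\lambda_p}\partial_{\mu_1}\cdots\partial_{\mu_p}$ and setting all parameters to zero produces $\int\mu_{\cal C}(\dd z)\prod_i\scalar{f_i}{z}\scalar{z}{g_i}$ on the left, and on the right the coefficient of $\lambda_1\cdots\lambda_p\mu_1\cdots\mu_p$ in the exponential series; since every parameter must appear to first order, only the $p$‑th order term $\frac1{p!}\bigl(\sum_{i,j}\lambda_i\mu_j\scalar{f_i}{\cal C g_j}\bigr)^p$ contributes, and a short bookkeeping computation identifies this coefficient with $\sum_{\pi\in S_p}\prod_{i=1}^p\scalar{f_i}{\cal C g_{\pi(i)}}$, which is the claim.

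The algebra of completing the square, the analyticity of $\Phi$, and the combinatorial identification of the derivative are all routine; the only genuinely delicate point is the contour‑deformation step that justifies the shift of the Gaussian integral by the \emph{non‑real} vectors $\cal C^* f$ and $\cal C g$ when $\cal C$ is not self‑adjoint, which is exactly where $\re\cal C>0$ is used and where one argues as in Lemma \ref{lem:Gaussian}. If one prefers to avoid the contour shift altogether, an alternative is to prove the lemma by induction on $p$: a Wirtinger integration by parts based on $\scalar{z}{g_p}\,\ee^{-\scalar{z}{\cal C^{-1}z}}=-\sum_k(\cal C g_p)_k\,\partial_{z_k}\ee^{-\scalar{z}{\cal C^{-1}z}}$ peels off one factor and yields the recursion $\int\mu_{\cal C}(\dd z)\prod_{i=1}^p\scalar{f_i}{z}\scalar{z}{g_i}=\sum_{i=1}^p\scalar{f_i}{\cal C g_p}\int\mu_{\cal C}(\dd z)\prod_{j\neq i}\scalar{f_j}{z}\prod_{j<p}\scalar{z}{g_j}$, from which the sum over $S_p$ follows immediately.
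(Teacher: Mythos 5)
Your main argument is correct, but it takes a genuinely different route from the paper. The paper proves the lemma in one line: the identity $\scalar{f}{z}\,\ee^{-\scalar{z}{\cal C^{-1} z}} = -\scalar{f}{\cal C \bar \partial}\,\ee^{-\scalar{z}{\cal C^{-1} z}}$ followed by integration by parts — which is essentially the Wirtinger induction you sketch in your final sentence (with the roles of $\partial$ and $\bar\partial$ interchanged, peeling off a factor $\scalar{f_i}{z}$ instead of $\scalar{z}{g_p}$; both recursions produce the permanent $\sum_{\pi}\prod_i\scalar{f_i}{\cal C g_{\pi(i)}}$). Your primary route, via the generating function $\Phi(f,g)=\ee^{\scalar{f}{\cal C g}}$, completion of the square, and contour deformation, gives a reusable complex analogue of \eqref{Hubbard_Stratonovich_formula} and runs parallel to the paper's real-Gaussian Lemma \ref{lem:Gaussian}; the price is the extra analytic work (contour shift, differentiation under the integral) that the paper's one-line integration-by-parts proof avoids. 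Two points to tighten if you keep your route: (i) $\Phi$ is anti-holomorphic in $f$ (entire in $(\bar f,g)$ rather than in $(f,g)$), which is harmless since you parametrize $f=\sum_i \bar\lambda_i f_i$ so that the integrand is jointly entire in $(\lambda,\mu)$; (ii) in the shifted exponent the two slots are translated by \emph{different} vectors, $\cal C^* f$ and $\cal C g$, so the deformation cannot be a single translation in $\C^n$ but must be, as you indicate, independent complex shifts of the real and imaginary parts of each coordinate in $\R^{2n}$, with $\re \cal C^{-1} = (\cal C^{-1})^* (\re \cal C)\, \cal C^{-1} > 0$ supplying the decay that justifies Cauchy's theorem. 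The combinatorial extraction of the coefficient of $\lambda_1\cdots\lambda_p\,\mu_1\cdots\mu_p$ is fine.
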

\begin{proof}
Using the notation $\bar \partial_i = \frac{\partial}{\partial \bar z_i}$, the claim follows from the identity
\begin{equation*}
\scalar{f}{z} \,\ee^{-\scalar{z}{\cal C^{-1} z}} = - \scalar{f}{\cal C \bar \partial} \, \ee^{- \scalar{z}{\cal C^{-1} z}}
\end{equation*}
and integration by parts.
\end{proof}

\subsection{Fourier transform} \label{sec:FT}
Here we summarize the conventions of the Fourier transform that we shall use.
For a function $f \col \R^d \to \C$ we use the continuous Fourier transform
\begin{equation*}
f(x) = \int_{\R^d} \dd p \, (\cal F f)(p) \, \ee^{2 \pi \ii x \cdot p} \,, \qquad
(\cal F f)(p) = \int_{\R^d} \dd x \, f(x) \, \ee^{- 2 \pi \ii x \cdot p}\,.
\end{equation*}
For a function $f \col \Lambda_{L,d} \to \C$ we use the Fourier series
\begin{equation*}
f(x) = \frac{1}{L^d} \sum_{p \in \Z^d} (\cal F_L f)(p) \, \ee^{2 \pi \ii p \cdot x / L}\,, \qquad (\cal F_L f)(p) = \int_{\Lambda} \dd x \, f(x) \, \ee^{- 2 \pi \ii x \cdot p / L}\,.
\end{equation*}
On $\Lambda$ we define the convolution
\begin{equation*}
(f * g)(x) \deq \int_\Lambda \dd y \, f(x - y) \, g(y)\,, \qquad (\cal F_L (f * g))(p) = (\cal F_L f)(p) (\cal F_L g)(p)\,.
\end{equation*}
For a function $v \col \Lambda \to \C$ and a function $\varphi \col \R^d \to \C$ we have for any $\eta > 0$, by Poisson summation,
\begin{equation} \label{Fourier_regularization}
\frac{1}{L^d} \sum_{p \in \Z^d} \varphi(\eta p) (\cal F_L v)(p) \, \ee^{2 \pi \ii x \cdot p / L} = (\delta_{\eta, L, \varphi} * v)(x)\,, \qquad \delta_{\eta, L, \varphi}(x) \deq \frac{1}{\eta^d} \sum_{y \in \Z^d} (\cal F^{-1} \varphi) \pbb{\frac{x - L y}{\eta}}\,.
\end{equation}
The function $\delta$ has the interpretation of an approximate delta function on $\Lambda$. More precisely, if $\varphi(0) = 1$ then $\delta_{\eta, L, \varphi}$ is a periodic function on $\Lambda$ satisfying $\int_\Lambda \dd x \, \delta_{\eta, L, \varphi}(x) = 1$, which converges (in distribution, i.e.\ tested against continuous functions) as $\eta \to 0$ to the periodic delta function on $\Lambda$.

\subsection{Properties of quasi-free states} \label{sec:qf}
In this subsection we review some standard facts about bosonic quasi-free states.

\begin{definition}
We lift any bounded operator $\ops$ on $\cal H$ to an operator $\Gamma(\ops)$ on Fock space $\cal F(\cal H)$ through
$\Gamma(\ops) \deq \bigoplus_{n \in \N} \ops^{\otimes n}$ on $\cal F(\cal H)$.
\end{definition}

\begin{lemma} \label{lem:trace_formulas}
Let $\ops$ be a trace-class operator on $\cal H$ with spectral radius strictly less than one. Then
\begin{equation} \label{trace_bosons}
\Tr_{\cal F} (\Gamma(\ops)) = \sum_{n \in \N} \tr (P_n \ops^{\otimes n}) = \det (1 - \ops)^{-1} = \ee^{- \tr \log (1 - \ops)}\,.
\end{equation}
\end{lemma}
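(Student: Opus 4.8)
The plan is to prove the three equalities in turn, from left to right; the substance is in the middle one, which I would establish by a cycle expansion over permutations.

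The leftmost equality is a matter of unravelling definitions. Since $\Gamma(\ops) = \bigoplus_{n} \ops^{\otimes n}$ is block diagonal for $\cal F = \bigoplus_n \cal H_n$, and $\ops^{\otimes n}$ commutes with every permutation unitary $U_\pi$ on $\cal H^{\otimes n}$ (it acts identically on each tensor factor), it commutes with $P_n$ and hence preserves $\cal H_n = P_n \cal H^{\otimes n}$; therefore $\Tr_{\cal F}(\Gamma(\ops)) = \sum_n \Tr_{\cal H_n}(\ops^{\otimes n}|_{\cal H_n}) = \sum_n \tr_{\cal H^{\otimes n}}(P_n \ops^{\otimes n})$, the last step by cyclicity of the trace and $P_n^2 = P_n$.

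For the middle equality, I would first record that $\ops$ trace class with spectral radius $r(\ops) < 1$ implies, upon fixing any $\rho \in (r(\ops), 1)$, that $\|\ops^k\|_{\fra S^\infty} \leq \rho^k$ for all large $k$ by Gelfand's formula, whence $\abs{\tr(\ops^k)} \leq \|\ops\|_{\fra S^1} \|\ops^{k-1}\|_{\fra S^\infty}$ decays geometrically and $\sum_{k \geq 1} \frac1k \abs{\tr(\ops^k)} < \infty$. Writing $P_n = \frac1{n!} \sum_{\pi \in S_n} U_\pi$, a direct computation in an orthonormal basis of $\cal H$ gives $\tr_{\cal H^{\otimes n}}(U_\pi \ops^{\otimes n}) = \prod_{c} \tr(\ops^{\ell(c)})$, the product running over the cycles $c$ of $\pi$ with $\ell(c)$ their lengths (a cycle of length $\ell$ produces a cyclic sum of matrix elements of $\ops$ equal to $\tr(\ops^\ell)$). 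Grouping the permutations of $S_n$ by cycle type $1^{m_1} 2^{m_2} \cdots$, of which there are exactly $n! / \prod_{j}(j^{m_j} m_j!)$, the factor $\frac1{n!}$ cancels and $\sum_n \tr(P_n \ops^{\otimes n})$ becomes the sum over all finitely supported sequences $(m_j)_{j \geq 1}$ of $\prod_j \frac1{m_j!}(\tr(\ops^j)/j)^{m_j}$, that is
\begin{equation*}
\sum_{n \in \N} \tr(P_n \ops^{\otimes n}) = \prod_{j \geq 1} \sum_{m \geq 0} \frac1{m!} \pbb{\frac{\tr(\ops^j)}{j}}^m = \prod_{j \geq 1} \ee^{\tr(\ops^j)/j} = \ee^{\sum_{j \geq 1} \tr(\ops^j)/j}\,.
\end{equation*}
All these rearrangements — in particular the interchange of the sum over $n$ with the product over $j$ — are justified because running the identical computation with $\tr(\ops^j)$ replaced by $\abs{\tr(\ops^j)}$ yields the finite quantity $\ee^{\sum_j \abs{\tr(\ops^j)}/j}$; this also gives $\sum_n \abs{\tr(P_n \ops^{\otimes n})} < \infty$, so the Fock-space trace is meaningful. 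Finally, $r(\ops) < 1$ makes $\log(1 - \ops) \deq -\sum_{k \geq 1} \ops^k/k$ a trace-class operator (the series converging in $\fra S^1$ by the geometric bound above) with $\tr \log(1 - \ops) = -\sum_{k \geq 1} \tr(\ops^k)/k$, which is exactly minus the exponent just found; and $\ee^{\log(1-\ops)} = 1 - \ops$ by the holomorphic functional calculus (valid since $\spec(1 - \ops)$ avoids the origin), so the standard identity $\det(\ee^A) = \ee^{\tr A}$ for trace-class $A$ gives $\det(1 - \ops) = \ee^{\tr \log(1-\ops)}$ and hence $\det(1-\ops)^{-1} = \ee^{-\tr\log(1-\ops)}$, closing the chain.

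The only genuinely delicate point is the bookkeeping in the cycle expansion — justifying the interchanges of infinite sums and products — but as indicated this is handled uniformly by absolute convergence, which rests solely on $r(\ops) < 1$ via Gelfand's formula, so that no self-adjointness or normality of $\ops$ is needed. An alternative, essentially equivalent route would pass through Fredholm theory, writing $\det(1 - z\ops)^{-1} = \prod_i (1 - z\lambda_i)^{-1} = \sum_n z^n \tr(\op{Sym}^n \ops)$ as an identity of analytic functions on $\abs{z} < 1/r(\ops)$ and evaluating at $z = 1$, using Lidskii's theorem to identify $\tr(\ops^k) = \sum_i \lambda_i^k$; I would, however, present the self-contained combinatorial argument above.
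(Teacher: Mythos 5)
Your argument is correct, but it takes a genuinely different route from the paper. The paper's proof reduces everything to finite dimensions by a (not spelled out) approximation argument, then to diagonal $\ops$ by density of diagonalizable matrices and cyclicity of the trace, and finally computes $\Tr_{\cal F}(\Gamma(\ops))$ in the occupation-number basis as a product of geometric series $\prod_i (1-\ops_i)^{-1}$. You instead work directly in infinite dimensions: the cycle expansion $\tr(U_\pi \ops^{\otimes n}) = \prod_c \tr(\ops^{\ell(c)})$, the count $n!/\prod_j (j^{m_j} m_j!)$ of permutations of a given cycle type, and the exponential rearrangement give $\sum_n \tr(P_n \ops^{\otimes n}) = \exp\bigl(\sum_{j\geq 1} \tr(\ops^j)/j\bigr)$, with all interchanges controlled by the geometric decay of $\abs{\tr(\ops^k)}$ coming from Gelfand's formula and $\abs{\tr(\ops^k)} \leq \norm{\ops}_{\fra S^1}\norm{\ops^{k-1}}_{\fra S^\infty}$; the identification with $\det(1-\ops)^{-1} = \ee^{-\tr\log(1-\ops)}$ then follows from the $\fra S^1$-convergent logarithm series and $\det(\ee^A)=\ee^{\tr A}$. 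What your route buys is that it avoids the paper's reduction steps entirely (whose continuity in the trace norm, with the spectral radius kept below one, is itself a small argument) and it treats non-normal, non-diagonalizable $\ops$ on the same footing as diagonal ones; what it costs is the combinatorial bookkeeping, which you justify adequately by running the computation with absolute values. Your remark that the absolute convergence of $\sum_n \abs{\tr(P_n\ops^{\otimes n})}$ makes the Fock-space trace "meaningful" is the right level of care: note that $\Gamma(\ops)$ need not be trace class on $\cal F$ under the stated hypotheses (a nilpotent rank-one example with large trace norm shows this), so the leftmost expression should indeed be read, as both you and the paper do, as the sum of the $n$-sector traces.
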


\begin{proof}
By a standard approximation argument, it suffices to establish the claim for finite-dimensional $\cal H$. Moreover, by density of diagonalizable matrices, we may assume that $\ops$ is diagonalizable. Hence, by cyclicity of trace, it suffices to show the claim for diagonal $\ops = \diag(\ops_1, \dots, \ops_k)$, where $k = \dim \cal H$.

We use the occupation state basis of the Fock space $\cal F(\cal H)$. For $\f m \in \N^k$ we define the vector $E_{\f m} \deq P_{\abs{\f m}} \, \f e_1^{\otimes m_1} \otimes \f e_2^{\otimes m_2} \otimes \cdots \otimes \f e_k^{\otimes m_k}$, where $\abs{\f m} \deq \sum_{i = 1}^k m_i$ and $(\f e_i)_{i = 1}^k$ is the standard basis of $\cal H$. It is easy to check that $(E_{\f m})_{\f m \in \N^k}$ is an orthonormal basis of $\cal F(\cal H)$. Thus we find
\begin{equation*}
\Tr_{\cal F}(\Gamma(\ops)) = \sum_{\f m \in \N^k} \scalar{E_{\f m}}{\ops^{\otimes \abs{\f m}} E_{\f m}} = \sum_{\f m \in \N^k} \ops_1^{m_1} \cdots \ops_k^{m_k} = \prod_{i = 1}^k \frac{1}{1 - \ops_i} = \det(1 - \ops)^{-1}\,.
\qedhere
\end{equation*}
\end{proof}

Let $\ops$ be a trace-class operator on $\cal H$ with spectral radius strictly less than one. We define the \emph{quasi-free state} associated with $\ops$ through
\begin{equation} \label{quasi_free}
q_\ops (X) \deq \frac{\Tr_{\cal F} (X \Gamma(\ops))}{ \Tr_{\cal F} (\Gamma(\ops))}\,,
\end{equation}
where $X$ is an operator on the Fock space $\cal F(\cal H)$. The following result is well known; for a proof see e.g.\ \cite[Lemma B.1]{frohlich2017gibbs}. (Note that the argument of \cite[Lemma B.1]{frohlich2017gibbs} was given for the case that $\ops$ is a positive operator, but this assumption is irrelevant for the proof.)

\begin{lemma}[Wick theorem]
\label{Quantum Wick theorem}
Let $\ops$ be a trace-class operator on $\cal H$ with spectral radius strictly less than one.
Then the following holds.
\begin{enumerate}
\item
We have
\begin{equation}
\label{QWT_i_1}
q_\ops \pb{a^*(g) \, a(f)} = \scalarbb{f}{\frac{\ops}{1 - \ops}\, g}
\end{equation}
and
\begin{equation}
\label{QWT_i_2}
q_\ops\pb{a(f) \, a(g)} = q_\ops\pb{a^*(f) \, a^*(g)} = 0
\end{equation}
for all $f,g \in \cal H$.
\item
Let $X_1, \dots, X_n$ be operators of the form $X_i = a(f_i)$ or $X_i = a^*(f_i)$, where $f_1, \dots, f_n \in \cal H$. Then we have
\begin{equation}
\label{QWT_ii}
q_\ops (X_1 \cdots X_n) = \sum_{\Pi \in \fra M_c([n])} \prod_{(i,j) \in \Pi} q_\ops(X_i X_j)\,,
\end{equation}
where the sum ranges over all complete pairings of $[n]$, and we label the edges of $\Pi$ using ordered pairs $(i,j)$ with $i < j$.
\end{enumerate}
\end{lemma}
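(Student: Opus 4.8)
The plan is to derive both parts from two ingredients: the canonical commutation relations \eqref{CCR_b}, and the intertwining identities
\[
\Gamma(\ops)\,a^*(f) = a^*(\ops f)\,\Gamma(\ops)\,,\qquad a(f)\,\Gamma(\ops) = \Gamma(\ops)\,a(\ops^* f)\,,
\]
which follow at once from $\Gamma(\ops) = \bigoplus_n \ops^{\otimes n}$, the formula \eqref{def_b1}, and the commutation of $\ops^{\otimes n}$ with the symmetrizer $P_n$ (the second identity being the adjoint of the first). I would first record the elementary bookkeeping: by Lemma \ref{lem:trace_formulas} the normalization $\Tr_{\cal F}\Gamma(\ops) = \det(1-\ops)^{-1}$ is finite and, the spectral radius of $\ops$ being $<1$, the operator $1-\ops$ is invertible; moreover each trace $\Tr_{\cal F}(X_1\cdots X_n\,\Gamma(\ops))$ below vanishes unless the product preserves the particle number, in which case it is an absolutely convergent sum of block traces, so that cyclicity of the trace may be used freely. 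It is convenient to reduce first, by a routine approximation argument, to finite-dimensional $\cal H$, where these points are immediate.

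For part (i): the operators $a(f)a(g)$ and $a^*(f)a^*(g)$ lower, resp.\ raise, the particle number by two, hence have no diagonal block, giving $q_\ops(a(f)a(g)) = q_\ops(a^*(f)a^*(g)) = 0$, i.e.\ \eqref{QWT_i_2}. For \eqref{QWT_i_1}, I would define $R$ on $\cal H$ through $q_\ops(a^*(g)\,a(f)) = \scalar{f}{Rg}$; using the second intertwining identity, cyclicity, and the relation $a(\ops^* f)\,a^*(g) = a^*(g)\,a(\ops^* f) + \scalar{\ops^* f}{g}$ with $\scalar{\ops^* f}{g} = \scalar{f}{\ops g}$, one finds
\[
\scalar{f}{Rg} = q_\ops\pb{a^*(g)\,a(f)} = q_\ops\pb{a(\ops^* f)\,a^*(g)} = \scalar{f}{\ops g} + q_\ops\pb{a^*(g)\,a(\ops^* f)} = \scalar{f}{\ops g} + \scalar{f}{\ops R g}\,,
\]
so that $(1-\ops)R = \ops$ and $R = \ops/(1-\ops)$.

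For part (ii): both sides vanish unless the $X_i$ contain equally many creation and annihilation operators — the left side because a particle-number-changing operator has zero trace, the right side because otherwise every complete pairing of $[n]$ contains a pair of creation operators or a pair of annihilation operators, which contributes $0$ by \eqref{QWT_i_2}. In the balanced case I would induct on $n$, with part (i) supplying the base cases and the two-point functions; the heart of the matter is the reduction identity
\[
q_\ops(X_1\cdots X_n) = \sum_{i=1}^{n-1} q_\ops(X_i X_n)\,q_\ops\pb{X_1\cdots\widehat{X_i}\cdots X_{n-1}}\,,
\]
where $\widehat{X_i}$ indicates omission of $X_i$; granting this, \eqref{QWT_ii} follows by applying the inductive hypothesis to the $(n-2)$-fold products and using that a complete pairing of $[n]$ is the same as a choice of partner $i<n$ for $n$ together with a complete pairing of the remaining indices. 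To prove the reduction identity I would pull $X_n$ out of the product, treating its two possible types by mirror-image arguments. If $X_n = a(f)$: the intertwining identity and cyclicity give $q_\ops(Y\,a(f)) = q_\ops(a(\ops^* f)\,Y)$ for any product $Y = X_1\cdots X_{n-1}$; commuting $a(\ops^* f)$ rightward through $Y$ via \eqref{CCR_b} produces, for each $X_i = a^*(g_i)$, a term $\scalar{\ops^* f}{g_i}\,X_1\cdots\widehat{X_i}\cdots X_{n-1}$, plus a leftover $q_\ops(Y\,a(\ops^* f))$ of the same shape; iterating and summing the geometric series — which converges because the spectral radius of $\ops$ is $<1$, while $q_\ops(Y\,a(\ops^{*k}f))\to 0$ — the accumulated coefficient of $q_\ops(X_1\cdots\widehat{X_i}\cdots X_{n-1})$ becomes $\sum_{k\geq1}\scalar{f}{\ops^k g_i} = \scalar{f}{\tfrac{\ops}{1-\ops}g_i} = q_\ops(a^*(g_i)\,a(f))$, while the terms $X_i = a(g_i)$ may be restored with vanishing coefficient $q_\ops(a(g_i)\,a(f)) = 0$. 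If $X_n = a^*(f)$: move $a^*(f)$ leftward to the front of $Y$, which yields $\scalar{g_i}{f}$ for each $X_i = a(g_i)$ together with the term $q_\ops(a^*(f)\,Y)$; then use $q_\ops(a^*(f)\,Y) = q_\ops(Y\,a^*(\ops f))$ (intertwining and cyclicity) and resum the — again contracting — geometric series, so that the coefficient of $q_\ops(X_1\cdots\widehat{X_i}\cdots X_{n-1})$ is $\sum_{k\geq0}\scalar{g_i}{\ops^k f} = \scalar{g_i}{\tfrac{1}{1-\ops}f} = q_\ops(a(g_i)\,a^*(f)) = q_\ops(X_iX_n)$.

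The hard part is not the algebra, which is routine, but the analytic bookkeeping in the non-self-adjoint case: making sure the traces are well defined, that cyclicity applies, and that the resummations are legitimate without any positivity hypothesis on $\ops$. I would handle this by the reduction to finite-dimensional $\cal H$ mentioned above (and, if helpful, a further reduction to diagonalizable $\ops$ by density, tracking the base change through the $f_i,g_i$), or alternatively by proving the identities for positive $\ops$ and extending by analyticity in $\ops$ on the domain where all quantities are defined. A conceptually cleaner but technically heavier alternative, well adapted to the tools assembled in Section \ref{sec:qf}, is to reduce to diagonal $\ops$ with eigenvalues in the open unit disk, represent $\Tr_{\cal F}(X\,\Gamma(\ops))$ as a finite-dimensional complex Gaussian integral over coherent-state labels with covariance $(1-\ops)^{-1}$ — using that coherent states are eigenvectors of the annihilation operators and that $\Gamma(\ops)$ sends the coherent state with label $\phi$ to the one with label $\ops\phi$ — and then invoke Wick's rule for complex Gaussians, Lemma \ref{complex_Wick_theorem}.
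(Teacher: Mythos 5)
Your proof is correct, and it follows essentially the same route as the argument the paper relies on: the paper does not prove this lemma itself but cites \cite[Lemma B.1]{frohlich2017gibbs}, whose proof is precisely the standard pull-through argument you use — the intertwining relations $\Gamma(\ops)a^*(f)=a^*(\ops f)\Gamma(\ops)$, $a(f)\Gamma(\ops)=\Gamma(\ops)a(\ops^*f)$, cyclicity of the trace, the CCR, and resummation of the resulting geometric series (the paper's remark that positivity of $\ops$ is irrelevant is exactly the point you address with your finite-dimensional/diagonalizable reduction, which mirrors the approximation argument used in Lemma \ref{lem:trace_formulas}). No gaps worth flagging; the needed bounds (remainder $q_\ops(Y\,a(\ops^{*k}f))\to 0$, well-definedness of the traces) are handled adequately by the reductions you indicate.
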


\section{Functional integral representation of quantum many-body systems} \label{sec:fct_quantum}

In this section we derive the functional integral representation for the partition function $Z = \Tr_{\cal F}(\ee^{-H})$ and the reduced density matrices \eqref{def_gammap}, under Assumptions \ref{ass:torus} and \ref{ass:v}. The main results of this section are Proposition \ref{prop:fif_renorm} for the partition function as well as Propositions \ref{prop:red_dens_matrix} and \ref{prop:gamma_hat} for the reduced density matrices.

For clarity of the exposition, in Subsections \ref{intro_sigma}--\ref{sec:space-time}, we focus on an interaction that is not Wick-ordered, i.e.\ we consider a Hamiltonian of the form \eqref{Hamiltonian_H_n}. Then, in Section \ref{sec:Wick}, we explain how Wick ordering is incorporated into our functional integral representation.

\subsection{Random field representation} \label{intro_sigma}
In this subsection we suppose that $H_n$ is of the form \eqref{Hamiltonian_H_n}.
By the Feynman-Kac formula, Lemma \ref{FK_continuous}, applied to dimension $dn$ instead of $d$, we obtain
\begin{equation} \label{e_H_n}
(\ee^{- H_n})_{\f x, \tilde{\f x}} = \ee^{-\nu \kappa n} \int \prod_{i = 1}^n \bb W^{\nu,0}_{x_i, \tilde x_i}(\dd \omega_i) \, \ee^{-\frac{\lambda}{2 \nu} \sum_{i,j = 1}^n \int_0^\nu \dd t \, v(\omega_i(t) - \omega_j(t))}\,.
\end{equation}
The starting point of our representation is a Hubbard-Stratonovich transformation to the exponential in \eqref{e_H_n}. Formally, we introduce a real Gaussian field $\sigma$ on the space-time torus $[0,\nu] \times \Lambda$ with law $\mu_{\cal C_0}$, whose covariance is given by
\begin{equation*}
\int \mu_{\cal C_0}(\dd \sigma) \, \sigma(\tau, x) \, \sigma(\tilde \tau, \tilde x) = \frac{\lambda}{\nu} \, \delta(\tau - \tilde \tau) v(x - \tilde x) \eqd  (\cal C_0)_{x, \tilde x}^{\tau, \tilde \tau} \,.
\end{equation*}
Then we use that for a real Gaussian measure $\mu_{\cal C}$ with covariance $\cal C$
\begin{equation}
\label{Hubbard_Stratonovich_formula}
\int \mu_{\cal C} (\dd \sigma) \, \ee^{\ii \langle f, \sigma \rangle}
= e^{-\frac{1}{2} \langle f , \cal C f \rangle}\,;
\end{equation}
see Lemma \ref{lem:Gaussian}. Using \eqref{Hubbard_Stratonovich_formula} with $f(\tau, x) = \sum_{i = 1}^n \delta(x - \omega_i(\tau))$ we obtain
\begin{equation}
(\ee^{- H_n})_{\f x, \tilde{\f x}} = \ee^{-\nu \kappa n}\int \mu_{\cal C_0}(\dd \sigma) \int \prod_{i = 1}^n \bb W^{\nu,0}_{x_i, \tilde x_i}(\dd \omega_i)  \prod_{i = 1}^n \ee^{\ii \int_0^\nu \dd t \, \sigma(t, \omega_i(t))}\,.
\end{equation}

For the validity of many of the following arguments, we shall need that $\sigma$ is a continuous function of $\tau$ and $x$. Unfortunately, under $\mu_{\cal C_0}$, the field $\sigma$ is white noise in the time direction (and therefore singular in $t$), and potentially discontinuous even in $x$ if $v$ is not smooth enough. We remedy this by regularizing the covariance $\cal C_0$. Let $\varphi$, respectively $\tilde \varphi$, be a smooth, nonnegative, even, compactly supported function on $\R^d$ satisfying $\varphi(0) = 1$, respectively on $\R$ satisfying $\tilde \varphi(0) = 1$. Moreover, we choose\footnote{This condition can be easily achieved by taking the convolution of an appropriate nonnegative function with itself.} $\varphi$ such that $\cal F^{-1} \varphi \geq 0$. 
Recalling the definition \eqref{Fourier_regularization}, we abbreviate
\begin{equation} \label{Fourier_regularization2}
\delta_{\eta, \nu} \deq \delta_{\eta, \nu, \tilde \varphi}\,, \qquad v_\eta \deq v * \delta_{\eta,L,\varphi}\,.
\end{equation}

For $\eta > 0$ we define the real Gaussian measure $\mu_{\cal C_\eta}$ with mean zero and covariance
\begin{equation} \label{cov_C_eta}
\int \mu_{\cal C_\eta}(\dd \sigma)\, \sigma(\tau, x) \, \sigma(\tilde \tau, \tilde x) = \frac{\lambda}{\nu} \, \delta_{\eta, \nu}(\tau - \tilde \tau) \, v_\eta (x - \tilde x) \eqd (\cal C_\eta)_{x, \tilde x}^{\tau, \tilde \tau}\,.
\end{equation}
For $\eta \geq 0$, we can represent $\mu_{\cal C_\eta}$ as the law of the Gaussian field
\begin{equation} \label{sigma_X_rep}
\sigma_\eta(\tau,x) \deq \sqrt{\frac{\lambda}{\nu}} \frac{1}{\sqrt{\nu L^d}} \sum_{k \in \Z} \sum_{p \in \Z^d} \frac{X_{k,p} + \ol X_{-k,-p}}{\sqrt{2}} \, \sqrt{(\cal F_L v)(p) \, \tilde \varphi(\eta k)\, \varphi(\eta p)} \, \ee^{2 \pi \ii \tau k / \nu + 2 \pi \ii p \cdot x / L} \,,
\end{equation}
where $(X_{k,p} \col k\in \Z, p \in \Z^d)$ is family of independent standard complex Gaussians. For $\eta = 0$, the series converges almost surely in a Sobolev space of sufficiently negative index. For $\eta > 0$, the series is a finite sum, and therefore, under the law $\mu_{\cal C_\eta}$, the field $\sigma$ is almost surely a smooth periodic function on $[0,\nu] \times \Lambda$, and the integration over $\mu_{\cal C_\eta}$ is a finite-dimensional Gaussian integral. 

We define the following regularized versions of $\ee^{-H}$ and $Z = \Tr_{\cal F} (\ee^{-H})$. 
\begin{definition} \label{def:R}
Let $\eta > 0$. On $\cal F$ we define the operator $\wt R_\eta \deq \bigoplus_{n \in \N} \wt R_{n, \eta}$, where $\wt R_{n, \eta}$ has operator kernel
\begin{equation} \label{def_R_kernel}
(\wt R_{n, \eta})_{\f x, \tilde {\f x}} \deq \ee^{-\nu \kappa n} \int \mu_{\cal C_\eta}(\dd \sigma)  \int \prod_{i = 1}^n \bb W^{\nu,0}_{x_i, \tilde x_i}(\dd \omega_i) \, \prod_{i = 1}^n \ee^{\ii \int_0^\nu \dd t \, \sigma(t, \omega_i(t))}\,.
\end{equation}
Moreover, define $\wt Z_\eta \deq \Tr_{\cal F} (\wt R_\eta)$.
\end{definition}

As a guide to the reader, we note that the tilde on $\tilde R_{n,\eta}$ and $\tilde Z_\eta$ indicates the absence of Wick-ordering. In Section \ref{sec:Wick} we define the corresponding Wick-ordered quantities, which are denoted by $R_{n,\eta}$ and $Z_\eta$.

Throughout the following we shall use simple pointwise bounds of the form
\begin{equation} \label{kernel_est}
\abs{(\wt R_{n, \eta})_{\f x, \tilde {\f x}}} \leq  \ee^{-\nu \kappa n} \int \prod_{i = 1}^n \bb W^{\nu,0}_{x_i, \tilde x_i}(\dd \omega_i) = (\ee^{-H_n^0})_{\f x, \tilde {\f x}}\,.
\end{equation}
For the actual proof, we shall not need the following result, but we give it to illustrate the argument used to prove convergence as $\eta \to 0$, which we shall use repeatedly later on.
\begin{lemma} \label{lem:lim_eta}
We have $\lim_{\eta \to 0} \wt Z_\eta = Z$.
\end{lemma}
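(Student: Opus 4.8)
The plan is to recognize that $\wt Z_\eta = \sum_{n \in \N} \tr_{\cal H_n} \wt R_{n,\eta}$ and $Z = \sum_{n \in \N} \tr_{\cal H_n} \ee^{-H_n}$, and to pass to the limit $\eta \to 0$ term by term, justifying the interchange of limit and sum by dominated convergence over the counting measure on $\N$. The required uniform-in-$\eta$ domination is provided by the pointwise kernel bound \eqref{kernel_est}: since the kernel of $\wt R_{n,\eta}$ is invariant under simultaneous permutation of $\f x$ and $\tilde{\f x}$, writing $P_n = \frac{1}{n!}\sum_{\pi \in S_n} \Pi_\pi$ with $\Pi_\pi$ the operator permuting tensor factors on $\cal H^{\otimes n}$, one has $\absb{\tr_{\cal H_n} \wt R_{n,\eta}} \leq \frac{1}{n!}\sum_\pi \absb{\tr(\Pi_\pi \wt R_{n,\eta})} \leq \frac{1}{n!}\sum_\pi \tr(\Pi_\pi \ee^{-H_n^0}) = \tr_{\cal H_n} \ee^{-H_n^0}$, and $\sum_n \tr_{\cal H_n} \ee^{-H_n^0} = Z^0 = \det(1 - \ee^{-\nu h})^{-1} < \infty$ by Lemma \ref{lem:trace_formulas} and Assumption \ref{ass:torus} (using that $\ee^{-\nu h}$ is trace class with spectral radius $\ee^{-\nu\kappa} < 1$). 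It thus suffices to prove $\tr_{\cal H_n} \wt R_{n,\eta} \to \tr_{\cal H_n} \ee^{-H_n}$ for each fixed $n$.

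For fixed $n$, the first step is to carry out the Gaussian integral over $\sigma$ in \eqref{def_R_kernel} explicitly. Because $v$ is of positive type and $\varphi, \tilde\varphi \geq 0$, the covariance $\cal C_\eta$ is nonnegative, so $\mu_{\cal C_\eta}$ is a genuine (finite-dimensional, for $\eta > 0$) Gaussian measure and \eqref{Hubbard_Stratonovich_formula} applies with $f = f_\omega$, $f_\omega(\tau,x) \deq \sum_{i=1}^n \delta(x - \omega_i(\tau))$ (legitimate since $\sigma$ is almost surely smooth for $\eta > 0$ and $v_\eta$ is smooth). After a trivial application of Fubini this gives
\begin{equation*}
(\wt R_{n,\eta})_{\f x, \tilde{\f x}} = \ee^{-\nu \kappa n} \int \prod_{i=1}^n \bb W^{\nu,0}_{x_i, \tilde x_i}(\dd \omega_i) \, \ee^{-\frac{1}{2} \scalarb{f_\omega}{\cal C_\eta f_\omega}}\,, \qquad \scalarb{f_\omega}{\cal C_\eta f_\omega} = \frac{\lambda}{\nu} \sum_{i,j=1}^n \int_0^\nu \dd \tau \int_0^\nu \dd s \, \delta_{\eta,\nu}(\tau - s) \, v_\eta(\omega_i(\tau) - \omega_j(s))\,.
\end{equation*}
As $\eta \to 0$ one has $v_\eta \to v$ uniformly (continuity of $v$ together with the approximate-identity property of $\delta_{\eta,L,\varphi}$) and $\delta_{\eta,\nu} \to \delta$ as a periodic approximate identity tested against continuous functions, so for every fixed continuous configuration $(\omega_1, \dots, \omega_n)$ the exponent converges to $\frac{\lambda}{\nu}\sum_{i,j}\int_0^\nu \dd\tau \, v(\omega_i(\tau) - \omega_j(\tau))$. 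Since $\scalarb{f_\omega}{\cal C_\eta f_\omega} \geq 0$ the integrand is bounded by $1$, and $\bb W^{\nu,0}_{x_i, \tilde x_i}$ has finite total mass $\psi^\nu(x_i - \tilde x_i)$; dominated convergence and the Feynman--Kac identity \eqref{e_H_n} then yield $(\wt R_{n,\eta})_{\f x, \tilde{\f x}} \to (\ee^{-H_n})_{\f x, \tilde{\f x}}$ for every $\f x, \tilde{\f x}$.

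Finally, upgrading this pointwise convergence to convergence of $\tr_{\cal H_n} \wt R_{n,\eta} = \int \dd\f x \, (P_n \wt R_{n,\eta})_{\f x, \f x}$ requires one further dominated convergence argument in $\f x$, with dominating function the diagonal of $P_n (\ee^{-\nu h})^{\otimes n}$ (again via \eqref{kernel_est} and symmetrization), which is integrable with integral $\tr_{\cal H_n}\ee^{-H_n^0} < \infty$. This gives $\tr_{\cal H_n} \wt R_{n,\eta} \to \tr_{\cal H_n} \ee^{-H_n}$, and combining with the domination established in the first paragraph and summing over $n$ completes the proof.

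All steps are routine; the one point deserving genuine care is the uniform-in-$\eta$ domination \emph{at the level of traces}, i.e.\ promoting the pointwise kernel bound \eqref{kernel_est} to $\absb{\tr_{\cal H_n} \wt R_{n,\eta}} \leq \tr_{\cal H_n} \ee^{-H_n^0}$. This rests on the permutation symmetry of the kernels, on the positivity of the heat kernel (so that each $\tr(\Pi_\pi \ee^{-H_n^0})$ is a nonnegative number), and on the identity $\frac{1}{n!}\sum_{\pi \in S_n} \tr(\Pi_\pi \ee^{-H_n^0}) = \tr(P_n \ee^{-H_n^0}) = \tr_{\cal H_n}\ee^{-H_n^0}$.
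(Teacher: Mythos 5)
Your proof is correct and follows essentially the same route as the paper: undo the Hubbard--Stratonovich transformation for fixed $n$, use nonpositivity of the real part of the exponent plus continuity of the Brownian paths and of $v$ to get pointwise kernel convergence by dominated convergence, and then dominate the trace and the sum over $n$ by the free quantities via \eqref{kernel_est} and $Z^0 < \infty$. Your explicit symmetrization $P_n = \frac{1}{n!}\sum_{\pi} \Pi_\pi$ with positivity of the free heat kernel is just a slightly more spelled-out version of the paper's step writing $\tr(P_n \,\cdot\,) = \frac{1}{n!}\sum_\pi \int \dd \f u \, (\cdot)_{\pi \f u, \f u}$, so no substantive difference.
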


\begin{proof}
We first claim that for all $n \in \N$ and all $\f x, \tilde {\f x} \in \Lambda^n$ we have the pointwise\footnote{Similarly to Proposition \ref{prop:fif_renorm}, this convergence also holds in $L^\infty$, which we however neither need nor prove here.} convergence
\begin{equation} \label{R_eta_pointwise}
\lim_{\eta \to 0} (\wt R_{n, \eta})_{\f x, \tilde {\f x}} = (\ee^{-H_n})_{\f x, \tilde {\f x}}\,.
\end{equation}
Using Fubini's theorem and \eqref{Hubbard_Stratonovich_formula} with $f(\tau,x)=\sum_{i=1}^{n} \delta (x-\omega_i(t))$ for the finite-dimensional Gaussian integral over $\mu_{\cal C_\eta}$ in \eqref{def_R_kernel}, we obtain
\begin{equation*}
(\wt R_{n, \eta})_{\f x, \tilde {\f x}} = \ee^{-\nu \kappa n} \int \prod_{i = 1}^n \bb W^{\nu,0}_{x_i, \tilde x_i}(\dd \omega_i)
\exp \pbb{-\frac{\lambda}{2 \nu} \sum_{i,j = 1}^n \int_0^\nu \dd t \int_0^\nu \dd s \, \delta_{\eta, \nu}(t - s) \, v_\eta (\omega_i(t) - \omega_j(s))}\,.
\end{equation*}
Moreover, from \eqref{def_R_kernel} we find that the exponent has nonpositive real part. Since the Brownian paths $\omega_1, \dots, \omega_n$ are almost surely continuous, using dominated convergence and recalling \eqref{e_H_n}, we find that to prove \eqref{R_eta_pointwise} it suffices to show, for any pair of continuous paths $\omega, \tilde \omega : [0,\nu] \to \Lambda$, that
\begin{equation} \label{eta_conv_omega}
\lim_{\eta \to 0} \int_0^\nu \dd t \int_0^\nu \dd s \, \delta_{\eta, \nu}(t - s) \, v_\eta (\omega(t) - \tilde \omega(s)) = \int_0^\nu \dd t \, v(\omega(t) - \tilde \omega(t))\,.
\end{equation}
To prove this, write
\begin{multline*}
\int_0^\nu \dd t \int_0^\nu \dd s \, \delta_{\eta, \nu}(t - s) \, v_\eta (\omega(t) - \tilde \omega(s))
- \int_0^\nu \dd t \int_0^\nu \dd s \, \delta(t - s) \, v (\omega(t) - \tilde \omega(s))
\\
=
\int_0^\nu \dd t \int_0^\nu \dd s \, \delta_{\eta, \nu}(t - s) \, \pb{v_\eta - v}(\omega(t) - \tilde \omega(s))
+ \int_0^\nu \dd t \int_0^\nu \dd s \, (\delta_{\eta, \nu} - \delta) (t - s) \, v(\omega(t) - \tilde \omega(s))\,.
\end{multline*}
The first term converges to zero as $\eta \to 0$ because $\lim_{\eta \to 0} \norm{v_\eta - v}_{L^\infty} = 0$ by continuity of $v$, and the second converges to zero as $\eta \to 0$ because $(t,s) \mapsto v(\omega(t) - \tilde \omega(s))$ is a continuous map on $[0,\nu]^2$. This concludes the proof of \eqref{R_eta_pointwise}.

Now write
\begin{equation*}
Z = \Tr_{\cal F}(\ee^{-H}) =  \sum_{n \in \N} \tr (P_n \ee^{-H_n}) = \sum_{n \in \N} \frac{1}{n!} \sum_{\pi \in S_{n}} \int_{\Lambda^n} \dd \f u \, (\ee^{-H_n})_{\pi \f u, \f u}\,,
\end{equation*}
where $\pi \f u = (u_{\pi(1)} \dots u_{\pi(n)})$. Analogously,
\begin{equation*}
\wt Z_\eta = \Tr_{\cal F} (\wt R_\eta) = \sum_{n \in \N} \tr (P_n \wt R_{n,\eta}) = \sum_{n \in \N} \frac{1}{n!} \sum_{\pi \in S_{n}} \int_{\Lambda^n} \dd \f u \, (\wt R_{n, \eta})_{\pi \f u, \f u}\,.
\end{equation*}
Then the claim follows from \eqref{R_eta_pointwise} using \eqref{kernel_est},
\begin{equation} \label{sum_free_Z}
\sum_{n \in \N} \frac{1}{n!} \sum_{\pi \in S_{n}} \int_{\Lambda^n} \dd \f u \, (\ee^{-H_n^0})_{\pi \f u, \f u} = Z^0 < \infty\,,
\end{equation}
and dominated convergence.
\end{proof}

\subsection{The periodic propagator} \label{sec:propagator}
As in the previous subsection, we assume that $H_n$ is of the form \eqref{Hamiltonian_H_n}.
We now introduce a fundamental object of our representation -- a propagator of a heat flow driven by a periodic time-dependent potential.
\begin{definition} \label{def:Gamma_u}
For $t \in \R$ abbreviate $[t]_\nu \deq (t \, \op{mod}\, \nu) \in [0,\nu)$. For any continuous periodic function $u \col [0,\nu] \times \Lambda \to \C$, we define the propagator $(W^{\tau, \tilde \tau}(u))_{\tilde \tau \leq \tau}$ through
\begin{equation*}
\partial_\tau W^{\tau, \tilde \tau}(u) = \pbb{\frac{1}{2} \Delta + u([\tau]_\nu)} W^{\tau, \tilde \tau}(u) \,, \qquad W^{\tau, \tau}(u) = 1\,,
\end{equation*}
where we regard $u([\tau]_\nu)$ as a multiplication operator.
\end{definition}

By the Feynman-Kac formula, Lemma \ref{FK_continuous}, $W^{\tau, \tilde \tau}(u)$ has kernel
\begin{equation} \label{W_FK}
W^{\tau, \tilde \tau}_{x, \tilde x}(u) = \int  \bb W^{\tau, \tilde \tau}_{x, \tilde x}(\dd \omega)  \, \ee^{\int_{\tilde \tau}^\tau \dd t \, u([t]_\nu, \omega(t))}\,.
\end{equation}

\begin{lemma} \label{lem:Gamma}
The propagator $(W^{\tau, \tilde \tau})_{\tilde \tau \leq \tau}$ satisfies the following properties.
\begin{enumerate}
\item
It is a groupoid in the sense that for $\tau_1 \leq \tau_2 \leq \tau_3$ we have
\begin{equation*}
W^{\tau_3, \tau_2}(u) \, W^{\tau_2, \tau_1}(u) = W^{\tau_3, \tau_1}(u)\,.
\end{equation*}
\item
It is $\nu$-periodic in the sense that for $\tilde \tau \leq \tau$ we have
\begin{equation*}
W^{\tau + \nu, \tilde \tau + \nu}(u) = W^{\tau, \tilde \tau}(u)\,.
\end{equation*}
\item
Suppose that $\re u \leq - c$ for some constant $c > 0$. Then $\norm{W^{\tau, \tilde \tau}(u)}_{\fra S^\infty} \leq \ee^{- c(\tau - \tilde \tau)}$, and moreover for any $\tau > \tilde \tau$, $W^{\tau, \tilde \tau}(u)$ is trace class.
\end{enumerate}
\end{lemma}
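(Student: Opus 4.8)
\subsection*{Proof proposal}

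The plan is to read all three properties off the Feynman--Kac representation \eqref{W_FK}, supplemented by elementary heat-kernel facts, reducing the operator-theoretic statements to pointwise bounds on the kernel $W^{\tau,\tilde\tau}_{x,\tilde x}(u)$. Throughout, one uses that $u([\,\cdot\,]_\nu)$ is a continuous and (since $\Lambda$ is compact and $u$ is periodic on $[0,\nu]\times\Lambda$) bounded potential, so that the evolution family $W^{\tau,\tilde\tau}(u)$ is well defined and Lemma~\ref{FK_continuous} applies on each interval $[\tilde\tau,\tau]$.

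For parts (i) and (ii) I would argue by uniqueness for the defining linear evolution equation: for (i), with $\tau_1\le\tau_2$ fixed, the map $\tau_3\mapsto W^{\tau_3,\tau_2}(u)\,W^{\tau_2,\tau_1}(u)$ solves $\partial_{\tau_3}V^{\tau_3}=(\tfrac12\Delta+u([\tau_3]_\nu))V^{\tau_3}$ with value $W^{\tau_2,\tau_1}(u)$ at $\tau_3=\tau_2$, which is exactly the problem characterising $W^{\tau_3,\tau_1}(u)$; and for (ii) the map $\tau\mapsto W^{\tau+\nu,\tilde\tau+\nu}(u)$ solves the same equation (by $\nu$-periodicity of $u([\,\cdot\,]_\nu)$) with the same value $1$ at $\tau=\tilde\tau$, hence equals $W^{\tau,\tilde\tau}(u)$. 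Equivalently, both identities follow directly from \eqref{W_FK}: (i) from the concatenation of Brownian bridges at the intermediate time $\tau_2$ (visible in the finite-dimensional distributions \eqref{W_dist}), and (ii) from time-translation invariance of the measures $\bb W^{\tau,\tilde\tau}_{x,\tilde x}$ together with the change of variables $t\mapsto t+\nu$ in the exponent, using $[t+\nu]_\nu=[t]_\nu$.

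For the operator-norm bound in (iii), assume $\re u\le -c$. Taking absolute values in \eqref{W_FK} and using \eqref{def_W} together with the fact that $\bb P^{\tau,\tilde\tau}_{x,\tilde x}$ is a probability measure,
\[
\absb{W^{\tau,\tilde\tau}_{x,\tilde x}(u)}
\;\le\; \int \bb W^{\tau,\tilde\tau}_{x,\tilde x}(\dd\omega)\,\ee^{\int_{\tilde\tau}^{\tau}\dd t\,\re u([t]_\nu,\omega(t))}
\;\le\; \ee^{-c(\tau-\tilde\tau)}\,\psi^{\tau-\tilde\tau}(x-\tilde x)\,.
\]
Since $\int_\Lambda \psi^t(y)\,\dd y=1$, the Schur test yields $\norm{W^{\tau,\tilde\tau}(u)}_{\fra S^\infty}\le \ee^{-c(\tau-\tilde\tau)}$. (Alternatively one can run the energy estimate $\partial_\tau\norm{W^{\tau,\tilde\tau}(u)f}^2=-\norm{\nabla W^{\tau,\tilde\tau}(u)f}^2+2\int(\re u)\,\abs{W^{\tau,\tilde\tau}(u)f}^2\le -2c\norm{W^{\tau,\tilde\tau}(u)f}^2$ and Gr\"onwall.) For the trace-class claim, fix $\tau>\tilde\tau$, set $\sigma\deq(\tau+\tilde\tau)/2$, and use part (i) to factor $W^{\tau,\tilde\tau}(u)=W^{\tau,\sigma}(u)\,W^{\sigma,\tilde\tau}(u)$. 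By the kernel bound just proved, each factor has kernel dominated by a constant times $\psi^{(\tau-\tilde\tau)/2}(\cdot-\cdot)$, which is bounded on the compact torus and hence square-integrable on $\Lambda\times\Lambda$; thus each factor lies in $\fra S^2(\cal H)$ and the product lies in $\fra S^1(\cal H)$.

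I do not expect any genuine obstacle here: the only points requiring care are routine bookkeeping — namely, that $W^{\tau,\tilde\tau}(u)$ is a well-defined strongly continuous evolution family (a bounded continuous perturbation of the heat semigroup), which legitimises the differentiation of products and the uniqueness argument in parts (i)--(ii), and that Lemma~\ref{FK_continuous} indeed applies to the piecewise-defined potential $t\mapsto u([t]_\nu)$ on each bounded interval. Everything else is elementary.
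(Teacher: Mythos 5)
Your proposal is correct and follows essentially the same route as the paper: (i) and (ii) read off the defining equation/Feynman--Kac formula, the operator-norm bound in (iii) via the pointwise kernel bound $\abs{W^{\tau,\tilde\tau}_{x,\tilde x}(u)}\leq \ee^{-c(\tau-\tilde\tau)}\psi^{\tau-\tilde\tau}(x-\tilde x)$ and the Schur test, and trace class from the Feynman--Kac kernel bound together with Lemma \ref{lem:heat_estimate}. Your factorization $W^{\tau,\tilde\tau}(u)=W^{\tau,\sigma}(u)W^{\sigma,\tilde\tau}(u)$ into two Hilbert--Schmidt factors simply spells out the detail the paper leaves implicit in its one-line trace-class claim.
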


\begin{proof}
The properties (i) and (ii) are obvious from Definition \ref{def:Gamma_u} and \eqref{W_FK}. For property (iii), we use the Schur test for the operator norm and \eqref{W_FK} to obtain $\norm{W^{\tau, \tilde \tau}(u)}_{\fra S^\infty} \leq \ee^{- c(\tau - \tilde \tau)}$. Finally, for any $\tau > \tilde \tau$, $W^{\tau, \tilde \tau}(u)$ is trace class by \eqref{W_FK} and Lemma \ref{lem:heat_estimate}.
\end{proof}

Using Definition \ref{def:Gamma_u}, we can express $\wt R_\eta$ and $\wt Z_\eta$ from Definition \ref{def:R} as follows.
\begin{proposition} \label{prop:ZR_eta}
Let $H$ be defined as in \eqref{H_sq} and suppose that Assumptions \ref{ass:torus} and \ref{ass:v} hold.
\begin{enumerate}[label=(\roman*)]
\item
We have
\begin{equation} \label{E_eta_Gamma}
\wt R_\eta = \int \mu_{\cal C_\eta}(\dd \sigma) \, \Gamma \pb{W^{\nu,0}(-\kappa + \ii \sigma)}\,.
\end{equation}
\item
We have
\begin{equation} \label{wt_Z_eta}
\wt Z_\eta =  \int \mu_{\cal C_\eta}(\dd \sigma) \, \ee^{F_0(\sigma)}\,,
\end{equation}
where
\begin{equation} \label{Z_int_rep}
F_0(\sigma) \deq - \tr \log (1 - W^{\nu,0}(-\kappa + \ii \sigma))\,.
\end{equation}
\end{enumerate}
\end{proposition}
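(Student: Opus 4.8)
The plan is to read off both identities directly from the Feynman--Kac representation \eqref{W_FK} of the periodic propagator, establishing part~(i) first and then deducing part~(ii) from it together with the quasi-free trace formula of Lemma~\ref{lem:trace_formulas}. The structural point to keep in mind throughout is that for $\eta > 0$ the series \eqref{sigma_X_rep} is a finite sum, so that under $\mu_{\cal C_\eta}$ the field $\sigma$ is almost surely a smooth $\nu$-periodic function on $[0,\nu] \times \Lambda$; hence $-\kappa + \ii\sigma$ is an admissible potential in the sense of Definition~\ref{def:Gamma_u}, and it satisfies $\re(-\kappa + \ii\sigma) = -\kappa < 0$ deterministically, which places $W^{\nu,0}(-\kappa+\ii\sigma)$ within the scope of Lemma~\ref{lem:Gamma}(iii).

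For part~(i), I would fix $n$, $\f x$, $\tilde{\f x}$ and a realization of $\sigma$, and apply \eqref{W_FK} with $u = -\kappa + \ii\sigma$, $\tau = \nu$ and $\tilde\tau = 0$; since $[t]_\nu = t$ for $t \in [0,\nu)$ this gives
\[
W^{\nu,0}_{x,\tilde x}(-\kappa+\ii\sigma) = \ee^{-\nu\kappa}\int\bb W^{\nu,0}_{x,\tilde x}(\dd\omega)\,\ee^{\ii\int_0^\nu\dd t\,\sigma(t,\omega(t))}\,.
\]
Comparing with \eqref{def_R_kernel} and using Fubini to interchange the (for $\eta > 0$ finite-dimensional) Gaussian integral over $\sigma$ with the Brownian bridge integrals --- legitimate because $\mu_{\cal C_\eta}$ is a probability measure, each $\bb W^{\nu,0}_{x_i,\tilde x_i}$ has finite total mass $\psi^\nu(x_i-\tilde x_i)$, and the integrand has modulus one --- one obtains
\[
(\wt R_{n,\eta})_{\f x,\tilde{\f x}} = \int\mu_{\cal C_\eta}(\dd\sigma)\,\prod_{i=1}^n W^{\nu,0}_{x_i,\tilde x_i}(-\kappa+\ii\sigma) = \int\mu_{\cal C_\eta}(\dd\sigma)\,\pb{W^{\nu,0}(-\kappa+\ii\sigma)^{\otimes n}}_{\f x,\tilde{\f x}}\,.
\]
This is an equality of kernels on $\Lambda^n\times\Lambda^n$ for each $n$; the integral of kernels on the right is well defined since the integrand is dominated by $(\ee^{-H_n^0})_{\f x,\tilde{\f x}}$, cf.\ \eqref{kernel_est}, and since $W^{\nu,0}(-\kappa+\ii\sigma)^{\otimes n}$ commutes with permutations of the factors, its restriction to $\cal H_n = P_n\cal H^{\otimes n}$ is exactly $\Gamma(W^{\nu,0}(-\kappa+\ii\sigma))$ on the $n$-particle sector. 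Taking the direct sum over $n$ then yields \eqref{E_eta_Gamma}.

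For part~(ii), I would unfold the Fock-space trace exactly as in the proof of Lemma~\ref{lem:lim_eta}, writing
\[
\wt Z_\eta = \Tr_{\cal F}(\wt R_\eta) = \sum_{n\in\N}\frac{1}{n!}\sum_{\pi\in S_n}\int_{\Lambda^n}\dd\f u\,(\wt R_{n,\eta})_{\pi\f u,\f u}\,,
\]
which is valid since each $\wt R_{n,\eta}$ is trace class on $\cal H_n$ by the pointwise bound \eqref{kernel_est} and $Z^0 < \infty$ by \eqref{sum_free_Z}. Substituting the kernel identity from part~(i) and applying Fubini once more --- this time dominated, via \eqref{kernel_est}, by $(\ee^{-H_n^0})_{\pi\f u,\f u}$, whose sum over $n$ and integral over $\f u$ and $\pi$ equals $Z^0 < \infty$ --- one may pull the $\sigma$-integral to the front, obtaining
\[
\wt Z_\eta = \int\mu_{\cal C_\eta}(\dd\sigma)\,\Tr_{\cal F}\pb{\Gamma(W^{\nu,0}(-\kappa+\ii\sigma))}\,.
\]
For each $\sigma$ the operator $\ops \deq W^{\nu,0}(-\kappa+\ii\sigma)$ is trace class with spectral radius at most $\ee^{-\kappa\nu} < 1$ by Lemma~\ref{lem:Gamma}(iii), so Lemma~\ref{lem:trace_formulas} applies and gives $\Tr_{\cal F}(\Gamma(\ops)) = \ee^{-\tr\log(1-\ops)} = \ee^{F_0(\sigma)}$, which is \eqref{wt_Z_eta}.

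Once \eqref{W_FK}, the pointwise bound \eqref{kernel_est}, and Lemma~\ref{lem:trace_formulas} are in hand, the argument is essentially bookkeeping; the only points requiring genuine care are the two Fubini interchanges, both controlled by the single domination \eqref{kernel_est} together with the finiteness of $Z^0$, and the elementary observation that $\re(-\kappa+\ii\sigma) = -\kappa$ is strictly negative uniformly in $\sigma$, which is precisely what brings $W^{\nu,0}(-\kappa+\ii\sigma)$ into the hypotheses of Lemmas~\ref{lem:Gamma}(iii) and~\ref{lem:trace_formulas}.
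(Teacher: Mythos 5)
Your proposal is correct and follows essentially the same route as the paper: part (i) is read off from Definition \ref{def:R} via the Feynman--Kac representation \eqref{W_FK} of $W^{\nu,0}(-\kappa+\ii\sigma)$, and part (ii) follows by combining Lemma \ref{lem:Gamma}(iii) with the trace formula of Lemma \ref{lem:trace_formulas}. The only difference is that you spell out the Fubini interchanges and their domination via \eqref{kernel_est} and \eqref{sum_free_Z}, which the paper leaves implicit; this is sound and consistent with how the paper argues in Lemma \ref{lem:lim_eta}.
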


\begin{proof}
From Definition \ref{def:R} we find $(\wt R_{n, \eta})_{\f x, \tilde {\f x}} = \int \mu_{\cal C_\eta}(\dd \sigma) \prod_{i = 1}^n W^{\nu,0}_{x_i, \tilde x_i}(-\kappa + \ii \sigma)$, and (i) follows.
Moreover, thanks to Lemma \ref{lem:Gamma} (iii), we may apply Lemma \ref{lem:trace_formulas} to obtain (ii).
\end{proof}

Next, expanding the logarithm in \eqref{Z_int_rep} as a power series, using Lemma \ref{lem:Gamma} (iii) and $\kappa > 0$, we get
\begin{equation} \label{Z_log_expanded}
F_0(\sigma) = \sum_{\ell \in \N^*} \frac{1}{\ell} \tr W^{\ell \nu,0}(-\kappa + \ii \sigma)\,,
\end{equation}
where we used Lemma \ref{lem:Gamma} (i) and (ii). We have the following result.
\begin{lemma} \label{lem:F_1}
We have
\begin{equation} \label{cal_Z_F_1}
\wt Z_\eta = Z^0 \int \mu_{\cal C_\eta}(\dd \sigma) \, \ee^{F_1(\sigma)}\,,
\end{equation}
where
\begin{equation} \label{def_F_1}
F_1(\sigma) \deq F_0(\sigma) - F_0(0)\,.
\end{equation}
Moreover, $F_1(\sigma)$ has a nonpositive real part.
\end{lemma}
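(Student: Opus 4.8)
The plan is to derive both assertions directly from the representation \eqref{wt_Z_eta}, \eqref{Z_int_rep} of $\wt Z_\eta$ together with the series expansion \eqref{Z_log_expanded} of $F_0$. For the identity \eqref{cal_Z_F_1}, note that $F_0(0) \in \R$ does not depend on $\sigma$, so pulling the constant $\ee^{F_0(0)}$ out of the integral in \eqref{wt_Z_eta} gives $\wt Z_\eta = \ee^{F_0(0)} \int \mu_{\cal C_\eta}(\dd\sigma)\,\ee^{F_0(\sigma) - F_0(0)} = \ee^{F_0(0)}\int\mu_{\cal C_\eta}(\dd\sigma)\,\ee^{F_1(\sigma)}$, using \eqref{def_F_1}. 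It thus remains to check $\ee^{F_0(0)} = Z^0$. By Definition \ref{def:Gamma_u}, $W^{\nu,0}(-\kappa)$ is the propagator with constant potential $-\kappa$, i.e.\ $W^{\nu,0}(-\kappa) = \ee^{\nu(\Delta/2 - \kappa)} = \ee^{-\nu h}$, which has spectral radius $\ee^{-\nu\kappa} < 1$ and is trace class by \eqref{tr_h_assump}. Hence $F_0(0) = -\tr\log(1 - \ee^{-\nu h})$; since $\ee^{-H^0} = \Gamma(\ee^{-\nu h})$, Lemma \ref{lem:trace_formulas} yields $Z^0 = \Tr_{\cal F}\Gamma(\ee^{-\nu h}) = \ee^{-\tr\log(1 - \ee^{-\nu h})} = \ee^{F_0(0)}$, which proves \eqref{cal_Z_F_1}.

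For the sign of $\re F_1 = \re F_0 - F_0(0)$, I would bound each term of \eqref{Z_log_expanded}. By the groupoid and periodicity properties (Lemma \ref{lem:Gamma} (i),(ii)) and the Feynman-Kac formula \eqref{W_FK}, with $u = -\kappa + \ii\sigma$,
\begin{equation*}
\tr W^{\ell\nu,0}(-\kappa + \ii\sigma) = \ee^{-\kappa\ell\nu}\int\dd x\int\bb W^{\ell\nu,0}_{x,x}(\dd\omega)\,\ee^{\ii\int_0^{\ell\nu}\dd t\,\sigma([t]_\nu,\omega(t))}\,.
\end{equation*}
Since $\sigma$ is real valued ($\mu_{\cal C_\eta}$-a.s.\ smooth and periodic for $\eta > 0$), the phase has modulus one and real part $\cos\big(\int_0^{\ell\nu}\dd t\,\sigma([t]_\nu,\omega(t))\big) \leq 1$; using that $\bb W^{\ell\nu,0}_{x,x}$ is a positive measure and $\ee^{-\kappa\ell\nu} > 0$ we obtain
\begin{equation*}
\re\tr W^{\ell\nu,0}(-\kappa + \ii\sigma) \leq \ee^{-\kappa\ell\nu}\int\dd x\int\bb W^{\ell\nu,0}_{x,x}(\dd\omega) = \tr W^{\ell\nu,0}(-\kappa) = \tr \ee^{-\ell\nu h}\,,
\end{equation*}
and likewise $\absb{\tr W^{\ell\nu,0}(-\kappa+\ii\sigma)} \leq \tr\ee^{-\ell\nu h}$. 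The last bound makes the $\ell$-series defining $F_0(\sigma)$ absolutely convergent, dominated term by term by $F_0(0) = \sum_{\ell}\ell^{-1}\tr\ee^{-\ell\nu h}$, which permits summing the previous inequality over $\ell$ to conclude $\re F_0(\sigma) \leq F_0(0)$, i.e.\ $\re F_1(\sigma) \leq 0$.

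Everything here is elementary once the Feynman-Kac representation \eqref{W_FK} and the expansion \eqref{Z_log_expanded} are available; the only point requiring a word of justification is the interchange of the $\ell$-summation with the real part and with the spatial and path integrals, which is covered by the absolute convergence noted above (equivalently by $\norm{W^{\nu,0}(-\kappa+\ii\sigma)}_{\fra S^\infty} \leq \ee^{-\kappa\nu} < 1$ and the trace-class property of $W^{\nu,0}(-\kappa+\ii\sigma)$ for $\nu > 0$, from Lemma \ref{lem:Gamma} (iii)). I do not anticipate any genuine obstacle; the essential structural input is simply that the path-space measure $\bb W$ is positive while the imaginary drift contributes only a phase of modulus one.
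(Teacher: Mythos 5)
Your proposal is correct and follows essentially the same route as the paper: the identity \eqref{cal_Z_F_1} comes from pulling the constant $\ee^{F_0(0)}$ out of \eqref{wt_Z_eta} and identifying it with $Z^0$ via Lemma \ref{lem:trace_formulas}, and the sign of $\re F_1$ comes from the term-by-term bound $\absb{\tr W^{\ell\nu,0}(-\kappa+\ii\sigma)} \leq \tr W^{\ell\nu,0}(-\kappa)$ obtained from the Feynman--Kac representation \eqref{W_FK} applied to the expansion \eqref{Z_log_expanded}. You merely spell out in more detail the verification $\ee^{F_0(0)} = Z^0$ and the interchange of summation, which the paper leaves implicit.
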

\begin{proof}
The identity  \eqref{cal_Z_F_1} follows immediately from \eqref{wt_Z_eta}, \eqref{def_F_1}, and the definition of $Z^0$.
From \eqref{Z_log_expanded} we get
\begin{equation*}
F_1(\sigma) = \sum_{\ell \in \N^*} \frac{1}{\ell} \tr \pb{W^{\ell \nu,0}(-\kappa + \ii \sigma) - W^{\ell \nu,0}(-\kappa)}
\end{equation*}
By \eqref{W_FK}, we find $\abs{\tr W^{\ell \nu,0}(-\kappa + \ii \sigma)} \leq \tr W^{\ell \nu, 0}(-\kappa)$, and the claim follows.
\end{proof}

\subsection{Space-time representation} \label{sec:space-time}

As in the previous subsections, we assume that $H_n$ is of the form \eqref{Hamiltonian_H_n}.
In this subsection we derive a space-time representation for the function $F_1(\sigma)$ from \eqref{def_F_1}. We begin with a pedagogical interlude that illustrates the main point of this subsection. Let $\kappa > 0$ and consider the operator $K \deq \partial_\tau + \kappa$ on the space $L^2([0,\nu])$ with periodic boundary conditions. Clearly, $K$ is invertible with bounded inverse, and we claim that its inverse has operator kernel
\begin{equation*}
(K^{-1})^{\tau,\tilde \tau} = \sum_{r \in \nu \N} \ind{\tau + r > \tilde \tau} \, \ee^{-\kappa(\tau - \tilde \tau + r)}\,.
\end{equation*}
There are several ways of arriving at this formula. One is by Fourier series on $[0,\nu]$, which diagonalizes $K$. Another is by direct inspection: for $\tau, \tilde \tau \in (0,\nu)$ we have
\begin{align*}
(K K^{-1})^{\tau, \tilde \tau} &= (\partial_\tau + \kappa) \sum_{r \in \nu \N} \ind{\tau + r > \tilde \tau} \, \ee^{-\kappa(\tau - \tilde \tau + r)}
\\
&= \sum_{r \in \nu \N} \pb{\delta(\tau - \tilde \tau + r) + (\kappa - \kappa) \ind{\tau  + r > \tilde \tau}} \, \ee^{-\kappa(\tau - \tilde \tau + r)}
\\
&= \delta(\tau - \tilde \tau)\,.
\end{align*}

We use the following notations for operators acting on functions depending on space and time.

\begin{definition}
We denote by $\Tr$ the trace on operators on $L^2([0,\nu] \times \Lambda)$.
Let $\ops$ be an operator on $L^2([0,\nu] \times \Lambda)$, acting on functions $f \in L^2(I \times \Lambda)$ of time $\tau \in I$ and space $x \in \Lambda$. We use the notation $\ops_{x, \tilde x}^{\tau, \tilde \tau}$ for the operator kernel of $\ops$; thus, $(\ops f)(\tau,x) = \int \dd \tilde \tau \int \dd \tilde x \, \ops^{\tau, \tilde \tau}_{x, \tilde x} \,f(\tilde \tau, \tilde x)$.
\end{definition}

The main object of this subsection is the following operator.

\begin{definition}
Let $u$ be as in Definition \ref{def:Gamma_u}, and suppose that $\re u \leq -c$ for some positive constant $c$. Abbreviate $h_\tau \deq \frac{1}{2} \Delta + u(\tau)$. On $L^2([0,\nu] \times \Lambda)$ define the operator
\begin{equation} \label{def_K}
K(u) \deq \partial_\tau - h_\tau\,,
\end{equation}
with periodic boundary conditions in $[0,\nu]$.
\end{definition}
The Green function of $K(u)$ can be easily expressed in terms of the propagator $W^{\tau, \tilde \tau}(u)$ from Definition \ref{def:Gamma_u}.

\begin{lemma} \label{lem:G_K}
The Green function of $K(u)$ has operator kernel
\begin{equation} \label{K_inv}
(K(u)^{-1})^{\tau, \tilde \tau}_{x, \tilde x} = \sum_{r \in \nu \N} \ind{\tau  + r > \tilde \tau}  \, W^{\tau, \tilde \tau - r}_{x,\tilde x}(u)
= \sum_{r \in \nu \N} \ind{\tau  + r > \tilde \tau} \, W^{\tau + r, \tilde \tau}_{x,\tilde x}(u)\,.
\end{equation}
\end{lemma}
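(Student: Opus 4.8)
The plan is to verify the claimed formula for the kernel of $K(u)^{-1}$ by direct computation, checking that the operator $K(u)$ applied to the right-hand side of \eqref{K_inv} yields the identity (i.e.\ the periodic delta function on $[0,\nu]\times\Lambda$), and that the right-hand side satisfies the periodic boundary conditions and defines a bounded operator. This is exactly the ``direct inspection'' method illustrated in the pedagogical interlude above, now carried out with the spatial variable and the time-dependent generator $h_\tau$ present.

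First I would note that, by Lemma \ref{lem:Gamma} (i) and (ii), the two sums in \eqref{K_inv} agree: writing $\tilde\tau - r$ with $r\in\nu\N$ and using $\nu$-periodicity $W^{\tau,\tilde\tau-r}=W^{\tau+r,\tilde\tau}$ gives the second expression, so it suffices to work with $G^{\tau,\tilde\tau}_{x,\tilde x}\deq\sum_{r\in\nu\N}\ind{\tau+r>\tilde\tau}\,W^{\tau+r,\tilde\tau}_{x,\tilde x}(u)$. Convergence of this sum is immediate from Lemma \ref{lem:Gamma} (iii): since $\re u\leq -c$ we have $\norm{W^{\tau+r,\tilde\tau}(u)}_{\fra S^\infty}\leq \ee^{-c(\tau+r-\tilde\tau)}$, and the geometric series in $r\in\nu\N$ converges; moreover for $\tau+r>\tilde\tau$ each term is trace class, so $G$ is a well-defined bounded (indeed locally trace class) operator. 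Next I would check periodicity in each time argument: replacing $\tau\mapsto\tau+\nu$ shifts the summation index $r\mapsto r-\nu$ and one checks the indicator and the summand match up, using $\nu$-periodicity of $W$ again; similarly in $\tilde\tau$.

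Then comes the main computation: apply $K(u)=\partial_\tau - h_\tau$ in the variable $(\tau,x)$ to $G^{\tau,\tilde\tau}$. For fixed $\tilde\tau$ and for $\tau$ away from the lattice $\tilde\tau+\nu\Z$, each term $\ind{\tau+r>\tilde\tau}W^{\tau+r,\tilde\tau}(u)$ is smooth in $\tau$ and, by Definition \ref{def:Gamma_u}, satisfies $\partial_\tau W^{\tau+r,\tilde\tau}(u)=(\tfrac12\Delta+u([\tau+r]_\nu))W^{\tau+r,\tilde\tau}(u)=h_\tau W^{\tau+r,\tilde\tau}(u)$ using $[\tau+r]_\nu=[\tau]_\nu$ for $r\in\nu\N$; hence the ``bulk'' contribution $(\partial_\tau-h_\tau)$ annihilates it. The only surviving contribution comes from differentiating the indicator $\ind{\tau+r>\tilde\tau}$, which produces $\delta(\tau+r-\tilde\tau)$, and evaluated against $W^{\tilde\tau,\tilde\tau}(u)=1$ this yields $\sum_{r\in\nu\N}\delta(\tau+r-\tilde\tau)=\delta_{\nu}(\tau-\tilde\tau)$, the $\nu$-periodic delta. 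Since $W^{\tau,\tilde\tau}_{x,\tilde x}(u)\to\delta(x-\tilde x)$ as $\tau\downarrow\tilde\tau$ (this is $W^{\tilde\tau,\tilde\tau}(u)=1$ read as a kernel), the spatial delta appears as well. Thus $K(u)G=\mathbf 1$ on $L^2([0,\nu]\times\Lambda)$, and combined with invertibility of $K(u)$ (which follows, e.g., by Fourier series in $\tau$ diagonalizing $\partial_\tau$ together with positivity of $-h_\tau+c$) this identifies $G$ as $K(u)^{-1}$.

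The main obstacle, and the only point requiring genuine care, is making the delta-function manipulation at $\tau\in\tilde\tau+\nu\Z$ rigorous: one should either test against smooth functions and integrate by parts in $\tau$, tracking the jump of $G^{\tau,\tilde\tau}$ across each lattice point $\tau=\tilde\tau+r$ (the jump being precisely $W^{\tilde\tau^+,\tilde\tau}(u)-0=1$ as an operator on $L^2(\Lambda)$, from the $r$-th term switching on), or alternatively avoid distributions entirely by computing $K(u)G$ weakly: for $f$ smooth and periodic, show $\scalar{K(u)^*f}{Gf'}=\scalar{f}{f'}$ by Fubini and the groupoid property. Everything else is bookkeeping with the properties of $W^{\tau,\tilde\tau}(u)$ already established in Lemma \ref{lem:Gamma}.
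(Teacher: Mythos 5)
Your proposal is correct and follows essentially the same route as the paper: one checks convergence and $\nu$-periodicity of the series via Lemma \ref{lem:Gamma}, then verifies $K(u)K(u)^{-1}=1$ by applying $\partial_\tau - h_\tau$ termwise, where the propagator equation kills the bulk and differentiating the indicator produces $\delta(\tau-\tilde\tau+r)$, which together with $W^{\tau,\tau}(u)=1$ yields $\delta(\tau-\tilde\tau)\delta(x-\tilde x)$. Your additional remarks on making the distributional step rigorous and on invertibility of $K(u)$ are sensible refinements but do not change the argument.
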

\begin{proof}
We drop the argument $u$ from our notation. Denote by $K^{-1}$ the operator with kernel given in \eqref{K_inv}. Note first that by Lemma \ref{lem:Gamma} (iii) both series in \eqref{K_inv} converge in operator norm on $L^2(\Lambda)$, for fixed $\tau, \tilde \tau$. Moreover, by the periodicity of $W^{\tau, \tilde \tau}$ from Lemma \ref{lem:Gamma} (ii), both series coincide, and $K^{-1}$ is periodic in the sense that $(K^{-1})^{\tau + \nu, \tilde \tau + \nu} = (K^{-1})^{\tau, \tilde \tau}$.

We have to verify that $K K^{-1} = 1$.
To that end, we compute, for $\tau, \tilde \tau \in (0,\nu)$,
\begin{align*}
(K K^{-1})^{\tau, \tilde \tau}_{x, \tilde x} &=  \sum_{r \in \nu \N} (\partial_\tau - h_\tau) \ind{\tau + r > \tilde \tau} \, W^{\tau, \tilde \tau - r}_{x,\tilde x}
\\
&= \sum_{r \in \nu \N} \pB{\delta(\tau - \tilde \tau + r) + (h_\tau - h_\tau) \ind{\tau + r > \tilde \tau}} W^{\tau, \tilde \tau - r}_{x,\tilde x}
\\
&=\sum_{r \in \nu \N} \delta(\tau - \tilde \tau + r) W^{\tau, \tilde \tau - r}_{x,\tilde x} = \delta(\tau - \tilde \tau) W^{\tau, \tau}_{x,\tilde x} = \delta(\tau - \tilde \tau) \delta(x - \tilde x)\,,
\end{align*}
where we used that $\abs{\tau - \tilde \tau} < \nu$.
\end{proof}

Next, by cyclicity of the trace, and Lemma \ref{lem:Gamma} (i) and (ii), we note that for any $\tau \in [0,\nu]$ and $r \in \nu \N^*$ we have (omitting the arguments $u$ for brevity)
\begin{equation*}
\tr W^{\tau+r,\tau} = \tr W^{\tau+r,r} \,W^{r,\tau} = \tr W^{r,\tau} \, W^{\tau+r,r} = \tr W^{r,\tau} \, W^{\tau,0} = \tr W^{r,0}\,.
\end{equation*}
Hence,
\begin{equation*}
\Tr K(u)^{-1} = \nu \sum_{r \in \nu \N^*} \tr W^{r,0}(u)\,.
\end{equation*}
Again by Lemma \ref{lem:Gamma} (i) and (ii), we have $W^{k \nu,0}(u) = (W^{\nu, 0}(u))^k$, so that by Lemma \ref{lem:Gamma} (iii) we have
\begin{equation} \label{tr_K_inv}
\Tr K(u)^{-1} = \nu \sum_{k \in \N^*} \tr (W^{\nu, 0}(u))^k = \nu \tr \pbb{\frac{1}{1 - W^{\nu, 0}(u)} - 1}\,.
\end{equation}

Next, we note that for any $a,b \in \C$ with strictly positive real parts we have the integral representation
\begin{equation} \label{integral_identity}
\log a - \log b = - \int_0^\infty \dd t \pbb{\frac{1}{t + a} - \frac{1}{t + b}}\,,
\end{equation}
as can be seen by computing the integral on the right-hand side from $0$ to $M > 0$ and then taking the limit $M \to \infty$.
We apply the integral representation \eqref{integral_identity} to \eqref{Z_int_rep} and \eqref{def_F_1}, and obtain
\begin{equation} \label{calZ_integral_rep}
F_1(\sigma) =   \int_0^\infty \dd t \, \tr \pbb{\frac{1}{t + 1 - W^{\nu,0}(-\kappa + \ii \sigma)} - \frac{1}{t + 1 - W^{\nu,0}(-\kappa)}}\,.
\end{equation}
Since
\begin{equation*}
\frac{1}{1 + t - W^{\nu,0}(u)} = \frac{1}{1 + t} \frac{1}{1 - W^{\nu,0}(u - \log(1+t) / \nu)}\,,
\end{equation*}
we find from \eqref{tr_K_inv} that
\begin{equation} \label{K_W_link}
\frac{1}{1+t} \Tr \frac{1}{K(u - \log(1+t) / \nu)} = \nu \tr \pbb{\frac{1}{1 + t - W^{\nu,0}(u)} - \frac{1}{1 + t}}\,.
\end{equation}
Plugging \eqref{K_W_link} into \eqref{calZ_integral_rep} yields
\begin{equation}
F_1(\sigma) = \int_0^\infty \frac{\dd t}{(1 + t) \nu} \, \Tr \pbb{\frac{1}{K(-\kappa + \ii \sigma - \log(1+t) / \nu)} - \frac{1}{K(-\kappa - \log(1+t) / \nu)}}\,.
\end{equation}
Doing the change of variables $\log(1 + t) / \nu \mapsto t$ and recalling Lemma \ref{lem:lim_eta}, we get the following result.

\begin{proposition}[Functional integral representation of partition function] \label{prop:fif_bare}
Let $H$ be defined as in \eqref{H_sq} and suppose that Assumptions \ref{ass:torus} and \ref{ass:v} hold. Then we have
\begin{equation*}
Z = \lim_{\eta \to 0} \wt Z_\eta\,,
\end{equation*}
where
\begin{equation*}
\wt Z_\eta  = Z^0 \int \mu_{\cal C_\eta}(\dd \sigma) \, \ee^{F_1(\sigma)}\,,
\end{equation*}
and
\begin{equation} \label{Z_space-time_integral}
F_1(\sigma) = \int_0^\infty \dd t \, \Tr \pbb{\frac{1}{t + K(-\kappa + \ii \sigma)} - \frac{1}{t + K(-\kappa)}}\,.
\end{equation}
\end{proposition}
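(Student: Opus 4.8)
The plan is to assemble the ingredients already established in Subsections \ref{intro_sigma}--\ref{sec:space-time}; no genuinely new argument is needed. The first assertion, $Z = \lim_{\eta \to 0} \wt Z_\eta$, is exactly Lemma \ref{lem:lim_eta}. The second assertion, $\wt Z_\eta = Z^0 \int \mu_{\cal C_\eta}(\dd \sigma) \, \ee^{F_1(\sigma)}$, is identity \eqref{cal_Z_F_1} of Lemma \ref{lem:F_1}, which in turn rests on Proposition \ref{prop:ZR_eta}(ii) and the quasi-free trace formula Lemma \ref{lem:trace_formulas}; the hypothesis of the latter is met since $\norm{W^{\nu,0}(-\kappa + \ii \sigma)}_{\fra S^\infty} \leq \ee^{-\kappa \nu} < 1$ by Lemma \ref{lem:Gamma}(iii) and $\kappa > 0$. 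So the only thing that remains is to verify the space-time representation \eqref{Z_space-time_integral} of $F_1(\sigma)$.

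For that I would follow the chain \eqref{Z_log_expanded}--\eqref{K_W_link} step by step. Start from the absolutely convergent power-series expansion $F_0(\sigma) = \sum_{\ell \in \N^*} \frac{1}{\ell} \tr W^{\ell \nu,0}(-\kappa + \ii \sigma)$, convergence and absolute convergence being guaranteed by Lemma \ref{lem:Gamma}(iii) together with the pointwise bound $\abs{\tr W^{\ell \nu,0}(-\kappa + \ii \sigma)} \leq \tr W^{\ell \nu,0}(-\kappa)$ read off from the Feynman--Kac kernel \eqref{W_FK}; resumming recovers $F_0(\sigma) = -\tr \log(1 - W^{\nu,0}(-\kappa + \ii \sigma))$. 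Next apply the elementary integral identity \eqref{integral_identity}, which is licit because both $1 - W^{\nu,0}(-\kappa + \ii \sigma)$ and $1 - W^{\nu,0}(-\kappa)$ have spectrum contained in $\{\re z > 0\}$, to rewrite $F_1(\sigma) = F_0(\sigma) - F_0(0)$ as \eqref{calZ_integral_rep}. Then plug in \eqref{K_W_link}, which relates the spatial trace $\nu \tr\bigl(\frac{1}{1+t-W^{\nu,0}(u)} - \frac{1}{1+t}\bigr)$ to the space-time trace $\frac{1}{1+t}\Tr \frac{1}{K(u - \log(1+t)/\nu)}$; this is \eqref{tr_K_inv} combined with the scaling $W^{\nu,0}(u - c) = \ee^{-\nu c} W^{\nu,0}(u)$ coming from \eqref{W_FK}. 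Finally, the change of variables $s = \log(1+t)/\nu$ turns $\dd t/((1+t)\nu)$ into $\dd s$, and the potential shift $K(u - s) = s + K(u)$ (immediate from \eqref{def_K}) converts $\frac{1}{K(-\kappa + \ii\sigma - s)}$ into $\frac{1}{s + K(-\kappa + \ii\sigma)}$; renaming $s$ back to $t$ yields \eqref{Z_space-time_integral}.

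The only place requiring more than formal manipulation is the bookkeeping of convergence and of the interchanges of $\sigma$-integral, $t$-integral, and traces. One checks that $W^{\tau, \tilde \tau}(u)$ is trace class for $\tau > \tilde \tau$ (Lemma \ref{lem:Gamma}(iii) via Lemma \ref{lem:heat_estimate}); that the $t$-integral in \eqref{calZ_integral_rep} and in the displays that follow converges at both endpoints (near $t = 0$ the two resolvents cancel to leading order, and near $t = \infty$ the subtracted integrand decays like $1/t$ uniformly, because of the operator-norm bound $\norm{W^{\nu,0}(u)}_{\fra S^\infty} \leq \ee^{-\kappa \nu}$); and that Fubini applies, using that $F_1(\sigma)$ has nonpositive real part (Lemma \ref{lem:F_1}), so that $\ee^{F_1(\sigma)}$ is bounded under the $\mu_{\cal C_\eta}$-integral. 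All of these are quantitative consequences of Lemma \ref{lem:Gamma}(iii) and Lemma \ref{lem:heat_estimate}, so this is the main — but essentially routine — obstacle.
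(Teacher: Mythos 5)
Your proposal is correct and follows essentially the same route as the paper: the first identity is Lemma \ref{lem:lim_eta}, the second is \eqref{cal_Z_F_1} from Lemma \ref{lem:F_1}, and the representation \eqref{Z_space-time_integral} is obtained exactly as in the paper via \eqref{Z_log_expanded}, the integral identity \eqref{integral_identity}, the relation \eqref{K_W_link}, and the change of variables $\log(1+t)/\nu \mapsto t$. The convergence and trace-class bookkeeping you add is consistent with (if slightly more explicit than) the paper's treatment.
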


\subsection{Wick ordering} \label{sec:Wick}
In this subsection we explain how the Wick ordering from \eqref{H_ren}, \eqref{def_rho} can be incorporated in the functional integral formulation from Proposition \ref{prop:fif_bare}. Thus, we take $H$ to be of the form \eqref{H_ren}, \eqref{def_rho}. Throughout the following we abbreviate
\begin{equation*}
\scalar{\sigma}{\varrho} \deq \frac{1}{\nu}\int_0^\nu \dd \tau \int \dd x \, \sigma(\tau, x) \, \varrho(x)\,.
\end{equation*}
The following definition is the Wick-ordered version of Definition \ref{def:R}.

\begin{definition} \label{def:R_Wick}
Let $\eta > 0$. On $\cal F$ we define the operator $R_\eta \deq \bigoplus_{n \in \N} R_{n, \eta}$, where $R_{n, \eta}$ has operator kernel
\begin{equation} \label{R_eta_expanded}
(R_{n, \eta})_{\f x, \tilde {\f x}} \deq \ee^{-\nu \kappa n} \int \mu_{\cal C_\eta}(\dd \sigma) \, \ee^{-\ii \scalar{\sigma}{\varrho}} \int \prod_{i = 1}^n \bb W^{\nu,0}_{x_i, \tilde x_i}(\dd \omega_i) \, \prod_{i = 1}^n \ee^{\ii \int_0^\nu \dd t \, \sigma(t, \omega_i(t))}\,.
\end{equation}
Moreover, we define
\begin{equation*}
Z_\eta \deq \Tr_{\cal F} (R_\eta) \,, \qquad \cal Z_\eta \deq Z_\eta / Z^0\,,
\end{equation*}
\end{definition}
We also use the notation
\begin{equation*}
\cal Z \deq Z / Z^0
\end{equation*}
for the relative partition function.

\begin{proposition}[Functional integral representation under Wick ordering] \label{prop:fif_renorm}
Let $H$ be defined as in \eqref{H_ren}, \eqref{def_rho}, and suppose that Assumptions \ref{ass:torus} and \ref{ass:v} hold.
\begin{enumerate}[label=(\roman*)]
\item
We have
\begin{equation}
R_\eta = \int \mu_{\cal C_\eta}(\dd \sigma)\, \ee^{-\ii \scalar{\sigma}{\varrho}} \, \Gamma \pb{W^{\nu,0}(-\kappa + \ii \sigma)}\,.
\end{equation}
\item
For all $n \in \N$ we have $\lim_{\eta \to 0} \norm{R_{n, \eta} - \ee^{-H_n}}_{L^\infty} = 0$.
\item
We have
\begin{equation}
\cal Z = \lim_{\eta \to 0} {\cal Z}_\eta\,,
\end{equation}
where
\begin{equation} \label{wtZ_F2}
{\cal Z}_\eta  = \int \mu_{\cal C_\eta}(\dd \sigma) \, \ee^{F_2(\sigma)}
\end{equation}
and
\begin{equation} \label{def_F2}
F_2(\sigma) \deq \int_0^\infty \dd t \, \Tr \pbb{\frac{1}{t + K(-\kappa + \ii \sigma)} - \frac{1}{t + K(-\kappa)} - \frac{1}{t + K(-\kappa)} \, \ii \sigma \, \frac{1}{t + K(-\kappa)}}\,,
\end{equation}
where we regard $\sigma$ as a multiplication operator on $L^2([0,\nu] \times \Lambda)$.
 \item
 The function $F_2(\sigma)$ has a nonpositive real part.
\end{enumerate}
\end{proposition}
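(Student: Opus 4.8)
The plan is to treat the four parts in order; parts (i), (ii), (iv) are short, and essentially all of the work lies in (iii), where the phase $\ee^{-\ii\scalar{\sigma}{\varrho}}$ must be matched against the counterterm in the definition \eqref{def_F2} of $F_2$. For (i), argue exactly as in Proposition \ref{prop:ZR_eta}(i): inserting the Feynman--Kac identity \eqref{W_FK} into the kernel \eqref{R_eta_expanded} gives $(R_{n,\eta})_{\f x,\tilde{\f x}} = \int\mu_{\cal C_\eta}(\dd\sigma)\,\ee^{-\ii\scalar{\sigma}{\varrho}}\prod_{i=1}^{n} W^{\nu,0}_{x_i,\tilde x_i}(-\kappa+\ii\sigma)$ (the factor $\ee^{-\nu\kappa n}$ absorbing the constant potential $-\kappa$), and summing over $n$ with the definition of $\Gamma$ yields the claim, the $(\f x,\tilde{\f x})$-independent phase $\ee^{-\ii\scalar{\sigma}{\varrho}}$ simply being carried along. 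For (ii), I would follow the scheme of Lemma \ref{lem:lim_eta}: since $\abs{\ee^{-\ii\scalar{\sigma}{\varrho}}}=1$ the pointwise bound \eqref{kernel_est} still holds, and performing the finite-dimensional (for $\eta>0$) Gaussian integral over $\mu_{\cal C_\eta}$ via \eqref{Hubbard_Stratonovich_formula} with source $f(\tau,x) = \sum_i \delta(x-\omega_i(\tau)) - \varrho(x)/\nu$ produces a Brownian integral whose exponent has nonpositive real part and, by continuity of $v$ and of the Brownian paths together with the properties of $\delta_{\eta,\nu}$ and $v_\eta$ recorded after \eqref{Fourier_regularization2}, converges as $\eta\to 0$ to the Feynman--Kac exponent of $\ee^{-H_n}$ with the renormalized one-particle operator of \eqref{H_ren_H_sq} --- the cross term producing the counterterm $\frac{\lambda}{\nu^2}\sum_i\int_0^\nu\dd t\,(v*\varrho)(\omega_i(t))$ and the $\varrho$-quadratic term the irrelevant additive constant. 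Dominated convergence gives the pointwise limit $(\ee^{-H_n})_{\f x,\tilde{\f x}}$, and uniformity on $\Lambda^n\times\Lambda^n$ follows because, for fixed $\nu>0$, the heat kernels in \eqref{W_dist} provide an $\eta$-uniform modulus of continuity in $(\f x,\tilde{\f x})$.

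For (iii), combine (i) with Lemma \ref{lem:trace_formulas} (applicable by Lemma \ref{lem:Gamma}(iii)) and divide by $Z^0 = \ee^{F_0(0)}$ to obtain, using \eqref{def_F_1},
\begin{equation*}
\cal Z_\eta = \int\mu_{\cal C_\eta}(\dd\sigma)\,\ee^{-\ii\scalar{\sigma}{\varrho}}\,\ee^{F_1(\sigma)}\,.
\end{equation*}
In view of the space-time formula \eqref{Z_space-time_integral} for $F_1$ and the definition \eqref{def_F2} of $F_2$, the identity \eqref{wtZ_F2} reduces to the pointwise (in smooth $\sigma$) claim
\begin{equation*}
\int_0^\infty\dd t\,\Tr\pbb{\frac{1}{t+K(-\kappa)}\,\ii\sigma\,\frac{1}{t+K(-\kappa)}} = \ii\,\scalar{\sigma}{\varrho}\,,
\end{equation*}
which then gives $F_2(\sigma) = F_1(\sigma) - \ii\scalar{\sigma}{\varrho}$. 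To prove it, write $t+K(-\kappa) = K(-(\kappa+t))$ and use cyclicity to reduce the left side to $\ii\int_0^\infty\dd t\,\Tr[\sigma\,K(-(\kappa+t))^{-2}]$. By Lemma \ref{lem:G_K} and $W^{\tau,\tilde\tau}(-(\kappa+t)) = \ee^{-(\kappa+t)(\tau-\tilde\tau)}\ee^{(\tau-\tilde\tau)\Delta/2}$, the diagonal kernel of $K(-(\kappa+t))^{-1}$ equals $g(t)\deq\sum_{k\in\N^*}\ee^{-(\kappa+t)k\nu}\psi^{k\nu}(0)$, independent of $\tau$ and $x$; hence the diagonal kernel of $K(-(\kappa+t))^{-2} = -\partial_t K(-(\kappa+t))^{-1}$ is $-g'(t)$, so $\Tr[\sigma K(-(\kappa+t))^{-2}] = -g'(t)\int_0^\nu\dd\tau\int\dd x\,\sigma(\tau,x)$ and $\int_0^\infty\dd t\,\Tr[\sigma K(-(\kappa+t))^{-2}] = g(0)\int_0^\nu\dd\tau\int\dd x\,\sigma(\tau,x)$. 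Finally, by \eqref{def_rho} and \eqref{Gamma_1_free} (which uses the Wick theorem, Lemma \ref{Quantum Wick theorem}, identity \eqref{QWT_i_1}) we have $\varrho(x) = \nu(\Gamma^0_1)_{x,x} = \nu\,g(0)$, whence $g(0)\int_0^\nu\dd\tau\int\dd x\,\sigma(\tau,x) = \scalar{\sigma}{\varrho}$. This proves the claim, hence \eqref{wtZ_F2}; the convergence $\cal Z = \lim_{\eta\to0}\cal Z_\eta$ then follows from (ii) together with \eqref{kernel_est}, \eqref{sum_free_Z} and dominated convergence, exactly as in Lemma \ref{lem:lim_eta}. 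Part (iv) is then immediate: $\sigma$ is real-valued $\mu_{\cal C_\eta}$-almost surely and $\varrho$ is a real density, so $\scalar{\sigma}{\varrho}\in\R$ and $\re F_2(\sigma) = \re\bigl(F_1(\sigma) - \ii\scalar{\sigma}{\varrho}\bigr) = \re F_1(\sigma)\leq 0$ by Lemma \ref{lem:F_1}.

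I expect the main obstacle to be the rigorous meaning of the trace identity used in (iii). For $d\geq 2$ the resolvent $(t+K(-\kappa))^{-1}$ is not Hilbert--Schmidt and $(t+K(-\kappa))^{-2}$ is not trace class on $L^2([0,\nu]\times\Lambda)$, so one must check that the kernel integral defining $\Tr[(t+K(-\kappa))^{-1}\sigma(t+K(-\kappa))^{-1}]$ nonetheless converges absolutely for each fixed $t>0$ --- this uses that, by the indicator constraints in Lemma \ref{lem:G_K}, the product of the two propagator kernels evaluated at coinciding space-time endpoints carries a combined elapsed time $\geq\nu$, so that the short-time heat-kernel singularity $\psi^s(0)\sim s^{-d/2}$ is never encountered --- and that the $t$-integral converges (equivalently $g(0)<\infty$, which holds for $d\leq 3$ at fixed $\nu>0$); together these justify the applications of Fubini's theorem and of cyclicity of the trace. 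Apart from this bookkeeping, the argument is a direct transcription of the bare computation of Subsections \ref{sec:propagator}--\ref{sec:space-time}, the Wick ordering contributing only the single additional term in \eqref{def_F2}.
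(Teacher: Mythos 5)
Your proposal is correct and reproduces the paper's overall architecture — part (i) by Feynman--Kac as in Proposition \ref{prop:ZR_eta}, part (ii) by the Hubbard--Stratonovich transformation with source $f(\tau,x)=\sum_i\delta(x-\omega_i(\tau))-\varrho(x)/\nu$, part (iii) by reducing \eqref{wtZ_F2} to the counterterm trace identity $\scalar{\sigma}{\varrho}=\int_0^\infty \dd t\,\Tr\pb{(t+K(-\kappa))^{-1}\sigma (t+K(-\kappa))^{-1}}$, and part (iv) from $\re F_1\leq 0$ — but two steps deviate from the paper in a way worth recording. For the trace identity, the paper's Lemma \ref{lem:quantum_density} expands both Green functions via Lemma \ref{lem:G_K}, performs the double sum over $r,\tilde r\in\nu\N$ with a case analysis ($r=0$, $\tilde r=0$, both positive) and then uses the series identities $\sum_n \ops^n/n=-\log(1-\ops)$ and $\sum_{n,m}\ops^{n+m}/(n+m)=\ops/(1-\ops)+\log(1-\ops)$; you instead use cyclicity, the operator identity $K(-(\kappa+t))^{-2}=-\partial_t K(-(\kappa+t))^{-1}$, and the $(\tau,x)$-independence of the diagonal kernel $g(t)=\sum_{k\geq 1}\ee^{-(\kappa+t)k\nu}\psi^{k\nu}(0)$, so that the $t$-integral collapses to $g(0)=\varrho/\nu$. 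This is a genuinely different and arguably cleaner derivation, and your closing remarks on absolute convergence of the kernel integrals (after the spatial integration the elapsed times are multiples of $\nu$, so the short-time singularity of $\psi^s(0)$ never enters) and on Fubini are exactly what is needed to legitimize the cyclicity step; this is on the same level of rigour as the paper's own kernel manipulations. The one place where your write-up is thinner than the paper is the uniformity claimed in part (ii): the paper proves the $L^\infty$ convergence directly, with bounds uniform in the endpoints, via Chebyshev and the Brownian-bridge continuity estimate of Lemma \ref{lem:P_cont}, whereas you propose pointwise convergence plus an $\eta$-uniform modulus of continuity of $(\f x,\tilde{\f x})\mapsto (R_{n,\eta})_{\f x,\tilde{\f x}}$ on the compact torus. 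That route does work, but the equicontinuity is not supplied by the heat kernels alone: moving the endpoints deforms the bridge paths, and to control the resulting change of the exponent $f_\eta$ uniformly in $\eta$ you need the uniform equicontinuity of $v_\eta$ (Lemma \ref{properties_of_v_eta} (ii)) together with $\int_0^\nu\dd\tau\,\delta_{\eta,\nu}(\tau)=1$, or an estimate of the type of Lemma \ref{lem:P_cont}; once this ingredient is made explicit your argument for (ii), and hence the whole proof, is complete.
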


The rest of this subsection is devoted to the proof of Proposition \ref{prop:fif_renorm}. We start with a preliminary calculation, which relates the Wick-ordering phase $\scalar{\sigma}{\varrho}$ to the last term of \eqref{def_F2}.

\begin{lemma} \label{lem:quantum_density}
Recall the free quantum density $\varrho(x)$ from \eqref{def_rho}. Under Assumption \ref{ass:torus} we have
\begin{equation*}
\scalar{\sigma}{\varrho} = \int_0^\infty \dd t \, \Tr \pbb{\frac{1}{t + K(-\kappa)} \, \sigma \, \frac{1}{t + K(-\kappa)}}
\end{equation*}
\end{lemma}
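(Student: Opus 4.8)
\emph{Proof plan.} The plan is to evaluate both sides explicitly and match them. On the left-hand side, the free Gibbs state $X \mapsto \Tr_{\cal F}(X\,\ee^{-H^0}/Z^0)$ is precisely the quasi-free state $q_{\ee^{-\nu h}}$ of \eqref{quasi_free}, since $\ee^{-H^0} = \Gamma(\ee^{-\nu h})$. Hence the Wick theorem (Lemma~\ref{Quantum Wick theorem}, formula \eqref{QWT_i_1}) gives $\varrho(x) = \nu\,\bigl(\frac{\ee^{-\nu h}}{1-\ee^{-\nu h}}\bigr)_{x,x}$; the kernel is finite for fixed $\nu>0$ because $\kappa>0$ and $h$ satisfies \eqref{tr_h_assump}, and it may be expanded as $\frac{\ee^{-\nu h}}{1-\ee^{-\nu h}} = \sum_{k\geq1}\ee^{-k\nu h}$. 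This yields $\scalar{\sigma}{\varrho} = \sum_{k\geq1}\int_0^\nu\dd\tau\int\dd x\,\sigma(\tau,x)\,(\ee^{-k\nu h})_{x,x}$.

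For the right-hand side I would use the explicit Green function of $K$. Since $t+K(-\kappa) = K(-(\kappa+t))$ and the propagator of Definition~\ref{def:Gamma_u} for the constant potential $-(\kappa+t)$ is $W^{\tau,\tilde\tau}(-(\kappa+t)) = \ee^{-(\tau-\tilde\tau)(h+t)}$, Lemma~\ref{lem:G_K} gives
\[
\Bigl(\tfrac{1}{t+K(-\kappa)}\Bigr)^{\tau,\tilde\tau}_{x,\tilde x} = \sum_{r\in\nu\N}\ind{\tau+r>\tilde\tau}\,\bigl(\ee^{-(\tau-\tilde\tau+r)(h+t)}\bigr)_{x,\tilde x}.
\]
Composing two such kernels with the multiplication operator $\sigma$ in between, taking the trace (i.e.\ setting $\tau=\tilde\tau$, $x=\tilde x$ and integrating over $\tau$ and $x$), carrying out the spatial integration via the semigroup law $\int\dd x\,(\ee^{-s_1(h+t)})_{x,x'}(\ee^{-s_2(h+t)})_{x',x} = (\ee^{-(s_1+s_2)(h+t)})_{x',x'}$, and then performing the $\tau$-integral by means of the elementary identity
\[
\int_0^\nu\dd\tau\sum_{r,r'\in\nu\N}\ind{\tau+r>\tau'}\,\ind{\tau'+r'>\tau}\,f\!\Bigl(\tfrac{r+r'}{\nu}\Bigr) = \nu\sum_{k\geq1}k\,f(k),
\]
one arrives at $\Tr\bigl(\tfrac{1}{t+K(-\kappa)}\sigma\tfrac{1}{t+K(-\kappa)}\bigr) = \nu\int_0^\nu\dd\tau\int\dd x\,\sigma(\tau,x)\sum_{k\geq1}k\,\ee^{-k\nu t}\,(\ee^{-k\nu h})_{x,x}$. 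Integrating over $t\in(0,\infty)$ then replaces $\nu k\,\ee^{-k\nu t}$ by $1$, and the result matches the expression for $\scalar{\sigma}{\varrho}$ found above. (A shortcut, if one prefers: by cyclicity of the trace the left-hand side equals $-\tfrac{\dd}{\dd t}\Tr\bigl(\sigma\tfrac{1}{t+K(-\kappa)}\bigr)$, and $\Tr\bigl(\sigma\tfrac{1}{t+K(-\kappa)}\bigr) = \int_0^\nu\dd\tau\int\dd x\,\sigma(\tau,x)\sum_{k\geq1}\ee^{-k\nu t}(\ee^{-k\nu h})_{x,x}$ is immediate from the displayed kernel, since $\ind{\tau+0>\tau}=0$ kills the $r=0$ term.)

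The only points needing (mild) care are bookkeeping ones. All interchanges of $\int\dd\tau$, $\int\dd x$, $\int_0^\infty\dd t$ and of the sums over $r,r',k$ are justified by Tonelli, using that the heat kernels are nonnegative, that $\sigma$ is bounded (it is smooth for $\eta>0$), and that the factor $\ee^{-\nu\kappa}$ makes $\sum_{k\geq1}(\ee^{-k\nu h})_{x,x}$ absolutely convergent; the value of $\ind{\tau+r>\tilde\tau}$ on the null set $\tau=\tilde\tau-r$ plays no role after integration; and, if the cyclicity shortcut is used, one must note that for $d\geq2$ the resolvent $\tfrac{1}{t+K(-\kappa)}$ is not itself trace class, so it is cleanest to carry the explicit kernel throughout as in the main computation, where no cyclicity is invoked. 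I expect the verification of the combinatorial identity above --- a short case analysis on whether $r$ and $r'$ vanish, giving contributions $\nu$, $\tau'$, $\nu-\tau'$, $0$ respectively, and then reorganizing the sum --- to be the only genuinely computational step, and certainly not a real obstacle.
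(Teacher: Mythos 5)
Your proof is correct and takes essentially the same route as the paper's: both rest on Lemma \ref{lem:G_K} with the constant-potential propagator $W^{\tau,\tilde\tau}(-\kappa-t)=\ee^{-(\tau-\tilde\tau)(h+t)}$, the semigroup composition in the spatial variables, and the identical case analysis over whether $r$, $\tilde r$ vanish (yielding the weights $\tau$, $\nu-\tau$, $\nu$), followed by the Wick theorem to identify $\varrho$. The only cosmetic difference is the order of operations: the paper performs the $t$-integral first and then resums over $r,\tilde r$ via the $\log(1-\cdot)$ summation identities, whereas you reorganize the sum first (your combinatorial identity, which checks out) and do the trivial $t$-integral $\int_0^\infty \nu k\,\ee^{-k\nu t}\,\dd t=1$ last, thereby avoiding those identities.
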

\begin{proof}
By the definition \eqref{def_K} we have $t + K(-\kappa) = K(-\kappa - t)$, and hence by Lemma \ref{lem:G_K}, we have
\begin{align*}
&\quad \int_0^\infty \dd t \, \Tr \pbb{\frac{1}{t + K(-\kappa)} \, \sigma \, \frac{1}{t + K(-\kappa)}}
\\
&= \int_0^\infty \dd t \, \sum_{r, \tilde r \in \nu \N} \int_0^\nu \dd \tau\, \dd \tilde \tau \, \ind{\tau  + r > \tilde \tau} \, \ind{\tilde \tau + \tilde r >  \tau} \int \dd x \, \dd \tilde x \, W^{\tau + r, \tilde \tau}_{x, \tilde x}(-\kappa - t) \, W^{\tilde \tau + \tilde r, \tau}_{\tilde x, x}(-\kappa - t) \, \sigma(\tau,x)
\\
&=
 \sum_{r, \tilde r \in \nu \N} \frac{\ind{r + \tilde r > 0}}{r + \tilde r} \int_0^\nu \dd \tau\, \dd \tilde \tau \, \ind{\tau - \tilde r < \tilde \tau < r + \tau } \int \dd x  \, \pb{\ee^{(\Delta / 2 - \kappa) (r + \tilde r)}}_{x, x} \, \sigma(\tau,x)\,.
\end{align*}
In the last equality, we used that $W^{\tau, \tilde \tau}(-\kappa - t) = \ee^{(\tau - \tilde \tau) (\Delta/2 - \kappa - t)}$ by Definition \ref{def:Gamma_u}.
Next, we split the sum over $r, \tilde r$ as $r = 0, \tilde r > 0$, $r > 0, \tilde r =0$, and $r > 0, \tilde r > 0$, and use that, for $r + \tilde r > 0$,
\begin{equation*}
\int_0^\nu \dd \tilde \tau \, \ind{\tau - \tilde r < \tilde \tau < r + \tau } = 
\begin{cases}
\tau & \text{if }r = 0
\\
\nu - \tau & \text{if } \tilde r = 0
\\
\nu & \text{otherwise}\,.
\end{cases}
\end{equation*}
This gives
\begin{align*}
&\quad \int_0^\infty \dd t \, \Tr \pbb{\frac{1}{t + K(-\kappa)} \, \sigma \, \frac{1}{t + K(-\kappa)}}
\\
&= \int_0^\nu \dd \tau \int \dd x \, \sigma(\tau,x) \, \qBB{\tau \sum_{\tilde r \in \nu \N^*} \frac{\ee^{(\Delta / 2 - \kappa) \tilde r}}{\tilde r}  + (\nu - \tau) \sum_{r \in \nu \N^*} \frac{\ee^{(\Delta / 2 - \kappa) r}}{r}  + \nu \sum_{r, \tilde r \in \nu \N^*} \frac{\ee^{(\Delta / 2 - \kappa) (r + \tilde r)}}{r + \tilde r}}_{x,x}
\end{align*}
Using the identities
\begin{equation*}
\sum_{n \in \N^*} \frac{1}{n} \, \ops^n = - \log (1 - \ops)\,, \qquad \sum_{n,m \in \N^*} \frac{1}{n+m} \, \ops^{n+m} = \frac{\ops}{1 - \ops} + \log (1 - \ops)\,,
\end{equation*}
we therefore obtain
\begin{equation*}
\int_0^\infty \dd t \, \Tr \pbb{\frac{1}{t + K(-\kappa)} \, \sigma \, \frac{1}{t + K(-\kappa)}} = \frac{1}{\nu}\int_0^\nu \dd \tau \int \dd x \, \sigma(\tau,x)\, \varrho(x)\,,
\end{equation*}
where
\begin{equation*}
\varrho(x) = \nu \pbb{\frac{\ee^{\nu(\Delta/2 - \kappa)}}{1 - \ee^{\nu(\Delta/2 - \kappa)}}}_{x,x}\,,
\end{equation*}
by the definition \eqref{def_rho} and Lemma \ref{Quantum Wick theorem} (i).
\end{proof}

\begin{proof}[Proof of Proposition \ref{prop:fif_renorm}]
Part (i) is immediate from \eqref{W_FK}.

Next, we prove (ii). Let $\eta \geq 0$ and abbreviate $\delta_{0, \nu} \equiv \delta$.  For $f(\tau, x) = \sum_{i = 1}^n \delta(x - \omega_i(\tau)) - \varrho(x) / \nu$ we have $\frac{1}{2} \scalar{f}{\cal C_\eta f} = - f_\eta(\omega_1, \dots, \omega_n)$, where
\begin{equation*}
f_\eta(\omega_1, \dots, \omega_n) \deq - \frac{\lambda}{2 \nu} \sum_{i,j = 1}^n \int_0^\nu \dd t \, \dd s \, \delta_{\eta, \nu}(t - s) v_\eta(\omega_i(t) - \omega_j(s)) + \frac{\lambda}{\nu} \varrho \hat v(0) n - \frac{\lambda}{2 \nu^2} \hat v(0) \varrho^2 \abs{\Lambda}\,.
\end{equation*}
(Here we used that, by Assumption \ref{ass:torus}, $\varrho(x) = \varrho$.) Since $\cal C_\eta$ is a positive operator, we conclude that $f_\eta(\omega_1, \dots, \omega_n) \leq 0$.
Moreover, \eqref{Hubbard_Stratonovich_formula} yields
\begin{equation*}
(R_{n, \eta})_{\f x, \tilde {\f x}} = \ee^{-\nu \kappa n} \int \prod_{i = 1}^n \bb W^{\nu,0}_{x_i, \tilde x_i}(\dd \omega_i) \, \ee^{f_\eta(\omega_1, \dots, \omega_n)}
\,.
\end{equation*}
Next, a calculation analogous to \eqref{H_ren_H_sq} yields
\begin{equation} \label{H_n_first_quantized}
H_n = \nu \sum_{i = 1}^n h_i + \frac{\lambda}{2} \sum_{i,j = 1}^n v(x_i - x_j) - \frac{\lambda}{\nu} \varrho \hat v(0) n + \frac{\lambda}{2 \nu^2} \hat v(0) \varrho^2 \abs{\Lambda}\,.
\end{equation}
Hence,
\begin{align}
\absb{(R_{n, \eta})_{\f x, \tilde {\f x}} - (\ee^{-H_n})_{\f x, \tilde {\f x}}} &= \ee^{-\nu \kappa n} \absbb{\int \prod_{i = 1}^n \bb W^{\nu,0}_{x_i, \tilde x_i}(\dd \omega_i) \, \pB{\ee^{f_\eta(\omega_1, \dots, \omega_n)}
- \ee^{f_0(\omega_1, \dots, \omega_n)}}}
\notag \\
&\leq
\int \prod_{i = 1}^n \bb W^{\nu,0}_{x_i, \tilde x_i}(\dd \omega_i) \, \absB{f_\eta(\omega_1, \dots, \omega_n) - f_0(\omega_1, \dots, \omega_n)}
\notag\\
&\leq
\frac{\lambda}{2 \nu} \sum_{i,j = 1}^n \int \prod_{i = 1}^n \bb W^{\nu,0}_{x_k, \tilde x_k}(\dd \omega_k) \absBB{\int_0^\nu \dd t \, \dd s \, (\delta_{\eta, \nu} - \delta)(t - s) \, v(\omega_i(t) - \omega_j(s))}
\notag \\ \label{R_H_estimate}
&\quad + 
\frac{\lambda}{2 \nu} \sum_{i,j = 1}^n \int \prod_{k = 1}^n \bb W^{\nu,0}_{x_k, \tilde x_k}(\dd \omega_k) \absBB{\int_0^\nu \dd t \, \dd s \, \delta_{\eta, \nu}(t - s)  (v_\eta - v)(\omega_i(t) - \omega_j(s))}\,.
\end{align}
We estimate the two terms on the right-hand side of \eqref{R_H_estimate} separately.

The second term on the right-hand side of \eqref{R_H_estimate} is easy: by Lemma \ref{lem:heat_estimate}, it is estimated by
\begin{equation*}
\frac{\lambda}{2\nu} n^2 \pbb{C_d (1 + \nu^{-d/2})}^n \, \nu \, \norm{v_\eta - v}_{L^\infty}
\end{equation*}
uniformly in $\f x, \tilde {\f x}$, which tends to zero as $\eta \to 0$ by continuity of $v$.

To estimate the first term on the right-hand side of \eqref{R_H_estimate}, we recall \eqref{def_W} and note that $\psi^{\nu}(x - \tilde x)$ is bounded uniformly in $x, \tilde x$ by Lemma \ref{lem:heat_estimate}. Hence, since $\int_0^\nu \dd \tau\, \delta_{\eta,\nu}(\tau)=1$, it suffices to show that
\begin{equation*}
\int \P^{\nu,0}_{x_1, \tilde x_1}(\dd \omega_1) \, \P^{\nu,0}_{x_2, \tilde x_2}(\dd \omega_2) \,
\int_0^\nu \dd t \, \dd s \, \delta_{\eta, \nu} (t - s) \, \absb{v(\omega_1(t) - \omega_2(s)) - v(\omega_1(s) - \omega_2(s))}
\end{equation*}
converges to $0$ as $\eta \to 0$, uniformly in $x_1, x_2, \tilde x_1, \tilde x_2$. Since $v$ is uniformly continuous on $\Lambda$, we conclude that it suffices to show that for every $\epsilon > 0$
\begin{equation} \label{PP_est}
\int \P^{\nu,0}_{x_1, \tilde x_1}(\dd \omega_1) \,
\int_0^\nu \dd t \, \dd s \, \delta_{\eta, \nu} (t - s) \, \ind{\abs{\omega_1(t) - \omega_1(s)}_L > \epsilon}
\end{equation}
converges to $0$ as $\eta \to 0$, uniformly in $x_1, \tilde x_1$. We estimate this using Chebyshev's inequality and Lemma \ref{lem:P_cont}:
\begin{align*}
\abs{\eqref{PP_est}} &\leq 
\frac{1}{\epsilon^2} \int_0^\nu \dd t \, \dd s \, \delta_{\eta, \nu} (t - s) \, \int \P^{\nu,0}_{x_1, \tilde x_1}(\dd \omega_1) \abs{\omega_1(t) - \omega_1(s)}_L^2
\\
&\leq \frac{C_{d,L,\nu}}{\epsilon^2} \int_0^\nu \dd t \, \dd s \, \delta_{\eta, \nu} (t - s) \, \pb{\abs{t - s} + \abs{t - s}^2} \,,
\end{align*}
which tends to zero as $\eta \to 0$ as desired, since $\delta_{\eta,\nu}$ converges to the $\delta$ function in distribution. This concludes the proof of (ii).

Next, we prove (iii). Note that the bound \eqref{kernel_est} trivially also holds for $R_{n, \eta}$, so that by \eqref{sum_free_Z}, dominated convergence, and part (ii), we obtain $Z = \lim_{\eta \to 0} {Z}_\eta$ and hence also $\cal Z = \lim_{\eta \to 0} {\cal Z}_\eta$. Moreover, from part (i) we get
\begin{equation*}
{Z}_\eta = \int \mu_{\cal C_\eta}(\dd \sigma)\, \ee^{-\ii \scalar{\sigma}{\varrho}} \, \Tr_{\cal F} \pb{\Gamma \pb{W^{\nu,0}(-\kappa + \ii \sigma)}}\,.
\end{equation*}
Recalling Lemmas \ref{lem:trace_formulas} and \ref{lem:quantum_density}, we well as the definitions \eqref{Z_int_rep}, \eqref{def_F_1}, and \eqref{def_F2}, we obtain \eqref{wtZ_F2}, and (iii) is proved.

Finally, (iv) follows from Lemma \ref{lem:F_1}, \eqref{Z_space-time_integral}, and \eqref{def_F2}.
\end{proof}

\subsection{Reduced density matrices}
Let $\rho = \bigoplus_{n \in \N} \rho_n$ be a nonnegative trace class operator on $\cal F$, and for $p \in \N^*$ denote the $p$-particle reduced density matrix with kernel
\begin{equation*}
(\Gamma_p(\rho))_{x_1 \dots x_p, \tilde x_1 \dots \tilde x_p} \deq \Tr_{\cal F} \pb{a^*(\tilde x_1) \cdots a^*(\tilde x_p) a(x_1) \cdots a(x_p) \, \rho}\,.
\end{equation*}

\begin{proposition}[Functional integral representation of reduced density matrices] \label{prop:red_dens_matrix}
Let $H$ be defined as in \eqref{H_ren}, \eqref{def_rho}, and $R_\eta$ be as in Definition \ref{def:R_Wick}. Then for any $p \in \N^*$ the following holds.
\begin{enumerate}[label=(\roman*)]
\item
We have the convergence
\begin{equation*}
\lim_{\eta \to 0} \normb{\Gamma_p(R_\eta) - \Gamma_p(\ee^{-H})}_{L^\infty} = 0\,.
\end{equation*}
\item
We have
\begin{equation*}
\Gamma_p\pb{R_\eta}
= Z^0 \,p! P_p \int \mu_{\cal C_\eta}(\dd \sigma)\, \ee^{F_2(\sigma)} \, \pB{\pb{K(-\kappa + \ii \sigma)^{-1}}^{0,0}}^{\otimes p}\,.
\end{equation*}
\end{enumerate}
\end{proposition}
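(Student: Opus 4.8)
\emph{Proof proposal.} Part~(i) will follow by the argument already used for Proposition~\ref{prop:fif_renorm}(ii). Writing $R_\eta - \ee^{-H} = \bigoplus_{n} (R_{n,\eta} - \ee^{-H_n})$ and using the action \eqref{def_b2}, \eqref{def_b1} of the creation and annihilation operators, the kernel of $\Gamma_p(R_\eta) - \Gamma_p(\ee^{-H}) = \Gamma_p(R_\eta - \ee^{-H})$ equals the series over the particle number
\begin{equation*}
\sum_{n \in \N} \frac{(n+p)!}{n!} \int \dd \f y_n \, \pb{R_{n+p, \eta} - \ee^{-H_{n+p}}}_{x_1 \dots x_p y_1 \dots y_n,\, \tilde x_1 \dots \tilde x_p y_1 \dots y_n}\,.
\end{equation*}
For each fixed $n$, Proposition~\ref{prop:fif_renorm}(ii) gives $\normb{R_{n+p,\eta} - \ee^{-H_{n+p}}}_{L^\infty} \to 0$, so the $n$-th summand tends to $0$ uniformly in $\f x, \tilde {\f x}$. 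The tail of the series is controlled uniformly in $\f x, \tilde {\f x}$: by the pointwise bound \eqref{kernel_est} (which also holds for $R_{n,\eta}$), the absolute value of the $n$-th summand is at most $2 \frac{(n+p)!}{n!} \int \dd \f y_n (\ee^{-H^0_{n+p}})_{x_1\dots x_p \f y_n,\, \tilde x_1 \dots \tilde x_p \f y_n}$, whose sum over $n$ is exactly the kernel of the free reduced density matrix $\Gamma_p(\ee^{-H^0})$; the latter is bounded for fixed $\nu$ since, by the formula proved in part~(ii) applied with $\sigma = 0$ (equivalently, the Wick theorem, Lemma~\ref{Quantum Wick theorem}), it is a finite sum of products of kernels of $\frac{\ee^{-\nu h}}{1 - \ee^{-\nu h}}$, each bounded by $\sum_{k \geq 1}\ee^{-k\kappa\nu}\psi^{k\nu}(0) < \infty$. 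Dominated convergence for the series then yields~(i).

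For part~(ii), I start from Proposition~\ref{prop:fif_renorm}(i). For $\eta > 0$ the measure $\mu_{\cal C_\eta}$ is a finite-dimensional Gaussian and, as will be seen below, the $\sigma$-integrand is bounded (using $\re F_2 \leq 0$ and the uniform bound on $(K(-\kappa+\ii\sigma)^{-1})^{0,0}$), so $\Gamma_p$ commutes with the $\sigma$-integral and
\begin{equation*}
\Gamma_p(R_\eta) = \int \mu_{\cal C_\eta}(\dd \sigma) \, \ee^{-\ii \scalar{\sigma}{\varrho}} \, \Gamma_p\pb{\Gamma(W^{\nu,0}(-\kappa + \ii\sigma))}\,.
\end{equation*}
The central computation is to evaluate $\Gamma_p(\Gamma(\cal A))$ for a trace-class operator $\cal A$ on $\cal H$ with spectral radius $<1$. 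Writing $\Gamma_p(\Gamma(\cal A)) = \Tr_{\cal F}(\Gamma(\cal A)) \, q_{\cal A}\pb{a^*(\tilde x_1) \cdots a^*(\tilde x_p) a(x_1) \cdots a(x_p)}$ and applying the Wick theorem (Lemma~\ref{Quantum Wick theorem}), where by \eqref{QWT_i_2} only the pairings matching each $a^*(\tilde x_i)$ to some $a(x_j)$ survive, together with \eqref{QWT_i_1} and the trace formula \eqref{trace_bosons}, gives
\begin{equation*}
\Gamma_p(\Gamma(\cal A)) = \det(1 - \cal A)^{-1} \, p! \, P_p \pbb{\frac{\cal A}{1 - \cal A}}^{\otimes p}\,,
\end{equation*}
where the single $P_p$ arises from the sum over $S_p$ of pairings matching the definition \eqref{def_Pn} of $P_p$.

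It remains to identify the two ingredients for $\cal A = W^{\nu,0}(-\kappa + \ii\sigma)$, which is admissible by Lemma~\ref{lem:Gamma}(iii). First, specialising \eqref{K_inv} to $\tau = \tilde\tau = 0$ and using Lemma~\ref{lem:Gamma}(i),(ii) gives $\frac{\cal A}{1 - \cal A} = \sum_{k \geq 1}\pb{W^{\nu,0}(-\kappa + \ii\sigma)}^k = \pb{K(-\kappa + \ii\sigma)^{-1}}^{0,0}$, which in particular is bounded uniformly in $\sigma$, being dominated in absolute value by its value at $\sigma = 0$ via \eqref{W_FK}. Second, $\det(1 - \cal A)^{-1} = \ee^{F_0(\sigma)}$ by \eqref{Z_int_rep}; since $\ee^{F_0(0)} = \Tr_{\cal F}\Gamma(W^{\nu,0}(-\kappa)) = Z^0$ by \eqref{trace_bosons} and $F_1(\sigma) = F_0(\sigma) - F_0(0)$, we get $\det(1-\cal A)^{-1} = Z^0 \ee^{F_1(\sigma)}$; and combining the integral representation \eqref{Z_space-time_integral} of $F_1$ with Lemma~\ref{lem:quantum_density} yields $F_1(\sigma) - \ii\scalar{\sigma}{\varrho} = F_2(\sigma)$, so that $\ee^{-\ii\scalar{\sigma}{\varrho}} \det(1-\cal A)^{-1} = Z^0 \ee^{F_2(\sigma)}$. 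Substituting both identities and pulling $Z^0$, $p!$ and $P_p$ out of the $\sigma$-integral gives~(ii). The only genuinely substantive step is the Wick-theorem evaluation of $\Gamma_p(\Gamma(\cal A))$ and the bookkeeping of the normalisation factors $Z^0$, $p!$; for part~(i) the one point requiring care is that the dominating series be summable \emph{uniformly} in the external arguments, which is precisely what the boundedness of the free reduced density matrix kernel provides.
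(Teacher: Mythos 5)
Your part (ii) is correct and is essentially the paper's own argument: start from Proposition \ref{prop:fif_renorm} (i), evaluate $\Gamma_p(\Gamma(\cal A))$ by the Wick theorem (Lemma \ref{Quantum Wick theorem}) as $\det(1-\cal A)^{-1}\, p!\, P_p \pb{\cal A (1-\cal A)^{-1}}^{\otimes p}$, identify $W^{\nu,0}(1-W^{\nu,0})^{-1} = \pb{K(-\kappa+\ii\sigma)^{-1}}^{0,0}$ via Lemma \ref{lem:G_K}, and absorb the phase through $\ee^{-\ii\scalar{\sigma}{\varrho}}\,\ee^{F_0(\sigma)} = Z^0 \ee^{F_2(\sigma)}$ using \eqref{Z_int_rep}, \eqref{def_F_1}, \eqref{def_F2} and Lemma \ref{lem:quantum_density}.

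Part (i), however, has a genuine gap as written: your particle-number expansion omits the bosonic symmetrization. The operators $R_{n,\eta}$ and $\ee^{-H_n}$ are specified by kernels on $\Lambda^n$ that are \emph{not} symmetric in their arguments, and Fock-space quantities carry the projection $P_n$ (compare $\wt Z_\eta = \sum_n \tr(P_n \wt R_{n,\eta})$ in the proof of Lemma \ref{lem:lim_eta}, and the formula $\Gamma_p(\rho) = \sum_{n\ge p}\frac{n!}{(n-p)!}\tr_{p+1,\dots,n}(P_n\rho_n)$ used in the paper). Your expansion $\sum_n \frac{(n+p)!}{n!}\int \dd\f y\,(\,\cdot\,)_{x_1\dots x_p \f y,\,\tilde x_1\dots\tilde x_p \f y}$ without $P_{n+p}$ is not the kernel of $\Gamma_p$, and, more seriously, the asserted domination fails: by \eqref{kernel_est} the free kernel factorizes, so the $n$-th dominating term equals $2\,\frac{(n+p)!}{n!}\,\prod_{i\le p}(\ee^{-\nu h})_{x_i,\tilde x_i}\,(\tr \ee^{-\nu h})^{n}$, and since $\tr\ee^{-\nu h}\ge 1$ (certainly for small $\nu$, where it is of order $\nu^{-d/2}$) the series diverges; in particular it is not ``exactly the kernel of $\Gamma_p(\ee^{-H^0})$''. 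The finiteness of $\Gamma_p(\ee^{-H^0})$ rests precisely on the symmetrization, which replaces the factor $(\tr\ee^{-\nu h})^n$ by the occupation-number sums behind $\ee^{-\nu h}(1-\ee^{-\nu h})^{-1}$, finite because the spectral radius $\ee^{-\nu\kappa}$ is below $1$ even though the trace is large. The repair is the paper's route: keep $P_{n+p}$, bound each symmetrized term by the corresponding free one, and expand the free bound over occupation numbers $\f k\in(\N^*)^p$, where the heat-kernel estimate of Lemma \ref{lem:heat_estimate} gives a tail bound of the form $C_{d,p,L,\nu}\sum_{\abs{\f k}\ge N}\ee^{-\nu\kappa\abs{\f k}}$, uniform in $\f x,\tilde{\f x}$ and $\eta$. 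Note also that for the claimed $L^\infty$ convergence you need this \emph{uniform} smallness of the tail; mere boundedness of the dominating sum (your closing remark) does not by itself give uniform convergence of the series. Once the symmetrizer is restored and the tail is controlled in this way, the rest of your plan coincides with the paper's proof.
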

\begin{proof}
A standard calculation with creation and annihilation operators shows that
\begin{equation*}
\Gamma_p(\rho) = \sum_{n \geq p} \frac{n!}{(n - p)!} \tr_{p+1, \dots, n} \pb{P_n \rho_{n}}\,,
\end{equation*}
where $\tr_{p+1, \dots, n} (\cdot)$ denotes partial trace over the variables $x_{p+1}, \dots, x_{n}$.
Thus,
\begin{multline*}
\abs{\Gamma_p(R_\eta)_{\f x, \tilde {\f x}}} \leq 
\sum_{n \geq p} \frac{n!}{(n - p)!} \absb{\pb{\tr_{p+1, \dots, n} \pb{P_n R_{n,\eta}}}_{\f x, \tilde {\f x}}}
\\
\leq
\sum_{n \geq p} \frac{n!}{(n - p)!} \pb{\tr_{p+1, \dots, n} \pb{P_n \ee^{-H_n^0}}}_{\f x, \tilde {\f x}}
= \Gamma_p(\Gamma(\ee^{-h}))_{\f x, \tilde {\f x}}
\,,
\end{multline*}
where in the second step we used \eqref{def_Pn}, \eqref{R_eta_expanded}, and Lemma \ref{FK_continuous}. By Lemma \ref{Quantum Wick theorem},
\begin{equation} \label{}
\Gamma_p(\Gamma(\ee^{-h}))_{\f x, \tilde {\f x}} = \sum_{\pi \in S_p} \prod_{i = 1}^p \pbb{\frac{\ee^{-h}}{1 - \ee^{-h}}}_{x_i, \tilde x_{\pi(i)}} = \sum_{\pi \in S_p} \sum_{\f k \in (\N^*)^p} \prod_{i = 1}^p (\ee^{-k_i h})_{x_i ,\tilde x_{\pi(i)}}\,.
\end{equation}
By multiplying $\ee^{-h}$ with a parameter $t > 0$, we obtain the identity of power series
\begin{equation*}
\sum_{n \geq p} \frac{n! \, t^n}{(n - p)!} \pb{\tr_{p+1, \dots, n} \pb{P_n \ee^{-H_n^0}}}_{\f x, \tilde {\f x}} =
\sum_{\pi \in S_p} \sum_{\f k \in (\N^*)^p} t^{\abs{\f k}} \prod_{i = 1}^p (\ee^{-k_i h})_{x_i ,\tilde x_{\pi(i)}}\,,
\end{equation*}
where $\abs{\f k} = k_1 + \cdots + k_p$. Note that all terms are positive. In particular, for any $N \geq p$ we have
\begin{align*}
\sum_{n \geq N} \frac{n!}{(n - p)!} \absb{\pb{\tr_{p+1, \dots, n} \pb{P_n R_{n,\eta}}}_{\f x, \tilde {\f x}}} &\leq 
\sum_{\pi \in S_p} \sum_{\f k \in (\N^*)^p} \ind{\abs{\f k} \geq N} \prod_{i = 1}^p (\ee^{-k_i h})_{x_i ,\tilde x_{\pi(i)}}
\\
&\leq p! \sum_{\f k \in (\N^*)^p} \ind{\abs{\f k} \geq N} \prod_{i = 1}^p C_d (L^{-d} + (\nu k_i)^{-d/2}) \, \ee^{-\nu \kappa k_i}
\\
&\leq C_{d,p,L,\nu} \sum_{\f k \in (\N^*)^p} \ind{\abs{\f k} \geq N} \, \ee^{-\nu \kappa \abs{\f k}}\,,
\end{align*}
which tends to zero as $N \to \infty$, uniformly in $\f x, \tilde {\f x}$ and $\eta$.

We conclude that to prove part (i) it suffices to prove, for each $n \geq p$, that
\begin{equation*}
\lim_{\eta \to 0} \normb{\tr_{p+1, \dots, n} \pb{P_n R_{n,\eta}} - \tr_{p+1, \dots, n} \pb{P_n \ee^{-H_n}}}_{L^\infty} = 0\,,
\end{equation*}
which itself follows immediately from Proposition \ref{prop:fif_renorm} (ii) and the definition of the partial trace. This concludes the proof of part (i).

Next, we prove (ii).
By Proposition \ref{prop:fif_renorm} (i), we have
\begin{equation*}
(\Gamma_p\pb{R_\eta})_{\f x, \tilde {\f x}} = \int \mu_{\cal C_\eta}(\dd \sigma)\, \ee^{-\ii \scalar{\sigma}{\varrho}} \, \Tr_{\cal F} \pB{a^*(\tilde x_1) \cdots a^*(\tilde x_p) a(x_1) \cdots a(x_p) \Gamma \pb{W^{\nu,0}(-\kappa + \ii \sigma)}}\,.
\end{equation*}
By Lemma \ref{Quantum Wick theorem} we get
\begin{equation*}
(\Gamma_p\pb{R_\eta})_{\f x, \tilde {\f x}} =  \int \mu_{\cal C_\eta}(\dd \sigma)\, \ee^{-\ii \scalar{\sigma}{\varrho}} \Tr_{\cal F} \pB{\Gamma \pb{W^{\nu,0}(-\kappa + \ii \sigma)}} \sum_{\pi \in S_p} \prod_{i = 1}^p \pbb{\frac{W^{\nu,0}(-\kappa + \ii \sigma)}{1 - W^{\nu,0}(-\kappa + \ii \sigma)}}_{x_i, \tilde x_{\pi(i)}}\,.
\end{equation*}
Recalling Lemmas \ref{lem:trace_formulas} and \ref{lem:G_K} and the definition \eqref{Z_int_rep}, we obtain
\begin{align*}
(\Gamma_p\pb{R_\eta})_{\f x, \tilde {\f x}}
&= \int \mu_{\cal C_\eta}(\dd \sigma)\, \ee^{-\ii \scalar{\sigma}{\varrho}} \ee^{F_0(\sigma)} \sum_{\pi \in S_p} \prod_{i = 1}^p \pB{K(-\kappa + \ii \sigma)^{-1}}^{0,0}_{x_i, \tilde x_{\pi(i)}}
\\
&= Z^0 \int \mu_{\cal C_\eta}(\dd \sigma)\, \ee^{F_2(\sigma)} \sum_{\pi \in S_p} \prod_{i = 1}^p \pB{K(-\kappa + \ii \sigma)^{-1}}^{0,0}_{x_i, \tilde x_{\pi(i)}}\,,
\end{align*}
where in the second step we used the definition \eqref{def_F2} and Lemma \ref{lem:quantum_density}. This is (ii).
\end{proof}

In analogy to the definition \eqref{def_gammap}, which reads
\begin{equation*}
\Gamma_p = \frac{1}{Z} \, \Gamma_p \p{\ee^{-H}}\,,
\end{equation*}
we define
\begin{equation*}
\Gamma_{p, \eta} \deq \frac{1}{Z_\eta} \, \Gamma_p \p{R_\eta}\,.
\end{equation*}
As a consequence of Propositions \ref{prop:fif_renorm} and \ref{prop:red_dens_matrix}, we have
\begin{equation} \label{conv_gamma_p_eta}
\lim_{\eta \to 0} \norm{\Gamma_p - \Gamma_{p,\eta}}_{L^\infty} = 0\,.
\end{equation}

By Proposition \ref{prop:red_dens_matrix} (ii) we have
\begin{equation} \label{Gamma_eta}
\Gamma_{p,\eta} = \frac{p!}{{\cal Z}_\eta} P_p \int \mu_{\cal C_\eta}(\dd \sigma)\, \ee^{F_2(\sigma)} \, \pB{\pb{K(-\kappa + \ii \sigma)^{-1}}^{0,0}}^{\otimes p}\,.
\end{equation}
We define the Wick-ordered version of \eqref{Gamma_eta} as
\begin{equation} \label{def_eh_gamma_eta}
\wh \Gamma_{p,\eta} \deq \frac{p!}{{\cal Z}_\eta} P_p \int \mu_{\cal C_\eta}(\dd \sigma)\, \ee^{F_2(\sigma)} \,\pB{\pb{K(-\kappa + \ii \sigma)^{-1}}^{0,0} - \pb{K(-\kappa)^{-1}}^{0,0}}^{\otimes p}\,.
\end{equation}

\begin{proposition} \label{prop:gamma_hat}
For any $\nu > 0$ and $p \in \N^*$ we have
\begin{equation*}
\lim_{\eta \to 0}\norm{\wh \Gamma_{p,\eta} - \wh \Gamma_p}_{L^\infty} = 0\,,
\end{equation*}
where $\wh \Gamma_p$ was defined in \eqref{def_Gamma_renormalized}.
\end{proposition}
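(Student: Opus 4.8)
The plan is to first show that $\wh\Gamma_{p,\eta}$ obeys the exact analogue of the Wick-ordering identity \eqref{def_Gamma_renormalized} with each $\Gamma_k$ replaced by its regularization $\Gamma_{k,\eta}$, namely
\begin{equation*}
\wh \Gamma_{p,\eta} = \sum_{k = 0}^p \binom{p}{k}^2 (-1)^{p - k} \, P_p \pb{ \Gamma_{k,\eta} \otimes \Gamma_{p-k}^0 } P_p\,,
\end{equation*}
and then to conclude by subtracting \eqref{def_Gamma_renormalized} and invoking \eqref{conv_gamma_p_eta}. This is the rigorous, $\eta$-regularized counterpart of the formal computation recorded around \eqref{wh_Gamma_intro}.

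To derive the displayed identity I would abbreviate $A_\sigma \deq \pb{K(-\kappa + \ii \sigma)^{-1}}^{0,0}$ and $B \deq \pb{K(-\kappa)^{-1}}^{0,0}$, operators on $\cal H$, and expand by multilinearity of the tensor product,
\begin{equation*}
(A_\sigma - B)^{\otimes p} = \sum_{S \subseteq \{1, \dots, p\}} (-1)^{p - \abs S} \, T_S(\sigma)\,, \qquad T_S(\sigma) = C_1 \otimes \cdots \otimes C_p\,, \quad C_i = \begin{cases} A_\sigma & i \in S\,, \\ B & i \notin S\,. \end{cases}
\end{equation*}
Each $T_S(\sigma)$ with $\abs S = k$ equals $U_\pi \pb{A_\sigma^{\otimes k} \otimes B^{\otimes(p-k)}} U_\pi^{-1}$ for any permutation operator $U_\pi$ with $\pi(\{1,\dots,k\}) = S$ (the $k$ copies of $A_\sigma$ being identical, as are the $p-k$ copies of $B$); since $P_p U_\pi = U_\pi P_p = P_p$ and $\sum_{\pi \in S_p} U_\pi^{-1} = p!\,P_p$, a standard averaging over permutations gives $\sum_{\abs S = k} P_p T_S(\sigma) = \binom{p}{k}\, P_p \pb{A_\sigma^{\otimes k} \otimes B^{\otimes(p-k)}} P_p$. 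Feeding the expansion into \eqref{def_eh_gamma_eta} then produces
\begin{equation*}
\wh \Gamma_{p,\eta} = \sum_{k = 0}^p \binom{p}{k} (-1)^{p - k} \, \frac{p!}{\cal Z_\eta} \, P_p \pBB{ \pBB{ \int \mu_{\cal C_\eta}(\dd \sigma) \, \ee^{F_2(\sigma)} \, A_\sigma^{\otimes k} } \otimes B^{\otimes(p - k)} } P_p \,.
\end{equation*}

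Next I would identify the two tensor factors. Using the absorption property $P_p \pb{P_k \otimes P_{p-k}} = P_p$ I insert $P_k$ on the first factor and $P_{p-k}$ on the second; the definition \eqref{Gamma_eta} of $\Gamma_{k,\eta}$ then gives $P_k \int \mu_{\cal C_\eta}(\dd \sigma) \, \ee^{F_2(\sigma)} A_\sigma^{\otimes k} = \frac{\cal Z_\eta}{k!}\, \Gamma_{k,\eta}$. For the other factor, Lemma \ref{lem:G_K} evaluated at $\tau = \tilde\tau = 0$ and $u = -\kappa$, together with $W^{r,0}(-\kappa) = \ee^{-r h}$, gives
\begin{equation*}
B = \sum_{k \geq 1} \ee^{-k\nu h} = \frac{\ee^{-\nu h}}{1 - \ee^{-\nu h}}\,,
\end{equation*}
which is precisely the one-particle reduced density matrix of the quasi-free state $\Gamma(\ee^{-\nu h})$; the Wick theorem, Lemma \ref{Quantum Wick theorem}, then yields $\Gamma_{p-k}^0 = (p-k)!\, P_{p-k} B^{\otimes(p-k)}$. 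Substituting these two identities and using $\binom{p}{k} \frac{p!}{k!(p-k)!} = \binom{p}{k}^2$ gives the claimed regularized Wick-ordering identity.

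Subtracting \eqref{def_Gamma_renormalized} from it yields $\wh\Gamma_{p,\eta} - \wh\Gamma_p = \sum_{k=0}^p \binom{p}{k}^2 (-1)^{p-k} \, P_p \pb{ (\Gamma_{k,\eta} - \Gamma_k) \otimes \Gamma_{p-k}^0 } P_p$. Since symmetrization does not increase the $L^\infty$-norm of a kernel and $\norm{A \otimes B}_{L^\infty} = \norm{A}_{L^\infty}\norm{B}_{L^\infty}$, this gives
\begin{equation*}
\norm{\wh\Gamma_{p,\eta} - \wh\Gamma_p}_{L^\infty} \leq \sum_{k=0}^p \binom{p}{k}^2 \norm{\Gamma_{k,\eta} - \Gamma_k}_{L^\infty} \, \norm{\Gamma_{p-k}^0}_{L^\infty}\,.
\end{equation*}
For each fixed $\nu$ the kernel of $\Gamma_1^0 = \frac{\ee^{-\nu h}}{1 - \ee^{-\nu h}}$ is continuous — this is the elementary computation around \eqref{Gamma_1_free} — hence bounded on the compact torus $\Lambda$, so $\norm{\Gamma_{p-k}^0}_{L^\infty} \leq (p-k)!\, \norm{\Gamma_1^0}_{L^\infty}^{p-k} < \infty$ uniformly in $\eta$; and $\norm{\Gamma_{k,\eta} - \Gamma_k}_{L^\infty} \to 0$ as $\eta \to 0$ by \eqref{conv_gamma_p_eta} (applied with $k$ in place of $p$; the $k=0$ term vanishes, since $\Gamma_{0,\eta} = \Gamma_0 = 1$). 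Letting $\eta \to 0$ finishes the proof. I expect the only genuine work to be the symmetrizer bookkeeping in the first two steps — checking that the partial symmetrizers $P_k$, $P_{p-k}$ merge correctly with $P_p$ and that the $B$-factors reassemble into the genuine free reduced density matrix $\Gamma_{p-k}^0$; everything else follows directly from results already established.
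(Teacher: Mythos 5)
Your proof is correct and follows essentially the same route as the paper: expand the tensor power in \eqref{def_eh_gamma_eta}, identify $\pb{K(-\kappa)^{-1}}^{0,0} = \Gamma_1^0$ via Lemma \ref{lem:G_K} and the Wick theorem to obtain $\wh\Gamma_{p,\eta} = \sum_k \binom{p}{k}^2 (-1)^{p-k} P_p(\Gamma_{k,\eta}\otimes\Gamma^0_{p-k})P_p$, and conclude from \eqref{conv_gamma_p_eta}. The extra symmetrizer bookkeeping and the explicit $L^\infty$ bound on $\Gamma^0_{p-k}$ that you spell out are exactly the steps the paper leaves implicit, and they check out.
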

\begin{proof}
By expanding the product in \eqref{def_eh_gamma_eta} and using that $P_p = (P_p)^2$ commutes with any tensor power, we obtain
\begin{equation*}
\wh \Gamma_{p,\eta} = \frac{p!}{{\cal Z}_\eta} \sum_{k = 0}^p \binom{p}{k} (-1)^{p - k} \int \mu_{\cal C_\eta}(\dd \sigma)\, \ee^{F_2(\sigma)} \, P_p\pB{\pb{K(-\kappa + \ii \sigma)^{-1}}^{0,0}}^{\otimes k} \otimes \pB{\pb{K(-\kappa)^{-1}}^{0,0}}^{\otimes (p - k)} P_p\,.
\end{equation*}
By Lemmas \ref{lem:G_K} and \ref{Quantum Wick theorem} we find
\begin{equation*}
\pb{K(-\kappa)^{-1}}^{0,0} = \sum_{n \in \N^*} W^{\nu n,0}(-\kappa) = \sum_{n \in \N^*} \ee^{\nu n(\Delta/2 - \kappa)} = \Gamma_1^0\,.
\end{equation*}
Moreover, by Lemma \ref{Quantum Wick theorem}, we find
\begin{equation*}
\Gamma_p^0 = p! P_p (\Gamma_1^0)^{\otimes p}\,.
\end{equation*}
Hence,
\begin{equation*}
\wh \Gamma_{p,\eta} = \sum_{k = 0}^p \binom{p}{k}^2 (-1)^{p - k} \, P_p \pb{\Gamma_{k,\eta} \otimes \Gamma_{p-k}^0} P_p\,,
\end{equation*}
and the claim follows from \eqref{conv_gamma_p_eta}.
\end{proof}

\section{Functional integral representation of classical field theory} \label{sec:classical}

In this section, we derive the functional integral representation of the classical theory. The analysis of the partition function is given in Section \ref{functional_integral_classical_Z}.  The correlation functions are analysed in Section \ref{functional_integral_classical_gamma}. Throughout this section we make Assumptions \ref{ass:torus} and \ref{ass:v}. The main results of this section are Proposition \ref{prop:Z_field_Wick_ordered} for the partition function and Proposition \ref{prop:gamma_regularization} for the correlation functions.

\subsection{Partition function}
\label{functional_integral_classical_Z}

In this subsection we derive a functional integral representation for the partition function of the classical field theory, defined in \eqref{def_z}. Let $\varphi$ be the smooth cutoff function from Section \ref{intro_sigma}, and recall the approximate delta function from \eqref{Fourier_regularization}. Let $\mu_{v_\eta}$ be the real Gaussian measure with mean zero and covariance
\begin{equation} \label{cov_v_eta}
\int \mu_{v_\eta}(\dd \xi)\, \xi(x) \, \xi(\tilde x) = v_\eta (x - \tilde x)\,.
\end{equation}
For $\eta > 0$, a representation analogous to \eqref{sigma_X_rep} expresses $\xi$ as a finite sum, and therefore under the law $\mu_{v_\eta}$, the field $\xi$ is almost surely a smooth periodic function on $\Lambda$, and the integration over $\mu_{v_\eta}$ is a finite-dimensional Gaussian integral.

\begin{proposition}
\label{prop:Z_field_Wick_ordered}
Suppose that Assumptions \ref{ass:torus} and \ref{ass:v} hold.
\begin{enumerate}[label=(\roman*)]
\item
The partition function \eqref{def_z} is given by
\begin{equation*}
\zeta = \lim_{\eta \to 0} \zeta_\eta\,,
\end{equation*}
where
\begin{equation} \label{def_z_nu}
\zeta_\eta \deq \int \mu_{v_{\eta}} (\dd \xi) \, \ee^{f_2(\xi)}\,,
\end{equation}
and
\begin{equation} \label{def_f2}
f_2(\xi) \deq \int_0^{\infty} \dd t\, \tr \biggl(\frac{1}{t-\Delta/2+\kappa-\ii \xi}-\frac{1}{t-\Delta/2+\kappa}
-\frac{1}{t-\Delta/2+\kappa} \,\ii\xi \,\frac{1}{t-\Delta/2+\kappa}\biggr)\,.
\end{equation}
Here we regard $\xi$ as a multiplication operator.
\item
The function $f_2(\xi)$ is finite and has nonpositive real part.
\end{enumerate}
\end{proposition}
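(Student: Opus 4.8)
The plan is to mirror, on the classical side, the derivation carried out for the quantum partition function in Section \ref{sec:fct_quantum}, but with the time-dependent propagator $W^{\tau,\tilde\tau}$ replaced throughout by the time-homogeneous semigroup $\ee^{-(\tau-\tilde\tau)(\kappa-\Delta/2-\ii\xi)}$ and the Riemann sums over $r\in\nu\N$ replaced by genuine integrals as in \eqref{Riemann_integral}. Concretely, I would start from the definition \eqref{def_z}, insert the classical Hubbard–Stratonovich transformation (using the auxiliary field $\xi$ with covariance \eqref{cov_v_eta}, which is the regularized analogue of the fact that $\ang{\sigma}$ has the law of $\xi$), and rewrite $\ee^{-w(\phi)}$ integrated against $\mu_{h^{-1}}$ as a Gaussian integral over $\phi$ with the modified covariance $(\kappa-\Delta/2-\ii\xi)^{-1}$. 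Performing the Gaussian $\phi$-integral (Lemma \ref{complex_Gaussian}) produces a ratio of determinants, which via the integral representation of the logarithm \eqref{integral_identity} becomes $\ee^{f_1(\xi)}$ with $f_1(\xi)=\int_0^\infty\dd t\,\tr\big(\tfrac1{t+\kappa-\Delta/2-\ii\xi}-\tfrac1{t+\kappa-\Delta/2}\big)$. The Wick-ordering phase, exactly as in \eqref{Wick_formal} and Lemma \ref{lem:quantum_density}, contributes the subtraction of $\int_0^\infty\dd t\,\tr\big(\tfrac1{t+\kappa-\Delta/2}\,\ii\xi\,\tfrac1{t+\kappa-\Delta/2}\big)$, turning $f_1$ into $f_2$ as in \eqref{def_f2}. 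Taking $\eta\to0$ by dominated convergence, exactly as in Lemma \ref{lem:lim_eta} and Proposition \ref{prop:fif_renorm}(iii), gives (i).

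For the finiteness claim in (ii), the key point is the cancellation produced by Wick ordering. Expand the resolvent $\tfrac1{t+h-\ii\xi}$ around $\tfrac1{t+h}$ (with $h=\kappa-\Delta/2$): the zeroth- and first-order terms are exactly cancelled by the two subtracted terms in \eqref{def_f2}, so the integrand is $\int_0^\infty\dd t\,\tr\big(\tfrac1{t+h}\,\ii\xi\,\tfrac1{t+h}\,\ii\xi\,\tfrac1{t+h-\ii\xi}\big)$, a genuinely second-order quantity in $\xi$. One then estimates this using Hölder's inequality in Schatten norms: bound $\tfrac1{t+h-\ii\xi}$ in operator norm by $(t+\kappa)^{-1}$ (its real part is $\geq t+\kappa>0$), and bound the two factors $\tfrac1{t+h}\,\xi$ in $\fra S^4$, using that $\xi$ is a bounded multiplication operator (almost surely smooth under $\mu_{v_\eta}$, and with the needed moment bounds inherited from \eqref{tr_h_assump} with $s=2$, which gives $h^{-1}\in\fra S^2$ hence $h^{-1/2}\in\fra S^4$). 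This yields an integrand bounded by $C\norm{\xi}_{L^\infty}^2\,(t+\kappa)^{-1}\,\tr\,h^{-2}$-type quantities that is integrable in $t$ near $\infty$; near $t=0$ the factor from the two $\fra S^4$ norms already gives a finite bound. The nonpositivity of $\re f_2$ is inherited directly from Proposition \ref{prop:fif_renorm}(iv): since $f_2(\xi)$ is the limit (or the direct analogue, via the identification of $\ang\sigma$ with $\xi$) of $F_2(\sigma)$ restricted to time-independent $\sigma$, and $\re F_2\leq0$, the same holds for $f_2$; alternatively one argues directly that $\re f_1\leq0$ as in Lemma \ref{lem:F_1} and that the subtracted Wick term is purely imaginary in the relevant trace since $\tr\big(\tfrac1{t+h}\,\xi\,\tfrac1{t+h}\big)$ is real and it is multiplied by $\ii$.

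The main obstacle I anticipate is making the trace-norm bookkeeping for $f_2$ fully rigorous: one must check that each individual term $\tr\tfrac1{t+h-\ii\xi}$, $\tr\tfrac1{t+h}$, $\tr\tfrac1{t+h}\ii\xi\tfrac1{t+h}$ is separately finite (so that the difference is meaningful), which in $d=3$ is already delicate since $h^{-1}$ is only Hilbert–Schmidt, not trace class — so the three terms are individually \emph{not} trace class and one must group them, expanding to second order \emph{before} taking any trace, and only then invoke Schatten–Hölder to see that the combined operator lies in $\fra S^1$. Controlling the $t\to\infty$ behaviour of $\int_0^\infty\dd t\,\norm{\tfrac1{t+h}\xi}_{\fra S^4}^2\,(t+\kappa)^{-1}$ uniformly, and then the passage $\eta\to0$ with the $\eta$-dependent cutoff in $v_\eta$ (using $\norm{v_\eta-v}_{L^\infty}\to0$ exactly as in the proof of Lemma \ref{lem:lim_eta}), are the two places where care is needed; everything else is a direct transcription of the quantum argument with sums replaced by integrals.
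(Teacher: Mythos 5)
Your overall strategy (Hubbard--Stratonovich, Gaussian integration, the integral representation \eqref{integral_identity}, the Wick phase producing the counterterm, and a second-order resolvent expansion for (ii)) is indeed the paper's, but as written there are two genuine gaps. For part (i), you apply the Hubbard--Stratonovich transformation and Lemma \ref{complex_Gaussian} directly to the full field $\phi$, whereas for $d\geq 2$ the measure $\mu_{h^{-1}}$ is supported on distributions, $W$ is only defined as the $L^2(\mu_{h^{-1}})$-limit of the truncated interactions $W_K^{v_\eta}$ from \eqref{def_W_K}, and Lemma \ref{complex_Gaussian} is a finite-dimensional identity. The paper therefore works at fixed $K$ (with the projections $P_K$ from \eqref{def_Pk}), where the $\phi$-integral is a finite-dimensional Gaussian and every term in the exponent is finite, and the technical heart of the proof is then the limit $K\to\infty$ of $\int_0^\infty \dd t\, \tr\qb{\frac{1}{(t+h-\ii\xi)\vert_K}-\frac{1}{(t+h)\vert_K}}-\ii\scalar{\xi}{\varrho_K}$ towards $f_2(\xi)$ (see \eqref{41_4}), which requires a uniform trace-class bound on the second-order remainder together with the convergence estimates \eqref{trace_conv1}--\eqref{trace_conv2} for $\xi_K\to\xi$. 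This step cannot be skipped: your intermediate object $f_1(\xi)=\int_0^\infty\dd t\,\tr\pb{\frac{1}{t+h-\ii\xi}-\frac{1}{t+h}}$ is not an absolutely convergent $t$-integral for $d=2,3$ (the integrand's $\fra S^1$-norm is only $O(\tr (t+h)^{-2})=O(t^{d/2-2})$ at large $t$), so one cannot first produce $f_1$ and then ``add'' the Wick counterterm; the three terms must be grouped before integrating in $t$, i.e.\ at finite $K$. (The $\eta\to 0$ step is also handled on the classical side via Lemma \ref{lem:constr_W} (ii), not by a Lemma \ref{lem:lim_eta}-type path argument, but that is a minor point.)

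For part (ii), your Schatten bookkeeping does not close. Bounding the interacting resolvent in operator norm and the two factors $\frac{1}{t+h}\,\xi$ in $\fra S^4$ gives, by H\"older, an operator in $\fra S^2$, not in $\fra S^1$, so the trace of the remainder is not controlled this way; moreover your claimed majorant $C\norm{\xi}_{L^\infty}^2 (t+\kappa)^{-1}\tr h^{-2}$ is not integrable in $t$ at infinity. The paper's estimate keeps all three resolvent factors in $\fra S^3$, using Lemma \ref{lem:resolvent_tools} (ii) also for $\frac{1}{t+h-\ii\xi}$, which yields the bound $\norm{\xi}_{L^\infty}^2\tr\pb{\frac{1}{t+h}}^3$ and the finite $t$-integral $\frac{1}{2}\norm{\xi}_{L^\infty}^2\tr h^{-2}$; alternatively one can use $\fra S^2\times\fra S^2\times\fra S^\infty$ provided the $t$-dependence is kept inside $\tr(t+h)^{-2}$. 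Finally, deducing $\re f_2\leq 0$ from Proposition \ref{prop:fif_renorm} (iv) ``by restricting $F_2$ to time-independent $\sigma$'' is not available at this stage: $f_2(\ang{\sigma})$ is related to $F_2(\sigma)$ only through the $\nu\to 0$ comparison proved later (Lemma \ref{Ff_L2conv}), not by an identity. Your alternative direct argument is the correct one and matches the paper: the difference of the first two terms is trace class with nonpositive real part (via the Feynman--Kac representation of $\ee^{-r(h-\ii\xi)}$), and the subtracted term is purely imaginary.
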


The rest of this subsection is devoted to the proof of Proposition \ref{prop:Z_field_Wick_ordered}. For any operator $\ops$ on $\cal H$ we denote by $\ops \vert_K$ the restriction of $\ops$ to the range of $\cal P_K$ from \eqref{def_Pk}, i.e.\ $\ops\vert_K$ is the $(K+1) \times (K+1)$ matrix $(\scalar{u_k}{\ops u_l})_{k,l = 0}^K$. Moreover, we abbreviate $\ops_K \deq \cal P_K \ops \cal P_K$. Thus, $\tr \ops\vert_K = \tr \ops_K = \tr (\cal P_K \ops)$, and more generally $\tr (f(\ops \vert_K)) = \tr (\cal P_K f(\ops_K))$ for a function $f$. Recall that $h = \kappa - \Delta / 2$ and that $\cal P_K$ commutes with $h$.

We also recall H\"older's inequality for Schatten spaces: if $1 \leq p,q,r \leq \infty$ satisfy $\frac{1}{r} = \frac{1}{p} + \frac{1}{q}$ then $\norm{\ops \tilde \ops}_{\fra S^r} \leq \norm{\ops}_{\fra S^p} \norm{\tilde \ops}_{\fra S^q}$.

We record the following estimates.

\begin{lemma} \label{lem:resolvent_tools}
\begin{enumerate}[label=(\roman*)]
\item
For any $s < -1/2$ we have $\tr h^{s-1} < \infty$.
\item
For any self-adjoint operator $\ops$ on $\cal H$ and any $p \in [1,\infty]$ we have
\begin{equation*}
\normbb{\frac{1}{t + h - \ii \ops}}_{\fra S^p} \leq \normbb{\frac{1}{t + h}}_{\fra S^p}\,.
\end{equation*}
\end{enumerate}
\end{lemma}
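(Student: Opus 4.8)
For part (i): under Assumption \ref{ass:torus} the eigenvalues of $h = \kappa - \Delta/2$ on the torus $\Lambda_{L,d}$ are $\kappa + 2\pi^2|k|^2/L^2$ indexed by $k \in \Z^d$, so the assertion $\tr h^{s-1} < \infty$ for $s < -1/2$ amounts to convergence of $\sum_{k \in \Z^d} (\kappa + c|k|^2)^{s-1}$. Since $s - 1 < -3/2$ we have $2(1-s) > 3 \geq d$, so the series converges by comparison with $\int_{\R^d} (1 + |y|^2)^{s-1}\, dy < \infty$. (Equivalently, this is just the observation that $\tr h^{-2} < \infty$ from the discussion after Assumption \ref{ass:torus}, together with $h^{s-1} \leq \kappa^{s-1+2} h^{-2}$ since $s - 1 < -2 < 0$ and $h \geq \kappa$; either route works. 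The first is cleanest.)

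For part (ii): the plan is to reduce to the Schatten-norm bound for the resolvent. Fix $t \geq 0$ and set $A \deq t + h > 0$, which is positive self-adjoint with bounded inverse, and $B \deq \ops$ self-adjoint. Write
\begin{equation*}
\frac{1}{A - \ii B} = \frac{1}{A} \, \frac{1}{1 - \ii B A^{-1}} = A^{-1/2} \pbb{1 - \ii A^{-1/2} B A^{-1/2}}^{-1} A^{-1/2}.
\end{equation*}
The middle operator $M \deq (1 - \ii A^{-1/2} B A^{-1/2})^{-1}$ is bounded with $\norm{M}_{\fra S^\infty} \leq 1$, since $A^{-1/2} B A^{-1/2}$ is self-adjoint and hence $1 - \ii A^{-1/2}BA^{-1/2}$ has numerical range in $\{1 + \ii \R\}$, giving $\norm{(1 - \ii A^{-1/2}BA^{-1/2}) \psi} \geq \norm{\psi}$ for all $\psi$. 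Then by H\"older's inequality for Schatten spaces (stated just before the lemma), $\norm{A^{-1/2} M A^{-1/2}}_{\fra S^p} \leq \norm{A^{-1/2}}_{\fra S^{2p}} \norm{M}_{\fra S^\infty} \norm{A^{-1/2}}_{\fra S^{2p}} \leq \norm{A^{-1/2}}_{\fra S^{2p}}^2 = \norm{A^{-1}}_{\fra S^p}$, which is exactly $\norm{(t+h)^{-1}}_{\fra S^p}$. This establishes (ii).

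I do not anticipate a serious obstacle here; the only mild subtlety is justifying that $A^{-1/2}BA^{-1/2}$ is self-adjoint (not merely symmetric) when $B$ is unbounded — but in the intended application $\ops = \xi$ is a bounded multiplication operator (the regularized field $\xi$ is a smooth periodic function for $\eta > 0$, hence bounded), so $A^{-1/2}BA^{-1/2}$ is bounded self-adjoint and everything is elementary. If one wanted the statement for unbounded self-adjoint $\ops$, one would instead argue via the identity $\norm{(A - \ii B)^{-1} \psi} \leq \norm{A^{-1}\psi'}$-type estimates or a limiting argument over spectral truncations of $B$; but the bounded case is all that is needed downstream, so I would simply state the lemma (or its proof) under that hypothesis and keep the argument as above.
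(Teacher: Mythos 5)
Your proposal is correct and follows essentially the same route as the paper: part (i) is the elementary eigenvalue count for $h$ on the torus, and part (ii) uses the identical factorization $\frac{1}{t+h-\ii\ops} = (t+h)^{-1/2}\bigl(1-\ii (t+h)^{-1/2}\ops\,(t+h)^{-1/2}\bigr)^{-1}(t+h)^{-1/2}$, the operator-norm bound $\leq 1$ on the middle factor, and Schatten--H\"older with exponents $(2p,\infty,2p)$. Your closing remark about boundedness of $\ops$ in the intended applications is a fair observation but does not change the argument.
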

\begin{proof}
Part (i) follows immediately from the definition of $h$. For part (ii), abbreviate $\alpha = t + h$ and write
\begin{equation*}
\frac{1}{\alpha - \ii \ops} = \alpha^{-1/2} \frac{1}{1 - \ii \alpha^{-1/2} \ops \alpha^{-1/2}} \alpha^{-1/2}\,.
\end{equation*}
Using H\"older's inequality yields
\begin{equation*}
\normbb{\frac{1}{\alpha - \ii \ops}}_{\fra S^p} = \normb{\alpha^{-1/2}}_{\fra S^{2p}} \normbb{\frac{1}{1 - \ii \alpha^{-1/2} \ops \alpha^{-1/2}}}_{\fra S^\infty} \normb{\alpha^{-1/2}}_{\fra S^{2p}} \leq \norm{\alpha^{-1}}_{\fra S^p}\,. \qedhere
\end{equation*}
\end{proof}

\begin{proof}[Proof of Proposition \ref{prop:Z_field_Wick_ordered}]
Let us start with the claim (ii). Since $\xi$ is $\mu_{v_\eta}$-almost surely bounded, a simple resolvent expansion combined with Lemma \ref{lem:resolvent_tools} (ii) implies that $\frac{1}{t + h -\ii \xi}-\frac{1}{t + h}$ is trace class with trace
\begin{multline*}
\tr \pbb{\frac{1}{t + h -\ii \xi}-\frac{1}{t + h}} = \int_0^\infty \dd r \, \ee^{-(t + \kappa) r} \tr \pb{\ee^{(\Delta / 2 + \ii \xi) r} - \ee^{\Delta r / 2}}
\\
= \int_0^\infty \dd r \, \ee^{-(t + \kappa) r} \int \dd x \int \bb W^{r,0}_{x,x}(\dd \omega) \, \pB{\ee^{\ii \int_0^r \dd t \, \xi(\omega(t))} - 1}\,,
\end{multline*}
which has a nonpositive real part. Since the third term of \eqref{def_f2} is purely imaginary, we conclude that the integrand of \eqref{def_f2} has a nonpositive real part. Moreover, by a resolvent expansion, Lemma \ref{lem:resolvent_tools} (ii), and H\"older's inequality, \eqref{def_f2} is bounded in absolute value by
\begin{equation} \label{res_exp_3}
\int_0^\infty \dd t \, \norm{\xi}_{L^\infty}^2\normbb{\frac{1}{t + h}}_{\fra S^3}^3
= \int_0^\infty \dd t \,  \norm{\xi}_{L^\infty}^2 \tr \pbb{\frac{1}{t+h}}^3 = \frac{1}{2} \norm{\xi}_{L^\infty}^2 \tr h^{-2} < \infty\,.
\end{equation}

Next, recall that $W^v$ and $W^{v_\eta}$ are nonnegative, since $v$ and $v_\eta$ are of positive type (recall \eqref{Fourier_regularization} and that the function $\varphi$ is nonnegative). Since $\lim_{\eta \to 0} \norm{v - v_\eta}_{L^\infty} = 0$ by continuity of $v$, we deduce from Lemma \ref{lem:constr_W} (ii) that
\begin{equation*}
\zeta = \lim_{\eta \to 0} \int \mu_{h^{-1}}(\dd \phi) \, \ee^{-W^{v_\eta}}\,.
\end{equation*}
Moreover, by Lemma \ref{lem:constr_W} (i), we have for any $\eta > 0$,
\begin{equation*}
\int \mu_{h^{-1}}(\dd \phi) \, \ee^{-W^{v_\eta}} = \lim_{K \to \infty} \int \mu_{h^{-1}}(\dd \phi) \, \ee^{-W_K^{v_\eta}}\,.
\end{equation*}

Next, we apply the Hubbard-Stratonovich transformation \eqref{Hubbard_Stratonovich_formula} to the Gaussian measure $\mu_{v_\eta}$ with $f(x) = \abs{\cal P_K \phi(x)}^2 - \varrho_K(x)$, which yields
\begin{equation*}
\int \mu_{h^{-1}}(\dd \phi) \, \ee^{-W_K^{v_\eta}} = 
\int \mu_{v_\eta}(\dd \xi) \, \ee^{- \ii \scalar{\xi}{\varrho_K}}
\int \mu_{h^{-1}}(\dd \phi) \, \ee^{\ii \scalar{\cal P_K \phi}{\xi \cal P_K \phi}}\,.
\end{equation*}
We recall from the definitions \eqref{phi_series} and \eqref{def_Pk} that the integral over $\cal P_K \phi$ is a finite-dimensional complex Gaussian integral with covariance $(h^{-1})_K$.
Thus, Lemma \ref{complex_Gaussian} yields
\begin{equation} \label{41_1}
\int \mu_{h^{-1}}(\dd \phi) \, \ee^{\ii \scalar{\cal P_K \phi}{\xi \cal P_K \phi}} = \frac{\det \pb{(h - \ii \xi)\vert_K}^{-1}}{\det \pb{h\vert_K}^{-1}}
= \exp \qBB{- \tr \pBB{ \log \pb{\p{h - \ii \xi}\vert_K} - \log \p{h\vert_K}}}
\,.
\end{equation}
Using the integral representation \eqref{integral_identity} we therefore get
\begin{equation} \label{41_2}
\int \mu_{h^{-1}}(\dd \phi) \, \ee^{\ii \scalar{\cal P_K \phi}{\xi \cal P_K \phi}}
= \exp \hBB{\int_0^\infty \dd t \,  \tr \qBB{\frac{1}{(t + h - \ii \xi)\vert_K} - \frac{1}{(t + h)\vert_K}}}\,.
\end{equation}

What remains, therefore, is to show that for every continuous $\xi \col \Lambda \to \R$ we have
\begin{equation}
f_2(\xi) = \lim_{K \to \infty} \pBB{\int_0^\infty \dd t \,  \tr \qBB{\frac{1}{(t + h - \ii \xi)\vert_K} - \frac{1}{(t + h)\vert_K}} - \ii \scalar{\xi}{\varrho_K}}\,.
\end{equation}
By \eqref{def_varrho_K} we have
\begin{equation} \label{41_3}
\scalar{\xi}{\varrho_K} = \tr \pbb{ \xi \frac{\cal P_K}{h}} = \int_0^\infty \dd t \, \tr \pbb{ \xi \, \frac{\cal P_K}{(t + h)^2}}
= 
\int_0^\infty \dd t \, \tr \pbb{\frac{1}{t + h} \,  \xi_K \, \frac{1}{t + h}}\,.
\end{equation}
We conclude that it remains to show that
\begin{equation} \label{41_4}
f_2(\xi) = \lim_{K \to \infty} \int_0^\infty \dd t \,  \tr \qBB{ \cal P_K \pBB{\frac{1}{(t + h - \ii \xi_K)} - \frac{1}{t + h} - \frac{1}{t + h} \,  \ii \xi_K \, \frac{1}{t + h}}}\,.
\end{equation}
To that end, we perform a resolvent expansion to get
\begin{equation} \label{res_exp}
\frac{1}{(t + h - \ii \xi_K)} - \frac{1}{t + h} - \frac{1}{t + h} \,  \ii \xi_K \, \frac{1}{t + h}
= \frac{1}{t + h} \,  \ii \xi_K \, \frac{1}{t + h} \ii \xi_K \, \frac{1}{t + h - \ii \xi_K}\,.
\end{equation}

By H\"older's inequality, using that $\norm{\xi_K}_{\fra S^\infty} \leq \norm{\xi}_{L^\infty}$, we find from \eqref{res_exp} and Lemma \ref{lem:resolvent_tools} (ii) that
\begin{equation*}
\normbb{\frac{1}{(t + h - \ii \xi_K)} - \frac{1}{t + h} - \frac{1}{t + h} \,  \ii \xi_K \, \frac{1}{t + h}}_{\fra S^1} \leq \norm{\xi}_{L^\infty}^2 \normbb{\frac{1}{t + h}}_{\fra S^3}^3\,,
\end{equation*}
which is integrable by \eqref{res_exp_3}.

By dominated convergence and a resolvent expansion of the integrand of \eqref{def_f2}, to show \eqref{41_4} it therefore suffices to show that for all $t \geq 0$ we have
\begin{equation} \label{trace_conv}
\lim_{K \to \infty} \tr \pBB{\frac{1}{t + h} \,  \ii \xi_K \, \frac{1}{t + h} \ii \xi_K \, \frac{1}{t + h - \ii \xi_K}} = \tr  \pBB{\frac{1}{t + h} \,  \ii \xi \, \frac{1}{t + h} \ii \xi \, \frac{1}{t + h - \ii \xi}}\,.
\end{equation}
To that end, we estimate
\begin{equation} \label{trace_conv1}
\normbb{\frac{1}{t + h} \xi_K - \frac{1}{t+h} \xi}_{\fra S^2} \leq \norm{\xi}_{L^\infty} \normbb{\frac{1 - \cal P_K}{t + h}}_{\fra S^2} = \norm{\xi}_{L^\infty} \pBB{\sum_{k > K} \frac{1}{\lambda_k^2}}^{1/2} \to 0
\end{equation}
as $K \to \infty$. Moreover,
\begin{multline} \label{trace_conv2}
\normbb{\frac{1}{t + h - \ii \xi_K} - \frac{1}{t + h - \ii \xi}}_{\fra S^\infty} = \normbb{\frac{1}{t + h - \ii \xi_K} (\xi - \xi_K) \frac{1}{t + h - \ii \xi}}_{\fra S^\infty}
\\
\leq \|\xi\|_{L^{\infty}} \normbb{\frac{1}{t + h - \ii \xi_K} (1 - \cal P_K)}_{\fra S^\infty} \normbb{\frac{1}{t + h - \ii \xi}}_{\fra S^\infty} +
\|\xi\|_{L^{\infty}}  \normbb{\frac{1}{t + h - \ii \xi_K}}_{\fra S^\infty} \normbb{(1 - \cal P_K) \frac{1}{t + h - \ii \xi}}_{\fra S^\infty}\,,
\end{multline}
which tends to zero as $K \to \infty$, since by a resolvent expansion
\begin{equation*}
\normbb{\frac{1}{t + h - \ii \xi_K} (1 - \cal P_K)}_{\fra S^\infty} \leq \pbb{1 + \normbb{\frac{1}{t + h - \ii \xi_K} \xi_K }_{\fra S^\infty}}\normbb{\frac{1 - \cal P_K}{t + h}}_{\fra S^\infty} \to 0
\end{equation*}
as $K \to \infty$, by Lemma \ref{lem:resolvent_tools} (ii). By analogous arguments, the second term in \eqref{trace_conv2} tends to zero as $K \to \infty$.

Using \eqref{trace_conv1} and \eqref{trace_conv2}
it is now easy to deduce \eqref{trace_conv}.
\end{proof}

\subsection{Correlation functions}
\label{functional_integral_classical_gamma}

In this subsection we derive the functional integral representation for the correlation functions \eqref{def_gamma} and \eqref{gamma_renormalized}. We take over the notations from Section \ref{functional_integral_classical_Z}.
Analogously to \eqref{Gamma_eta} and \eqref{def_eh_gamma_eta}, for $\eta > 0$ we define
\begin{equation} \label{def_gamma_eta}
\gamma_{p,\eta} \deq \frac{p!}{\zeta_\eta} P_p \int \mu_{v_\eta}(\dd \xi) \, \ee^{f_2(\xi)}\, \pbb{\frac{1}{\kappa - \Delta/2 - \ii \xi}}^{\otimes p}\,.
\end{equation}
as well as its Wick-ordered version
\begin{equation} \label{def_gamma_eta_hat}
\wh \gamma_{p,\eta} \deq \frac{p!}{\zeta_\eta} P_p \int \mu_{v_\eta}(\dd \xi) \, \ee^{f_2(\xi)}\, \pbb{\frac{1}{\kappa - \Delta/2 - \ii \xi} - \frac{1}{\kappa - \Delta/2}}^{\otimes p}\,.
\end{equation}
The main result of this subsection is the following.

\begin{proposition} \label{prop:gamma_regularization}
For any $p \in \N^*$ we have
\begin{equation*}
\lim_{\eta \to 0} \norm{\wh \gamma_{p,\eta} - \wh \gamma_p}_{L^\infty} = 0\,,
\end{equation*}
where $\wh \gamma_p$ was defined in \eqref{def_gamma_hat}.
\end{proposition}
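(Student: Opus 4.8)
The plan is to mirror, on the classical side, the argument already carried out for the quantum reduced density matrices in Proposition~\ref{prop:red_dens_matrix} and Proposition~\ref{prop:gamma_hat}, reusing the machinery assembled in the proof of Proposition~\ref{prop:Z_field_Wick_ordered}. First I would establish the analogue of Proposition~\ref{prop:red_dens_matrix}(ii): namely that for each fixed $\eta>0$ and $K\in\N$, applying the Hubbard--Stratonovich transformation \eqref{Hubbard_Stratonovich_formula} to $\mu_{v_\eta}$ with $f(x)=\abs{P_K\phi(x)}^2-\varrho_K(x)$ and then performing the finite-dimensional complex Gaussian integral over $P_K\phi$ via Wick's rule for complex Gaussians (Lemma~\ref{complex_Wick_theorem}) yields
\begin{equation*}
\frac{1}{\zeta}\int\mu_{h^{-1}}(\dd\phi)\,\ee^{-W_K^{v_\eta}}\,\bar\phi(\tilde x_1)\cdots\phi(x_p)
= \frac{p!}{\zeta_{\eta,K}}P_p\int\mu_{v_\eta}(\dd\xi)\,\ee^{-\ii\scalar{\xi}{\varrho_K}+(\text{log det})}\,\pB{\big((h-\ii\xi)\vert_K\big)^{-1}}^{\otimes p},
\end{equation*}
where the log-det factor is exactly the one appearing in \eqref{41_1}--\eqref{41_2}, and $\zeta_{\eta,K}$ is the corresponding normalization. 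Here I would need to be careful that Wick's rule produces $P_p$-symmetrized tensor powers of the covariance $\big((h-\ii\xi)\vert_K\big)^{-1}$, and that the $\varrho_K$ subtraction in the Wick-ordered interaction translates, after the Gaussian integral, into precisely the phase $\ee^{-\ii\scalar{\xi}{\varrho_K}}$; this is the direct classical counterpart of the bookkeeping in Lemma~\ref{lem:quantum_density}.

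Next I would pass $K\to\infty$. Exactly as in \eqref{41_3}--\eqref{trace_conv2}, combining the resolvent expansion \eqref{res_exp}, H\"older in Schatten spaces, and Lemma~\ref{lem:resolvent_tools}, the exponent $\int_0^\infty\dd t\,\tr[\,\cdots\,]-\ii\scalar{\xi}{\varrho_K}$ converges to $f_2(\xi)$, and, crucially, is dominated by $\norm{\xi}_{L^\infty}^2\tr h^{-2}$ uniformly in $K$, which is $\mu_{v_\eta}$-integrable (use that $\xi$ has Gaussian, hence all, moments under $\mu_{v_\eta}$). Simultaneously $\big((h-\ii\xi)\vert_K\big)^{-1}\to(h-\ii\xi)^{-1}=(\kappa-\Delta/2-\ii\xi)^{-1}$ in, say, $\fra S^\infty$ on each fixed finite block, and in fact the relevant kernel entries converge; since we only ever need the operator restricted to acting between the $p$ spatial arguments (these are bounded-kernel objects for $\eta>0$), dominated convergence gives $\gamma_{p,\eta}$ as in \eqref{def_gamma_eta} and, after expanding the tensor powers of the difference exactly as in the proof of Proposition~\ref{prop:gamma_hat}, $\wh\gamma_{p,\eta}$ as in \eqref{def_gamma_eta_hat}. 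In parallel, Lemma~\ref{lem:constr_W}(i) gives $\int\mu_{h^{-1}}\ee^{-W_K^{v_\eta}}\to\int\mu_{h^{-1}}\ee^{-W^{v_\eta}}$, so the normalizations match up.

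Finally I would send $\eta\to0$. On the field-theory side we use $\lim_{\eta\to0}\norm{v-v_\eta}_{L^\infty}=0$ together with Lemma~\ref{lem:constr_W}(ii) (nonnegativity of $W^{v_\eta}$, $W^v$ because $v$, $v_\eta$ are of positive type) to get $\zeta_\eta\to\zeta$ and, more generally, $L^2(\mu_{h^{-1}})$-convergence of $\ee^{-W^{v_\eta}}\bar\phi\cdots\phi$-type integrands, whence $\gamma_{p,\eta}\to\gamma_p$ and $\wh\gamma_{p,\eta}\to\wh\gamma_p$ pointwise in the kernel arguments. To upgrade pointwise convergence of the kernels to convergence in $L^\infty$ (i.e.\ uniform in $x_1,\dots,\tilde x_p$) one observes that the kernels $(\kappa-\Delta/2-\ii\xi)^{-1}_{x,\tilde x}$ and their $\xi\to0$ differences are, after the $\mu_{v_\eta}(\dd\xi)\ee^{f_2(\xi)}$ average, dominated by $(\kappa-\Delta/2)^{-1}_{x,\tilde x}$-type bounds independent of $\eta$ (using $\re f_2\le 0$ from Proposition~\ref{prop:Z_field_Wick_ordered}(ii) and the heat-kernel / Feynman--Kac representation of the resolvent kernel), so the convergence is in fact uniform; here the Wick ordering is essential, since it is the cancellation producing \eqref{res_exp} that makes the diagonal behaviour $\eta$-uniform in $d=2,3$. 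The main obstacle I expect is precisely this last point --- obtaining an $\eta$-uniform, $\xi$-integrable bound on the Wick-ordered kernel $\big[(\kappa-\Delta/2-\ii\xi)^{-1}-(\kappa-\Delta/2)^{-1}\big]^{\otimes p}$ averaged against $\ee^{f_2(\xi)}\mu_{v_\eta}(\dd\xi)$ that is strong enough to give uniform-in-arguments convergence; everything else is a routine transcription of the quantum-side arguments with $W^{\nu,0}(-\kappa+\ii\sigma)$ replaced by $\ee^{-(\kappa-\Delta/2-\ii\xi)\nu}$ and the Riemann sums \eqref{K_inv} replaced by the integrals \eqref{Riemann_integral}.
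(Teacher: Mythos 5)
Your first two steps (Hubbard--Stratonovich with $f=\abs{P_K\phi}^2-\varrho_K$, the complex Wick rule, and the $K\to\infty$ limit via the resolvent expansion \eqref{res_exp} and Lemma \ref{lem:resolvent_tools}) are sound and coincide with what the paper does in Proposition \ref{prop:gamma_weak} and Corollary \ref{cor:hat_gamma} --- but note the paper deliberately carries out that limit only in the \emph{weak operator topology}, testing against $f_1\otimes\cdots\otimes f_p$ with $f_i\in P_{K_*}\cal H$, precisely because kernel-level control is the hard part.

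The genuine gap is in your last step. Pointwise convergence of the kernels together with an $\eta$-uniform dominating bound does \emph{not} imply convergence in $L^\infty$: uniform boundedness plus pointwise convergence gives nothing about uniformity without an equicontinuity or quantitative comparison argument, and you supply neither. (Two further problems hide in the same sentence: a bound of the type $(\kappa-\Delta/2)^{-1}_{x,\tilde x}$ is not even finite on the diagonal for $d\geq 2$, so it cannot serve as an $L^\infty$ dominant; and ``pointwise convergence to $\wh\gamma_p$'' presupposes that the kernel of $\wh\gamma_p$ is defined pointwise, which is itself part of what is being proved --- in the paper the continuity of the limiting kernel is an output of the argument, not an input.) The paper closes exactly this gap by a different mechanism: it first proves that $(\wh\gamma_{p,\eta})_{\eta>0}$ is Cauchy in $L^\infty$ (Lemma \ref{lem:gamma_cauchy}) by inserting the quantum object at a fixed small $\nu$, i.e.\ by combining the $\eta$-uniform convergence $\nu^p\wh\Gamma_{p,\eta,\nu}\to\wh\gamma_{p,\eta}$ of Proposition \ref{prop:main_conv} with the fixed-$\nu$ convergence $\wh\Gamma_{p,\eta,\nu}\to\wh\Gamma_{p,\nu}$ of Proposition \ref{prop:gamma_hat}; the quantum many-body system thus serves as a regularization of the classical theory. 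The $L^\infty$ limit is then identified with $\wh\gamma_p$ through the weak convergence of Corollary \ref{cor:hat_gamma}. As the paper's remark after Proposition \ref{prop:gamma_regularization} points out, a direct classical-side proof of the Cauchy property is possible but would require quantitative estimates of the type developed for Proposition \ref{prop:main_conv} (the $T_p$ bounds, Brownian-bridge continuity, comparison of $v_\eta$ and $v_{\tilde\eta}$ inside both the kernels and $\ee^{f_2}$ uniformly in the spatial arguments), not merely the algebraic cancellation \eqref{res_exp}; so to complete your route you would have to develop that machinery rather than appeal to dominated convergence.
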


The rest of this subsection is devoted to the proof of Proposition \ref{prop:gamma_regularization}.

We recall the following notion of convergence of operators.
Let $p \in \N^*$ and $(\ops_\eta)_{\eta > 0}$ be a family of operators on $P_p \cal H^{\otimes p}$. We say that $\ops_\eta$ converges to the operator $\ops$ in the \emph{weak operator topology} if for all $F,G \in P_p \cal H^{\otimes p}$ we have
\begin{equation*}
\lim_{\eta \to 0} \scalarb{F}{(\ops_\eta - \ops) G} = 0\,.
\end{equation*}

\begin{proposition} \label{prop:gamma_weak}
For all $p \in \N^*$, as $\eta \to 0$, the operator $\gamma_{p, \eta}$ converges to $\gamma_{p}$ in the weak operator topology.
\end{proposition}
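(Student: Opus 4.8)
The plan is to pass to the limit $\eta\to 0$ in the explicit functional integral formula \eqref{def_gamma_eta} for $\gamma_{p,\eta}$, testing against fixed vectors $F,G\in P_p\cal H^{\otimes p}$. Writing $A(\xi)\deq (\kappa-\Delta/2-\ii\xi)^{-1}$, we have
\begin{equation*}
\scalarb{F}{\gamma_{p,\eta}\,G} = \frac{p!}{\zeta_\eta}\int\mu_{v_\eta}(\dd\xi)\,\ee^{f_2(\xi)}\,\scalarb{F}{P_p A(\xi)^{\otimes p}P_p\,G}\,,
\end{equation*}
and the target is to show this converges to the corresponding expression for $\gamma_p$ as $\eta\to 0$. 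The strategy has three ingredients: (1) a representation of the limit object $\gamma_p$ as an analogous integral against the limiting Gaussian measure $\mu_v$ with the same integrand, obtained by first applying to the classical correlation function \eqref{def_gamma} a Hubbard-Stratonovich transformation and Wick's rule for the complex Gaussian $\mu_{h^{-1}}$ exactly as in the proof of Proposition \ref{prop:Z_field_Wick_ordered} (this mirrors the manipulations \eqref{41_1}--\eqref{41_4}, now with the extra insertion $\bar\phi(\tilde x_1)\cdots\phi(x_p)$, which under Wick's rule produces the kernel $p!\,P_p A(\xi)^{\otimes p}P_p$); (2) a coupling of the measures $\mu_{v_\eta}$ and $\mu_v$ — since $v_\eta=v*\delta_{\eta,L,\varphi}$, we can realize $\xi_\eta$ and $\xi$ on a common probability space with $\xi_\eta\to\xi$ locally uniformly and with a uniform Gaussian tail bound on $\norm{\xi_\eta}_{L^\infty}$ on any fixed region (using the series representation analogous to \eqref{sigma_X_rep}); (3) a dominated-convergence argument.

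For step (3), the key estimates are already essentially in place. From Proposition \ref{prop:Z_field_Wick_ordered}(ii), $\re f_2(\xi)\leq 0$, so $\abs{\ee^{f_2(\xi)}}\leq 1$ uniformly. The integrand $\scalarb{F}{P_p A(\xi)^{\otimes p}P_p G}$ is bounded by $\norm{F}\norm{G}\,\norm{A(\xi)}_{\fra S^\infty}^p\leq \norm{F}\norm{G}\,\kappa^{-p}$ uniformly in $\xi$ (by Lemma \ref{lem:resolvent_tools}(ii) with $\ops=\xi$, $t=0$), which gives a uniform-in-$\eta$ domination; here it is convenient to take $F,G$ in a dense class, e.g.\ finite linear combinations of $P_p(u_{k_1}\otimes\cdots\otimes u_{k_p})$, and extend by density at the end using the uniform operator bound $\norm{\gamma_{p,\eta}}_{\fra S^\infty}\leq p!\,\kappa^{-p}/\zeta_\eta$ together with $\zeta_\eta\to\zeta>0$. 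For the pointwise convergence of the integrand along the coupling, one needs $f_2(\xi_\eta)\to f_2(\xi)$ and $A(\xi_\eta)\to A(\xi)$ in a suitable sense: the convergence $A(\xi_\eta)\to A(\xi)$ in $\fra S^\infty$ (hence weakly, which suffices after composing with the trace-class $P_p$ and the fixed vectors) follows from a resolvent identity $A(\xi_\eta)-A(\xi)=\ii A(\xi_\eta)(\xi_\eta-\xi)A(\xi)$ together with local uniform convergence $\xi_\eta\to\xi$ and the elliptic regularity built into $h^{-1}$, exactly in the spirit of \eqref{trace_conv2}; the convergence $f_2(\xi_\eta)\to f_2(\xi)$ is obtained by the same resolvent-expansion and dominated-convergence argument used in \eqref{res_exp}--\eqref{trace_conv}, now with $\xi_K$ replaced by $\xi_\eta$ and using the uniform bound \eqref{res_exp_3}.

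I expect the main obstacle to be step (1): verifying cleanly that the limiting object $\gamma_p$ from \eqref{def_gamma} admits the integral representation $\gamma_p=\frac{p!}{\zeta}P_p\int\mu_v(\dd\xi)\,\ee^{f_2(\xi)}A(\xi)^{\otimes p}P_p$, since this requires justifying the interchange of the $\phi$-Gaussian integral, the $\xi$-Gaussian integral, and the limit $K\to\infty$ of the Wick-ordering cutoff on the integrand $\ee^{-W_K^{v}}\bar\phi(\tilde x_1)\cdots\phi(x_p)$, and in particular controlling the moments $\int\mu_{h^{-1}}(\dd\phi)\,\ee^{-W_K^v}\abs{\bar\phi(\tilde x_1)\cdots\phi(x_p)}$ uniformly in $K$ — this is where one invokes the sub-Gaussianity of the measure $\frac{1}{\zeta}\mu_{h^{-1}}\ee^{-W}$ noted after \eqref{def_gamma} and the $L^2(\mu_{h^{-1}})$-convergence of $W_K^v$ from Lemma \ref{lem:constr_W}. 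A secondary subtlety is that convergence is only claimed in the weak operator topology, so one does not need norm control on the difference $\gamma_{p,\eta}-\gamma_p$; this is what makes the argument go through with only the crude uniform bound $\norm{A(\xi)}_{\fra S^\infty}\leq\kappa^{-1}$ rather than the sharper Schatten estimates needed elsewhere.
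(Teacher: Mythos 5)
Your steps (2)--(3) and the closing density argument are fine, but the route stands or falls with step (1), and there is a genuine gap there: the unregularized objects you need do not exist under Assumption \ref{ass:v}. The limiting Gaussian measure $\mu_v$ with covariance $v(x-\tilde x)$ is only guaranteed, for $v$ merely continuous and of positive type, to produce a field $\xi$ that is a.s.\ in $L^2(\Lambda)$; by the Belyaev dichotomy for stationary Gaussian fields, a continuous covariance is compatible with sample paths that are a.s.\ \emph{unbounded on every open set}. Consequently your claimed coupling with $\xi_\eta \to \xi$ locally uniformly and a uniform Gaussian tail bound on $\norm{\xi_\eta}_{L^\infty}$ is not available, the quantity $f_2(\xi)$ (whose finiteness in Proposition \ref{prop:Z_field_Wick_ordered} is proved via $\norm{\xi}_{L^\infty}$, cf.\ \eqref{res_exp_3}) is not defined on the support of $\mu_v$, and even the resolvent $(\kappa-\Delta/2-\ii\xi)^{-1}$ requires a separate (form-sum) construction for unbounded $\xi$ before Lemma \ref{lem:resolvent_tools} (ii) can be invoked. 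Hence the representation $\gamma_p=\frac{p!}{\zeta}P_p\int\mu_v(\dd\xi)\,\ee^{f_2(\xi)}A(\xi)^{\otimes p}P_p$, which you identify as the main obstacle, is not merely an interchange-of-limits issue on the $\phi$-side: under the paper's hypotheses its right-hand side is not a well-defined object, so the plan cannot be completed as stated (it would go through if $v$ were smooth enough that $\xi$ has a.s.\ bounded continuous versions, but that is strictly less general than Assumption \ref{ass:v}).

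The paper's proof is arranged precisely to avoid ever removing the regularization inside the $\xi$-integral. For fixed $\eta>0$ it relates $\zeta_\eta\gamma_{p,\eta}$ to the $\phi$-integral with the \emph{regularized} interaction $W_K^{v_\eta}$ via Hubbard--Stratonovich and the complex Wick rule, taking only the limit $K\to\infty$ at fixed $\eta$ (reusing \eqref{41_1}--\eqref{41_4} and \eqref{trace_conv2}); the limit $\eta\to 0$ is then performed entirely on the $\phi$-side, where $\abs{\ee^{-x}-\ee^{-y}}\leq\abs{x-y}$, Cauchy--Schwarz, and Lemma \ref{lem:constr_W} (with $\norm{v-v_\eta}_{L^\infty}\to 0$) show that $\int\mu_{h^{-1}}(\dd\phi)\,\ee^{-W^{v_\eta}_K}\prod_i\scalar{f_i}{\phi}\scalar{\phi}{g_i}$ converges to the matrix element of $\zeta\gamma_p$. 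If you want to salvage your scheme, you would have to either restrict to smoother $v$ or replace steps (1)--(2) by a comparison of the two regularized $\xi$-integrals at the same $\eta$, which is exactly the paper's strategy.
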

\begin{proof}
The proof is analogous to that of Proposition \ref{prop:Z_field_Wick_ordered}. Note first that, by \eqref{def_gamma_eta}, Lemma \ref{lem:resolvent_tools} (ii), Proposition \ref{prop:Z_field_Wick_ordered} (ii), and $\|P_p\|_{\fra S^\infty} \leq 1$, we have
\begin{equation*}
\norm{\gamma_{p,\eta}}_{\fra S^\infty} \leq \frac{p!}{\zeta_\eta} \frac{1}{\kappa^p}\,,
\end{equation*}
so that, by Proposition \ref{prop:Z_field_Wick_ordered} (i), $\norm{\gamma_{p,\eta}}_{\fra S^\infty}$ is uniformly bounded for all $\eta > 0$. Thus, by a density argument, it suffices to prove that for all $K_* \in \N^*$ and all $f_1, \dots, f_p, g_1, \dots, g_p \in \cal P_{K_*} \cal H$, the quantity $\scalarb{f_1 \otimes \cdots \otimes f_p}{(\gamma_{p,\eta} - \gamma_p)\,  g_1 \otimes \cdots \otimes g_p}$ tends to $0$ as $\eta \to 0$.

By the definition \eqref{def_gamma}, for $f_1, \dots, f_p, g_1, \dots, g_p \in \cal P_{K_*} \cal H$, we have
\begin{equation}
\scalarb{f_1 \otimes \cdots \otimes f_p}{\zeta \gamma_p\,  g_1 \otimes \cdots \otimes g_p} = \int \mu_{h^{-1}}(\dd \phi) \, \ee^{-W^v} \, \prod_{i = 1}^p \scalar{f_i}{ \phi} \scalar{\phi}{g_i}\,.
\end{equation}
Using that for $x,y \geq 0$ we have $\abs{\ee^{-x} - \ee^{-y}} \leq \abs{x - y}$ and Cauchy-Schwarz, we obtain
\begin{multline*}
\absbb{\int \mu_{h^{-1}}(\dd \phi) \, \pb{\ee^{-W^v} - \ee^{-W^{v_\eta}_K}} \, \prod_{i = 1}^p \scalar{f_i}{ \phi} \scalar{\phi}{g_i}}
\\
\leq \pbb{\int \mu_{h^{-1}}(\dd \phi) \abs{W^v - W^{v_\eta}_K}^2}^{1/2}
\pbb{\int \mu_{h^{-1}}(\dd \phi) \prod_{i = 1}^p \abs{\scalar{f_i}{ \phi}}^2 \abs{\scalar{\phi}{g_i}}^2}^{1/2}
\end{multline*}
The second factor is finite -- in fact equal to $\scalar{\psi}{\gamma_{2p}^0 \psi}$ with $\psi = \bigotimes_{i = 1}^p f_i \otimes g_i$. By Lemma \ref{lem:constr_W}, we therefore deduce that
\begin{equation} \label{zgamma_conv}
\scalarb{f_1 \otimes \cdots \otimes f_p}{\zeta \gamma_p\,  g_1 \otimes \cdots \otimes g_p} = \lim_{\eta \to 0} \lim_{K \to \infty} \int \mu_{h^{-1}}(\dd \phi) \, \ee^{-W^{v_\eta}_K} \, \prod_{i = 1}^p \scalar{f_i}{ \phi} \scalar{\phi}{g_i}\,.
\end{equation}

Next, for $K \geq K_*$, the Hubbard-Stratonovich transformation \eqref{Hubbard_Stratonovich_formula} with $f(x) = \abs{\cal P_K \phi(x)}^2 - \varrho_K(x)$ yields
\begin{multline*}
\int \mu_{h^{-1}}(\dd \phi) \, \ee^{-W^{v_\eta}_K} \, \prod_{i = 1}^p \scalar{f_i}{ \phi} \scalar{\phi}{g_i}
\\
= \int \mu_{v_\eta}(\dd \xi) \, \ee^{- \ii \scalar{\xi}{\varrho_K}}
\int \mu_{h^{-1}}(\dd \phi) \, \ee^{\ii \scalar{\cal P_K \phi}{\xi \cal P_K \phi}} \prod_{i = 1}^p \scalar{f_i}{\cal P_K \phi} \scalar{\cal P_K \phi}{g_i} \,,
\end{multline*}
where we used that $f_i = \cal P_{K} f_i$ (and similarly for $g_i$), since by assumption $f_i = \cal P_{K_*} f_i$ and $\cal P_K \cal P_{K_*} = \cal P_{K_*}$. The integral over $\cal P_K \phi$ is a finite-dimensional complex Gaussian integral with covariance $(h^{-1})_K$, and using the Wick rule from Lemma \ref{complex_Wick_theorem} yields
\begin{equation*}
\int \mu_{h^{-1}}(\dd \phi) \, \ee^{\ii \scalar{\cal P_K \phi}{\xi \cal P_K \phi}} \prod_{i = 1}^p \scalar{f_i}{\cal P_K \phi} \scalar{\cal P_K \phi}{g_i}  = \frac{\det \pb{(h - \ii \xi)\vert_K}^{-1}}{\det \pb{h\vert_K}^{-1}} \sum_{\pi \in S_p} \scalarbb{f_i}{\frac{1}{(h - \ii \xi)_K} g_{\pi(i)}}\,.
\end{equation*}
Proceeding as in the proof of Proposition \ref{prop:Z_field_Wick_ordered} (see \eqref{41_1}, \eqref{41_2}, \eqref{41_3}), we therefore find
\begin{multline*}
\int \mu_{h^{-1}}(\dd \phi) \, \ee^{-W^{v_\eta}_K} \, \prod_{i = 1}^p \scalar{f_i}{ \phi} \scalar{\phi}{g_i}
= \int \mu_{v_\eta}(\dd \xi) \, \sum_{\pi \in S_p} \scalarbb{f_i}{\frac{1}{(h - \ii \xi)_K} g_{\pi(i)}}
\\
\times \exp \hBB{\int_0^\infty \dd t \,  \tr \qBB{ \cal P_K \pBB{\frac{1}{(t + h - \ii \xi_K)} - \frac{1}{t + h} - \frac{1}{t + h} \,  \ii \xi_K \, \frac{1}{t + h}}}} \,,
\end{multline*}
where the factor on the last line is bounded in absolute value by one. From the proof of Proposition \ref{prop:Z_field_Wick_ordered}, see \eqref{41_4}, we know that as $K \to \infty$ the last line converges to $\ee^{f_2(\xi)}$ for all continuous $\xi$.

Next, we note that, by assumption on $K$ and $f_i,g_i$, we have
\begin{equation*}
\scalarbb{f_i}{\frac{1}{(h - \ii \xi)_K} g_{\pi(i)}} = \scalarbb{f_i}{\frac{1}{h - \ii \xi_K} g_{\pi(i)}}\,,
\end{equation*}
so that Lemma \ref{lem:resolvent_tools} (ii) implies
\begin{equation*}
\absbb{\scalarbb{f_i}{\frac{1}{(h - \ii \xi)_K} g_{\pi(i)}}} \leq \norm{f_i}_{L^2} \norm{g_{\pi(i)}}_{L^2} \frac{1}{\kappa}\,.
\end{equation*}
Using \eqref{trace_conv2} and dominated convergence, we therefore deduce that
\begin{align*}
\lim_{K \to \infty} \int \mu_{h^{-1}}(\dd \phi) \, \ee^{-W^{v_\eta}_K} \, \prod_{i = 1}^p \scalar{f_i}{ \phi} \scalar{\phi}{g_i}
&= \int \mu_{v_\eta}(\dd \xi) \, \ee^{f_2(\xi)} \sum_{\pi \in S_p} \scalarbb{f_i}{\frac{1}{h - \ii \xi} g_{\pi(i)}}
\\
&= \scalarb{f_1 \otimes \cdots \otimes f_p}{\zeta_\eta \gamma_{p,\eta}\,  g_1 \otimes \cdots \otimes g_p}\,.
\end{align*}
The claim now follows from \eqref{zgamma_conv} and Proposition \ref{prop:Z_field_Wick_ordered} (i).
\end{proof}

\begin{corollary} \label{cor:hat_gamma}
For all $p \in \N^*$, as $\eta \to 0$, the operator $\wh \gamma_{p, \eta}$ converges to $\wh \gamma_{p}$ in the weak operator topology.
\end{corollary}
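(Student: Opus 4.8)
The plan is to mirror the proof of Proposition~\ref{prop:gamma_hat} almost verbatim, exploiting the fact that the classical objects \eqref{def_gamma_eta}, \eqref{def_gamma_eta_hat} have exactly the same algebraic structure as their quantum counterparts, under the dictionary $\zeta_\eta\leftrightarrow\cal Z_\eta$, $\ee^{f_2(\xi)}\leftrightarrow\ee^{F_2(\sigma)}$, $\tfrac{1}{\kappa-\Delta/2-\ii\xi}\leftrightarrow(K(-\kappa+\ii\sigma)^{-1})^{0,0}$, $\tfrac{1}{\kappa-\Delta/2}\leftrightarrow(K(-\kappa)^{-1})^{0,0}$, $\mu_{v_\eta}\leftrightarrow\mu_{\cal C_\eta}$, and $\gamma^0_{p-k}\leftrightarrow\Gamma^0_{p-k}$. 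All the genuine analytic content (the Hubbard--Stratonovich manipulations and the double limit $K\to\infty$, $\eta\to0$) has already been carried out in Propositions~\ref{prop:Z_field_Wick_ordered} and~\ref{prop:gamma_weak}, so what is left is a purely algebraic rewriting followed by a soft limiting argument.

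First I would expand the tensor power in \eqref{def_gamma_eta_hat}. Abbreviating $h=\kappa-\Delta/2$, and using that $P_p=(P_p)^2$ commutes with any $p$-fold tensor power, collecting the terms with exactly $k$ factors equal to $\tfrac{1}{h-\ii\xi}$ gives
\begin{equation*}
\wh\gamma_{p,\eta} = \frac{p!}{\zeta_\eta}\sum_{k=0}^p\binom pk(-1)^{p-k}P_p\int\mu_{v_\eta}(\dd\xi)\,\ee^{f_2(\xi)}\,\pbb{\frac{1}{h-\ii\xi}}^{\otimes k}\otimes\pb{h^{-1}}^{\otimes(p-k)}P_p\,.
\end{equation*}
Exactly as in the proof of Proposition~\ref{prop:gamma_hat}, I would then use that $h^{-1}=\gamma_1^0$ (the free one-particle correlation function, by \eqref{cov_h}) together with $\gamma_{p-k}^0=(p-k)!\,P_{p-k}(h^{-1})^{\otimes(p-k)}$ (by Lemma~\ref{complex_Wick_theorem}), the fact that $(h^{-1})^{\otimes(p-k)}$ commutes with $P_{p-k}$, the identity $P_p(P_k\otimes P_{p-k})P_p=P_p$, the definition \eqref{def_gamma_eta} of $\gamma_{k,\eta}$, and the elementary count $\binom pk^2 k!\,(p-k)! = p!\binom pk$, to rewrite the above as
\begin{equation*}
\wh\gamma_{p,\eta}=\sum_{k=0}^p\binom pk^2(-1)^{p-k}P_p\pb{\gamma_{k,\eta}\otimes\gamma_{p-k}^0}P_p\,.
\end{equation*}
This is precisely the identity \eqref{gamma_renormalized} with each $\gamma_k$ replaced by the regularized $\gamma_{k,\eta}$.

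It then remains to let $\eta\to0$ in this fixed, finite linear combination. By Proposition~\ref{prop:gamma_weak}, $\gamma_{k,\eta}\to\gamma_k$ in the weak operator topology for every $1\le k\le p$ (the case $k=0$ being trivial since $\gamma_{0,\eta}=1$), and by the bound $\norm{\gamma_{k,\eta}}_{\fra S^\infty}\le \tfrac{k!}{\zeta_\eta\kappa^k}$ established there, together with $\zeta_\eta\to\zeta>0$ from Proposition~\ref{prop:Z_field_Wick_ordered}~(i), the family $(\gamma_{k,\eta})_{\eta>0}$ is uniformly bounded in operator norm. Since $\gamma_{p-k}^0$ and $P_p$ are fixed bounded operators, the map $X\mapsto P_p(X\otimes\gamma_{p-k}^0)P_p$ preserves weak operator convergence of a uniformly bounded family: one checks this first on simple tensors $F_1\otimes F_2$, $G_1\otimes G_2$ in the arguments, where the matrix element factorizes as $\scalarb{F_1}{XG_1}\scalarb{F_2}{\gamma_{p-k}^0 G_2}$, and then extends by density of finite sums of simple tensors together with the uniform bound. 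Therefore $\wh\gamma_{p,\eta}$ converges, as $\eta\to0$, to $\sum_{k=0}^p\binom pk^2(-1)^{p-k}P_p(\gamma_k\otimes\gamma_{p-k}^0)P_p=\wh\gamma_p$ in the weak operator topology, where the last equality is \eqref{gamma_renormalized}.

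There is no real obstacle here: the only step needing more than a line of bookkeeping is the assertion that weak operator convergence passes through tensoring with a fixed bounded operator and sandwiching by $P_p$, and this is settled entirely by the uniform operator-norm bound inherited from Proposition~\ref{prop:gamma_weak}. If one wanted to streamline the write-up, one could simply state that the combinatorial identity above exhibits $\wh\gamma_{p,\eta}$ as the same fixed finite linear combination of $P_p(\gamma_{k,\eta}\otimes\gamma_{p-k}^0)P_p$ that \eqref{gamma_renormalized} gives for $\wh\gamma_p$ in terms of $\gamma_k$, and then invoke Proposition~\ref{prop:gamma_weak} coordinatewise.
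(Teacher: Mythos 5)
Your proposal is correct and follows essentially the same route as the paper: the same binomial expansion yielding $\wh \gamma_{p,\eta} = \sum_{k = 0}^p \binom{p}{k}^2 (-1)^{p - k} \, P_p \pb{\gamma_{k,\eta} \otimes \gamma_{p-k}^0} P_p$, followed by an application of Proposition \ref{prop:gamma_weak} term by term. The only difference is that you spell out the routine step (uniform operator-norm bounds plus testing on simple tensors) that the paper leaves implicit in ``the claim follows from Proposition \ref{prop:gamma_weak}''.
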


\begin{proof}
A straightforward calculation using the binomial theorem, \eqref{def_gamma_eta}--\eqref{def_gamma_eta_hat}, and $\gamma_p^0 = p! P_p (h^{-1})^{\otimes p}$ shows that
\begin{equation*}
\wh \gamma_{p,\eta} = \sum_{k = 0}^p \binom{p}{k}^2 (-1)^{p - k} \, P_p \pb{\gamma_{k,\eta} \otimes \gamma_{p-k}^0} P_p\,,
\end{equation*}
and the claim follows from Proposition \ref{prop:gamma_weak}.
\end{proof}

\begin{lemma} \label{lem:gamma_cauchy}
The family $(\wh \gamma_{p,\eta})_{\eta > 0}$ is Cauchy with respect to $\norm{\cdot}_{L^\infty}$.
\end{lemma}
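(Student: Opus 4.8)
The plan is to realise all the regularised auxiliary fields on one probability space, pass to a Feynman--Kac representation in which the field enters only through path integrals $\int_0^r\xi(\omega(s))\,\dd s$ over Brownian bridges, and then estimate $\wh\gamma_{p,\eta}-\wh\gamma_{p,\eta'}$ by telescoping, using the generalised H\"older inequality and uniform kernel moment bounds. Throughout I write $A(\xi)\deq(\kappa-\Delta/2-\ii\xi)^{-1}-(\kappa-\Delta/2)^{-1}$, so that by \eqref{def_gamma_eta_hat} the kernel of $\wh\gamma_{p,\eta}$ is $\frac{p!}{\zeta_\eta}$ times the $P_p$-symmetrisation of $\int\mu_{v_\eta}(\dd\xi)\,\ee^{f_2(\xi)}\prod_{i=1}^p A(\xi)_{x_i,\tilde x_i}$. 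For the coupling, fix one family $(X_p)_{p\in\Z^d}$ of independent standard complex Gaussians and, in analogy with \eqref{sigma_X_rep}, set $\xi_\eta(x)\deq L^{-d/2}\sum_p\frac{X_p+\ol X_{-p}}{\sqrt2}\sqrt{(\cal F_L v)(p)\,\varphi(\eta p)}\,\ee^{2\pi\ii p\cdot x/L}$; this has law $\mu_{v_\eta}$ and, for $\eta>0$, is a finite sum hence a.s.\ smooth. Since $0\le\varphi\le1$ (because $\cal F^{-1}\varphi\ge0$ and $\varphi(0)=1$), the field $\xi_\eta-\xi_{\eta'}$ is stationary centred Gaussian with $\E|\xi_\eta(x)-\xi_{\eta'}(x)|^2=\epsilon_{\eta,\eta'}\deq L^{-d}\sum_p(\cal F_L v)(p)\bigl(\sqrt{\varphi(\eta p)}-\sqrt{\varphi(\eta' p)}\bigr)^2$, and dominated convergence (the summand is $\le(\cal F_L v)(p)$ and tends to $0$) gives $\epsilon_{\eta,\eta'}\to0$ as $\eta,\eta'\to0$. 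Hence, for any continuous $\omega$ and $r>0$, the Gaussian random variables $Y_\eta\deq\int_0^r\xi_\eta(\omega(s))\,\dd s$ satisfy $\E|Y_\eta|^2=\int_0^r\!\!\int_0^r v_\eta(\omega(s)-\omega(t))\,\dd s\,\dd t\le\|v\|_{L^\infty}r^2$ (using $\|v_\eta\|_{L^\infty}\le\|v\|_{L^\infty}$) and $\E|Y_\eta-Y_{\eta'}|^2\le\epsilon_{\eta,\eta'}\,r^2$.

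By Lemma \ref{FK_continuous} applied to the time-independent potential $\ii\xi$, for smooth $\xi$ one has $A(\xi)_{x,y}=\int_0^\infty\dd r\,\ee^{-\kappa r}\int\bb W^{r,0}_{x,y}(\dd\omega)\,(\ee^{\ii Y(\omega,r)}-1)$ with $Y(\omega,r)=\int_0^r\xi(\omega(s))\,\dd s$, and, as in the proof of Proposition \ref{prop:Z_field_Wick_ordered}(ii) (the subtracted term in \eqref{def_f2} cancels exactly the part of $\ee^{\ii Y}-1$ linear in $\xi$),
\begin{equation*}
f_2(\xi)=\int_0^\infty\dd t\int_0^\infty\dd r\,\ee^{-(t+\kappa)r}\int\dd x\int\bb W^{r,0}_{x,x}(\dd\omega)\,\bigl(\ee^{\ii Y(\omega,r)}-1-\ii Y(\omega,r)\bigr)\,.
\end{equation*}
Using $|\ee^{\ii\theta}-1|\le\min(2,|\theta|)$ and $|\ee^{\ii\theta}-1-\ii\theta|\le\min(2|\theta|,\theta^2/2)$, raising to the $q$-th power, expanding into $q$ independent Brownian integrations, and applying generalised H\"older over $\mu$ — the key point being that each $Y_j=Y(\omega_j,r_j)$ is Gaussian with variance $\le\|v\|_{L^\infty}r_j^2$, so $(\E\min(2,|Y_j|)^q)^{1/q}\le\min(2,c_q\|v\|_{L^\infty}^{1/2}r_j)$ — followed by the heat-kernel bound $\sup_w\psi^r(w)\le C(L^{-d}+r^{-d/2})$ of Lemma \ref{lem:heat_estimate}, one obtains, uniformly in $x,y$ and in $\eta>0$, that $\E|A(\xi_\eta)_{x,y}|^q\le C_q$ and $\E|f_2(\xi_\eta)|^q\le C_q$, because the gain $\min(1,r)$ at small $r$ compensates the $r^{-d/2}$ singularity precisely for $d\le3$. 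The same manipulations with $Y_\eta-Y_{\eta'}$ in place of $Y_\eta$ give
\begin{equation*}
\sup_{x,y}\E|A(\xi_\eta)_{x,y}-A(\xi_{\eta'})_{x,y}|^q\le\Bigl(\int_0^\infty\dd r\,\ee^{-\kappa r}\min\bigl(2,c_q\sqrt{\epsilon_{\eta,\eta'}}\,r\bigr)\sup_w\psi^r(w)\Bigr)^q\longrightarrow0
\end{equation*}
as $\eta,\eta'\to0$ (dominated convergence), and likewise $\E|f_2(\xi_\eta)-f_2(\xi_{\eta'})|^q\to0$, using in addition $\bigl|(\ee^{\ii\theta}-1-\ii\theta)-(\ee^{\ii\theta'}-1-\ii\theta')\bigr|\le|\theta-\theta'|\min(2,\max(|\theta|,|\theta'|))$.

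To conclude, note $\zeta_\eta\to\zeta>0$ by Proposition \ref{prop:Z_field_Wick_ordered}(i), hence $\zeta_\eta\ge\zeta/2$ for small $\eta$ and $|\zeta_\eta^{-1}-\zeta_{\eta'}^{-1}|\le4\zeta^{-2}|\zeta_\eta-\zeta_{\eta'}|\to0$. Bounding the symmetrised product of the $A$'s pointwise by $\frac1{p!}\sum_{\pi,\pi'}\prod_i|A(\xi_\eta)_{x_{\pi(i)},\tilde x_{\pi'(i)}}|$ and applying H\"older with exponent $p$ gives $\sup_{\f x,\tilde{\f x},\eta}\bigl|\int\mu_{v_\eta}(\dd\xi)\,\ee^{f_2(\xi)}(P_pA(\xi)^{\otimes p}P_p)_{\f x,\tilde{\f x}}\bigr|\le p!\,C_p$, which controls the $\zeta_\eta^{-1}-\zeta_{\eta'}^{-1}$ contribution. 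For the remaining term, couple the two integrals as above and telescope twice: in the exponential via $|\ee^{f_2(\xi_\eta)}-\ee^{f_2(\xi_{\eta'})}|\le|f_2(\xi_\eta)-f_2(\xi_{\eta'})|$ (both exponents have nonpositive real part, cf.\ Proposition \ref{prop:Z_field_Wick_ordered}(ii)), and in the product via $\prod_i A(\xi_\eta)_{x_i,\tilde x_i}-\prod_i A(\xi_{\eta'})_{x_i,\tilde x_i}=\sum_j\bigl(\prod_{i<j}A(\xi_\eta)\bigr)\bigl(A(\xi_\eta)-A(\xi_{\eta'})\bigr)\bigl(\prod_{i>j}A(\xi_{\eta'})\bigr)$. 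Applying generalised H\"older over $\mu$ (with exponent $p$, resp.\ $p+1$), all factors but one are bounded by the uniform moment constant, while the remaining factor, $\sup_{x,y}(\E|A(\xi_\eta)_{x,y}-A(\xi_{\eta'})_{x,y}|^p)^{1/p}$ or $(\E|f_2(\xi_\eta)-f_2(\xi_{\eta'})|^{p+1})^{1/(p+1)}$, tends to $0$. Hence $\norm{\wh\gamma_{p,\eta}-\wh\gamma_{p,\eta'}}_{L^\infty}\to0$.

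The main obstacle is the uniform-in-$\eta$ kernel moment bound $\sup_{x,y}\E|A(\xi_\eta)_{x,y}|^q\le C_q$: the naive estimate $|A(\xi)_{x,y}|\le C\|\xi\|_{L^\infty}$ is useless, since for a merely continuous potential of positive type the field $\xi_\eta$ need not stay bounded as $\eta\to0$ and $\E\|\xi_\eta\|_{L^\infty}^q$ can diverge. One must pass through Feynman--Kac so that only the mean-square $\E|\int_0^r\xi_\eta(\omega)\,\dd s|^2\le\|v\|_{L^\infty}r^2$ enters (finite because $\Lambda$ is compact, so $v$ is bounded), and exploit the short-time cancellation $|\ee^{\ii Y}-1|\lesssim|Y|$ — the ultraviolet effect of the subtraction defining $A$ and $f_2$ (equivalently, of the Wick ordering) — which tames the $r^{-d/2}$ heat-kernel singularity exactly for $d\le3$. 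Everything else is routine telescoping and H\"older estimates together with the smallness $\|v_\eta-v_{\eta'}\|_{L^\infty}\to0$.
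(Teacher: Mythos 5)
Your proof is correct, but it takes a genuinely different route from the paper. The paper's proof is a three-line three-epsilon argument that uses the quantum many-body system as a regularization: it combines Proposition \ref{prop:main_conv} (convergence $\nu^p\wh\Gamma_{p,\eta,\nu}\to\wh\gamma_{p,\eta}$ uniformly in $\eta$) with Proposition \ref{prop:gamma_hat} (convergence of $\wh\Gamma_{p,\eta,\nu}$ as $\eta\to0$ for fixed $\nu$), exactly as anticipated in the remark following the lemma, which also notes that a direct classical argument "would share some ideas with that of Proposition \ref{prop:main_conv}." That is precisely what you do: your coupling of all $\xi_\eta$ on one Gaussian probability space, the Feynman--Kac representation of $A(\xi)$ and $f_2(\xi)$, and the uniform-in-$\eta$ kernel moment bounds obtained from $\absb{\ee^{\ii Y}-1}\le\min(2,\abs Y)$ against the heat-kernel singularity are the classical-field analogue of the estimate \eqref{Y_est2} and of Lemma \ref{lem:T_p}, and your telescoping/H\"older assembly is standard. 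What the two approaches buy: the paper's proof is essentially free given that Proposition \ref{prop:main_conv} is needed anyway, at the price of a forward reference that makes Section \ref{sec:classical} logically dependent on Section \ref{sec:mf}; your proof removes that dependence and keeps the classical theory self-contained, at the price of duplicating Section \ref{sec:mf}-type estimates. Two small points to tighten: the invocation of dominated convergence for $\int_0^\infty\dd r\,\ee^{-\kappa r}\min\bigl(2,c\sqrt{\epsilon_{\eta,\eta'}}\,r\bigr)\sup_w\psi^r(w)$ is awkward for $d\ge2$ (the obvious dominating function is not integrable at $r=0$); just use $\min\bigl(2,c\sqrt{\epsilon}\,r\bigr)\le c\sqrt{\epsilon}\,r$ near $r=0$ to get an explicit $O\bigl(\sqrt{\epsilon_{\eta,\eta'}}\bigr)$ bound. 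Likewise, your dominated-convergence argument for $\epsilon_{\eta,\eta'}\to0$ implicitly uses $\sum_p(\cal F_L v)(p)<\infty$, which does hold (continuous positive-type $v$ has nonnegative, summable Fourier coefficients) but deserves a word.
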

\begin{proof}
Let $\epsilon > 0$. Using Proposition \ref{prop:main_conv} below, choose $\nu > 0$ such that $\normb{\nu^p \, \wh \Gamma_{p,\eta,\nu} - \wh \gamma_{p,\eta}}_{L^\infty} \leq \epsilon$ for all $\eta$. Then for $\eta, \tilde \eta > 0$ we estimate
\begin{align*}
\norm{\wh \gamma_{p,\eta} - \wh \gamma_{p, \tilde \eta}}_{L^\infty} &\leq \norm{\wh \gamma_{p,\eta} - \nu^p \, \wh \Gamma_{p, \eta, \nu}}_{L^\infty} + \norm{\nu^p \, \wh \Gamma_{p,\eta, \nu} - \nu^p \, \wh \Gamma_{p, \tilde \eta, \nu}}_{L^\infty} + \norm{\nu^p \, \wh \Gamma_{p, \tilde \eta, \nu} - \wh \gamma_{p, \tilde \eta}}_{L^\infty}
\\
&\leq 2 \epsilon + \norm{\nu^p \, \wh \Gamma_{p,\eta, \nu} - \nu^p \, \wh \Gamma_{p, \tilde \eta, \nu}}_{L^\infty}\,.
\end{align*}
For the above $\nu > 0$, by Proposition \ref{prop:gamma_hat} we have
\begin{equation*}
\lim_{\eta, \tilde \eta \to 0} \norm{\nu^p \, \wh \Gamma_{p,\eta, \nu} - \nu^p \, \wh \Gamma_{p, \tilde \eta, \nu}}_{L^\infty} = 0\,,
\end{equation*}
and the claim follows.
\end{proof}

Now Proposition \ref{prop:gamma_regularization} follows immediately from  Corollary \ref{cor:hat_gamma} and Lemma \ref{lem:gamma_cauchy}.

\begin{remark}
Note that our proof of Lemma \ref{lem:gamma_cauchy} (and hence also of Proposition \ref{prop:gamma_regularization}) we used Proposition \ref{prop:main_conv}, which involves the quantum many-body system. We do this because it provides the shortest argument (since Proposition \ref{prop:main_conv} is needed anyway for our main result), effectively using the quantum many-body system as a convenient regularization of the classical field theory. If Lemma \ref{lem:gamma_cauchy} were our only goal, a direct approach without invoking the quantum many-body system would also work, but its proof would share some ideas with that of Proposition \ref{prop:main_conv}.
\end{remark}

\section{Mean-field limit} \label{sec:mf}

Throughout this section we work in the mean-field scaling
\begin{equation} \label{mean_field_scaling}
\lambda = \nu^2\,.
\end{equation}
The main result of this section is the following.

\begin{proposition} \label{prop:main_conv}
Suppose that Assumptions \ref{ass:torus} and \ref{ass:v} hold. Suppose \eqref{mean_field_scaling}. Let $p \in \N^*$ and recall the definitions \eqref{def_eh_gamma_eta} and \eqref{def_gamma_eta_hat}. Then we have, uniformly in $\eta > 0$,
\begin{equation*}
\lim_{\nu \to 0} \normb{\nu^p \, \wh \Gamma_{p,\eta,\nu} - \wh \gamma_{p,\eta}}_{L^\infty} = 0\,.
\end{equation*}
\end{proposition}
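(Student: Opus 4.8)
The plan is to realise the quantum and classical functional integrals on a common probability space. Writing $\ang{\sigma}\deq\frac1\nu\int_0^\nu\sigma(\tau,\cdot)\,\dd\tau$ for the time average, a direct computation with the covariance \eqref{cov_C_eta}, the mean-field scaling $\lambda=\nu^2$, and the normalisation $\int_0^\nu\delta_{\eta,\nu}=1$ shows that $\xi\deq\ang{\sigma}$ has law $\mu_{v_\eta}$ whenever $\sigma$ has law $\mu_{\cal C_\eta}$. Distributing the factor $\nu^p$ over the $p$ tensor slots in \eqref{def_eh_gamma_eta} and using that $P_p$ is a contraction, the proposition reduces to the following, all uniform in $\eta>0$: the scalar convergence $\cal Z_\eta\to\zeta_\eta$ (immediate from the convergence $F_2(\sigma)\to f_2(\xi)$ below together with $\abs{\ee^{F_2}}\leq1$), the lower bound $\inf_{\eta>0}\zeta_\eta>0$ (from Proposition \ref{prop:Z_field_Wick_ordered}(i)), which keeps the normalisations bounded below for small $\nu$, and the numerator convergence: as $\nu\to0$, in probability under the coupling $\xi=\ang{\sigma}$, one has $F_2(\sigma)\to f_2(\xi)$ and $\nu\pb{K(-\kappa+\ii\sigma)^{-1}}^{0,0}-\nu\pb{K(-\kappa)^{-1}}^{0,0}$ converges in kernel supremum norm to $\frac{1}{\kappa-\Delta/2-\ii\xi}-\frac{1}{\kappa-\Delta/2}$, with moment bounds on the integrands allowing dominated convergence under $\mu_{\cal C_\eta}$. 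Since $\xi\sim\mu_{v_\eta}$, the limit is exactly $\wh\gamma_{p,\eta}$.

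For the pathwise convergence I would use the Brownian-loop representation. By Lemma \ref{lem:G_K}, $\pb{K(-\kappa+\ii\sigma)^{-1}}^{0,0}=\sum_{k\geq1}W^{k\nu,0}(-\kappa+\ii\sigma)$, so by Feynman--Kac \eqref{W_FK} the kernel difference equals $\nu\sum_{k\geq1}\ee^{-\kappa k\nu}\int\bb W^{k\nu,0}_{x,\tilde x}(\dd\omega)\pb{\ee^{\ii\int_0^{k\nu}\dd t\,\sigma([t]_\nu,\omega(t))}-1}$. Two mechanisms drive the limit: for fixed continuous $\sigma$, over each period $[j\nu,(j+1)\nu]$ a Brownian bridge moves only $O(\sqrt\nu)$ — quantified via Lemma \ref{lem:P_cont} — so $\int_0^{k\nu}\sigma([t]_\nu,\omega(t))\,\dd t$ differs from $\int_0^{k\nu}\ang{\sigma}(\omega(t))\,\dd t$ by an error controlled by the modulus of continuity of $\sigma$; and then the right-endpoint Riemann sum $\nu\sum_{k\geq1}$ converges to $\int_0^\infty\dd r\,\ee^{-\kappa r}\int\bb W^{r,0}_{x,\tilde x}(\dd\omega)\pb{\ee^{\ii\int_0^r\ang{\sigma}(\omega(t))\dd t}-1}$, which is the classical kernel difference. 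The subtracted $1$ is essential: with $\abs{\ee^{\ii\theta}-1}\leq\abs\theta$ and $\int\bb W^{r,0}_{x,\tilde x}(\dd\omega)\leq C(L^{-d}+r^{-d/2})$ (Lemma \ref{lem:heat_estimate}), the small-$r$ integrand behaves like $r^{1-d/2}$, integrable precisely for $d\leq3$. The convergence $F_2(\sigma)\to f_2(\xi)$ follows by the same scheme after rewriting both functions in their twice-subtracted forms via the resolvent identity, $F_2(\sigma)=\int_0^\infty\dd t\,\Tr\pb{\tfrac{1}{t+K(-\kappa)}\ii\sigma\tfrac{1}{t+K(-\kappa)}\ii\sigma\tfrac{1}{t+K(-\kappa+\ii\sigma)}}$ and likewise \eqref{res_exp} for $f_2$, and expressing the traces as Brownian loops built from the time-homogeneous resolvent kernels of Lemma \ref{lem:G_K}.

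For the $\eta$-uniform domination I would exploit that the regularisation only cuts high frequencies. Conditionally on a bridge path $\omega$, the exponent $\int_0^{k\nu}\sigma([t]_\nu,\omega(t))\,\dd t$ is a centred Gaussian whose variance is at most $\nu\norm{v}_{L^\infty}\int_0^{k\nu}\!\!\int_0^{k\nu}\delta_{\eta,\nu}([t]_\nu-[s]_\nu)\,\dd t\,\dd s=\norm{v}_{L^\infty}(k\nu)^2$, uniformly in $\eta$, so $\E\absb{\ee^{\ii\int_0^{k\nu}\sigma}-1}\leq\min(2,Ck\nu)$. Combined with $\abs{\ee^{F_2}}\leq1$ (Proposition \ref{prop:fif_renorm}(iv)), Lemma \ref{lem:heat_estimate}, and H\"older's inequality across the $p$ tensor slots, this yields an $\eta$- and $\nu$-uniform bound of the form $\nu\sum_{k\geq1}\ee^{-\kappa k\nu}\min(2,Ck\nu)(L^{-d}+(k\nu)^{-d/2})\leq C\int_0^\infty\ee^{-\kappa r}\min(2,Cr)(L^{-d}+r^{-d/2})\,\dd r<\infty$ for $d\leq3$; an analogous Gaussian-moment computation, using the three resolvent factors for decay in $t$, bounds the moments of $F_2(\sigma)$ uniformly in $\eta$. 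Dominated convergence then delivers all the ingredients of the first paragraph, hence the proposition.

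The main obstacle is the numerator convergence under the $\eta$-uniformity constraint: one must replace the rapidly oscillating periodic potential $\sigma([t]_\nu,\cdot)$ by its time average $\ang{\sigma}$ along Brownian bridges of unbounded length, with the replacement errors and the Riemann-sum errors summable against the weights $\nu\ee^{-\kappa k\nu}$, while never estimating $\sigma$ in supremum norm (which diverges as $\eta\to0$) — so every occurrence of $\sigma$ has to be kept inside a path integral where only its uniformly bounded covariance enters. Carrying the renormalising subtractions through all of these estimates, so that the $r\to0$ (equivalently large-$t$) contributions stay integrable in $d=3$, is the other delicate point.
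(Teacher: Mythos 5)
Your overall architecture matches the paper's: the coupling $\xi=\ang{\sigma}$ (Lemma \ref{lem:sigma_zeta}), the reduction to convergence of the partition functions plus the numerators, the Brownian-loop representation of the kernel differences via Lemma \ref{lem:G_K} and \eqref{W_FK}, and the $\eta$-uniform Gaussian-moment domination. The genuine gap is in the convergence step. You argue it pathwise: for fixed continuous $\sigma$ the replacement of $\sigma([t]_\nu,\omega(t))$ by $\ang{\sigma}(\omega(t))$ produces an error ``controlled by the modulus of continuity of $\sigma$'', after which you invoke dominated convergence under $\mu_{\cal C_\eta}$. This yields $\lim_{\nu\to0}\normb{\nu^p\wh\Gamma_{p,\eta,\nu}-\wh\gamma_{p,\eta}}_{L^\infty}=0$ for each \emph{fixed} $\eta>0$, but not uniformly in $\eta$, which is what the proposition asserts and what the paper needs (the limit $\nu\to0$ is taken before $\eta\to0$ in the proof of Theorem \ref{thm:main}). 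The temporal modulus of continuity of $\sigma$ degenerates as $\eta\to0$ (white noise in time), so the pathwise rate is not uniform in $\eta$, and uniform domination plus dominated convergence cannot restore uniformity of the limit. You flag exactly this constraint in your last paragraph (``never estimating $\sigma$ in supremum norm''), but your proposal does not contain the estimate that realizes it: your conditional-Gaussian computation is deployed only for the uniform bound on the integrands, not for the replacement error itself.

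The paper closes precisely this point by integrating out $\sigma$ \emph{before} estimating. In Step 3 of the proof of Proposition \ref{prop:Qq}, the difference of the two oscillatory factors is written as $\ee^{\ii\scalar{\sigma}{f}}-1$ with $f(\tau,x)=\int_0^{r}\dd t\,\delta(x-\omega(t))\pb{\frac{1}{\nu}-\delta(\tau-[t]_\nu)}$, Lemma \ref{lem:T_p} gives $\int\mu_{\cal C_\eta}(\dd\sigma)\,\absb{\ee^{\ii\scalar{\sigma}{f}}-1}^2\leq C\scalar{f}{\cal C_\eta f}$, and $\scalar{f}{\cal C_\eta f}$ is then bounded using only $\delta_{\eta,\nu}\geq0$, $\int_0^\nu\delta_{\eta,\nu}=1$, the uniform equicontinuity of $v_\eta$ (Lemma \ref{properties_of_v_eta}), Chebyshev, and the $L^2$-continuity of the bridge (Lemma \ref{lem:P_cont}); the resulting error is $O(\nu/\alpha^2)$ uniformly in $\eta$. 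The analogous statement for the partition functions is Lemma \ref{Ff_L2conv} — not ``immediate'' as you claim, but a computation in which $\absb{F_2(\sigma)-f_2(\ang{\sigma})}^2$ is expanded and all Gaussian integrals are evaluated via Lemma \ref{lem:Gaussian}, so that every error is expressed through $v_\eta$ and $\delta_{\eta,\nu}$ alone. Your Riemann-sum step has a similar issue: comparing the $\bb W^{r,0}_{x,\tilde x}$-integrals at nearby times with the oscillatory factor present cannot be done by continuity in $\sigma$; the paper (Lemmas \ref{lem:y_psi}--\ref{lem:y_sum}) switches off the interaction on a final time window of length $\sqrt\nu$ so that only free heat kernels need to be compared. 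To repair your proof, redo your second paragraph with $\sigma$ integrated out, i.e.\ convert the ``modulus of continuity of $\sigma$'' error into a covariance estimate of the above type, quantified uniformly in $\eta$.
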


The rest of this section is devoted to the proof of Proposition \ref{prop:main_conv}. We always make the Assumptions \ref{ass:torus} and \ref{ass:v} and we suppose \eqref{mean_field_scaling}.

A simple but important observation used throughout our proof is that the field $\xi$ from \eqref{cov_v_eta} can be represented as a time average of the field $\sigma$ from \eqref{cov_C_eta}. To that end, we define
\begin{equation}
\label{sigma(x)_definition}
\ang{\sigma}(x) \deq \frac{1}{\nu} \int_0^\nu \dd \tau \, \sigma(\tau,x)\,.
\end{equation}
We record the following result, which is an immediate consequence of \eqref{mean_field_scaling}.
\begin{lemma} \label{lem:sigma_zeta}
Let $\eta > 0$. If $\sigma$ has law $\mu_{\cal C_\eta}$ then $\ang{\sigma}$ has law $\mu_{v_\eta}$.
\end{lemma}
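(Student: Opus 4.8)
\textbf{Proof strategy for Lemma \ref{lem:sigma_zeta}.}

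The plan is to verify the claim directly at the level of covariances, using that both $\sigma$ (under $\mu_{\cal C_\eta}$) and $\xi$ (under $\mu_{v_\eta}$) are centred real Gaussian fields, so that their laws are determined by their two-point functions. Since $\ang{\sigma}$ is a linear functional of $\sigma$, it is automatically a centred Gaussian field, and it suffices to compute $\int \mu_{\cal C_\eta}(\dd \sigma)\, \ang{\sigma}(x)\, \ang{\sigma}(\tilde x)$ and check that it equals $v_\eta(x - \tilde x)$, which by \eqref{cov_v_eta} is precisely the covariance of $\xi$ under $\mu_{v_\eta}$. Here is where the mean-field scaling \eqref{mean_field_scaling}, i.e.\ $\lambda = \nu^2$, enters: the prefactor $\lambda/\nu = \nu$ in \eqref{cov_C_eta} combines with the two $1/\nu$ factors from the two time-averages in \eqref{sigma(x)_definition} and one $\nu$ from integrating the approximate delta $\delta_{\eta,\nu}$ over one of its time variables, leaving exactly $1$.

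Concretely, first I would write, using \eqref{sigma(x)_definition} and Fubini,
\begin{equation*}
\int \mu_{\cal C_\eta}(\dd \sigma)\, \ang{\sigma}(x)\, \ang{\sigma}(\tilde x) = \frac{1}{\nu^2} \int_0^\nu \dd \tau \int_0^\nu \dd \tilde \tau \, \int \mu_{\cal C_\eta}(\dd \sigma)\, \sigma(\tau,x)\, \sigma(\tilde \tau, \tilde x)\,,
\end{equation*}
then substitute the covariance formula \eqref{cov_C_eta} with $\lambda = \nu^2$ to get
\begin{equation*}
\frac{1}{\nu^2} \cdot \frac{\nu^2}{\nu} \int_0^\nu \dd \tau \int_0^\nu \dd \tilde \tau \, \delta_{\eta,\nu}(\tau - \tilde \tau)\, v_\eta(x - \tilde x) = \frac{v_\eta(x - \tilde x)}{\nu} \int_0^\nu \dd \tau \int_0^\nu \dd \tilde \tau \, \delta_{\eta,\nu}(\tau - \tilde \tau)\,.
\end{equation*}
Since $\delta_{\eta,\nu}$ is a $\nu$-periodic function with $\int_0^\nu \dd s\, \delta_{\eta,\nu}(s) = 1$ (recall the discussion after \eqref{Fourier_regularization}, noting $\tilde\varphi(0)=1$), the inner double integral equals $\nu$, and the whole expression collapses to $v_\eta(x - \tilde x)$, as desired. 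Alternatively — and perhaps even more transparently — one can argue from the explicit series representation \eqref{sigma_X_rep}: time-averaging $\sigma_\eta$ over $[0,\nu]$ kills all Fourier modes with $k \neq 0$ (since $\int_0^\nu \ee^{2\pi\ii\tau k/\nu}\,\dd\tau = 0$ for $k \in \Z\setminus\{0\}$) and leaves precisely the $k = 0$ term, namely $\sqrt{\lambda/\nu}\,\nu^{-1/2}\nu^{-1/2}\sqrt{\nu}\cdot\frac{1}{\sqrt{L^d}}\sum_{p}\frac{X_{0,p}+\ol X_{0,-p}}{\sqrt 2}\sqrt{(\cal F_L v)(p)\varphi(\eta p)}\,\ee^{2\pi\ii p\cdot x/L}$; with $\lambda = \nu^2$ the scalar prefactor becomes $1$, and this is exactly the series representation of a field with law $\mu_{v_\eta}$.

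There is no real obstacle here: the lemma is a bookkeeping statement whose only content is that the $\nu$-powers balance under the mean-field scaling. The one point to state carefully is the normalization $\int_0^\nu \delta_{\eta,\nu} = 1$ and the vanishing of the off-diagonal Fourier modes under time-averaging, both of which are immediate from the definitions in Section \ref{sec:FT} and \eqref{Fourier_regularization2}. I would present the covariance computation as the main argument and perhaps remark on the series-representation viewpoint as an aside, since it makes the role of the scaling $\lambda = \nu^2$ especially vivid.
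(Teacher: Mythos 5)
Your covariance computation is exactly the verification the paper has in mind: it states Lemma \ref{lem:sigma_zeta} as an ``immediate consequence'' of the scaling $\lambda = \nu^2$, the implicit argument being precisely that $\ang{\sigma}$ is a centred real Gaussian field whose covariance, by \eqref{cov_C_eta}, \eqref{sigma(x)_definition}, and $\int_0^\nu \dd s\, \delta_{\eta,\nu}(s) = 1$, collapses to $v_\eta(x-\tilde x)$. Your proof is correct and takes essentially the same route; the Fourier-mode aside is a fine sanity check but not needed.
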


\subsection{Partition function}
In this subsection we show the following result.
\begin{proposition} \label{prop:conv_Z}
Uniformly in $\eta>0$ we have $\lim_{\nu \to 0} \cal Z_{\eta} = \zeta_\eta$.
\end{proposition}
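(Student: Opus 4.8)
The plan is to compare the two functional-integral representations established earlier: $\cal Z_\eta = \int \mu_{\cal C_\eta}(\dd\sigma)\,\ee^{F_2(\sigma)}$ from Proposition \ref{prop:fif_renorm}(iii), and $\zeta_\eta = \int \mu_{v_\eta}(\dd\xi)\,\ee^{f_2(\xi)}$ from Proposition \ref{prop:Z_field_Wick_ordered}(i). By Lemma \ref{lem:sigma_zeta}, if $\sigma$ has law $\mu_{\cal C_\eta}$ then the time average $\ang{\sigma}$ has law $\mu_{v_\eta}$, so $\zeta_\eta = \int \mu_{\cal C_\eta}(\dd\sigma)\,\ee^{f_2(\ang\sigma)}$. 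Thus it suffices to show that $\int \mu_{\cal C_\eta}(\dd\sigma)\,\bigl|\ee^{F_2(\sigma)} - \ee^{f_2(\ang\sigma)}\bigr| \to 0$ as $\nu\to 0$, uniformly in $\eta$. Since both $F_2$ and $f_2$ have nonpositive real part (Propositions \ref{prop:fif_renorm}(iv) and \ref{prop:Z_field_Wick_ordered}(ii)), we have $\bigl|\ee^{F_2(\sigma)} - \ee^{f_2(\ang\sigma)}\bigr| \leq \min\{2, |F_2(\sigma) - f_2(\ang\sigma)|\}$, so by dominated convergence (with the constant $2$ as majorant) it is enough to prove that $F_2(\sigma) - f_2(\ang\sigma) \to 0$ as $\nu\to 0$ for $\mu_{\cal C_\eta}$-almost every $\sigma$, with a bound uniform enough in $\eta$ that dominated convergence can be applied; here one should be a little careful because the measure $\mu_{\cal C_\eta}$ itself depends on $\eta$, so I would phrase the argument as: show the convergence holds for every fixed continuous $\sigma$ with a rate depending only on a few seminorms of $\sigma$ (e.g.\ $\|\sigma\|_{L^\infty}$ and a modulus of continuity in time), and then note that under $\mu_{\cal C_\eta}$ these seminorms are controlled uniformly in $\eta$ (or pass through the $X_{k,p}$ representation \eqref{sigma_X_rep} to get a single probability space).

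The core analytic step is the pointwise convergence $F_2(\sigma) \to f_2(\ang\sigma)$. Recall
\[
F_2(\sigma) = \int_0^\infty \dd t\,\Tr\pbb{\frac{1}{t+K(-\kappa+\ii\sigma)} - \frac{1}{t+K(-\kappa)} - \frac{1}{t+K(-\kappa)}\,\ii\sigma\,\frac{1}{t+K(-\kappa)}},
\]
an operator trace on $L^2([0,\nu]\times\Lambda)$, while $f_2(\ang\sigma)$ is the analogous expression with $K(\cdot)$ replaced by $-\Delta/2+\kappa-\ii\ang\sigma(\cdot)$ acting on $L^2(\Lambda)$ only. Using the resolvent identity \eqref{res_exp} to rewrite the integrand of $F_2$ as the genuinely third-order (hence trace-class, by Lemma \ref{lem:resolvent_tools}-type bounds) expression $\frac{1}{t+K(-\kappa)}\,\ii\sigma\,\frac{1}{t+K(-\kappa)}\,\ii\sigma\,\frac{1}{t+K(-\kappa+\ii\sigma)}$, and doing the same for $f_2$, I would reduce to comparing these two third-order traces and showing the remaining $t$-integral is dominated uniformly (the bound $\propto \|\sigma\|_{L^\infty}^2 \tr h^{-2}$ from \eqref{res_exp_3} carries over, since $\tr h^{-s}<\infty$ for $s=2$ by Assumption \ref{ass:torus}, and the analogous space-time operator $K(-\kappa)^{-1}$ has the same trace properties via the heat-kernel bounds of Lemma \ref{lem:heat_estimate} and the sum formula \eqref{tr_K_inv}). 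The mechanism by which the space-time trace collapses to the spatial trace as $\nu\to 0$ is exactly the one sketched in Section \ref{sec:fct_int}: $\nu$-fold sums of propagators $W^{\tau+r,\tilde\tau}$ over $r\in\nu\N$, as in \eqref{K_inv}, become Riemann integrals $\int_0^\infty \dd r\,\ee^{-r(\kappa-\Delta/2-\ii\ang\sigma)}$ as in \eqref{Riemann_integral}, and the time-dependent potential $\sigma([\tau]_\nu,\cdot)$ is replaced by its time average $\ang\sigma$ because the Brownian bridge of length $r\gg\nu$ equidistributes its time parameter mod $\nu$. Concretely, after using the Feynman–Kac representation \eqref{W_FK}, the exponent $\int_{\tilde\tau}^\tau \dd t\,\ii\sigma([t]_\nu,\omega(t))$ should be compared to $\int_{\tilde\tau}^\tau \dd t\,\ii\ang\sigma(\omega(t))$, and the difference controlled using the $L^2$-continuity of the bridge (Lemma \ref{lem:P_cont}) together with the continuity of $\sigma$ — this is where the rate depends on a modulus of continuity of $\sigma$.

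The main obstacle I anticipate is making the ``Riemann sum $\to$ integral with time-averaging'' step \emph{quantitative and uniform in $\eta$}, since the field $\sigma$ drawn from $\mu_{\cal C_\eta}$ degenerates to white noise in time as $\eta\to 0$, so its time-modulus of continuity is not uniformly bounded. The resolution is that only the \emph{time-integrated} quantity $\int_{\tilde\tau}^{\tau}\dd t\,\sigma([t]_\nu,\omega(t))$ ever appears, and its fluctuations are controlled in $L^2(\mu_{\cal C_\eta})$ by the covariance $\frac{\lambda}{\nu}\delta_{\eta,\nu}(t-s)v_\eta(x-\tilde x)$, uniformly in $\eta$ once the mean-field scaling $\lambda=\nu^2$ is used (this is precisely the point of Lemma \ref{lem:sigma_zeta} and the $\frac{\lambda}{\nu}=\nu$ prefactor). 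So rather than estimating $\sigma$ pointwise I would estimate the relevant time-integrals in $L^2(\mu_{\cal C_\eta})$, split off the time-average part exactly (using that $\ang{\sigma}$ and the regularization commute suitably), and bound the remainder — the oscillating, mean-zero-in-time part — by a variance computation that is manifestly $\eta$-uniform and $O(\nu)$. Assembling these estimates inside the $t$-integral and the Brownian-path integral, with the uniform majorant $2$ for the outer $\sigma$-integral, then yields $\lim_{\nu\to 0}\cal Z_\eta = \zeta_\eta$ uniformly in $\eta$.
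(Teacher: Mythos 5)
Your reduction — writing $\zeta_\eta = \int \mu_{\cal C_\eta}(\dd\sigma)\,\ee^{f_2(\ang{\sigma})}$ via Lemma \ref{lem:sigma_zeta} and using the nonpositivity of $\re F_2$, $\re f_2$ to bound $|\ee^{F_2(\sigma)}-\ee^{f_2(\ang\sigma)}|\le|F_2(\sigma)-f_2(\ang\sigma)|$ — is exactly the paper's starting point. The gap is in how you then send $\nu\to 0$. Your primary route (pointwise convergence for each fixed $\sigma$ with a rate depending on $\|\sigma\|_{L^\infty}$ and a time-modulus of continuity, followed by dominated convergence) cannot be made uniform in $\eta$, for the reason you yourself flag: under $\mu_{\cal C_\eta}$ one has $\int\mu_{\cal C_\eta}(\dd\sigma)\,\sigma(\tau,x)^2=\nu\,\delta_{\eta,\nu}(0)\,v_\eta(0)\to\infty$ as $\eta\to 0$, so no seminorm of $\sigma$ of this kind is controlled uniformly, and "continuity of $\sigma$" is precisely what is unavailable. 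Your fallback — estimate in $L^2(\mu_{\cal C_\eta})$ instead — is the right direction (the paper proves $\int\mu_{\cal C_\eta}(\dd\sigma)\,|F_2(\sigma)-f_2(\ang\sigma)|^2\to 0$ uniformly in $\eta$, Lemma \ref{Ff_L2conv}, and concludes by Cauchy--Schwarz rather than dominated convergence), but as written it is asserted, not proved, and the key claim is not correct as stated: the mean-zero-in-time remainder is not "$O(\nu)$ by a variance computation". The variance of $\int_0^{r}\dd t\,\sigma([t]_\nu,\omega(t))$ along a loop of length $r$ is of order $r^2\|v\|_{L^\infty}$ (the $\lambda/\nu=\nu$ prefactor is exactly compensated by the $r/\nu$ near-diagonal time pairs), i.e.\ comparable to the variance of $\int_0^r\dd t\,\ang\sigma(\omega(t))$; the smallness of the \emph{difference} does not come from a size estimate on the fluctuating part but from comparing $v_\eta$ at path points separated by times of order $\nu$, using the uniform equicontinuity of $v_\eta$ (Lemma \ref{properties_of_v_eta}) together with the $L^2$-continuity of the bridge (Lemma \ref{lem:P_cont}) — regularity of $v_\eta$, never of $\sigma$.

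Moreover, $F_2(\sigma)-f_2(\ang\sigma)$ is not a linear functional of $\sigma$: after the third-order resolvent expansion and Feynman--Kac it contains the factor $\ee^{\ii\int\dd s\,\sigma([s]_\nu,\omega_1(s))}$ multiplied by two explicit field insertions, so its $L^2(\mu_{\cal C_\eta})$ norm is not a variance. The paper handles this by performing the $\sigma$-integral \emph{exactly} before any limit: it expands $|F_2-f_2|^2$ into four terms, applies the Gaussian identity with an imaginary phase (Lemma \ref{lem:Gaussian}) to each, which produces the point-point, point-path and path-path interactions of Definitions \ref{def:interaction} and \ref{def:interaction_general}; the comparison between the quantum and classical expressions then reduces to deterministic statements — Lemma \ref{path-path_interaction}, Riemann-sum convergence in the loop-length variables with heat-kernel weights (Lemmas \ref{lem:FF} and \ref{lem:Ff}), all uniform in $\eta$ because $\|v_\eta\|_{L^\infty}\le\|v\|_{L^\infty}$ and the $v_\eta$ are uniformly equicontinuous. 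To complete your proof you would need to supply this layer: the loop representations of $F_2$ and $f_2$ (Lemma \ref{lem:Ff_rep}), the exact Gaussian computation of the mixed moments involving the imaginary exponential, and the quantitative sum-to-integral and time-averaging estimates, none of which follow from the variance bound you invoke.
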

The rest of this subsection is devoted to the proof of Proposition \ref{prop:conv_Z}. 
Throughout, we adopt the following conventions. For a path $\omega \in \Omega^{\tau, \tilde \tau}$ we abbreviate 
\begin{equation*}
\int_{\tilde \tau}^\tau \dd s \, f(\omega(s)) \equiv \int \dd s \,  f(\omega(s))\,
\end{equation*} 
since the integration bounds are determined by the path $\omega$. Moreover, we set $\bb W^{\tau, \tilde \tau} \deq 0$ if $\tau \leq \tilde \tau$. Finally, use use the notations $\f x = (x_1, x_2, x_3) \in \Lambda^3$, $\f \tau = (\tau_1, \tau_2, \tau_3) \in [0,\nu]^3$, $\f r = (r_1, r_2, r_3) \in [0,\infty)^3$, and $\abs{\f r} = r_1 + r_2 + r_3$.

We begin by deriving the following explicit formulas for the functions $F_2$ from \eqref{def_F2} and $f_2$ from \eqref{def_f2}.

\begin{lemma} \label{lem:Ff_rep}
We have the representations
\begin{multline} \label{F2_rep}
F_2(\sigma) = - \sum_{\f r \in (\nu \N)^3} \frac{\ind{\abs{\f r} > 0} \, \ee^{-\kappa \abs{\f r}}}{\abs{\f r}} \int_{[0,\nu]^3} \dd \f \tau \int \dd \f x \,
\sigma(\tau_2, x_2) \, \sigma(\tau_3, x_3)
\\
\times  \int \bb W_{x_1, x_3}^{\tau_1 + r_3, \tau_3}(\dd \omega_3) \, \bb W_{x_3, x_2}^{\tau_3 + r_2, \tau_2}(\dd \omega_2) \, \bb W_{x_2, x_1}^{\tau_2 + r_1, \tau_1}(\dd \omega_1) \, \ee^{\ii \int \dd s \, \sigma([s]_\nu, \omega_1(s))}
\end{multline}
and
\begin{multline} \label{f2_rep}
f_2(\xi)
= - \int_{[0,\infty)^3} \dd \f r \, \frac{\ee^{- \kappa \abs{\f r}}}{\abs{\f r}} \, \int \dd \f x \, \xi(x_2) \, \xi(x_3)
\\
\times \int \bb W_{x_1, x_3}^{r_3, 0}(\dd \omega_3) \, \bb W_{x_3, x_2}^{r_2, 0}(\dd \omega_2) \, \bb W_{x_2, x_1}^{r_1, 0}(\dd \omega_1) \, \ee^{\ii \int \dd s \, \xi(\omega_1(s))}\,.
\end{multline}
\end{lemma}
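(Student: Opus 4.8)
<br>

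The proof of Lemma \ref{lem:Ff_rep} is a direct computation starting from the definitions \eqref{def_F2} of $F_2(\sigma)$ and \eqref{def_f2} of $f_2(\xi)$. The plan is to expand each of the three resolvents in the trace, turn the resolvents of $K(-\kappa + \ii\sigma)$ and $t + K(-\kappa)$ into propagators of the periodic heat flow via Lemma \ref{lem:G_K}, and then apply the Feynman--Kac representation \eqref{W_FK} to each propagator.

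\textbf{The quantum identity \eqref{F2_rep}.} First I would rewrite \eqref{def_F2} by combining the three terms using a resolvent expansion, exactly as in \eqref{res_exp} but keeping track of the time structure on $L^2([0,\nu]\times\Lambda)$:
\begin{equation*}
\frac{1}{t + K(-\kappa+\ii\sigma)} - \frac{1}{t+K(-\kappa)} - \frac{1}{t+K(-\kappa)}\,\ii\sigma\,\frac{1}{t+K(-\kappa)} = \frac{1}{t+K(-\kappa)}\,\ii\sigma\,\frac{1}{t+K(-\kappa)}\,\ii\sigma\,\frac{1}{t+K(-\kappa+\ii\sigma)}\,,
\end{equation*}
so that, using cyclicity of $\Tr$ and $t+K(-\kappa)=K(-\kappa-t)$,
\begin{equation*}
F_2(\sigma) = -\int_0^\infty \dd t\, \Tr\pbb{\frac{1}{K(-\kappa-t)}\,\sigma\,\frac{1}{K(-\kappa-t)}\,\sigma\,\frac{1}{K(-\kappa-t+\ii\sigma)}}\,.
\end{equation*}
Now I insert the Green-function formula \eqref{K_inv} for each of the three inverses: $\pb{K(-\kappa-t+\ii\sigma)^{-1}}^{\tau_2,\tau_1}_{x_2,x_1}=\sum_{r_1\in\nu\N}\ind{\tau_2+r_1>\tau_1}W^{\tau_2+r_1,\tau_1}_{x_2,x_1}(-\kappa-t+\ii\sigma)$, and similarly for the two free resolvents, which have the explicit kernel $W^{\tau,\tilde\tau}_{x,\tilde x}(-\kappa-t)=\pb{\ee^{(\tau-\tilde\tau)(\Delta/2-\kappa-t)}}_{x,\tilde x}$. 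Writing the trace out as integrals over $\f\tau\in[0,\nu]^3$ and $\f x\in\Lambda^3$ and summing over $\f r\in(\nu\N)^3$, the $t$-integral factors out as $\int_0^\infty \dd t\, \ee^{-t|\f r|}=1/|\f r|$ on the event $|\f r|>0$ (the term $\f r=0$ being absent by the strict inequalities in the indicator, since one cannot have all three propagation times equal to zero in a cyclic trace — this needs a brief argument, as in the proof of Lemma \ref{lem:G_K}). Applying the Feynman--Kac formula \eqref{W_FK} to the interacting propagator $W^{\tau_2+r_1,\tau_1}(-\kappa-t+\ii\sigma)$ produces the path integral $\int \bb W^{\tau_2+r_1,\tau_1}_{x_2,x_1}(\dd\omega_1)\,\ee^{\ii\int \dd s\,\sigma([s]_\nu,\omega_1(s))}$ (after peeling off the $\ee^{-(\kappa+t)(\cdot)}$ and $\ee^{(\Delta/2)(\cdot)}$ factors, which were already accounted for), and to the two free propagators it gives $\bb W^{\tau_3+r_2,\tau_2}_{x_3,x_2}(\dd\omega_2)$ and $\bb W^{\tau_1+r_3,\tau_3}_{x_1,x_3}(\dd\omega_3)$ with trivial weight. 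Collecting the two remaining factors $\sigma(\tau_2,x_2)\,\sigma(\tau_3,x_3)$ yields exactly \eqref{F2_rep}; one should double-check the cyclic labelling of indices so that the two $\sigma$-insertions sit at $(\tau_2,x_2)$ and $(\tau_3,x_3)$ and the Feynman--Kac phase runs along $\omega_1$.

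\textbf{The classical identity \eqref{f2_rep}.} This is the same computation with the time-periodic structure collapsed. Here $K(-\kappa)$ is replaced by $h = \kappa-\Delta/2$ acting on $\cal H$ alone, the resolvent expansion \eqref{res_exp} gives $f_2(\xi)=-\int_0^\infty \dd t\,\tr\pb{(t+h)^{-1}\xi(t+h)^{-1}\xi(t+h-\ii\xi)^{-1}}$, and in place of the Riemann sum \eqref{K_inv} over $r\in\nu\N$ one uses the integral representation $(t+h\mp\ii\xi)^{-1}=\int_0^\infty \dd r\,\ee^{-r(t+\kappa)}\ee^{r(\Delta/2\pm\ii\xi)}$ — this is the continuum analogue, \eqref{Riemann_integral}. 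Applying the Feynman--Kac formula \eqref{W_FK} (with time-independent potential, so no $[s]_\nu$) to the three heat semigroups, performing the $t$-integral to get $\ee^{-\kappa|\f r|}/|\f r|$, and relabelling gives \eqref{f2_rep}.

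\textbf{Main obstacle.} The computation itself is routine bookkeeping; the one genuine point requiring care is \emph{justifying the manipulations} — in particular that the resolvent expansion converges in trace norm and the interchanges of $\int_0^\infty\dd t$, the trace, the sum/integral over $\f r$, and the path integrals are all legitimate. For the quantum side this is controlled by Lemma \ref{lem:Gamma}(iii) ($\norm{W^{\tau,\tilde\tau}(-\kappa+\ii\sigma)}_{\fra S^\infty}\le \ee^{-\kappa(\tau-\tilde\tau)}$, $\sigma$ being real so $\re(-\kappa+\ii\sigma)=-\kappa$) together with the heat-kernel bound Lemma \ref{lem:heat_estimate}, which makes the $\f r$-sum and the $t$-integral absolutely convergent; for the classical side it is controlled by Lemma \ref{lem:resolvent_tools}, exactly as in the estimate \eqref{res_exp_3}. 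Once absolute convergence is in hand, Fubini and the Feynman--Kac formula apply termwise and the two representations follow.
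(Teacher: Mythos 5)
Your proposal is correct and follows essentially the same route as the paper: the resolvent expansion of \eqref{def_F2} (respectively \eqref{def_f2}), insertion of the Green-function formula from Lemma \ref{lem:G_K} (respectively $1/h=\int_0^\infty \dd r\,\ee^{-rh}$), the Feynman--Kac representation \eqref{W_FK}, and then performing the $t$-integral to produce $\ee^{-\kappa\abs{\f r}}/\abs{\f r}$, with the $\f r=0$ term excluded exactly as you note. Your remarks on absolute convergence and Fubini are consistent with the bounds the paper relies on (Lemma \ref{lem:Gamma} (iii), Lemma \ref{lem:heat_estimate}, Lemma \ref{lem:resolvent_tools}), so nothing essential is missing.
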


\begin{proof}
From \eqref{def_F2} and using that $K(-\kappa + \ii \sigma) = K(-\kappa) - \ii \sigma$, a resolvent expansion yields
\begin{equation*}
F_2(\sigma) = - \int_0^\infty \dd t \, \tr \pbb{\frac{1}{K(-t -\kappa)} \sigma \frac{1}{K(-t -\kappa)} \sigma \frac{1}{K(-t -\kappa + \ii \sigma)} }\,.
\end{equation*}
We find from Lemma \ref{lem:G_K} and \eqref{W_FK} that
\begin{multline*}
F_2(\sigma) = - \int_0^\infty \dd t \int_{[0,\nu]^3} \dd \f \tau \int \dd \f x \sum_{\f r \in (\nu \N)^3}
\ee^{-(t + \kappa) \abs{\f r}} \, \sigma(\tau_2, x_2) \, \sigma(\tau_3, x_3)
\\
\times  \int \bb W_{x_1, x_3}^{\tau_1 + r_3, \tau_3}(\dd \omega_3) \, \bb W_{x_3, x_2}^{\tau_3 + r_2, \tau_2}(\dd \omega_2) \, \bb W_{x_2, x_1}^{\tau_2 + r_1, \tau_1}(\dd \omega_1) \, \ee^{\ii \int \dd s \, \sigma([s]_\nu, \omega_1(s))}\,.
\end{multline*}
We may perform the integral over $t$, observing that the summand is nonzero only if $\abs{\f r} > 0$, which yields \eqref{F2_rep}.

Next, from \eqref{def_f2}, a resolvent expansion and the identity $1/h = \int_0^\infty \dd r \, \ee^{-rh}$ for $h > 0$ yields
\begin{equation*}
f_2(\xi) = - \int_0^\infty \dd t \, \int_{[0,\infty)^3} \dd \f r \, \tr \pB{\ee^{-(t + \kappa - \Delta/2) r_3}  \xi \ee^{-(t + \kappa - \Delta/2) r_2} \xi \ee^{-(t + \kappa - \Delta/2 - \ii \xi) r_1}}\,,
\end{equation*}
and \eqref{f2_rep} follows by applying the Feynman-Kac formula from Lemma \ref{FK_continuous} and integrating over $t$.
\end{proof}

From \eqref{wtZ_F2}, \eqref{def_z_nu}, and Lemma \ref{lem:sigma_zeta}, we get
\begin{equation*}
{\cal Z}_\eta - \zeta_\eta = \int \mu_{\cal C_\eta}(\dd \sigma) \, \pB{\ee^{F_2(\sigma)} - \ee^{f_2(\ang{\sigma})}}\,.
\end{equation*}
Since $\re F_2, \re f_2 \leq 0$ (see Proposition \ref{prop:fif_renorm} (iii) and Proposition \ref{prop:Z_field_Wick_ordered} (ii)), it follows that $|\ee^{F_2(\sigma)} - \ee^{f_2(\ang{\sigma})}|  \leq |F_2(\sigma) - f_2(\ang{\sigma})|$. By using the Cauchy-Schwarz inequality and the fact that $\mu_{\cal C_\eta}$ is a probability measure, we deduce that Proposition \ref{prop:conv_Z} follows if we show the following result.

\begin{lemma} \label{Ff_L2conv}
As $\nu \to 0$, uniformly in $\eta > 0$, we have
\begin{equation*}
\int \mu_{\cal C_\eta}(\dd \sigma) \, \absb{F_2(\sigma) - f_2(\ang{\sigma})}^2 \longrightarrow 0\,.
\end{equation*}
\end{lemma}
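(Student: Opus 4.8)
The plan is to establish the $L^2(\mu_{\cal C_\eta})$-convergence by comparing the two explicit representations \eqref{F2_rep} and \eqref{f2_rep} term by term, exploiting that under $\mu_{\cal C_\eta}$ the time-averaged field $\ang{\sigma}$ has law $\mu_{v_\eta}$ (Lemma \ref{lem:sigma_zeta}). The key structural observation is that \eqref{F2_rep} is a Riemann-type sum over $\f r \in (\nu\N)^3$, weighted by $\nu^3$ implicitly through the time integrals $\int_{[0,\nu]^3} \dd\f\tau$, whereas \eqref{f2_rep} is the corresponding Riemann integral over $\f r \in [0,\infty)^3$. So the first step is to rewrite $F_2(\sigma)$ in a form where the comparison to $f_2(\ang{\sigma})$ is transparent: after performing the $\f\tau$-integrals one should see each summand become $\nu^3$ times an integrand evaluated at a sample point, times a propagator factor $W^{\tau+r,\tilde\tau}(-\kappa+\ii\sigma)$ expressed via Brownian bridges. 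The natural route is then to split $F_2(\sigma) - f_2(\ang{\sigma})$ into a handful of error terms: (a) replacing the driving phase $\ee^{\ii\int\dd s\,\sigma([s]_\nu,\omega_1(s))}$ by $\ee^{\ii\int\dd s\,\ang{\sigma}(\omega_1(s))}$, i.e.\ replacing the time-dependent potential by its time average along the loop; (b) replacing the two ``vertex'' factors $\sigma(\tau_i,x_i)$ by $\ang{\sigma}(x_i)$; (c) replacing the Riemann sum over $(\nu\N)^3$ by the Riemann integral over $[0,\infty)^3$; and (d) collapsing the bridges $\bb W^{\tau_i+r_i,\tilde\tau_i}$, whose endpoints carry residual $\tau$-dependence, to the bridges $\bb W^{r_i,0}$ with the same length. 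Each of these is then shown to vanish in $L^2(\mu_{\cal C_\eta})$ as $\nu\to 0$, uniformly in $\eta$.

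For the bookkeeping I would first record a uniform-in-$\eta$ bound on the second moments: using $\re F_2 \le 0$, $\re f_2 \le 0$, the pointwise heat-kernel bound of Lemma \ref{lem:heat_estimate}, and the Gaussian structure of $\mu_{\cal C_\eta}$ (so that $\sigma$-polynomial moments reduce via Wick's rule, Lemma \ref{lem:Gaussian}, to products of the covariance $\cal C_\eta$, which is itself dominated by $\frac{\lambda}{\nu}\delta_{\eta,\nu}\otimes v_\eta$ with $\lambda=\nu^2$), I expect to get bounds of the shape $\int\mu_{\cal C_\eta}(\dd\sigma)\,|F_2(\sigma)|^2 \le C$ and likewise for $f_2(\ang\sigma)$, with $C$ independent of $\nu,\eta$. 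Here the factor $\lambda/\nu = \nu$ from the covariance cancels the large combinatorial/volume factors, which is exactly the mean-field cancellation. These a priori bounds let me localize: cut off the $\f r$-sum/integral to $|\f r| \le M$ at the cost of an error tending to $0$ as $M\to\infty$ uniformly in $\nu,\eta$ (using the $\ee^{-\kappa|\f r|}$ decay together with the heat-kernel estimate to control $\int\bb W$ factors of total length $|\f r|$), and then work on the compact region $|\f r|\le M$.

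On that compact region the heart of the matter is the quantitative continuity of the Brownian-bridge functionals. For step (a) I would write $\int\dd s\,\sigma([s]_\nu,\omega(s)) - \int\dd s\,\ang\sigma(\omega(s))$ and expand $\sigma$ in the Fourier series \eqref{sigma_X_rep}; the difference between a time-dependent Fourier mode and its time average, integrated along the loop, is controlled by $L^2(\P^{\tau,\tilde\tau}_{x,\tilde x})$-increments of $\omega$ via Lemma \ref{lem:P_cont}, producing factors of order $\sqrt{\nu}$ (the loop has length $O(\nu)$ per period). Steps (b) and (d) are of the same nature: the endpoints $\tau_i\in[0,\nu]$ differ from $0$ by $O(\nu)$, so after using \eqref{W_P_relation} and Lemma \ref{lem:P_cont} the replacement errors are $O(\sqrt\nu)$ in the relevant bridge $L^2$-norm; one combines these with Cauchy–Schwarz against the (uniformly bounded) remaining factors and the Gaussian moment bounds for $\sigma$. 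Step (c) is a standard Riemann-sum convergence, since the integrand (after the previous replacements) is, on $|\f r|\le M$, a continuous function of $\f r$ with modulus of continuity controlled again by heat-kernel and bridge estimates, and the mesh is $\nu$. Assembling: on $|\f r|\le M$ the total error is $O_M(\sqrt\nu) \to 0$, and the tails $|\f r|>M$ are uniformly small; hence $\int\mu_{\cal C_\eta}(\dd\sigma)\,|F_2(\sigma)-f_2(\ang\sigma)|^2\to 0$ uniformly in $\eta$.

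The main obstacle I anticipate is handling the singular (white-noise-in-time) nature of $\sigma$ uniformly in the regularization parameter $\eta$: all the bridge-continuity estimates must be paired with Gaussian moment computations in which the $\tau$-covariance is $\delta_{\eta,\nu}$, and one has to see that integrating the $O(\sqrt\nu)$ bridge gains against this near-delta covariance still yields something small and, crucially, $\eta$-independent. Concretely, the dangerous terms are those where two $\sigma$'s from the phase (or one from the phase and one vertex factor) get contracted, producing $\delta_{\eta,\nu}(t-s)$ which is large where $t\approx s$; the gain has to come from the bridge increment $|\omega(t)-\omega(s)|^2 \lesssim |t-s|$ of Lemma \ref{lem:P_cont}, which exactly compensates the $\eta$-blowup upon integration since $\int \delta_{\eta,\nu}(t-s)|t-s|\,\dd t\,\dd s$ stays bounded (indeed $\to 0$) as $\eta\to0$ — this is the same mechanism already used in the proof of Proposition \ref{prop:fif_renorm}(ii), see \eqref{PP_est}. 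Making this robust, and organizing the (somewhat involved) multi-term expansion so that every cross term is dominated by a product of such controllable pieces, is where the real work lies; everything else is dominated convergence and Cauchy–Schwarz.
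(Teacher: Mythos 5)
Your overall strategy is viable and shares its core ingredients with the paper's proof, but it is organized differently. The paper does not telescope $F_2(\sigma)-f_2(\ang\sigma)$ through intermediate hybrid objects; it expands $\absb{F_2-f_2}^2$ into the four second moments \eqref{Ff_exp}, integrates out the Gaussian field \emph{exactly} via Lemma \ref{lem:Gaussian} (keeping the polynomial vertex insertions and the exponential phases together), and is thereby reduced to comparing deterministic Brownian-loop integrals: the quantum interactions of Definition \ref{def:interaction_general} versus the classical ones of Definition \ref{def:interaction}, a comparison carried out pathwise in Lemma \ref{path-path_interaction} and claims (i)--(iii) of the proof of Lemma \ref{lem:FF}, using only $\int_0^\nu\delta_{\eta,\nu}=1$, the uniform equicontinuity of $(v_\eta)$ (Lemma \ref{properties_of_v_eta}), continuity of the paths, and dominated convergence; Lemma \ref{lem:P_cont} is not used for this lemma at all (it enters later, in Step 3 of the proof of Proposition \ref{prop:Qq}, exactly where your ``bridge increment versus $\delta_{\eta,\nu}$'' mechanism is indeed the right one). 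A consequence of this softer route is that no rate is produced, and none is needed; your claimed $O_M(\sqrt\nu)$ errors are not actually available under the standing assumptions, since $v$ is only continuous, so the replacement errors coming from $v_\eta(\omega(t))-v_\eta(\omega(s))$ can only be made $o(1)$ via an equicontinuity/Chebyshev argument (as in \eqref{PP_est}), not $O(\sqrt\nu)$. This is harmless for the conclusion, but you should state $o_M(1)$ rather than a rate.

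The one point where your plan, as written, risks an actual failure is the uniformity in $\eta$ of the vertex insertions. Your a priori bound $\int\mu_{\cal C_\eta}\abs{F_2}^2\le C$ and your error estimates invoke ``Gaussian moment bounds for $\sigma$'' combined with Cauchy--Schwarz against the remaining factors; but any H\"older/Cauchy--Schwarz step that decouples the factors $\sigma(\tau_2,x_2)\,\sigma(\tau_3,x_3)$ from the phases \emph{before} the $\tau$-integrations produces moments like $\int\mu_{\cal C_\eta}(\dd\sigma)\,\sigma(\tau,x)^2=\nu\,\delta_{\eta,\nu}(0)\,v_\eta(0)\asymp\nu/\eta$, which blow up as $\eta\to0$ at fixed $\nu$. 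The uniformity in $\eta$ for these terms does not come from bridge continuity (there is no path increment attached to a vertex--vertex contraction) but from performing the Gaussian integration exactly and then integrating each vertex time over its cell $[0,\nu]$ against the unit-mass, nonnegative kernel $\delta_{\eta,\nu}$ --- this is precisely claims (ii) and (iii) in the proof of Lemma \ref{lem:FF}, where $\int_0^\nu\dd\tau_a\int_0^\nu\dd\tau_b\,(\bb V_{\eta,\nu})^{\tau_a,\tau_b}_{x_a,x_b}=\nu^2(\bb V_\eta)_{x_a,x_b}+\nu^2\cal E_{\eta,\nu}$; similarly, the same-loop phase self-contractions never need to be estimated because they sit in exponents with nonpositive real part. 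So to make your telescoping rigorous you must fix the order of operations: apply Lemma \ref{lem:Gaussian} first, keep every $\delta_{\eta,\nu}$ paired with its $\tau$-integration (or with a path increment, for the phase-average comparison), and only afterwards take absolute values. With that correction, and with $O(\sqrt\nu)$ replaced by $o(1)$, your decomposition (a)--(d) closes and yields the lemma.
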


To prove Lemma \ref{Ff_L2conv}, we write
\begin{multline} \label{Ff_exp}
\int \mu_{\cal C_\eta}(\dd \sigma) \absb{F_2(\sigma) - f_2(\ang{\sigma})}^2
\\
= \int \mu_{\cal C_\eta}(\dd \sigma) \pB{\ol{f_2(\ang{\sigma})} f_2(\ang{\sigma}) + \ol{F_2(\sigma)} F_2(\sigma) - \ol{f_2(\ang{\sigma})} F_2(\sigma) - \ol{F_2(\sigma)} f_2(\ang{\sigma})}\,.
\end{multline}
We shall consider the four terms of \eqref{Ff_exp} individually, and show that they all converge to the first term as $\nu \to 0$, uniformly in $\eta$. 

We begin by analysing the term
\begin{multline}
\label{int_FF'}
\int \mu_{\cal C_\eta}(\dd \sigma) \, \ol{f_2(\ang{\sigma})} f_2(\ang{\sigma}) = \int \mu_{v_\eta} (\dd \xi) \, \ol{f_2(\xi)} f_2(\xi)
=
\int_{[0,\infty)^3} \dd \f r \, \frac{\ee^{- \kappa \abs{\f r}}}{\abs{\f r}}
\int_{[0,\infty)^3} \dd \tilde{\f r} \, \frac{\ee^{- \kappa \abs{\tilde{\f r}}}}{\abs{\tilde{\f r}}}
\, \int \dd \f x \int \dd \tilde{\f x} 
\\
\times \int \bb W_{x_1, x_3}^{r_3, 0}(\dd \omega_3) \, \bb W_{x_3, x_2}^{r_2, 0}(\dd \omega_2) \, \bb W_{x_2, x_1}^{r_1, 0}(\dd \omega_1)
\bb W_{\tilde x_1, \tilde x_3}^{\tilde r_3, 0}(\dd \tilde \omega_3) \, \bb W_{\tilde x_3, \tilde x_2}^{\tilde r_2, 0}(\dd \tilde \omega_2) \, \bb W_{\tilde x_2, \tilde x_1}^{\tilde r_1, 0}(\dd \tilde \omega_1)
\\
\times \int \mu_{v_\eta}(\dd \xi) 
\, \xi(x_2) \, \xi(x_3) \, \xi(\tilde x_2) \, \xi(\tilde x_3)
\, \ee^{\ii \int \dd s \, \xi(\omega_1(s)) - \ii \int \dd s \, \xi(\tilde \omega_1(s))}\,,
\end{multline}
where we used \eqref{f2_rep}.

We shall apply Lemma \ref{lem:Gaussian} to the last line of \eqref{int_FF'}, using that $\mu_{v_\eta}$ is a finite-dimensional real Gaussian measure with covariance $v_\eta$ from \eqref{cov_v_eta}. The terms arising from this computation are described by the following definition.

\begin{definition}\label{def:interaction}
Let $x, \tilde x \in \Lambda$ and let $\omega \in \Omega^{\tau_1, \tilde \tau_1}$ and $\tilde \omega \in \Omega^{\tau_2, \tilde \tau_2}$ be continuous paths. We define the \emph{point-point interaction}
\begin{equation*}
(\bb V_{\eta})_{x,\tilde x} \deq \int \mu_{v_\eta}(\dd \xi) \, \xi(x) \, \xi(\tilde x) =   v_\eta(x - \tilde x)\,,
\end{equation*}
the \emph{point-path interaction}
\begin{equation*}
(\bb V_{\eta})_x(\omega) 
\deq \int \mu_{v_\eta}(\dd \xi) \, \xi(x)  \int \dd s  \, \xi(\omega(s)) =
\int \dd s \, v_\eta(x - \omega(s))\,,
\end{equation*}
and the \emph{path-path interaction}
\begin{equation*}
\bb V_{\eta}(\omega, \tilde \omega) \deq
\int \mu_{v_\eta}(\dd \xi)  \int \dd s  \, \xi(\omega(s)) \int \dd \tilde s \, \xi(\tilde \omega(\tilde s))
= 
\int \dd s \int \dd \tilde s \, v_\eta(\omega(s) - \tilde \omega(\tilde s))\,.
\end{equation*}
\end{definition}

From now on, we use the notation $x_{i,0} = x_i$ and $x_{i,1} = \tilde x_i$. Abbreviating
\begin{equation*}
A \deq \{2,3\} \times \{0,1\}
\end{equation*}
and using Lemma \ref{lem:Gaussian} with 
\begin{equation}
\label{f_f_a}
f(x) =\int \dd s \,\delta(x-\omega_1(s))-\int \dd \tilde s \,\delta(x-\tilde \omega_1(\tilde s))\,,\quad  f_a=\delta(x-x_a)\,, 
\end{equation}
we conclude that 
\begin{equation} \label{int_ff}
\int \mu_{v_\eta} (\dd \xi) \, \ol{f_2(\xi)} f_2(\xi)
=
\int_{[0,\infty)^3} \dd \f r \, \frac{\ee^{- \kappa \abs{\f r}}}{\abs{\f r}}
\int_{[0,\infty)^3} \dd \tilde{\f r} \, \frac{\ee^{- \kappa \abs{\tilde{\f r}}}}{\abs{\tilde{\f r}}}
\, I(\f r, \tilde{\f r})\,,
\end{equation}
where we defined
\begin{align}
I(\f r, \tilde{\f r}) \deq &{} \int \dd \f x \int \dd \tilde{\f x} \int \bb W_{x_1, x_3}^{r_3, 0}(\dd \omega_3) \, \bb W_{x_3, x_2}^{r_2, 0}(\dd \omega_2) \, \bb W_{x_2, x_1}^{r_1, 0}(\dd \omega_1)
\notag \\
&\times \bb W_{\tilde x_1, \tilde x_3}^{\tilde r_3, 0}(\dd \tilde \omega_3) \, \bb W_{\tilde x_3, \tilde x_2}^{\tilde r_2, 0}(\dd \tilde \omega_2) \, \bb W_{\tilde x_2, \tilde x_1}^{\tilde r_1, 0}(\dd \tilde \omega_1)
\,  \ee^{-\frac{1}{2} \pb{\bb V_{\eta}(\omega_1, \omega_1) + \bb V_{\eta}(\tilde \omega_1, \tilde \omega_1) - 2 \bb V_{\eta}(\omega_1, \tilde \omega_1)}}
\notag \\ \label{I(r,tilde_r)}
&\times \sum_{\Pi \in \fra M(A)} \prod_{\{a,b\} \in \Pi} (\bb V_{\eta})_{x_a, x_b} \prod_{a \in A \setminus [\Pi]} \ii \pB{(\bb V_{\eta})_{x_a}(\omega_1) - (\bb V_{\eta})_{x_a}(\tilde \omega_1)}\,.
\end{align}

Next, we collect several direct consequences of \eqref{Fourier_regularization2} which we shall use in the sequel.
\begin{lemma} 
\label{properties_of_v_eta}
The following properties hold.
\begin{itemize}
\item[(i)]
$\|v_{\eta}\|_{L^{\infty}} \leq \|v\|_{L^{\infty}}$ for all $\eta>0$. 
\item[(ii)]
The family $(v_{\eta})_{\eta>0}$ is uniformly equicontinuous\footnote{Explicitly: for every $\epsilon > 0$ there is a $\delta > 0$ such that for all $\eta > 0$ and $x,y \in \Lambda$, if $\abs{x - y}_L < \delta$ then $\abs{v_\eta(x) - v_\eta(y)} < \epsilon$.}.
\end{itemize}
\end{lemma}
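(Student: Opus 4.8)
The statement to prove is Lemma~\ref{properties_of_v_eta}, asserting that the regularized potentials $v_\eta = v * \delta_{\eta,L,\varphi}$ satisfy $\|v_\eta\|_{L^\infty} \le \|v\|_{L^\infty}$ and are uniformly equicontinuous in $\eta$.

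\textbf{Plan.} The two claims both follow from the fact that $v_\eta$ is obtained from $v$ by convolution against the probability density $\delta_{\eta,L,\varphi}$ on the torus $\Lambda$. The key point to establish first is that $\delta_{\eta,L,\varphi} \ge 0$ and $\int_\Lambda \dd x\, \delta_{\eta,L,\varphi}(x) = 1$. Nonnegativity holds because, by the definition in \eqref{Fourier_regularization}, $\delta_{\eta,L,\varphi}(x) = \eta^{-d} \sum_{y \in \Z^d} (\cal F^{-1}\varphi)((x - Ly)/\eta)$, and we chose $\varphi$ in Section~\ref{intro_sigma} precisely so that $\cal F^{-1}\varphi \ge 0$; the normalization $\int_\Lambda \dd x\, \delta_{\eta,L,\varphi}(x) = 1$ was already noted after \eqref{Fourier_regularization} (it is equivalent to $\varphi(0) = 1$, which also holds by our choice). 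So $\delta_{\eta,L,\varphi}$ is a genuine probability density on $\Lambda$.

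\textbf{Part (i).} With this in hand, for any $x \in \Lambda$,
\[
|v_\eta(x)| = \left| \int_\Lambda \dd y\, \delta_{\eta,L,\varphi}(x - y)\, v(y) \right| \le \int_\Lambda \dd y\, \delta_{\eta,L,\varphi}(x-y)\, \|v\|_{L^\infty} = \|v\|_{L^\infty}\,,
\]
using nonnegativity of $\delta_{\eta,L,\varphi}$ and its unit mass. Taking the supremum over $x$ gives (i).

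\textbf{Part (ii).} For equicontinuity, write, for $x, x' \in \Lambda$,
\[
v_\eta(x) - v_\eta(x') = \int_\Lambda \dd y\, \delta_{\eta,L,\varphi}(y)\, \bigl( v(x - y) - v(x' - y) \bigr)\,,
\]
so that $|v_\eta(x) - v_\eta(x')| \le \sup_{z \in \Lambda} |v(x - z) - v(x' - z)|$, again using that $\delta_{\eta,L,\varphi}$ is a probability density. Now the modulus of continuity of $v_\eta$ is bounded by that of $v$, uniformly in $\eta$: given $\epsilon > 0$, since $v$ is continuous on the compact torus $\Lambda$ it is uniformly continuous (Assumption~\ref{ass:v}), so there is $\delta > 0$ such that $|v(a) - v(b)| < \epsilon$ whenever $|a - b|_L < \delta$; then $|x - x'|_L < \delta$ implies $|(x - z) - (x' - z)|_L < \delta$ for every $z$, hence $|v_\eta(x) - v_\eta(x')| \le \epsilon$ for all $\eta > 0$. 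This is exactly the uniform equicontinuity asserted in the footnote.

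\textbf{Main obstacle.} There is essentially no obstacle: the whole content is that convolution with a probability measure is an $L^\infty$-contraction and does not worsen the modulus of continuity. The only mild subtlety is bookkeeping around the periodic structure --- verifying that $\delta_{\eta,L,\varphi}$, as the periodization of $\eta^{-d}(\cal F^{-1}\varphi)(\cdot/\eta)$, is indeed a nonnegative probability density on $\Lambda$ and that the convolution $v * \delta_{\eta,L,\varphi}$ on $\Lambda$ can be written as the stated integral over $\Lambda$ --- but this is immediate from \eqref{Fourier_regularization} and the conventions fixed in Section~\ref{sec:FT} and Section~\ref{intro_sigma}.
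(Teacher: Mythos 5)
Your proof is correct and follows exactly the route the paper has in mind: the paper treats the lemma as a direct consequence of \eqref{Fourier_regularization2}, namely that $v_\eta = v * \delta_{\eta,L,\varphi}$ with $\delta_{\eta,L,\varphi} \geq 0$ (from $\cal F^{-1}\varphi \geq 0$) of unit mass, so convolution is an $L^\infty$-contraction and preserves the modulus of continuity of the uniformly continuous $v$. Your write-up simply makes these "direct consequences" explicit, with no gaps.
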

Furthermore, \eqref{Fourier_regularization2} implies that
\begin{equation}
\label{properties_of_delta_eta}
\int_{0}^{\nu} \dd \tau \, \delta_{\eta,\nu} (\tau)=1\,,\quad \delta_{\eta,\nu} \geq 0\,.
\end{equation}

\begin{lemma} \label{lem:unif_int}
\label{int_ff_convergence_lemma}
The integral on the right-hand side of \eqref{int_ff} converges absolutely, uniformly in $\eta$.
\end{lemma}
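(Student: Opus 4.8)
The plan is to bound the integrand of \eqref{int_ff} — that is, the quantity $\frac{\ee^{-\kappa|\f r|}}{|\f r|}\frac{\ee^{-\kappa|\tilde{\f r}|}}{|\tilde{\f r}|}|I(\f r,\tilde{\f r})|$ — by a function of $(\f r,\tilde{\f r})$ that is integrable over $[0,\infty)^6$ uniformly in $\eta$, and then invoke dominated convergence / Tonelli. First I would estimate $|I(\f r,\tilde{\f r})|$ directly from \eqref{I(r,tilde_r)}. The Gaussian factor $\ee^{-\frac12(\bb V_\eta(\omega_1,\omega_1)+\bb V_\eta(\tilde\omega_1,\tilde\omega_1)-2\bb V_\eta(\omega_1,\tilde\omega_1))}$ has modulus at most one: since $v$ is of positive type, $\bb V_\eta(\omega,\omega)+\bb V_\eta(\tilde\omega,\tilde\omega)-2\bb V_\eta(\omega,\tilde\omega)=\int\mu_{v_\eta}(\dd\xi)\,|\int\dd s\,\xi(\omega(s))-\int\dd\tilde s\,\xi(\tilde\omega(\tilde s))|^2\geq 0$, so this factor can simply be dropped. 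For the combinatorial sum over $\fra M(A)$: each point-point factor $(\bb V_\eta)_{x_a,x_b}=v_\eta(x_a-x_b)$ is bounded by $\|v\|_{L^\infty}$ by Lemma \ref{properties_of_v_eta}(i), and each point-path factor $(\bb V_\eta)_{x_a}(\omega_1)-(\bb V_\eta)_{x_a}(\tilde\omega_1)=\int\dd s\,v_\eta(x_a-\omega_1(s))-\int\dd\tilde s\,v_\eta(x_a-\tilde\omega_1(\tilde s))$ is bounded by $\|v\|_{L^\infty}(r_1+\tilde r_1)$, since $\omega_1$ is defined on an interval of length $r_1$ and $\tilde\omega_1$ on one of length $\tilde r_1$ (recall the integration bounds are read off the paths, and $\bb W^{\tau,\tilde\tau}=0$ when $\tau\leq\tilde\tau$, so only $r_i,\tilde r_i>0$ contribute). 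Thus, since $|A|=4$, the sum over partial pairings is bounded by a constant $C_v$ times $\max\{1,(r_1+\tilde r_1)^2\}\leq C(1+r_1^2+\tilde r_1^2)$.

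Having pulled all these bounds outside, the remaining integral over the Brownian measures is
\[
\int\dd\f x\,\dd\tilde{\f x}\,\int \bb W_{x_1,x_3}^{r_3,0}(\dd\omega_3)\cdots\bb W_{\tilde x_2,\tilde x_1}^{\tilde r_1,0}(\dd\tilde\omega_1),
\]
which, integrating out the path endpoints via \eqref{W_dist} (each $\int\bb W^{r,0}_{x,\tilde x}(\dd\omega)=\psi^r(x-\tilde x)$), collapses to the product of total masses $\prod_{i}\psi^{r_i}\!*\cdots$ evaluated on the torus; concretely it equals $\big(\int\dd x\,\psi^{r_1+r_2+r_3}(0)\big)\big(\int\dd\tilde x\,\psi^{\tilde r_1+\tilde r_2+\tilde r_3}(0)\big)=|\Lambda|^2\,\psi^{|\f r|}(0)\,\psi^{|\tilde{\f r}|}(0)$ up to the periodicity bookkeeping — in any case bounded, by Lemma \ref{lem:heat_estimate}, by $C(L^{-d}+|\f r|^{-d/2})(L^{-d}+|\tilde{\f r}|^{-d/2})$. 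Combining, the integrand of \eqref{int_ff} is dominated by
\[
C\,\frac{\ee^{-\kappa|\f r|}}{|\f r|}\frac{\ee^{-\kappa|\tilde{\f r}|}}{|\tilde{\f r}|}\,(1+r_1^2+\tilde r_1^2)\,\big(L^{-d}+|\f r|^{-d/2}\big)\big(L^{-d}+|\tilde{\f r}|^{-d/2}\big),
\]
uniformly in $\eta$.

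It remains to check this bound is integrable over $(\f r,\tilde{\f r})\in[0,\infty)^6$, which factorizes into two identical three-dimensional integrals. Passing to the variable $s=|\f r|$ (the simplex $\{\f r\in[0,\infty)^3:|\f r|=s\}$ has area $\propto s^2$, and on it $1+r_1^2\leq 1+s^2$), the single integral is controlled by $\int_0^\infty \dd s\, s^2\,\frac{\ee^{-\kappa s}}{s}\,(1+s^2)\,(L^{-d}+s^{-d/2})$. Near $s=0$ the integrand behaves like $s^{1-d/2}$, which is integrable precisely for $d<4$, hence for $d=1,2,3$; near $s=\infty$ the exponential $\ee^{-\kappa s}$ (with $\kappa>0$) beats all the polynomial factors. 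So the integral is finite, uniformly in $\eta$ (and in $L$, for $L$ bounded below). The main obstacle — really the only point requiring care — is the endpoint-integration step: one must correctly carry out the contractions of the six Brownian bridge measures over the internal space variables $\f x,\tilde{\f x}$ using the semigroup/convolution identity \eqref{W_dist} and confirm the resulting heat-kernel product is exactly what Lemma \ref{lem:heat_estimate} bounds; everything else is a routine $\|v\|_{L^\infty}$ estimate plus a one-dimensional integrability check. I would then remark that the same argument, verbatim, gives absolute uniform convergence of the three remaining terms in \eqref{Ff_exp} once they are written out in the analogous path-integral form (with some $r_i$'s replaced by sums $r_i+\nu\N$ Riemann-sum style for the $F_2$ terms), so Lemma \ref{lem:unif_int} suffices to set up the dominated-convergence argument for Lemma \ref{Ff_L2conv}.
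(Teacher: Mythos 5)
Your argument is correct and follows essentially the same route as the paper: drop the Gaussian exponential by positivity of $v_\eta$, bound the pairing factors by $\norm{v}_{L^\infty}$ times the path lengths, integrate out the internal spatial variables so the three bridges concatenate into a closed loop controlled by Lemma \ref{lem:heat_estimate}, and conclude from the integrability of $\abs{\f r}^{-1-d/2}\,\ee^{-\kappa \abs{\f r}}$ near $\f r = 0$ for $d \leq 3$ together with the exponential decay at infinity. Two harmless slips: the empty pairing leaves four unpaired factors, so the combinatorial sum is bounded by $C_v(1+r_1+\tilde r_1)^4$ rather than a square, and the loop integral equals $L^{2d}\,\psi^{\abs{\f r}}(0)\,\psi^{\abs{\tilde{\f r}}}(0)$, so the correct dominating factor is $\pb{1+L^d \abs{\f r}^{-d/2}}\pb{1+L^d\abs{\tilde{\f r}}^{-d/2}}$ rather than $\pb{L^{-d}+\abs{\f r}^{-d/2}}\pb{L^{-d}+\abs{\tilde{\f r}}^{-d/2}}$ — neither affects the uniform convergence for fixed $L$.
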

\begin{proof}
With $f$ chosen as in \eqref{f_f_a} we have
\begin{equation}
\label{int_ff_convergence_lemma_1}
\bb V_{\eta}(\omega_1, \omega_1) + \bb V_{\eta}(\tilde \omega_1, \tilde \omega_1) - 2 \bb V_{\eta}(\omega_1, \tilde \omega_1)= \int \dd x \, \dd y \, f(x) v_\eta(x - y) f(y) \geq 0\,,
\end{equation}
From Definition \ref{def:interaction} and Lemma \ref{properties_of_v_eta} we have that for all $a,b \in A$
\begin{equation}
\label{int_ff_convergence_lemma_2}
\bigl|(\bb V_{\eta})_{x_a, x_b}\bigr| \leq \|v\|_{L^{\infty}}\,, \quad \bigl|(\bb V_{\eta})_{x_a}(\omega_1)\bigr| \leq r_1 \|v\|_{L^{\infty}}\,,\quad \bigl|(\bb V_{\eta})_{x_a}(\tilde \omega_1)\bigr| \leq \tilde r_1 \|v\|_{L^{\infty}}\,.
\end{equation}
The claim now follows from \eqref{int_ff_convergence_lemma_1}--\eqref{int_ff_convergence_lemma_2} and the observation that
\begin{multline*}
\int_{[0,\infty)^3} \dd \f r \, \frac{\ee^{- \kappa \abs{\f r}}}{\abs{\f r}}
\int_{[0,\infty)^3} \dd \tilde{\f r} \, \frac{\ee^{- \kappa \abs{\tilde{\f r}}}}{\abs{\tilde{\f r}}}
\, \int \dd x_1 \,\int \dd \tilde x_1\, \int \bb W_{x_1, x_1}^{\abs{\f r}, 0}(\dd \omega)
\,\int \bb W_{\tilde x_1,\tilde x_1}^{\abs{\tilde{\f r}}, 0}(\dd \tilde \omega) (1+r_1+\tilde r_1)^4
\\
\leq C_d \,\int_{[0,\infty)^3} \dd \f r \, \frac{\ee^{- \kappa \abs{\f r}}}{\abs{\f r}}
\int_{[0,\infty)^3} \dd \tilde{\f r} \, \frac{\ee^{- \kappa \abs{\tilde{\f r}}}}{\abs{\tilde{\f r}}}
\,\pbb{1 + \frac{L^d}{\abs{\f r}^{d/2}}} \,\pbb{1 + \frac{L^d}{\abs{\tilde{\f r}}^{d/2}}}\,\bigl(1+\abs{\f r}+\abs{\tilde{\f r}}\bigr)^4 <\infty\,, 
\end{multline*}
as follows from Lemma \ref{lem:heat_estimate} and $d \leq 3$. 
\end{proof}

The expression \eqref{int_ff} is the reference quantity, to which we shall compare the three other expressions on the right-hand side of \eqref{Ff_exp}.

Proposition \ref{prop:conv_Z} follows from the two following results.

\begin{lemma} \label{lem:FF}
Uniformly in $\eta>0$,
\begin{equation*}
\lim_{\nu \to 0} \int \mu_{\cal C_\eta}(\dd \sigma) \, \ol{F_2(\sigma)} F_2(\sigma) = \int \mu_{v_\eta} (\dd \xi) \, \ol{f_2(\xi)} f_2(\xi)\,.
\end{equation*}
\end{lemma}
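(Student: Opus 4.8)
The plan is to carry out the analogue of the computation leading to \eqref{int_ff}, but now starting from the space-time representation \eqref{F2_rep} of $F_2(\sigma)$ and the Gaussian measure $\mu_{\cal C_\eta}$ rather than from \eqref{f2_rep} and $\mu_{v_\eta}$. First I would write $\ol{F_2(\sigma)} F_2(\sigma)$ as a double sum over $\f r, \tilde{\f r} \in (\nu \N)^3$ with the associated time integrals over $\f \tau, \tilde{\f \tau} \in [0,\nu]^3$ and space integrals over $\f x, \tilde{\f x}$, and then integrate out $\sigma$ using Lemma \ref{lem:Gaussian}. The covariance of $\mu_{\cal C_\eta}$ from \eqref{cov_C_eta} factorizes as $\frac{\lambda}{\nu}\,\delta_{\eta,\nu}(\tau - \tilde\tau)\,v_\eta(x - \tilde x)$; recalling the mean-field scaling $\lambda = \nu^2$, each contraction carries a factor $\nu$, so the two free ``external'' contractions of $\sigma(\tau_2,x_2),\sigma(\tau_3,x_3),\sigma(\tilde\tau_2,\tilde x_2),\sigma(\tilde\tau_3,\tilde x_3)$ against each other or against the path exponentials produce exactly the factors needed to match the $\nu$-powers implicit in $\f r,\tilde{\f r}\in(\nu\N)^3$ becoming a Riemann sum for $\dd\f r\,\dd\tilde{\f r}$. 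This gives a ``quantum'' analogue $I_{\eta,\nu}(\f r, \tilde{\f r})$ of the integrand $I(\f r,\tilde{\f r})$ in \eqref{I(r,tilde_r)}, built from the time-dependent Brownian-bridge propagators $\bb W^{\tau+r,\tilde\tau}$ and from point-point, point-path, path-path interactions smeared in time against $\delta_{\eta,\nu}$.

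Next I would identify the limiting mechanism: as $\nu\to 0$, the periodic-in-time structure $[s]_\nu$ and the factor $\frac1\nu\int_0^\nu\dd\tau$ force the time variables to decouple, and the Riemann sum $\nu\sum_{\f r\in(\nu\N)^3}$ converges to $\int_{[0,\infty)^3}\dd\f r$. More precisely, for fixed $\f r,\tilde{\f r}$ I would show $I_{\eta,\nu}(\f r,\tilde{\f r}) \to I(\f r,\tilde{\f r})$ using: (a) the heat-kernel bound Lemma \ref{lem:heat_estimate} to control the mass of the bridge measures; (b) the quantitative $L^2$-continuity of Brownian bridges, Lemma \ref{lem:P_cont}, together with $\int_0^\nu\dd\tau\,\delta_{\eta,\nu}(\tau)=1$, $\delta_{\eta,\nu}\geq 0$ from \eqref{properties_of_delta_eta}, to replace $\sigma([s]_\nu,\omega(s))$-type time-smearings by their time-collapsed versions, uniformly in $\eta$; and (c) the uniform equicontinuity and uniform boundedness of $(v_\eta)_\eta$ from Lemma \ref{properties_of_v_eta} to handle the spatial smearing. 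Then I would invoke a dominated-convergence argument for the $\f r,\tilde{\f r}$ integration: the bound $\abs{I_{\eta,\nu}(\f r,\tilde{\f r})}\leq C_d (1 + L^d\abs{\f r}^{-d/2})(1 + L^d\abs{\tilde{\f r}}^{-d/2})(1+\abs{\f r}+\abs{\tilde{\f r}})^4$, uniform in $\eta$ and $\nu$, follows exactly as in the proof of Lemma \ref{lem:unif_int}, and against the weight $\frac{\ee^{-\kappa\abs{\f r}}}{\abs{\f r}}\frac{\ee^{-\kappa\abs{\tilde{\f r}}}}{\abs{\tilde{\f r}}}$ it is integrable for $d\leq 3$. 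Combining pointwise convergence with this uniform integrable envelope yields convergence of the full double integral, uniformly in $\eta$, which is the claim.

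The main obstacle I expect is the bookkeeping and uniformity in step (b): one must carefully match the discrete time structure of $F_2$ (the periodic potential $u([\tau]_\nu)$, the sum over $\f r\in(\nu\N)^3$, and the self-interaction exponent $\ee^{\ii\int\dd s\,\sigma([s]_\nu,\omega_1(s))}$ whose real part is already controlled by $\re F_2\leq 0$) against the continuous-time structure of $f_2$, while keeping every estimate independent of the regularization parameter $\eta$. The key technical point is that although $\sigma$ is white noise in time as $\eta\to 0$, every quantity that survives in $I_{\eta,\nu}$ and $I(\f r,\tilde{\f r})$ depends on $\sigma$ (resp. $\xi$) only through the finitely many contractions produced by Lemma \ref{lem:Gaussian}, each of which is a point-point, point-path or path-path interaction that is bounded by $\|v\|_{L^\infty}$ times a power of the relevant time length; the time-smearing against $\delta_{\eta,\nu}$ then disappears in the limit $\nu\to 0$ uniformly in $\eta$ by Lemma \ref{lem:P_cont} and the continuity of $v$, exactly as in the argument used to prove Proposition \ref{prop:fif_renorm}(ii) (see \eqref{PP_est}). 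Once this uniform-in-$\eta$ pointwise convergence of integrands is in hand, the rest is the dominated-convergence wrap-up already rehearsed in Lemma \ref{lem:unif_int}.
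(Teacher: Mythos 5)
Your plan is correct and follows essentially the same route as the paper's proof: expand $\ol{F_2}F_2$ via the representation \eqref{F2_rep}, integrate out $\sigma$ with Lemma \ref{lem:Gaussian} to obtain a quantum analogue of $I(\f r,\tilde{\f r})$ built from the interactions of Definition \ref{def:interaction_general}, show it equals $I(\f r,\tilde{\f r})$ up to errors vanishing uniformly in $\eta$ (using Lemma \ref{lem:P_cont}, the properties of $\delta_{\eta,\nu}$ and the equicontinuity of $v_\eta$, exactly as in the paper's Lemma \ref{path-path_interaction} and claims (i)--(iii)), and conclude by the integrable envelope of Lemma \ref{lem:unif_int} together with a Riemann-sum argument. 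The only ingredient you leave implicit is the explicit time bookkeeping (concatenating the bridges into a loop and decoupling the $\tau_a$-integrations via the auxiliary times $s_a$), which you correctly flag as the main technical obstacle and which the paper carries out in \eqref{choice_of_parameters}--\eqref{J_r_tilde_r_2}.
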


\begin{lemma} \label{lem:Ff}
Uniformly in $\eta>0$,
\begin{equation*}
\lim_{\nu \to 0} \int \mu_{\cal C_\eta}(\dd \sigma) \, \ol{F_2(\sigma)} f_2(\ang{\sigma}) = \int \mu_{v_\eta} (\dd \xi) \, \ol{f_2(\xi)} f_2(\xi)\,.
\end{equation*}
\end{lemma}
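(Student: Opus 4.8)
The plan is to proceed exactly as in the proof of Lemma~\ref{lem:FF}, the only new feature being that one of the two factors, $f_2(\ang\sigma)$, already carries the time-averaged (``classical'') structure. First I would insert the space-time representation \eqref{F2_rep} for $\ol{F_2(\sigma)}$ --- after conjugation its only path-dependent phase is $\ee^{-\ii\int\dd s\,\sigma([s]_\nu,\omega_1(s))}$ --- together with the representation \eqref{f2_rep} for $f_2(\ang\sigma)$, in which $\xi$ is replaced everywhere by the time average $\ang\sigma$ of \eqref{sigma(x)_definition}. Just as in Lemma~\ref{lem:unif_int}, the bounds $\norm{v_\eta}_{L^\infty}\leq\norm{v}_{L^\infty}$ (Lemma~\ref{properties_of_v_eta}), Lemma~\ref{lem:heat_estimate}, and $d\leq3$ show that the resulting multiple integral/sum is absolutely convergent uniformly in $\eta$ and $\nu$, so Fubini lets us perform the Gaussian integration over $\mu_{\cal C_\eta}$ first. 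Applying the generalised Wick rule of Lemma~\ref{lem:Gaussian} --- with the exponent linear form equal to the difference of $\int\dd s\,\delta(\cdot-\omega_1(s))$ and $\frac1\nu\int_0^\nu\dd\tau\int\dd\tilde s\,\delta(\cdot-\tilde\omega_1(\tilde s))$, and with the four prefactor forms being the delta functions at $(\tau_2,x_2)$, $(\tau_3,x_3)$ and their time-averaged counterparts at $\tilde x_2$, $\tilde x_3$ --- produces an expansion over partial pairings $\Pi\in\fra M(\{2,3\}\times\{0,1\})$ that is structurally identical to \eqref{I(r,tilde_r)}.

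The key observation is that, because the covariance \eqref{cov_C_eta} satisfies $\frac1\nu\int_0^\nu\dd\tau\,(\cal C_\eta)^{\tau,\tilde\tau}_{x,\tilde x}=v_\eta(x-\tilde x)$ (by $\lambda=\nu^2$ and \eqref{properties_of_delta_eta}), every contraction in which at least one partner is a time-averaged form yields exactly $v_\eta$ of the spatial arguments, with no residual $\delta_{\eta,\nu}$. Hence all ``mixed'' and ``purely classical'' contractions already give precisely the corresponding factors of \eqref{I(r,tilde_r)}. The contractions that differ before the limit are the purely ``quantum'' ones: the two prefactor forms then contract to $\nu\delta_{\eta,\nu}(\tau_2-\tau_3)\,v_\eta(x_2-x_3)$, the path self-interaction carries $\nu\delta_{\eta,\nu}([s]_\nu-[s']_\nu)$, the three bridges on that side appear as $\bb W^{\tau_i+r_j,\tau_j}$ rather than $\bb W^{r_j,0}$, and the $\f r$-sum comes with the extra integration $\int_{[0,\nu]^3}\dd\f\tau$, which supplies the Riemann volume element $\nu^3$.

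It then remains to pass to the limit $\nu\to0$ by dominated convergence. For domination, as in Lemma~\ref{lem:unif_int} one bounds every interaction factor by a power of $\norm{v}_{L^\infty}$ times a polynomial in $\abs{\f r},\abs{\tilde{\f r}}$, the Gaussian factor by $1$ (its real exponent is nonpositive, \eqref{int_ff_convergence_lemma_1}), the bridge masses by Lemma~\ref{lem:heat_estimate}, and uses $\int_0^\nu\delta_{\eta,\nu}=1$; this gives a dominating function of the same form as in Lemma~\ref{lem:unif_int}, finite since $d\leq3$, and uniform in $\eta,\nu$. For the pointwise limit, one fixes $\f r=\nu\f k$ and $\tilde{\f r}$ and uses that $\tau_1,\tau_2,\tau_3\in[0,\nu]$ tend to $0$: the groupoid property and the quantitative $L^2$-continuity of Brownian bridges (Lemma~\ref{lem:P_cont}) give $\bb W^{\tau_i+r_j,\tau_j}\to\bb W^{r_j,0}$; the $\delta_{\eta,\nu}$-weighted double-time integrals collapse via $\int_0^\nu\delta_{\eta,\nu}=1$ and the uniform equicontinuity of $(v_\eta)_{\eta>0}$ (Lemma~\ref{properties_of_v_eta}), so that $v_\eta(\omega(s)-\omega(s'))$ may be replaced by $v_\eta(\omega(s)-\omega(s))$ and the quantum prefactor contraction by its averaged value; and $\nu^3\sum_{\f r\in(\nu\N)^3}\to\int_{[0,\infty)^3}\dd\f r$. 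Each summand therefore converges to the corresponding summand of \eqref{I(r,tilde_r)}, and since the only $\eta$-dependence is through $v_\eta$ and all the above estimates are uniform in $\eta$, the convergence to \eqref{int_ff} is uniform in $\eta$. The twin term $\int\mu_{\cal C_\eta}(\dd\sigma)\,\ol{f_2(\ang\sigma)}F_2(\sigma)$ is the complex conjugate of the one above ($\mu_{\cal C_\eta}$ being real) and, as \eqref{int_ff} is real, has the same limit.

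The main obstacle is the bookkeeping of the time variables through the Wick expansion: one must verify, pairing by pairing, that a $\delta_{\eta,\nu}$ factor survives exactly when both partners are ``quantum'' and disappears otherwise, that the shrinking-cube integral $\int_{[0,\nu]^3}\dd\f\tau$ converts the $\f r$-sum into the right Riemann integral, and that every estimate holds simultaneously uniformly in $\eta>0$ and in small $\nu$ --- in particular near the (integrable in $\R^3$) singularity $\ind{\abs{\f r}>0}/\abs{\f r}$ at the origin.
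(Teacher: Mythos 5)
Your plan is correct and follows essentially the same route as the paper: the representations \eqref{F2_rep} and \eqref{f2_rep} with $\xi$ replaced by $\ang{\sigma}$, the Gaussian expansion via Lemma \ref{lem:Gaussian} with the mixed quantum/time-averaged forms (the paper's \eqref{f_f_a3}), the observation that any contraction involving a time-averaged partner already produces $\bb V_\eta$ without a residual $\delta_{\eta,\nu}$, and then a quantitative, uniform-in-$\eta$ comparison of the purely quantum pieces using Lemma \ref{properties_of_v_eta}, Lemma \ref{lem:P_cont}, Lemma \ref{lem:heat_estimate}, and Riemann-sum convergence with the integrable dominating weight for $d \leq 3$, plus the conjugate-term remark. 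This is exactly how the paper argues (its proof of Lemma \ref{lem:Ff} defers to that of Lemma \ref{lem:FF} with the substitutes \eqref{exponential_term2} and \eqref{Claim_(iii)_2}), so no further comment is needed.
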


Before proceeding to the proofs of Lemmas \ref{lem:FF} and \ref{lem:Ff}, we extend Definition \ref{def:interaction} to positive $\nu$ as follows.
\begin{definition}\label{def:interaction_general}
Let $(\tau, x), (\tilde \tau,\tilde x) \in [0,\nu] \times \Lambda$ be space-time points, and let $\omega \in \Omega^{\tau_1, \tilde \tau_1}$ and $\tilde \omega \in \Omega^{\tau_2, \tilde \tau_2}$ be continuous paths. 
Let $\delta$ denote the $\nu$-periodic delta function.
Then we define the \emph{point-point interaction}
\begin{equation*}
(\bb V_{\eta, \nu})^{\tau, \tilde \tau}_{x,\tilde x} \deq \int \mu_{\cal C_\eta}(\dd \sigma) \, \sigma(\tau,x) \, \sigma(\tilde \tau, \tilde x) =  \nu \, \delta_{\eta, \nu}(\tau - \tilde \tau) \, v_\eta(x - \tilde x)\,,
\end{equation*}
the \emph{point-path interaction}
\begin{align*}
(\bb V_{\eta, \nu})^\tau_x(\omega) 
&\deq \int \mu_{\cal C_\eta}(\dd \sigma) \, \sigma(\tau,x)  \int_0^\nu \dd t \int \dd s \, \delta(t - [s]_\nu) \, \sigma(t, \omega(s))
\\
&=
\nu \int \dd s \,\delta_{\eta, \nu}(\tau - [s]_\nu) \,  v_\eta(x - \omega(s))\,,
\end{align*}
and the \emph{path-path interaction}
\begin{align*}
\bb V_{\eta, \nu}(\omega, \tilde \omega) &\deq 
 \int \mu_{\cal C_\eta}(\dd \sigma) \int_0^\nu \dd t \int \dd s \, \delta(t - [s]_\nu) \, \sigma(t, \omega(s)) \int_0^\nu \dd \tilde t \int \dd \tilde s \, \delta(\tilde t - [\tilde s]_\nu) \, \sigma(\tilde t, \tilde \omega(\tilde s))
\\
&=
\nu \int \dd s \int \dd \tilde s \,\delta_{\eta, \nu}([s]_\nu - [\tilde s]_\nu) \, v_\eta(\omega(s) - \tilde \omega(\tilde s))\,.
\end{align*}

\end{definition}

In the sequel the following shorthand will prove useful. We denote by $\cal E_{\eta, \nu}(\alpha)$ any function of a parameter $\alpha$, which is uniformly bounded in $(\eta,\nu,\alpha)$, and which tends to zero as $\nu \to 0$, uniformly in $\eta > 0$, for each $\alpha$ (i.e.\ $\lim_{\nu \to 0} \sup_{\eta > 0} \abs{\cal E_{\eta, \nu}(\alpha)} = 0$ for all $\alpha$).
This definition is tailored for use with dominated convergence, in proving that, for a finite measure $\mu$, 
\begin{equation*}
\lim_{\nu \to 0} \sup_{\eta > 0} \absbb{\int \mu(\dd \alpha) \, \cal E_{\eta, \nu}(\alpha)} \leq \lim_{\nu \to 0} \int \mu(\dd \alpha) \, \sup_{\eta > 0} \abs{\cal E_{\eta, \nu}(\alpha)}  = 0\,.
\end{equation*}

With this notation, we note several properties of the path-path interactions.

\begin{lemma}
\label{path-path_interaction}
Let $\omega \in \Omega^{\tau_1, \tilde \tau_1}$ and $\tilde \omega \in \Omega^{\tau_2, \tilde \tau_2}$ be continuous paths. 
\begin{itemize}
\item[(i)]
For $\nu>0$ we have
\begin{equation*}
\bb V_{\eta,\nu}(\omega, \tilde \omega)=\bb V_{\eta}(\omega, \tilde \omega)
+\cal E_{\eta, \nu}(\omega, \tilde \omega) \bigl[1+(\tau_1-\tilde \tau_1) \bigr]\,\bigl[1+(\tau_2-\tilde \tau_2) \bigr]\,.
\end{equation*}
\item[(ii)] Let $\omega' \in \Omega^{\tau'_1, \tilde \tau'_1}$ and $\tilde \omega' \in \Omega^{\tau'_2, \tilde \tau'_2}$ be continuous paths that agree with $\omega$ and $\tilde \omega$ respectively on the intersection of their domains. Furthermore, suppose
$|\tau_j-\tau'_j| \leq \nu, |\tilde \tau_j-\tilde \tau'_j| \leq \nu$ for $j=1,2$. Then we have
\begin{equation*}
\bb V_{\eta}(\omega, \tilde \omega)=\bb V_{\eta}(\omega', \tilde \omega')+O\bigl(\nu (\tau_1-\tilde \tau_1) + \nu (\tau_2-\tilde \tau_2) + \nu^2\bigr)\,.
\end{equation*}
\end{itemize}
\end{lemma}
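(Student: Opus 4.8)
The plan is to prove the two identities in Lemma~\ref{path-path_interaction} by directly manipulating the explicit formulas for the path-path interactions recorded in Definitions~\ref{def:interaction} and~\ref{def:interaction_general}, and using the properties of $\delta_{\eta,\nu}$ and $v_\eta$ collected in Lemma~\ref{properties_of_v_eta} and \eqref{properties_of_delta_eta}. Recall that
\begin{equation*}
\bb V_{\eta,\nu}(\omega, \tilde \omega) = \nu \int \dd s \int \dd \tilde s \,\delta_{\eta, \nu}([s]_\nu - [\tilde s]_\nu) \, v_\eta(\omega(s) - \tilde \omega(\tilde s))\,, \qquad \bb V_{\eta}(\omega, \tilde \omega) = \int \dd s \int \dd \tilde s \, v_\eta(\omega(s) - \tilde \omega(\tilde s))\,,
\end{equation*}
where the $s$-integral runs over $[\tilde\tau_1,\tau_1]$ and the $\tilde s$-integral over $[\tilde\tau_2,\tau_2]$.

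For part (i), the idea is that $\nu\,\delta_{\eta,\nu}([s]_\nu - [\tilde s]_\nu)$ is an approximate identity concentrating the $\tilde s$-integral onto the ``diagonal'' $\tilde s \in s + \nu\Z$, and because $\int_0^\nu \delta_{\eta,\nu} = 1$ its total mass in the $\tilde s$-variable, integrated over a window of length $\tau_2 - \tilde\tau_2$, is $(\tau_2-\tilde\tau_2)/\nu + O(1)$. First I would fix $s$ and change variables in the inner integral, writing $\tilde s = s + u$ and using $\nu$-periodicity of $\delta_{\eta,\nu}$ to reduce the $\tilde s$-integration against $\nu\delta_{\eta,\nu}$ to a sum over periods; the leading term replaces $v_\eta(\omega(s)-\tilde\omega(\tilde s))$ by $v_\eta(\omega(s)-\tilde\omega(s))$ (when $s$ lies in the common domain, with boundary corrections handled separately). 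The difference is controlled using the uniform equicontinuity of $v_\eta$ (Lemma~\ref{properties_of_v_eta}(ii)) together with a continuity estimate on the Brownian paths; since the $\delta_{\eta,\nu}$-mass is concentrated at scale $\eta \to 0$ within each period, and the paths are uniformly continuous, the discrepancy is $\cal E_{\eta,\nu}(\omega,\tilde\omega)$ times the number of $(s,\tilde s)$ pairs that contribute, which is $O\big([1+(\tau_1-\tilde\tau_1)][1+(\tau_2-\tilde\tau_2)]\big)$; uniform boundedness of $\cal E_{\eta,\nu}$ follows from $\|v_\eta\|_{L^\infty}\le\|v\|_{L^\infty}$ (Lemma~\ref{properties_of_v_eta}(i)).

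For part (ii), since $\omega'$ agrees with $\omega$ on the intersection of their domains and the endpoints differ by at most $\nu$, the domains of $\omega$ and $\omega'$ differ only by intervals of total length $\le 2\nu$ at the ends, and similarly for $\tilde\omega,\tilde\omega'$. Writing $\bb V_\eta(\omega,\tilde\omega) - \bb V_\eta(\omega',\tilde\omega')$ as a sum of double integrals over the symmetric differences of the domains, each such region is a product of an interval of length $O(\nu)$ with an interval of length $O(\tau_j - \tilde\tau_j)$ (or $O(\nu)$), and the integrand is bounded by $\|v\|_{L^\infty}$; this yields the claimed error $O\big(\nu(\tau_1-\tilde\tau_1) + \nu(\tau_2-\tilde\tau_2) + \nu^2\big)$.

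The main obstacle is the careful bookkeeping in part (i): because $\delta_{\eta,\nu}$ is a \emph{periodic} approximate identity and the $s,\tilde s$ integrations are over finite intervals that need not be aligned with the period lattice $\nu\Z$, one must track the boundary periods where only a fraction of the bump $\delta_{\eta,\nu}$ falls inside the integration window, and argue that these contribute only to the $\cal E_{\eta,\nu}$ error rather than the main term. The key inputs making this work are that $\delta_{\eta,\nu}\ge 0$ with unit integral over a period, so no cancellation is lost, and the uniform equicontinuity of $(v_\eta)$, which lets the small-scale ($\eta$-scale) structure of the bump be absorbed once the paths are continuous. I would conclude both parts by checking that the error terms are indeed of the form $\cal E_{\eta,\nu}(\cdot)$ in the sense defined before the lemma, i.e.\ uniformly bounded and vanishing as $\nu\to 0$ uniformly in $\eta$ for each fixed pair of paths.
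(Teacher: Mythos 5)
Your part (ii) is exactly the paper's argument (the integrand is bounded by $\norm{v}_{L^\infty}$, and the domains of $(\omega,\tilde\omega)$ and $(\omega',\tilde\omega')$ differ only by end intervals of length at most $\nu$, so the difference is a sum of double integrals over regions of measure $O(\nu(\tau_1-\tilde\tau_1))+O(\nu(\tau_2-\tilde\tau_2))+O(\nu^2)$). The machinery you invoke for part (i) — nonnegativity and unit mass per period of $\delta_{\eta,\nu}$, uniform equicontinuity of $(v_\eta)$, uniform continuity of the paths, and counting roughly $(\tau_2-\tilde\tau_2)/\nu$ contributing periods each carrying weight $\nu$ — is also the right one, and is essentially what the paper does: the paper inserts, for each $s$, a unit-mass integral of $\delta_{\eta,\nu}$ over each full period of the $\tilde s$-domain into $\bb V_\eta$ and then swaps the two time variables within matched periods; your Riemann-sum bookkeeping starting from $\bb V_{\eta,\nu}$ is an equivalent organization.

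However, the central reduction in your part (i) is mis-stated in a way that would derail the proof if followed literally. You claim that after the localization the leading term replaces $v_\eta(\omega(s)-\tilde\omega(\tilde s))$ by $v_\eta(\omega(s)-\tilde\omega(s))$ ``when $s$ lies in the common domain''. The $\nu$-periodic weight $\delta_{\eta,\nu}([s]_\nu-[\tilde s]_\nu)$ does not collapse the $\tilde s$-integral onto the time diagonal $\tilde s=s$: it places unit mass in every period, i.e.\ near every point of the lattice $s+\nu\Z$ intersected with $[\tilde\tau_2,\tau_2]$, and it is precisely the sum over all these periods, weighted by the prefactor $\nu$, that reproduces the full double integral $\bb V_\eta(\omega,\tilde\omega)$. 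The correct per-period replacement is $\tilde\omega(\tilde s)\mapsto\tilde\omega(\tilde s_k)$, where $\tilde s_k$ is the point of $s+\nu\Z$ in the period containing $\tilde s$ (using $\abs{\tilde s-\tilde s_k}\leq\nu$, continuity of $\tilde\omega$, and equicontinuity of $v_\eta$), after which $\nu\sum_k v_\eta(\omega(s)-\tilde\omega(\tilde s_k))$ is a Riemann sum for $\int\dd\tilde s\,v_\eta(\omega(s)-\tilde\omega(\tilde s))$. A diagonal leading term has the wrong size altogether: for $v_\eta\equiv\mathrm{const}$ both $\bb V_{\eta,\nu}$ and $\bb V_\eta$ are comparable to $\mathrm{const}\,(\tau_1-\tilde\tau_1)(\tau_2-\tilde\tau_2)$, whereas the diagonal integral is proportional to the length of $[\tilde\tau_1,\tau_1]\cap[\tilde\tau_2,\tau_2]$, which plays no role in the lemma. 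Relatedly, do not lean on ``$\delta_{\eta,\nu}$-mass concentrated at scale $\eta$'': the error must be uniform in $\eta$, including $\eta\gtrsim\nu$, so the only admissible inputs are the ones you list at the end, namely $\delta_{\eta,\nu}\geq0$ with $\int_0^\nu\dd\tau\,\delta_{\eta,\nu}(\tau)=1$, which guarantee that all mass in a given period sits within distance $\nu$ of the corresponding lattice point. With these corrections your plan coincides with the paper's proof.
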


\begin{proof}[Proof of Lemma \ref{path-path_interaction}]
We first prove (i).
Let $N \deq \lceil (\tau_2-\tilde \tau_2)/\nu \rceil$. We decompose the interval $[\tilde \tau_2, \tau_2]$ as $\bigcup_{i=1}^{N}I_i$, where
\begin{equation*}
I_i \deq
\begin{cases}
[\tilde \tau_2+(i-1)\nu, \tilde  \tau_2 + i \nu] &\mbox{if }1 \leq i \leq N-1
\\
[\tilde \tau_2+(N-1)\nu,\tau_2]  &\mbox{if }i=N\,.
\end{cases}
\end{equation*}
Furthermore, for $1 \leq i \leq N$ we let $\hat I_i \deq [\tilde \tau_2+(i-1)\nu, \tilde  \tau_2 + i \nu]$. Note that $|I_i| \leq \nu$ and $|\hat I_i|=\nu$ for all $1 \leq i \leq N$.

By \eqref{Fourier_regularization2}, we can write
\begin{multline}
\label{path-path_interaction1}
\bb V_{\eta}(\omega, \tilde \omega)
=
\int \dd s \Biggl[\sum_{i=1}^{N} \int_{I_i} \dd \tilde s \int_{\hat I_i} \dd \hat s\, v_\eta(\omega(s) - \tilde \omega(\tilde s)) \delta_{\eta,\nu}([s]_{\nu}-[\hat s]_{\nu}) \Biggr]
\\
=\int \dd s \Biggl[\sum_{i=1}^{N} \int_{I_i} \dd \tilde s \int_{I_i} \dd \hat s\, v_\eta(\omega(s) - \tilde \omega(\tilde s)) \delta_{\eta,\nu}([s]_{\nu}-[\hat s]_{\nu}) \Biggr]+O\bigl(\nu (\tau_1-\tilde \tau_1)\bigr)
\,,
\end{multline}
In order to obtain the last line we used Lemma \ref{properties_of_v_eta} (i) as well as \eqref{properties_of_delta_eta}, which implies that $\int_{\hat I_N \setminus I_N} \dd \hat s \,  \delta_{\eta,\nu}([s]_{\nu}-[\hat s]_{\nu})= O(1)$.

For $1 \leq i \leq N$ we have that $|\tilde s-\hat s| \leq \nu$ for $\tilde s, \hat s \in I_i$. Therefore,  by Lemma \ref{properties_of_v_eta} (ii) we have $v_{\eta}(\omega(s) - \tilde \omega(\tilde s))=v_{\eta}(\omega(s) - \tilde \omega(\hat s))+ \cal E_{\eta,\nu}(\omega,\tilde \omega)$. By construction of $I_i$, we deduce that the first term in \eqref{path-path_interaction1} is

\begin{equation}
\label{path-path_interaction2}
\bb V_{\eta,\nu}(\omega, \tilde \omega)+
\cal E_{\eta, \nu}(\omega, \tilde \omega) (\tau_1-\tilde \tau_1) (\tau_2-\tilde \tau_2) + \nu \cal E_{\eta, \nu}(\omega, \tilde \omega) (\tau_1-\tilde \tau_1)\,.
\end{equation}
This concludes the proof of (i).

We now prove (ii). 
By using $|\tau_1-\tau'_1| \leq \nu, |\tilde \tau_1-\tilde \tau'_1| \leq \nu$ and Lemma \ref{properties_of_v_eta} (i) and by arguing as earlier, we deduce that
\begin{equation}
\label{path-path_interaction3}
\bb V_{\eta}(\omega, \tilde \omega)
=\bb V_{\eta}(\omega', \tilde \omega)+O \bigl(\nu (\tau_2-\tilde \tau_2)\bigr)\,.
\end{equation}
By analogous arguments as for \eqref{path-path_interaction3}, it follows that 
\begin{equation}
\label{path-path_interaction4}
\bb V_{\eta}(\omega', \tilde \omega)=\bb V_{\eta}(\omega', \tilde \omega')+O \bigl(\nu (\tau'_1-\tilde \tau'_1)\bigr)=\bb V_{\eta}(\omega', \tilde \omega')+O \bigl(\nu (\tau_1-\tilde \tau_1)+\nu^2\bigr)\,.
\end{equation}
Claim (ii) follows from \eqref{path-path_interaction3}--\eqref{path-path_interaction4}.
\end{proof}

We now have the necessary tools to prove Lemma \ref{lem:FF}.
\begin{proof}[Proof of Lemma \ref{lem:FF}]
Using Lemma \ref{lem:Ff_rep} we write
\begin{equation} \label{int_FF}
\int \mu_{\cal C_\eta}(\dd \sigma) \, \ol{F_2(\sigma)} F_2(\sigma)
=
\sum_{\f r \in (\nu \N)^3} \frac{\ind{\abs{\f r} > 0} \, \ee^{-\kappa \abs{\f r}}}{\abs{\f r}}
\sum_{\tilde{\f r} \in (\nu \N)^3} \frac{\ind{\abs{\tilde{\f r}} > 0} \, \ee^{-\kappa \abs{\tilde{\f r}}}}{\abs{\tilde{\f r}}}\,
J(\f r,\tilde{\f r})\,,
\end{equation}
where
\begin{align}
&J(\f r,\tilde{\f r}) \deq \int_{[0,\nu]^3} \dd \f \tau
\int_{[0,\nu]^3} \dd \tilde{\f \tau}
\int \dd \f x
\int \dd \tilde{\f x}
\notag \\
&\times  \int \bb W_{x_1, x_3}^{\tau_1 + r_3, \tau_3}(\dd \omega_3) \, \bb W_{x_3, x_2}^{\tau_3 + r_2, \tau_2}(\dd \omega_2) \, \bb W_{x_2, x_1}^{\tau_2 + r_1, \tau_1}(\dd \omega_1)
\bb W_{\tilde x_1, \tilde x_3}^{\tilde \tau_1 + \tilde r_3, \tilde \tau_3}(\dd \tilde \omega_3) \, \bb W_{\tilde x_3, \tilde x_2}^{\tilde \tau_3 + \tilde r_2, \tilde \tau_2}(\dd \tilde \omega_2) \, \bb W_{\tilde x_2, \tilde x_1}^{\tilde \tau_2 + \tilde r_1, \tilde \tau_1}(\dd \tilde \omega_1)
\notag \\ \label{def_J}
&\times \int \mu_{\cal C_\eta}(\dd \sigma) \, \sigma(\tau_2, x_2) \, \sigma(\tau_3, x_3) \, \sigma(\tilde \tau_2, \tilde x_2) \, \sigma(\tilde \tau_3, \tilde x_3) \, 
\ee^{\ii \int \dd s \, \sigma([s]_\nu, \omega_1(s)) - \ii \int \dd s \, \sigma([s]_\nu, \tilde \omega_1(s))}\,.
\end{align}
Recalling \eqref{I(r,tilde_r)}, we shall show that for all $\f r,\tilde{\f r} \in (\nu \N)^3$ with $\abs{\f r}, \abs{\tilde{\f r}} > 0$ we have
\begin{equation}
J(\f r, \tilde{\f r})=\nu^6\, I(\f r,\tilde{\f r})
\label{J_r_tilde_r_3}
+ \nu^6 \, \cal E_{\eta, \nu}(\f r, \tilde {\f r})  \biggl(1+\frac{L^d}{\abs{\f r}^{d/2}}\biggr)\, \biggl(1+\frac{L^d}{\abs{\tilde{\f r}}^{d/2}}\biggr) (1+\abs{\f r}+\abs{\tilde{\f r}})^6
\,.
\end{equation}
The claim then follows from \eqref{J_r_tilde_r_3} by summing in $\f r, \tilde{\f r}$ and considering Riemann sums for \eqref{int_ff}.

The rest of the proof is devoted to obtaining \eqref{J_r_tilde_r_3}.
First, we apply Lemma \ref{lem:Gaussian} to the last line of \eqref{def_J}, using that $\mu_{\cal C_\eta}$ is a finite-dimensional real Gaussian measure with covariance $\cal C_\eta$ from \eqref{cov_C_eta}. As for $x$, we use the notation $\tau_{i,0} = \tau_i$ and $\tau_{i,1} = \tilde \tau_i$. More precisely, in the application of Lemma \ref{lem:Gaussian}, we take
\begin{equation}
\label{f_f_a2}
f(\tau,x)=\int \dd s\,\delta (\tau-[s]_{\nu})\,\delta (x-\omega_1(s))-\int \dd \tilde s\,\delta (\tau-[\tilde s]_{\nu})\,\delta (x-\tilde \omega_1(\tilde s))\,,\quad f_a(\tau,x)=\delta (\tau-\tau_a)\,\delta(x-x_a)\,, 
\end{equation}
and apply an appropriate time translation (by multiples of $\nu$, since $\f r,\tilde{\f r} \in (\nu \N)^3$) in the paths $\omega_2,\omega_3,\tilde \omega_2, \tilde \omega_3$ to deduce that
\begin{align}
\notag
&J(\f r, \tilde{\f r})=\int_{[0,\nu]^3} \dd \f \tau
\int_{[0,\nu]^3} \dd \tilde{\f \tau}
\int \dd \f x
\int \dd \tilde{\f x}\,\int \bb W_{x_1, x_3}^{\tau_1+\abs{\f r},\tau_3+r_1+r_2}(\dd \omega_3) \, \bb W_{x_3, x_2}^{\tau_3+r_1+r_2,\tau_2+r_1}(\dd \omega_2) \, \bb W_{x_2, x_1}^{\tau_2+r_1, \tau_1}(\dd \omega_1)
\\
\notag
&
\times \bb W_{\tilde x_1, \tilde x_3}^{\tilde \tau_1+\abs{\tilde{\f r}},\tilde \tau_3 +\tilde r_1+\tilde r_2 }(\dd \tilde \omega_3) \, \bb W_{\tilde x_3, \tilde x_2}^{\tilde \tau_3+\tilde r_1+ \tilde r_2,\tilde \tau_2 + \tilde r_1}(\dd \tilde \omega_2) \, \bb W_{\tilde x_2, \tilde x_1}^{\tilde \tau_2 + \tilde r_1, \tilde \tau_1}(\dd \tilde \omega_1)\,\ee^{-\frac{1}{2} \pb{\bb V_{\eta, \nu}(\omega_1, \omega_1) + \bb V_{\eta, \nu}(\tilde \omega_1, \tilde \omega_1) - 2 \bb V_{\eta, \nu}(\omega_1, \tilde \omega_1)}}
\\
\label{int_FF_2}
& \sum_{\Pi \in \fra M(A)} \prod_{\{a,b\} \in \Pi} (\bb V_{\eta, \nu})^{\tau_a, \tau_b}_{x_a, x_b} \prod_{a \in A \setminus [\Pi]} \ii \pB{(\bb V_{\eta, \nu})^{\tau_a}_{x_a}(\omega_1) - (\bb V_{\eta, \nu})^{\tau_a}_{x_a}(\tilde \omega_1)}\,.
\end{align}

For fixed values of $\f \tau, \tilde{\f \tau} \in [0,\nu]^3$, we rewrite the integrand in \eqref{int_FF_2}. 
We first introduce some notation. In the discussion that follows $\hat{q}$ denotes either the quantity $q$ or $\tilde{q}$. 
With this convention, we define loops $\hat{\omega}:[\hat \tau_1,\hat \tau_1+\abs{\hat{\f r}}] \rightarrow \Lambda$ by concatenating $\hat{\omega}_1,\hat{\omega}_2,\hat{\omega}_3$. More precisely, 
\begin{equation*}
\hat{\omega}|_{[\hat \tau_1,\hat \tau_2+\hat{r}_1] }=\hat{\omega}_1\,,\hat{\omega}|_{[\hat \tau_2+\hat{r}_1,\hat \tau_3+\hat r_1 + \hat r_2] }=\hat{\omega}_2\,, \hat{\omega}|_{[\hat \tau_3+\hat r_1 + \hat r_2,\hat \tau_1+\abs{\hat{\f r}}] } =\hat{\omega}_3\,.
\end{equation*}
For $a \in A$, we define the quantities $t_a,s_a,\omega_a$ by
\begin{equation}
\label{choice_of_parameters}
\begin{cases}
t_{(2,0)} \deq \tau_2 +r_1\,,\quad t_{(3,0)} \deq \tau_3+r_1+r_2\,,\quad t_{(2,1)} \deq \tilde \tau_2+\tilde r_1\,,\quad t_{(3,1)} \deq \tilde \tau_3+\tilde r_1+\tilde r_2
\\
s_{(2,0)} \deq \tau_1+r_1\,,\quad s_{(3,0)} \deq \tau_1+r_1+r_2\,,\quad s_{(2,1)} \deq \tilde \tau_1+\tilde r_1\,,\quad s_{(3,1)} \deq \tilde\tau_1+\tilde r_1+\tilde r_2
\\
\omega_{(2,0)} = \omega_{(3,0)}\deq \omega\,,\quad \omega_{(2,1)} = \omega_{(3,1)}\deq \tilde \omega\,.
\end{cases}
\end{equation}
The interpretation of the times defined in \eqref{choice_of_parameters} is that the times $t_a$ are the precise initial and final times of the paths $\hat \omega_1, \hat \omega_2, \hat \omega_3$. The times $s_a$ are approximations of the corresponding times $t_a$, which have been decoupled from the variables $\hat \tau_2, \hat \tau_3$ by replacing them with $\hat \tau_1$; evaluating $\hat \omega$ at the times $s_a$ instead of $t_a$ allows us to remove the $(\hat \tau_2, \hat \tau_3)$-dependence from the spatial indices (see e.g.\ \eqref{J_r_tilde_r_2} below), and hence ultimately relate $J(\f r, \tilde{\f r})$ to $I(\f r, \tilde{\f r})$.
With the above notation, we deduce that for all $a \in A$ we have $\omega_a(t_a)=x_a$  and $|s_a-t_a| \leq \nu$.

With $f$ chosen as in \eqref{f_f_a2} we have that 
\begin{equation}
\label{int_FF_convergence_lemma_1}
\bb V_{\eta,\nu}(\omega_1, \omega_1) + \bb V_{\eta,\nu}(\tilde \omega_1, \tilde \omega_1) - 2 \bb V_{\eta,\nu}(\omega_1, \tilde \omega_1)=\langle f, \cal C_{\eta,\nu} f \rangle \geq 0\,.
\end{equation}
By using \eqref{int_ff_convergence_lemma_1}, \eqref{int_FF_convergence_lemma_1}, Lemma \ref{path-path_interaction} (i), and the fact that $\tau_1,\tilde \tau_1,\tau_2,\tilde \tau_2 \in [0,\nu]$, it follows that for $\omega_1, \tilde \omega_1$ as in the integrand of \eqref{int_FF_2} we have
\begin{equation}
\label{exponential_term}
\ee^{-\frac{1}{2} \pb{\bb V_{\eta, \nu}(\omega_1, \omega_1) + \bb V_{\eta, \nu}(\tilde \omega_1, \tilde \omega_1) - 2 \bb V_{\eta, \nu}(\omega_1, \tilde \omega_1)}}
=\ee^{-\frac{1}{2} \pb{\bb V_{\eta}(\omega_1, \omega_1) + \bb V_{\eta}(\tilde \omega_1, \tilde \omega_1) - 2 \bb V_{\eta}(\omega_1, \tilde \omega_1)}}
+\cal E_{\eta, \nu}(\omega, \tilde \omega)(1+r_1+\tilde r_1)^2\,.
\end{equation}
We use \eqref{choice_of_parameters}, \eqref{exponential_term} and integrate in $x_2,x_3,\tilde x_2, \tilde x_3$ to rewrite \eqref{int_FF_2} as
\begin{multline}
\label{J_r_tilde_r}
J(\f r, \tilde{\f r})=\int_{[0,\nu]^3} \dd \f \tau \int_{[0,\nu]^3} \dd \tilde{\f \tau} \int \dd x_1 \int \dd \tilde x_1\int \bb W^{\tau_1+\abs{\f r},\tau_1}_{x_1,x_1}(\dd \omega)\,\bb W^{\tilde \tau_1+\abs{\tilde{\f r}},\tilde \tau_1}_{\tilde x_1, \tilde x_1}(\dd \tilde \omega)
\\
\times \biggl[\ee^{-\frac{1}{2} \pb{\bb V_{\eta}(\omega_1, \omega_1) + \bb V_{\eta}(\tilde \omega_1, \tilde \omega_1) - 2 \bb V_{\eta}(\omega_1, \tilde \omega_1)}}
+\cal E_{\eta, \nu}(\omega, \tilde \omega)(1+r_1+\tilde r_1)^2\biggr]
\\
\times \sum_{\Pi \in \fra M(A)} \prod_{\{a,b\} \in \Pi} (\bb V_{\eta, \nu})^{\tau_a, \tau_b}_{\omega_a(t_a),\omega_b(t_b)} \prod_{a \in A \setminus [\Pi]} \ii \pB{(\bb V_{\eta, \nu})^{\tau_a}_{\omega_a(t_a)}(\omega_1) - (\bb V_{\eta, \nu})^{\tau_a}_{\omega_a(t_a)}(\tilde \omega_1)}\,.
\end{multline}

In order to analyse \eqref{J_r_tilde_r}, we note the following three claims.
\begin{itemize}
\item[(i)] We have
\begin{align*}
&\ee^{-\frac{1}{2} \pb{\bb V_{\eta}(\omega_1, \omega_1) + \bb V_{\eta}(\tilde \omega_1, \tilde \omega_1) - 2 \bb V_{\eta}(\omega_1, \tilde \omega_1)}}
\\
&=\ee^{-\frac{1}{2} \pb{\bb V_{\eta}\bigl(\omega|_{[\tau_1,\tau_1+r_1]}, \omega|_{[\tau_1,\tau_1+r_1]}\bigr)
+\bb V_{\eta}\bigl(\tilde \omega|_{[\tilde \tau_1,\tilde \tau_1+\tilde r_1]},\tilde \omega|_{[\tilde \tau_1,\tilde \tau_1+\tilde r_1]}\bigr) - 2 \bb V_{\eta}\bigl(\omega|_{[\tau_1,\tau_1+r_1]}, \tilde \omega|_{[\tilde \tau_1,\tilde \tau_1+\tilde r_1]}\bigr)}}
\\
& +\cal E_{\eta, \nu}(\omega, \tilde \omega)(1+r_1+\tilde r_1)\,.
\end{align*}

\item[(ii)] For all $a,b \in A$ we have
\begin{equation}
\label{Claim_(ii)}
\int_{0}^{\nu} \dd \tau_a \,\int_{0}^{\nu} \dd \tau_b\,(\bb V_{\eta, \nu})^{\tau_a, \tau_b}_{\omega_a(t_a),\omega_b(t_b)}=\nu^2
(\bb V_{\eta})_{\omega_a(s_a),\omega_b(s_b)} + \nu^2\,\cal E_{\eta, \nu}(\omega, \tilde \omega)\,.
\end{equation}
\item[(iii)] For all $a \in A$ we have 
\begin{equation}
\label{Claim_(iii)}
\int_{0}^{\nu} \dd \tau_a \,(\bb V_{\eta, \nu})^{\tau_a}_{\omega_a(t_a)}(\hat \omega_1)=\nu (\bb V_{\eta})_{\omega_a(s_a)}\bigl(\hat \omega|_{[\hat \tau_1,\hat \tau_1+\hat r_1]}\bigr) +\nu \cal E_{\eta, \nu}(\omega, \tilde \omega)(1+\hat r_1)\,.
\end{equation}
\end{itemize}

We prove each of the claims (i)--(iii) separately.
Note that (i) follows from \eqref{int_ff_convergence_lemma_1} and Lemma \ref{path-path_interaction} (ii). We now prove (ii). By Definition \ref{def:interaction_general}, Lemma \ref{properties_of_v_eta} and \eqref{properties_of_delta_eta}, it follows that the left-hand side of \eqref{Claim_(ii)} is
\begin{equation*} 
\nu \int_{0}^{\nu} \dd \tau_a \,\int_{0}^{\nu} \dd \tau_b\,\delta_{\eta,\nu}(\tau_a-\tau_b)\,v_{\eta}(\omega_a(t_a)-\omega_b(t_b))=\nu^2\,v_{\eta}(\omega_a(s_a)-\omega_b(s_b))+\nu^2\,\cal E_{\eta, \nu}(\omega, \tilde \omega)\,,
\end{equation*}
which by Definition \ref{def:interaction} equals the right-hand side of \eqref{Claim_(ii)}.

In order to prove (iii), we use Definition \ref{def:interaction_general}, Lemma \ref{properties_of_v_eta} and \eqref{properties_of_delta_eta} to rewrite the left-hand side of \eqref{Claim_(iii)} as
\begin{multline*}
\nu \int_{0}^{\nu} \dd \tau_a\, \biggl[\int_{\hat \tau_1}^{\hat\tau_1+\hat r_1}\dd s\,\delta_{\eta,\nu}(\tau_a-[s]_{\nu})\,v_{\eta}(\omega_a(s_a)- \hat \omega(s))+\int_{\hat \tau_1}^{\hat \tau_1+\hat r_1}\dd s\,\delta_{\eta,\nu}(\tau_a-[s]_{\nu})\,\cal E_{\eta, \nu}(\omega, \tilde \omega)
\\
+O\biggl(\int_{\hat \tau_1+\hat r_1}^{\hat \tau_2+\hat r_1}\dd s\,\delta_{\eta,\nu}(\tau_a-[s]_{\nu})\biggr) \biggr]
=\nu \int_{\hat \tau_1}^{\hat \tau_1+\hat r_1}\dd s\,v_{\eta}(\omega_a(s_a)- \hat \omega(s))+\nu \cal E_{\eta, \nu}(\omega, \tilde \omega)(1+\hat r_1)\,,
\end{multline*}
which by Definition \ref{def:interaction} equals the right-hand side of \eqref{Claim_(iii)}.

We integrate in $\tau_2,\tau_3,\tilde \tau_2, \tilde \tau_3$, and apply (i)--(iii) to rewrite \eqref{J_r_tilde_r} as
\begin{multline}
\label{J_r_tilde_r_2} 
J(\f r, \tilde{\f r})=\nu^4 \int_{0}^{\nu} \dd \tau_1 \int_{0}^{\nu} \dd \tilde \tau_1 \int \dd x_1 \int \dd \tilde x_1\int \bb W^{\tau_1+\abs{\f r},\tau_1}_{x_1,x_1}(\dd \omega)\,\bb W^{\tilde \tau_1+\abs{\tilde{\f r}},\tilde \tau_1}_{\tilde x_1, \tilde x_1}(\dd \tilde \omega)
\\
\times \ee^{-\frac{1}{2} \pb{\bb V_{\eta}\bigl(\omega|_{[\tau_1,\tau_1+r_1]}, \omega|_{[\tau_1,\tau_1+r_1]}\bigr)
+\bb V_{\eta}\bigl(\tilde \omega|_{[\tilde \tau_1,\tilde \tau_1+\tilde r_1]},\tilde \omega|_{[\tilde \tau_1,\tilde \tau_1+\tilde r_1]}\bigr) - 2 \bb V_{\eta}\bigl(\omega|_{[\tau_1,\tau_1+r_1]}, \tilde \omega|_{[\tilde \tau_1,\tilde \tau_1+\tilde r_1]}\bigr)}}
\\
\times \sum_{\Pi \in \fra M(A)} \prod_{\{a,b\} \in \Pi} (\bb V_{\eta})_{\omega_a(s_a),\omega_b(s_b)} \,
\prod_{a \in A \setminus [\Pi]} \ii \Bigl[(\bb V_{\eta})_{\omega_a(s_a)}\bigl(\omega|_{[\tau_1,\tau_1+r_1]}\bigr)-(\bb V_{\eta})_{\omega_a(s_a)}(\tilde \omega|_{[\tilde \tau_1,\tilde \tau_1+\tilde r_1]})\Bigr]
\\
+\nu^6 \cal E_{\eta, \nu}(\f r, \tilde {\f r}) \biggl(1+\frac{L^d}{\abs{\f r}^{d/2}}\biggr)\, \biggl(1+\frac{L^d}{\abs{\tilde{\f r}}^{d/2}}\biggr) (1+\abs{\f r}+\abs{\tilde{\f r}})^6 
\end{multline}

For the second term on the right-hand side of \eqref{J_r_tilde_r_2}, we used 
\begin{equation*}
|(\bb V_{\eta})_{\omega_a(s_a),\omega_b(s_b)}|\leq \|v\|_{L^{\infty}}\,,\quad |(\bb V_{\eta})_{\omega_a(s_a)}\bigl(\hat \omega|_{[\hat \tau_1,\hat \tau_1+\hat r_1]}\bigr)| \leq \hat r_1\|v\|_{L^{\infty}}
\end{equation*} 
which follow from \eqref{int_ff_convergence_lemma_2}, as well as the observation that for all $\tau_1,\tilde \tau_1 \in [0,\nu]$ we have
\begin{equation}
\label{DCT_application1}
\int \dd x_1 \int \dd \tilde x_1\int \bb W^{\tau_1+\abs{\f r},\tau_1}_{x_1,x_1}(\dd \omega)\,\bb W^{\tilde \tau_1+\abs{\tilde{\f r}},\tilde \tau_1}_{\tilde x_1, \tilde x_1}(\dd \tilde \omega)\,\cal E_{\eta,\nu}(\omega, \tilde \omega)=
 \cal E_{\eta, \nu}(\f r, \tilde {\f r}) \biggl(1+\frac{L^d}{\abs{\f r}^{d/2}}\biggr)\, \biggl(1+\frac{L^d}{\abs{\tilde{\f r}}^{d/2}}\biggr)\,,
\end{equation}
which follows from Lemma \ref{lem:heat_estimate} and the dominated convergence theorem.

Finally, the first term on the right-hand side of \eqref{J_r_tilde_r_2} is equal to $\nu^6 I(\f r,\tilde{\f r})$, as can be seen
by a time translation of the paths, $\hat \omega(\cdot) \mapsto \hat \omega(\cdot - \hat \tau_1)$.
This concludes the proof of \eqref{J_r_tilde_r_3}.
\end{proof}

\begin{proof}[Proof of Lemma \ref{lem:Ff}]
We begin by noting that by Lemma \ref{lem:Ff_rep} and \eqref{sigma(x)_definition} we have
\begin{multline} \label{f2_rep2}
f_2(\ang{\sigma})
= - \frac{1}{\nu^3} \int_{[0,\infty)^3} \dd \f r \, \frac{\ee^{- \kappa \abs{\f r}}}{\abs{\f r}} \, \int_{[0,\nu]^3} \dd \f \tau \int \dd \f x \, \sigma(\tau_2, x_2) \, \sigma(\tau_3, x_3)
\\
\times \int \bb W_{x_1, x_3}^{r_3, 0}(\dd \omega_3) \, \bb W_{x_3, x_2}^{r_2, 0}(\dd \omega_2) \, \bb W_{x_2, x_1}^{r_1, 0}(\dd \omega_1) \, \ee^{\ii \frac{1}{\nu} \int_0^\nu \dd t \int_0^{r_1} \dd s \, \sigma(t,\omega_1(s))}\,.
\end{multline}
Hence,
\begin{multline} \label{int_Ff}
\int \mu_{\cal C_\eta}(\dd \sigma) \, \ol{f_2(\ang{\sigma})} F_2(\sigma)
=
\sum_{\f r \in (\nu \N)^3} \frac{\ind{\abs{\f r} > 0} \, \ee^{-\kappa \abs{\f r}}}{\abs{\f r}}
\, \frac{1}{\nu^3} \int_{[0,\infty)^3} \dd \tilde{\f r} \, \frac{\ee^{- \kappa \abs{\tilde{\f r}}}}{\abs{\tilde{\f r}}}
\int_{[0,\nu]^3} \dd \f \tau
\int_{[0,\nu]^3} \dd \tilde{\f \tau}
\int \dd \f x
\int \dd \tilde{\f x}
\\
\times  \int \bb W_{x_1, x_3}^{\tau_1 + r_3, \tau_3}(\dd \omega_3) \, \bb W_{x_3, x_2}^{\tau_3 + r_2, \tau_2}(\dd \omega_2) \, \bb W_{x_2, x_1}^{\tau_2 + r_1, \tau_1}(\dd \omega_1)
\bb W_{\tilde x_1, \tilde x_3}^{\tilde r_3, 0}(\dd \tilde \omega_3) \, \bb W_{\tilde x_3, \tilde x_2}^{\tilde r_2, 0}(\dd \tilde \omega_2) \, \bb W_{\tilde x_2, \tilde x_1}^{\tilde r_1, 0}(\dd \tilde \omega_1)
\\
\times \int \mu_{\cal C_\eta}(\dd \sigma) \, \sigma(\tau_2, x_2) \, \sigma(\tau_3, x_3) \, \sigma(\tilde \tau_2, \tilde x_2) \, \sigma(\tilde \tau_3, \tilde x_3) \, 
\ee^{\ii \int \dd s \, \sigma([s]_\nu, \omega_1(s)) - \ii \frac{1}{\nu} \int_0^\nu \dd t \int \dd s \, \sigma(t, \tilde \omega_1(s))}\,.
\end{multline}
Using Lemma \ref{lem:Gaussian} with
\begin{equation}
\label{f_f_a3}
f(\tau,x)=\int \dd s\,\delta (\tau-[s]_{\nu})\,\delta (x-\omega_1(s))-
\frac{1}{\nu}\int \dd \tilde s\,\delta(x-\tilde \omega_1(\tilde s))\,,\quad f_a(\tau,x)=\delta (\tau-\tau_a)\,\delta(x-x_a)\,, 
\end{equation}
we conclude that the last line of \eqref{int_Ff} is equal to
\begin{equation}
\label{int_Ff_2}
\ee^{-\frac{1}{2} \pb{\bb V_{\eta, \nu}(\omega_1, \omega_1) + \bb V_\eta(\tilde \omega_1, \tilde \omega_1) - 2 \bb V_\eta (\omega_1, \tilde \omega_1)}}
\,\sum_{\Pi \in \fra M(A)} \prod_{\{a,b\} \in \Pi} (\bb V_{\eta, \nu})^{\tau_a, \tau_b}_{x_a, x_b} \prod_{a \in A \setminus [\Pi]} \ii \pB{(\bb V_{\eta, \nu})^{\tau_a}_{x_a}(\omega_1) -  (\bb V_\eta)_{x_a}(\tilde \omega_1)}\,.
\end{equation}

The proof now proceeds similarly as that of Lemma \ref{lem:FF}, and we only outline the main differences. The first difference is that, instead of \eqref{exponential_term}, we use
\begin{equation}
\label{exponential_term2}
\ee^{-\frac{1}{2} \pb{\bb V_{\eta, \nu}(\omega_1, \omega_1) + \bb V_\eta(\tilde \omega_1, \tilde \omega_1) - 2 \bb V_\eta(\omega_1, \tilde \omega_1)}}
=\ee^{-\frac{1}{2} \pb{\bb V_{\eta}(\omega_1, \omega_1) + \bb V_{\eta}(\tilde \omega_1, \tilde \omega_1) - 2 \bb V_{\eta}(\omega_1, \tilde \omega_1)}}
+\cal E_{\eta, \nu}(\omega, \tilde \omega)(1+r_1)^2\,, 
\end{equation}
which follows from Lemma \ref{path-path_interaction} (i), by recalling \eqref{int_ff_convergence_lemma_1} and noting that the exponent of the expression on the left-hand side of \eqref{exponential_term2} is nonpositive by \eqref{f_f_a3}.

The second difference is that for $a \in A$, instead of \eqref{Claim_(iii)}, we use
\begin{align}
\notag
\int_{0}^{\nu} \dd \tau_a \,(\bb V_{\eta,\nu})^{\tau_a}_{x_a}(\omega_1)&=\nu(\bb V_{\eta})_{\omega_a(s_a)}\bigl(\omega|_{[\tau_1,\tau_1+r_1]}\bigr) +\nu \cal E_{\eta, \nu}(\omega,\tilde \omega) \, (1+r_1)\,, \\
\label{Claim_(iii)_2}
\int_{0}^{\nu} \dd \tau_a \,(\bb V_{\eta})_{x_a}(\tilde \omega_1)&=\nu(\bb V_{\eta})_{\omega_a(s_a)}\bigl(\tilde \omega|_{[\tilde \tau_1,\tilde \tau_1+\tilde r_1]}\bigr) +\nu \cal E_{\eta, \nu}(\omega,\tilde \omega) \, \tilde r_1\,.
\end{align}
We obtain \eqref{Claim_(iii)_2} from 
Lemma \ref{properties_of_v_eta}. We also recall that $\omega_a(t_a)=x_a$.
Finally, instead of \eqref{DCT_application1}, we use 
\begin{equation*}
\int \dd x_1 \int \dd \tilde x_1\int \bb W^{\tau_1+\abs{\f r},\tau_1}_{x_1,x_1}(\dd \omega)\,\bb W^{\abs{\tilde{\f r}},0}_{\tilde x_1, \tilde x_1}(\dd \tilde \omega)\,\cal E_{\eta,\nu}(\omega, \tilde \omega)=
 \cal E_{\eta, \nu}(\f r, \tilde {\f r})  \biggl(1+\frac{L^d}{\abs{\f r}^{d/2}}\biggr)\, \biggl(1+\frac{L^d}{\abs{\tilde{\f r}}^{d/2}}\biggr)\,,
\end{equation*}
which follows by the same proof.
\end{proof}

\subsection{Correlation functions}

In this subsection we analyse the correlation functions and conclude the proof of Proposition \ref{prop:main_conv}. From \eqref{def_eh_gamma_eta}, Lemma \ref{lem:G_K}, and \eqref{W_FK}, we get
\begin{equation} \label{Gamma_Q}
\nu^p \wh \Gamma_{p,\eta} = \frac{p!}{{\cal Z}_\eta} P_p Q_{p,\eta}\,,
\end{equation}
where
\begin{equation} \label{def_Q}
(Q_{p, \eta})_{\f x, \tilde {\f x}} \deq \int \mu_{\cal C_\eta}(\dd \sigma)\, \ee^{F_2(\sigma)} \, \prod_{i = 1}^p \pBB{\nu \sum_{r_i \in \nu \N^*} \ee^{-\kappa r_i} \int \bb W^{r_i, 0}_{x_i, \tilde x_i}(\dd \omega_i) \, \pB{\ee^{\ii \int_0^{r_i} \dd t \, \sigma([t]_\nu, \omega_i(t))} - 1}}\,.
\end{equation}
Analogously, from \eqref{def_gamma_eta_hat}, writing $\frac{1}{h} = \int_0^\infty \dd r \, \ee^{-r h}$, and Lemma \ref{FK_continuous}, we get
\begin{equation*}
\wh \gamma_{p,\eta} = \frac{p!}{\zeta_\eta} P_p q_{p,\eta} \,,
\end{equation*}
where
\begin{equation} \label{def_q}
(q_{p,\eta})_{\f x, \tilde {\f x}} \deq \int \mu_{v_\eta}(\dd \xi)\, \ee^{f_2(\xi)} \, \prod_{i = 1}^p \pBB{\int_0^\infty \dd r_i \, \ee^{-\kappa r_i} \int \bb W^{r_i, 0}_{x_i, \tilde x_i}(\dd \omega_i) \, \pB{\ee^{\ii \int_0^{r_i} \dd t \, \xi(\omega_i(t))} - 1}}\,.
\end{equation}

Thanks to Proposition \ref{prop:conv_Z}, in order to prove Proposition \ref{prop:main_conv} it suffices to prove the following result.

\begin{proposition} \label{prop:Qq}
For any $p \in \N^*$ we have, uniformly in $\eta > 0$,
\begin{equation*}
\lim_{\nu \to 0} \norm{Q_{p,\eta} - q_{p,\eta}}_{L^\infty} = 0\,.
\end{equation*}
\end{proposition}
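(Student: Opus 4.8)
\emph{Strategy.} The proof parallels that of Proposition~\ref{prop:conv_Z}: reduce both quantities to integrals over Brownian paths via Hubbard--Stratonovich, then compare Riemann sums with integrals. First, by Lemma~\ref{lem:sigma_zeta} the field $\ang{\sigma}$ has law $\mu_{v_\eta}$ under $\mu_{\cal C_\eta}$, so
\[
(Q_{p,\eta})_{\f x,\tilde{\f x}} = \int\mu_{\cal C_\eta}(\dd\sigma)\,\ee^{F_2(\sigma)}\prod_{i=1}^p A_i(\sigma)\,,\qquad
(q_{p,\eta})_{\f x,\tilde{\f x}} = \int\mu_{\cal C_\eta}(\dd\sigma)\,\ee^{f_2(\ang{\sigma})}\prod_{i=1}^p B_i(\sigma)\,,
\]
where $A_i(\sigma)$ is the $i$-th factor of \eqref{def_Q} and $B_i(\sigma) = \int_0^\infty\dd r_i\,\ee^{-\kappa r_i}\int\bb W^{r_i,0}_{x_i,\tilde x_i}(\dd\omega_i)\,(\ee^{\ii\int_0^{r_i}\dd t\,\ang{\sigma}(\omega_i(t))}-1)$ depends on $\sigma$ only through $\ang{\sigma}$. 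Using $\re F_2\le0$ and $\re f_2\le0$ (Propositions~\ref{prop:fif_renorm}(iv) and~\ref{prop:Z_field_Wick_ordered}(ii)), $\bigl|\ee^{F_2(\sigma)}\prod_iA_i - \ee^{f_2(\ang{\sigma})}\prod_iB_i\bigr|\le\bigl|\prod_iA_i-\prod_iB_i\bigr| + \bigl|\prod_iB_i\bigr|\,\bigl|F_2(\sigma)-f_2(\ang{\sigma})\bigr|$, so
\[
\|Q_{p,\eta}-q_{p,\eta}\|_{L^\infty} \le \sup_{\f x,\tilde{\f x}}\int\mu_{\cal C_\eta}(\dd\sigma)\,\Bigl|\prod_iA_i-\prod_iB_i\Bigr| + \sup_{\f x,\tilde{\f x}}\Bigl(\int\mu_{\cal C_\eta}\bigl|{\textstyle\prod_i}B_i\bigr|^2\Bigr)^{1/2}\Bigl(\int\mu_{\cal C_\eta}|F_2-f_2(\ang{\sigma})|^2\Bigr)^{1/2}\,.
\]

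The key a priori input is that the individual factors have uniformly bounded moments: for every $q\in\N^*$, $\sup_{x,\tilde x,\eta,\nu}\int\mu_{\cal C_\eta}|A_i(\sigma)|^q<\infty$ and $\sup_{x,\tilde x,\eta,\nu}\int\mu_{\cal C_\eta}|B_i(\sigma)|^q<\infty$, uniformly in the endpoints (and likewise for $\prod_iB_i$). Indeed, expanding $|A_i|^q$ into $q$ copies, performing the Gaussian integral by \eqref{Hubbard_Stratonovich_formula}, and using $|\ee^{\ii\theta}-1|\le 2^{1-\alpha}|\theta|^\alpha$ together with the Gaussian moment bound $\int\mu_{\cal C_\eta}\bigl|\int_0^{r_j}\sigma([t]_\nu,\omega_j(t))\,\dd t\bigr|^m\le C_m r_j^{m}$ --- which follows from $\bb V_{\eta,\nu}(\omega_j,\omega_j)\le C\|v\|_{L^\infty}r_j^2$, the quantitative content of Wick ordering (the ``$-1$'' terms) --- one bounds the resulting $\sigma$-integral by $C_q\prod_j\min(1,r_j)\le C_q\prod_j r_j^\alpha$ with $\alpha\in(\tfrac d2-1,1)$ (nonempty as $d\le3$). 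Since $\int\bb W^{r_j,0}_{x,\tilde x}(\dd\omega_j)=\psi^{r_j}(x-\tilde x)\le C(L^{-d}+r_j^{-d/2})$ by Lemma~\ref{lem:heat_estimate}, the Riemann sums $\nu\sum_{r_j\in\nu\N^*}\ee^{-\kappa r_j}(L^{-d}+r_j^{-d/2})r_j^\alpha$ (resp.\ the integrals $\int_0^\infty\dd r_j$) are bounded uniformly in $\nu$, and the moment bounds follow; the bound for $B_i$ is identical with $\bb V_{\eta,\nu}$ replaced by $\bb V_\eta$. Consequently $\int\mu_{\cal C_\eta}|\prod_iB_i|^2$ is bounded uniformly in $\f x,\tilde{\f x},\eta$, while $\int\mu_{\cal C_\eta}|F_2-f_2(\ang{\sigma})|^2\to0$ uniformly in $\eta$ by Lemma~\ref{Ff_L2conv}, so the second term above vanishes as $\nu\to0$.

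For the first term I would telescope $\prod_iA_i-\prod_iB_i=\sum_{k=1}^p\bigl(\prod_{i<k}B_i\bigr)(A_k-B_k)\bigl(\prod_{i>k}A_i\bigr)$ and apply the generalized H\"older inequality with all exponents equal to $p$; by the moment bounds it then suffices to prove $\|A_k-B_k\|_{L^p(\mu_{\cal C_\eta})}\to0$ as $\nu\to0$, uniformly in $\eta$ and $x_k,\tilde x_k$. This is the heart of the matter and is done exactly as in the proofs of Lemmas~\ref{lem:FF} and~\ref{lem:Ff}: expand $|A_k-B_k|^p$ into $p$ copies (each a ``quantum'' factor $A_k$ or a ``classical'' factor $B_k$, with conjugates), perform the Gaussian integral by \eqref{Hubbard_Stratonovich_formula} --- so each copy becomes a path of length $r$ weighted by $\bb W^{r,0}_{x_k,\tilde x_k}$, with quantum--quantum interactions given by $\bb V_{\eta,\nu}$ (Definition~\ref{def:interaction_general}) and all remaining interactions, including quantum--classical ones, given by $\bb V_\eta$ (Definition~\ref{def:interaction}) --- and show every term of the expansion converges, as $\nu\to0$, to the \emph{same} limit, so the alternating signs produce a vanishing sum. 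Each such convergence uses: (a) replacing Riemann sums $\nu\sum_{r\in\nu\N^*}\ee^{-\kappa r}(\cdot)$ by $\int_0^\infty\dd r\,\ee^{-\kappa r}(\cdot)$, handled as in the proof of Lemma~\ref{path-path_interaction}(ii) by comparing a loop on $[0,r]$ with $r\in[\nu k,\nu(k+1)]$ to a loop on $[0,\nu k]$; (b) replacing $\bb V_{\eta,\nu}$ by $\bb V_\eta$ via Lemma~\ref{path-path_interaction}(i); and (c) dominated convergence with the $\cal E_{\eta,\nu}$-bookkeeping, the dominating function being integrable in $(r_1,\dots,r_p)$ for $d\le3$ by Lemma~\ref{lem:heat_estimate}, the Brownian-bridge increment bound of Lemma~\ref{lem:P_cont} (which on the torus is uniform in the endpoints since $|x-\tilde x|_L$ is bounded), and the Wick-cancellation estimate above. \textbf{The main obstacle} is step~(c): arranging the error estimates so that the dominating function is integrable in the most singular case $d=3$, where the decay $|\ee^{\ii\theta}-1|\lesssim|\theta|^\alpha$ coming from Wick ordering must be exploited carefully to beat the $r^{-3/2}$ short-time singularity of the heat kernel; essentially all the analytic work lies here.
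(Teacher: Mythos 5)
Your overall architecture is sound and largely parallels the paper's: the a priori moment bounds via the Wick-ordering cancellation (your $r^\alpha$ bound is the analogue of the paper's estimate \eqref{Y_est2} via Lemma \ref{lem:T_p}), the Cauchy--Schwarz treatment of $\ee^{F_2(\sigma)}$ versus $\ee^{f_2(\ang{\sigma})}$ through Lemma \ref{Ff_L2conv}, the replacement of $\bb V_{\eta,\nu}$ by $\bb V_{\eta}$ through Lemma \ref{path-path_interaction} (i), and the use of Lemma \ref{lem:P_cont}. However, there is a genuine gap at your step (a), and it is the crux of the proposition rather than a routine Riemann-sum argument. After your telescoping you must compare, uniformly in the endpoints $x_k,\tilde x_k$ and in $\eta$, the sum $\nu\sum_{r\in\nu\N^*}$ with the integral $\int_0^\infty \dd r$ of functionals built on the bridge measures $\bb W^{r,0}_{x_k,\tilde x_k}$ in the presence of the imaginary field. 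Lemma \ref{path-path_interaction} (ii) cannot do this: it compares the interaction functionals of a \emph{single} path evaluated on two time windows differing by at most $\nu$, whereas here you must compare two \emph{different bridge measures}, $\bb W^{r,0}_{x,\tilde x}$ and $\bb W^{\floor{r}_\nu,0}_{x,\tilde x}$, which live on different path spaces, carry different normalizations $\psi^{r}(x-\tilde x)$ versus $\psi^{\floor{r}_\nu}(x-\tilde x)$, and admit no coupling fixing both endpoints (the objects here are bridges, not loops, and a bridge of horizon $r$ restricted to $[0,\floor{r}_\nu]$ does not end at $x$). This is exactly the difficulty the paper isolates at the start of its Step 2 (``the presence of the imaginary interaction field $\ii\xi$ makes a direct comparison difficult'') and resolves by a two-scale trick: first truncate $r$ to $[\delta,1/\delta]$ (Lemmas \ref{lem:Q_delta} and \ref{lem:q_delta}), then switch off the field on the window $[r-\sqrt\nu,r]$ at a cost $O(\sqrt\nu)$ (Lemma \ref{lem:y_psi}), so that the horizon comparison only involves free heat kernels, where $\norm{\psi^t-\psi^s}_{L^1}\le d\log(t/s)$ yields an error $O(\nu/\sqrt{\nu})$ (Lemma \ref{lem:r_tilde_r}); balancing the two scales gives Lemma \ref{lem:y_sum}. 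Your proposal contains no substitute for this idea, so the claim that ``every term of the expansion converges to the same limit'' in your estimate of $\norm{A_k-B_k}_{L^p(\mu_{\cal C_\eta})}$ is unsupported at precisely the point where the real work lies.

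Relatedly, you identify the main obstacle as producing an integrable dominating function for $d=3$; that is in fact the easier part --- it is exactly what the Wick-ordering cancellation already gives (your $r^\alpha$ estimate, the paper's \eqref{Y_est2}), and its role is to justify the truncation in $r$. The hard analytic content is the uniform-in-endpoints, uniform-in-$\eta$ equicontinuity in the horizon variable $r$ of the interacting bridge functional, i.e.\ the content of Lemmas \ref{lem:y_psi}--\ref{lem:y_sum}, which your outline does not supply.
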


We shall prove Proposition \ref{prop:Qq} in three steps.
\begin{enumerate}
\item[\emph{Step 1.}]
In \eqref{def_Q} and \eqref{def_q}, we truncate the values of $r_i$ to lie in some interval $[\delta, 1/\delta]$.
\item[\emph{Step 2.}]
In the resulting truncated expression for $q$, we replace the integral $\int_{1/\delta}^{\delta} \dd r_i \, [\cdots]$ with a corresponding Riemann sum $\nu \sum_{r_i \in \nu \N^*} \ind{\delta \leq r_i \leq 1/ \delta} [\cdots]$.
\item[\emph{Step 3.}]
We use Lemma \ref{lem:sigma_zeta} to replace $\xi$ with $\ang{\sigma}$ and compare the resulting approximations of $q$ and $Q$.
\end{enumerate}

\paragraph{Step 1}
We rewrite \eqref{def_Q} as
\begin{equation} \label{Q_Y}
(Q_{p, \eta})_{\f x, \tilde {\f x}} = \nu^p \sum_{\f r \in (\nu \N^*)^p} \ee^{-\kappa \abs{\f r}} \, Y_{\f x, \tilde {\f x}}(\f r)
\end{equation}
where
\begin{equation*}
Y_{\f x, \tilde {\f x}}(\f r)  \deq \int \mu_{\cal C_\eta}(\dd \sigma)\, \ee^{F_2(\sigma)} \, \prod_{i = 1}^p \pBB{\int \bb W^{r_i, 0}_{x_i, \tilde x_i}(\dd \omega_i) \, \pB{\ee^{\ii \int_0^{r_i} \dd t \, \sigma([t]_\nu, \omega_i(t))} - 1}}\,.
\end{equation*}
By Lemma \ref{prop:fif_renorm} (iv) and Cauchy-Schwarz, we have
\begin{equation} \label{Y_estimate}
\abs{Y_{\f x, \tilde {\f x}}(\f r)} \leq \int \prod_{i = 1}^p  \bb W^{r_i, 0}_{x_i, \tilde x_i}(\dd \omega_i) \pbb{ \int \mu_{\cal C_\eta}(\dd \sigma) \, \prod_{i = 1}^p \absB{\ee^{\ii \int_0^{r_i} \dd t \, \sigma([t]_\nu, \omega_i(t))} - 1}^2}^{1/2}
\end{equation}
We estimate the right-hand side using the following basic algebraic fact.
\begin{lemma} \label{lem:T_p}
For each $p \in \N^*$ there exists a polynomial $T_p$ with the following properties.
\begin{enumerate}[label=(\roman*)]
\item
$1$ is a root of $T_p$ with multiplicity at least $p$.
\item
For any centred Gaussian measure $\mu_{\cal C}$ with covariance $\cal C$ we have
\begin{equation*}
\int \mu_{\cal C}(\dd u) \, \abs{1 - \ee^{\ii \scalar{f}{u}}}^{2p} = T_p\pb{\ee^{-\frac{1}{2} \scalar{f}{\cal C f}}}\,.
\end{equation*}
\end{enumerate}
\end{lemma}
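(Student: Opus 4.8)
\textbf{Proof plan for Lemma \ref{lem:T_p}.}

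The plan is to compute the left-hand side explicitly by expanding $\abs{1-\ee^{\ii\scalar{f}{u}}}^{2p}$ and applying the scalar Hubbard--Stratonovich identity \eqref{Hubbard_Stratonovich_formula}. First I would write $\abs{1-\ee^{\ii\scalar{f}{u}}}^{2p} = (1-\ee^{\ii\scalar{f}{u}})^p(1-\ee^{-\ii\scalar{f}{u}})^p$ and expand each factor by the binomial theorem, so that
\begin{equation*}
\abs{1-\ee^{\ii\scalar{f}{u}}}^{2p} = \sum_{j,k=0}^{p} \binom{p}{j}\binom{p}{k} (-1)^{j+k}\, \ee^{\ii(j-k)\scalar{f}{u}}\,.
\end{equation*}
Integrating against $\mu_{\cal C}$ and using \eqref{Hubbard_Stratonovich_formula} with the linear functional $(j-k) f$, each term becomes $\ee^{-\frac12 (j-k)^2 \scalar{f}{\cal C f}}$. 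Hence, setting $x \deq \ee^{-\frac12\scalar{f}{\cal C f}} \in (0,1]$, we obtain
\begin{equation*}
\int \mu_{\cal C}(\dd u)\, \abs{1-\ee^{\ii\scalar{f}{u}}}^{2p} = \sum_{j,k=0}^{p} \binom{p}{j}\binom{p}{k}(-1)^{j+k}\, x^{(j-k)^2} \eqd T_p(x)\,,
\end{equation*}
which is manifestly a polynomial in $x$ with coefficients depending only on $p$; this gives property (ii) with this definition of $T_p$.

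For property (i), I would argue that $T_p(1)=0$ together with all derivatives up to order $p-1$. The cleanest way is probabilistic: take $\cal C$ to be a scalar (one-dimensional) covariance $c>0$ and $f=1$, so that $x = \ee^{-c/2}$ ranges over all of $(0,1)$ as $c$ ranges over $(0,\infty)$, and $\scalar{f}{u}$ is a centred real Gaussian of variance $c$. As $c\to 0$, $\ee^{\ii\scalar{f}{u}}\to 1$ almost surely, and a Taylor expansion gives $\abs{1-\ee^{\ii\scalar{f}{u}}}^{2} = \scalar{f}{u}^2 + O(\scalar{f}{u}^4)$, hence $\abs{1-\ee^{\ii\scalar{f}{u}}}^{2p} = \OO(\scalar{f}{u}^{2p})$, whose expectation is $\OO(c^p)$ by Wick's rule. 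Since $1-x \asymp c$ as $c\to 0$, this shows $T_p(x) = \OO((1-x)^p)$ as $x\to 1$, so $1$ is a zero of $T_p$ of multiplicity at least $p$. Alternatively, one can see this purely algebraically: writing $y = 1-x$, the identity $\sum_{j,k}\binom pj\binom pk(-1)^{j+k} x^{(j-k)^2}$ should be organized as $\sum_{m} a_m (1-x^{m^2})$-type combinations, and one checks that the coefficient of each power $y^\ell$ with $\ell < p$ vanishes because $(1-\ee^{\ii t})^p$ vanishes to order $p$ at $t=0$ and the Gaussian expectation only improves this; the probabilistic argument just makes this bookkeeping automatic.

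The only mild subtlety — not really an obstacle — is making the "multiplicity at least $p$" claim rigorous from the probabilistic picture: one must note that the map $c\mapsto \ee^{-c/2}$ is a smooth bijection $(0,\infty)\to(0,1)$ with nonvanishing derivative, so that a bound of the form $g(c) = \OO(c^p)$ with $g$ a polynomial in $x(c)$ transfers to $T_p(x) = \OO((1-x)^p)$, which for a polynomial is equivalent to $(x-1)^p \mid T_p(x)$. I expect this entire lemma to be short; the substantive content is simply the binomial expansion plus the observation that $(1-\ee^{\ii t})$ vanishes to first order at $t=0$ while Gaussian moments of order $2p$ are $\OO(c^p)$.
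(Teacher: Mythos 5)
Your proposal is correct and follows essentially the same route as the paper: the binomial expansion together with the Gaussian characteristic function gives $T_p(x)=\sum_{k,l=0}^p\binom{p}{k}\binom{p}{l}(-1)^{k+l}x^{(k-l)^2}$ for part (ii), and part (i) is obtained by specializing to a one-dimensional Gaussian (the paper takes $\cal C=1$, $f=\sqrt{\lambda}$, which is the same as your $c$, $f=1$ up to reparametrization) and showing the expectation is $\OO(\lambda^p)$ near $\lambda=0$, hence $T_p(x)=\OO((1-x)^p)$ and, since $T_p$ is a polynomial, $1$ is a root of multiplicity at least $p$. No gaps.
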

\begin{proof}
We define
\begin{equation*}
T_p(x) \deq \sum_{k,l = 0}^p \binom{p}{k} \binom{p}{l} (-1)^{k+l} \, x^{(k - l)^2}\,.
\end{equation*}
Thus, since
\begin{align*}
\int \mu_{\cal C}(\dd u) \, \absb{1 - \ee^{\ii \scalar{f}{u}}}^{2p} &= \int \mu_{\cal C}(\dd u) \, \pb{1 - \ee^{\ii \scalar{f}{u}}}^p \, \pb{1 - \ee^{-\ii \scalar{f}{u}}}^p
\\
&= \sum_{k,l = 0}^p \binom{p}{k} \binom{p}{l} (-1)^{k+l} \int \mu_{\cal C}(\dd u) \, \ee^{\ii (k - l) \scalar{f}{u}}
\\
&= \sum_{k,l = 0}^p \binom{p}{k} \binom{p}{l} (-1)^{k+l} \, \ee^{-\frac{1}{2} (k-l)^2 \scalar{f}{\cal C f}}\,,
\end{align*}
we find (ii). We establish (i) by considering the one-dimensional Gaussian measure with $\cal C  = 1$. Writing $f =\sqrt{ \lambda}$ for $\lambda \geq 0$, we find
\begin{equation*}
T_p(\ee^{-\lambda/2}) =  \int \mu_1(\dd u) \, \abs{1 - \ee^{\ii u \sqrt{\lambda}}}^{2p} = O(\lambda^p)\,,
\end{equation*}
from which we deduce that $T_p(x) = O((1 - x)^p)$ for $0 < x < 1$, and hence (i) follows.
\end{proof}
Using H\"{o}lder's inequality and Lemma \ref{lem:T_p} (ii), we get from \eqref{Y_estimate}
\begin{align*}
\abs{Y_{\f x, \tilde {\f x}}(\f r)} &\leq \prod_{i = 1}^{p} \int  \bb W^{r_i, 0}_{x_i, \tilde x_i}(\dd \omega_i)  \pbb{ \int \mu_{\cal C_\eta}(\dd \sigma) \, \absB{\ee^{\ii \int_0^{r_i} \dd t \, \sigma([t]_\nu, \omega_i(t))} - 1}^{2p}}^{1/2p}
\\
&= \prod_{i = 1}^{p} \int  \bb W^{r_i, 0}_{x_i, \tilde x_i}(\dd \omega_i) \, T_p \pbb{\exp \pbb{-\frac{\nu}{2} \int_0^{r_i} \dd t \, \dd s \, \delta([t]_\nu - [s]_\nu)  \,  v_\eta(\omega_i(t) - \omega_i(s))}}^{1/2p}\,.
\end{align*}
Here in the application of Lemma \ref{lem:T_p} (ii) we took $f(\tau,x)=\int_{0}^{r_i} \dd t\, \delta(\tau-[t]_{\nu}) \delta (x-\omega_i(t)).$
The exponent is nonpositive and bounded in absolute value by $\frac{1}{2} \norm{v}_{L^\infty} r_i^2$ (by Lemma \ref{properties_of_v_eta} (i)), so that by $1 - \ee^{-x} \leq x$ for $x \geq 0$, and Lemma \ref{lem:T_p} (i) and Lemma  \ref{lem:heat_estimate} we get
\begin{equation} \label{Y_est2}
\abs{Y_{\f x, \tilde {\f x}}(\f r)} \leq  C \norm{v}_{L^\infty}^{p/2} \prod_{i = 1}^{p} \int  \bb W^{r_i, 0}_{x_i, \tilde x_i}(\dd \omega_i) \, r_i \leq C_{p,v,L} \prod_{i = 1}^{p} \pb{r_i + r_i^{1-d/2}}\,.
\end{equation}
Since $d \leq 3$, by a Riemann sum approximation we deduce that the sum \eqref{Q_Y} is bounded, uniformly in $\nu, \eta > 0$, and $\f x, \tilde {\f x}$, and the following result holds. To state it, we use the notation
\begin{equation*}
[a,b]_\nu \deq \nu \Z \cap [a,b]\,.
\end{equation*}
\begin{lemma} \label{lem:Q_delta}
For any $\epsilon > 0$ there exists $\delta > 0$ such that
\begin{equation*}
\absbb{(Q_{p, \eta})_{\f x, \tilde {\f x}} - \nu^p \sum_{\f r \in [\delta, 1/\delta]_\nu^p} \ee^{-\kappa \abs{\f r}} \, Y_{\f x, \tilde {\f x}}(\f r)
}
 \leq \epsilon\,,
\end{equation*}
for all $\nu, \eta > 0$ and $\f x, \tilde {\f x}$.
\end{lemma}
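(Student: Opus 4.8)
The plan is to estimate directly the tail of the absolutely convergent series \eqref{Q_Y}. By \eqref{Q_Y}, the quantity to be bounded equals $\nu^p \sum_{\f r} \ee^{-\kappa\abs{\f r}}\, Y_{\f x, \tilde {\f x}}(\f r)$, where the sum runs over those $\f r \in (\nu\N^*)^p$ having at least one coordinate outside $[\delta, 1/\delta]$. Applying the pointwise bound \eqref{Y_est2}, which holds uniformly in $\eta > 0$ and in $\f x, \tilde {\f x}$, reduces the claim to showing that the scalar quantity
\[
\nu^p \sum_{\f r \in (\nu\N^*)^p \setminus [\delta, 1/\delta]_\nu^p} \ee^{-\kappa\abs{\f r}} \prod_{i = 1}^p \pb{r_i + r_i^{1 - d/2}}
\]
can be made smaller than $\epsilon / C_{p,v,L}$ by choosing $\delta$ small, uniformly in $\nu > 0$. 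No dependence on $\eta$, $\f x$, or $\tilde {\f x}$ remains.

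For this I would use a union bound: the index set is contained in $\bigcup_{i=1}^p \{ \f r : r_i < \delta \text{ or } r_i > 1/\delta \}$, so after the triangle inequality the sum factorizes and it suffices to control, uniformly in $\nu > 0$, the single-variable quantities
\[
A(\nu) \deq \nu \sum_{r \in \nu\N^*} \ee^{-\kappa r} \pb{r + r^{1-d/2}}\,, \qquad
a_\delta(\nu) \deq \nu \sum_{\substack{r \in \nu\N^*,\; r < \delta \\ \text{ or } r > 1/\delta}} \ee^{-\kappa r} \pb{r + r^{1-d/2}}\,,
\]
since the tail in question is bounded by $p\, a_\delta(\nu)\, A(\nu)^{p-1}$. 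One shows $\sup_{\nu > 0} A(\nu) < \infty$ by comparison with $\int_0^\infty \ee^{-\kappa r}(r + r^{1-d/2})\,\dd r < \infty$, which is finite because $1 - d/2 > -1$ for $d \leq 3$. The only point needing care is that for $d = 3$ the summand $r^{-1/2}$ is unbounded near $0$; but $\nu \sum_{1 \leq n \leq \delta/\nu} (\nu n)^{-1/2} = \nu^{1/2}\sum_{1 \leq n \leq \delta/\nu} n^{-1/2} \asymp \delta^{1/2}$, so the prefactor $\nu$ compensates and the small-$r$ part of $a_\delta(\nu)$ is not merely summable but tends to $0$ with $\delta$, uniformly in $\nu$. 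More generally, dropping the exponential on $\{r<\delta\}$ gives a small-$r$ contribution $\leq C_d\,\delta^{2-d/2}$, and extracting the factor $\ee^{-\kappa/(2\delta)}$ from $\ee^{-\kappa r}$ on $\{r > 1/\delta\}$ bounds the large-$r$ contribution by $\ee^{-\kappa/(2\delta)}$ times $A(\nu)$ with $\kappa$ replaced by $\kappa/2$.

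Putting these estimates together yields $a_\delta(\nu) \leq C_{d,\kappa}\pb{\delta^{2 - d/2} + \ee^{-\kappa/(2\delta)}}$ uniformly in $\nu > 0$, hence the tail is at most $C_{p,d,v,L,\kappa}\pb{\delta^{2-d/2} + \ee^{-\kappa/(2\delta)}}$, which is $\leq \epsilon$ for $\delta$ small enough, independently of $\nu, \eta, \f x, \tilde {\f x}$. I do not anticipate any genuine difficulty here: the only slightly delicate point is the uniformity in $\nu$ of the small-$r$ estimate in dimension $d = 3$, and that is settled by the elementary bound $\nu^{1/2}\sum_{n \leq \delta/\nu} n^{-1/2} = O(\delta^{1/2})$ above; everything else is a routine Riemann-sum comparison argument.
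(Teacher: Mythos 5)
Your argument is correct and is essentially the paper's: the paper derives the same uniform bound \eqref{Y_est2} on $Y_{\f x, \tilde {\f x}}(\f r)$ and then asserts Lemma \ref{lem:Q_delta} via a Riemann-sum approximation of $\int_0^\infty \dd r\, \ee^{-\kappa r}(r + r^{1-d/2})$, uniformly in $\nu,\eta,\f x,\tilde{\f x}$. You have merely spelled out the tail estimate (union bound, small-$r$ part $O(\delta^{2-d/2})$ uniformly in $\nu$, large-$r$ part $O(\ee^{-\kappa/2\delta})$) that the paper leaves implicit, and these details are sound.
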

An almost identical argument yields the following analogous result. For $\f r, \f s \in (0,\infty)^p$ satisfying $\f s \leq \f r$ (i.e.\ $0 \leq s_i \leq r_i$ for all $i$), define
\begin{equation} \label{def_y}
y_{\f x, \tilde {\f x}}(\f r, \f s)  \deq \int \mu_{v_\eta}(\dd \xi)\, \ee^{f_2(\xi)} \, \prod_{i = 1}^p \pBB{\int \bb W^{r_i, 0}_{x_i, \tilde x_i}(\dd \omega_i) \, \pB{\ee^{\ii \int_0^{s_i} \dd t \, \xi(\omega_i(t))} - 1}}\,,
\end{equation}
where we include the extra argument $\f s$ for the purposes of Step 2 below.

\begin{lemma} \label{lem:q_delta}
For any $\epsilon > 0$ there exists $\delta > 0$ such that
\begin{equation*}
\absbb{(q_{p, \eta})_{\f x, \tilde {\f x}} - \int_{[\delta,1/\delta]^p} \dd \f r \, \ee^{-\kappa \abs{\f r}} \, y_{\f x, \tilde {\f x}}(\f r, \f r)
}
 \leq \epsilon\,,
\end{equation*}
for all $\nu, \eta > 0$ and $\f x, \tilde {\f x}$.
\end{lemma}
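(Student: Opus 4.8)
The plan is to mimic closely the argument that established Lemma \ref{lem:Q_delta} for the quantum side, transposing it to the classical field $\xi$ with covariance $v_\eta$. The starting point is the expression \eqref{def_q} for $(q_{p,\eta})_{\f x,\tilde{\f x}}$, which after writing $\frac1\kappa\ee^{-\kappa r_i}$-weighted radial integrals is precisely $\int_{[0,\infty)^p}\dd\f r\,\ee^{-\kappa\abs{\f r}}\,y_{\f x,\tilde{\f x}}(\f r,\f r)$, with $y$ defined in \eqref{def_y}. Thus the claim amounts to showing that the tail contribution $\int_{[0,\infty)^p\setminus[\delta,1/\delta]^p}\dd\f r\,\ee^{-\kappa\abs{\f r}}\,y_{\f x,\tilde{\f x}}(\f r,\f r)$ can be made smaller than $\epsilon$, uniformly in $\nu,\eta,\f x,\tilde{\f x}$ (in fact there is no $\nu$-dependence here at all), by choosing $\delta$ small. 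Since $y$ does not depend on $\nu$, the ``uniformly in $\nu$'' in the statement is vacuous for this lemma, but I will phrase the bound so that it is manifestly uniform, matching the format of Lemma \ref{lem:Q_delta}.

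The key estimate is a pointwise bound on $\abs{y_{\f x,\tilde{\f x}}(\f r,\f r)}$ of the same shape as \eqref{Y_est2}. First, since $\re f_2(\xi)\le 0$ by Proposition \ref{prop:Z_field_Wick_ordered}(ii), I can drop the factor $\ee^{f_2(\xi)}$ in absolute value, then apply Cauchy--Schwarz (or directly H\"older) in the measure $\mu_{v_\eta}$ to separate the $p$ factors $\ee^{\ii\int_0^{r_i}\dd t\,\xi(\omega_i(t))}-1$, exactly as in the passage from \eqref{Y_estimate} to the line following it. This uses the classical analogue of Lemma \ref{lem:T_p}(ii): for the Gaussian measure $\mu_{v_\eta}$ and $f(x)=\int_0^{r_i}\dd t\,\delta(x-\omega_i(t))$ one gets $\int\mu_{v_\eta}(\dd\xi)\,\abs{1-\ee^{\ii\scalar{f}{\xi}}}^{2p}=T_p\pb{\ee^{-\frac12\scalar{f}{v_\eta f}}}$, and here $\scalar{f}{v_\eta f}=\int_0^{r_i}\dd t\,\dd s\,v_\eta(\omega_i(t)-\omega_i(s))=\bb V_\eta(\omega_i,\omega_i)\ge 0$, which is nonnegative by positivity of $v$ and bounded in absolute value by $\norm{v}_{L^\infty}r_i^2$ using Lemma \ref{properties_of_v_eta}(i). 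Then Lemma \ref{lem:T_p}(i) ($1$ is a root of $T_p$ of multiplicity $\ge p$) together with $1-\ee^{-x}\le x$ for $x\ge0$ gives $T_p(\ee^{-\frac12\bb V_\eta(\omega_i,\omega_i)})\le C_p\,(\tfrac12\norm{v}_{L^\infty}r_i^2)^p$, hence $\abs{y_{\f x,\tilde{\f x}}(\f r,\f r)}\le C_{p,v}\prod_{i=1}^p r_i\int\bb W^{r_i,0}_{x_i,\tilde x_i}(\dd\omega_i)\le C_{p,v,L}\prod_{i=1}^p\pb{r_i+r_i^{1-d/2}}$, where the last step is Lemma \ref{lem:heat_estimate}. (For the more general two-argument $y_{\f x,\tilde{\f x}}(\f r,\f s)$ with $\f s\le\f r$ one simply replaces $r_i$ by $s_i\le r_i$ in the exponent, so the same bound holds with $r_i$ in the final product.)

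With this pointwise bound in hand, the tail integral is dominated by $C_{p,v,L}\int_{[0,\infty)^p\setminus[\delta,1/\delta]^p}\dd\f r\,\ee^{-\kappa\abs{\f r}}\prod_{i=1}^p(r_i+r_i^{1-d/2})$, and since $d\le3$ the one-dimensional integrand $\ee^{-\kappa r}(r+r^{1-d/2})$ is integrable near $0$ (as $1-d/2>-1$) and at infinity, so the full $\R_{\ge0}^p$ integral converges; therefore by dominated convergence the tail tends to $0$ as $\delta\to0$, uniformly in $\eta$ (the constants depend only on $p,v,L,\kappa,d$, none of which involve $\eta$ or $\nu$) and in $\f x,\tilde{\f x}$ (since Lemma \ref{lem:heat_estimate} is uniform in the endpoints). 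Choosing $\delta$ so that this bound is $\le\epsilon$ yields the statement. The only point requiring a little care — and the closest thing to an obstacle — is making sure the splitting and H\"older step is applied with the correct test function $f$ and that the bound on $\bb V_\eta(\omega_i,\omega_i)$ is genuinely uniform in $\eta$; both follow from Lemma \ref{properties_of_v_eta}(i), so there is no real difficulty, and the proof is essentially a verbatim transcription of the estimates \eqref{Y_estimate}--\eqref{Y_est2} with $\mu_{\cal C_\eta}$ replaced by $\mu_{v_\eta}$ and the space-time path integral replaced by the purely spatial one.
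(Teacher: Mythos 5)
Your argument is correct and is essentially the paper's own: the paper proves Lemma \ref{lem:Q_delta} via the estimates \eqref{Y_estimate}--\eqref{Y_est2} and then states that Lemma \ref{lem:q_delta} follows by an almost identical argument, which is exactly the transcription you carry out (drop $\ee^{f_2(\xi)}$ using $\re f_2\leq 0$, H\"older plus Lemma \ref{lem:T_p} for $\mu_{v_\eta}$, the uniform bound $\norm{v_\eta}_{L^\infty}\leq\norm{v}_{L^\infty}$, and Lemma \ref{lem:heat_estimate}, giving $\abs{y_{\f x,\tilde{\f x}}(\f r,\f r)}\leq C\prod_i(r_i+r_i^{1-d/2})$ and hence a uniformly small tail). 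Your observations that the $\nu$-uniformity is vacuous here and that the identification of $(q_{p,\eta})_{\f x,\tilde{\f x}}$ with the full $\f r$-integral rests on Fubini (justified by the same absolute bound) are accurate and complete the proof.
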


\paragraph{Step 2}
In the next step, we start from the expression $\int_{[\delta,1/\delta]^p} \dd \f r \, \ee^{-\kappa \abs{\f r}} \, y_{\f x, \tilde {\f x}}(\f r,\f r)$ (see Lemma \ref{lem:q_delta}) and replace the integral over $\f r \in [\delta,1/\delta]^p$ with a Riemann sum over $\f r \in [\delta,1/\delta]_\nu^p$. To that end, we shall have to compare the measures $\bb W^{r_i,0}_{x_i, \tilde x_i}$ and $\bb W^{\tilde r_i,0}_{x_i, \tilde x_i}$, which have the same spatial arguments but different time arguments.
This can also be regarded as a comparison of the propagators $W^{r_i,0}_{x_i, \tilde x_i}(\ii \xi)$ and $W^{\tilde r_i,0}_{x_i, \tilde x_i}(\ii \xi)$. The presence of the imaginary interaction field $\ii \xi$ makes a direct comparison difficult. The corresponding comparison of the free propagators $W^{r_i,0}_{x_i, \tilde x_i}(0) = \psi^{r_i}(x_i - \tilde x_i)$ and $W^{\tilde r_i,0}_{x_i, \tilde x_i}(0) = \psi^{\tilde r_i}(x_i - \tilde x_i)$ is an elementary estimate of the heat kernel \eqref{heat_kernel}. The key idea of our comparison is to use this observation and turn off the interaction field $\ii \xi$ for the times in $[s_i, r_i]$, replacing $y_{\f x, \tilde {\f x}}(\f r, \f r)$ with $y_{\f x, \tilde {\f x}}(\f r, \f s)$. Thus,  recalling Lemma  \ref{lem:Gamma} (i) we compare $W^{r_i,0}(\ii \xi) = W^{r_i,s_i}(\ii \xi) W^{s_i, 0}(\ii \xi)$ with $W^{r_i,s_i}(0) W^{s_i, 0}(\ii \xi)$, and show that they are close provided $r_i - s_i$ is small enough. Next, we show that $W^{r_i,s_i}(0) W^{s_i, 0}(\ii \xi)$ and $W^{\tilde r_i,s_i}(0) W^{s_i, 0}(\ii \xi)$ are close provided that $\abs{r_i - \tilde r_i}$ is small enough and $r_i - s_i$ is large enough, by an elementary estimate of the heat kernel \eqref{heat_kernel}.

We recall the relation \eqref{W_FK}. For the following estimates, we note that, by \eqref{W_FK} and Lemma \ref{lem:heat_estimate}, for $\tau \geq \tilde \tau + \delta$ we have
\begin{equation} \label{W_kernel_est}
\abs{W^{\tau, \tilde \tau}_{x, \tilde x}(\ii \xi)} \leq \psi^{\tau - \tilde \tau}(x - \tilde x) \leq C_{d,L,\delta}
\end{equation}
for all $x, \tilde {x}$.

\begin{lemma} \label{lem:y_psi}
For $\f r \in [\delta,1/\delta]^p$ and $\f s \leq \f r$ we have
\begin{equation*}
\abs{y_{\f x, \tilde {\f x}}(\f r, \f r) - y_{\f x, \tilde {\f x}}(\f r, \f s)} \leq C_{p, d, L, \delta, v} \max_i (r_i - s_i) 
\end{equation*}
for all $\nu, \eta > 0$ and $\f x, \tilde {\f x}$.
\end{lemma}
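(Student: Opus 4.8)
\textbf{Proof proposal for Lemma \ref{lem:y_psi}.}

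The plan is to estimate the difference $y_{\f x, \tilde{\f x}}(\f r, \f r) - y_{\f x, \tilde{\f x}}(\f r, \f s)$ by telescoping over the $p$ coordinates: I would write the difference as a sum of $p$ terms, in the $i$-th of which the first $i-1$ factors carry the exponent $\int_0^{s_j}$, the $i$-th factor is replaced from $\int_0^{r_i}$ to $\int_0^{s_i}$, and the remaining factors still carry $\int_0^{r_j}$. Since $|\ee^{f_2(\xi)}| \leq 1$ by Proposition \ref{prop:Z_field_Wick_ordered} (ii), and since every factor $\int \bb W^{r_j,0}_{x_j,\tilde x_j}(\dd \omega_j)\,(\ee^{\ii \int \dd t\, \xi(\omega_j(t))} - 1)$ is bounded in absolute value by $2 \sup_{x,\tilde x}\psi^{r_j}(x - \tilde x) \leq C_{d,L,\delta}$ using \eqref{W_kernel_est} (recall $r_j \geq \delta$), it suffices to bound, for a single coordinate $i$ and uniformly in $\xi$ in the support of $\mu_{v_\eta}$,
\[
\absbb{\int \bb W^{r_i,0}_{x_i,\tilde x_i}(\dd \omega_i)\,\pB{\ee^{\ii \int_0^{r_i}\dd t\,\xi(\omega_i(t))} - \ee^{\ii\int_0^{s_i}\dd t\,\xi(\omega_i(t))}}} \leq C\,(r_i - s_i)\,.
\]

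To prove this last inequality, I would use the elementary bound $|\ee^{\ii \alpha} - \ee^{\ii \beta}| \leq |\alpha - \beta|$ together with $|\xi(\omega_i(t))| \leq \|\xi\|_{L^\infty} \leq \|v\|_{L^\infty}^{1/2}$ (the last step because under $\mu_{v_\eta}$ the field is a finite sum and $\|\xi\|_{L^\infty}^2 \lesssim \|v_\eta\|_{L^\infty} \leq \|v\|_{L^\infty}$ by Lemma \ref{properties_of_v_eta} (i))--- actually it is cleanest to keep $\xi$ inside the integral and only use pointwise boundedness of $v_\eta$. This gives
\[
\absB{\ee^{\ii \int_0^{r_i}\dd t\,\xi(\omega_i(t))} - \ee^{\ii\int_0^{s_i}\dd t\,\xi(\omega_i(t))}} \leq \int_{s_i}^{r_i}\dd t\,\abs{\xi(\omega_i(t))} \leq \|v\|_{L^\infty}^{1/2}\,(r_i - s_i)\,,
\]
pointwise in the path $\omega_i$. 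Integrating against $\bb W^{r_i,0}_{x_i,\tilde x_i}(\dd \omega_i)$ and using Lemma \ref{lem:heat_estimate} (or \eqref{W_kernel_est}) to bound its total mass by $C_{d,L,\delta}$ yields the desired estimate $C_{d,L,\delta,v}(r_i - s_i)$ for the single-coordinate difference. Summing the $p$ telescoping terms, each of which contributes at most $C_{p,d,L,\delta,v}\max_i(r_i - s_i)$ after the other factors are bounded by $C_{d,L,\delta}$, and noting all bounds are uniform in $\nu,\eta,\f x,\tilde{\f x}$, completes the proof.

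I do not expect any serious obstacle here: this is a routine telescoping-plus-Lipschitz argument, and the only points requiring a little care are (a) checking that $\ee^{f_2(\xi)}$ and the unaltered factors are uniformly bounded --- both already available from Proposition \ref{prop:Z_field_Wick_ordered} (ii) and \eqref{W_kernel_est} --- and (b) making sure the heat-kernel mass bound is applied with $r_i$ bounded below by $\delta$, which is exactly the role of the truncation introduced in Step 1. The genuinely delicate comparison, namely that of $\bb W^{r_i,0}_{x_i,\tilde x_i}$ with $\bb W^{\tilde r_i,0}_{x_i,\tilde x_i}$ at different time endpoints, is deferred to the subsequent lemmas after the interaction has been turned off on $[s_i,r_i]$; the present lemma only performs that turning-off step, which is why it is comparatively soft.
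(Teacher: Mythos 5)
Your telescoping reduction is exactly the paper's: bound $\abs{\ee^{f_2(\xi)}}\leq 1$ by Proposition \ref{prop:Z_field_Wick_ordered} (ii), bound the untouched factors by $C_{d,L,\delta}$ via \eqref{W_kernel_est} (using $r_j\geq\delta$), and reduce to a single-coordinate estimate of size $C(r_i-s_i)$. The problem is in how you prove that single-coordinate estimate. The displayed inequality $\abs{\xi(\omega_i(t))}\leq \norm{\xi}_{L^\infty}\leq \norm{v}_{L^\infty}^{1/2}$ is false: under $\mu_{v_\eta}$ the field $\xi$ is a (finite) linear combination of smooth modes with \emph{Gaussian} coefficients, so $\norm{\xi}_{L^\infty}$ is an unbounded random variable; the bound $v_\eta(0)\leq\norm{v}_{L^\infty}$ controls the \emph{variance} $\int\mu_{v_\eta}(\dd\xi)\,\xi(x)^2$, not the pointwise value of $\xi$. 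For the same reason your framing "uniformly in $\xi$ in the support of $\mu_{v_\eta}$" cannot work as stated, and your parenthetical "keep $\xi$ inside the integral" is the right instinct but is not what the subsequent display actually does.

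The repair is short, and close to what you hinted at: keep the Lipschitz bound $\absB{\ee^{\ii\int_0^{r_i}}-\ee^{\ii\int_0^{s_i}}}\leq\int_{s_i}^{r_i}\dd t\,\abs{\xi(\omega_i(t))}$ pointwise in $(\xi,\omega_i)$, but then integrate over $\xi$ \emph{first} (the other factors are bounded uniformly in $\xi$, so this is legitimate), and use $\int\mu_{v_\eta}(\dd\xi)\,\abs{\xi(x)}\leq v_\eta(0)^{1/2}\leq\norm{v}_{L^\infty}^{1/2}$ by Lemma \ref{properties_of_v_eta} (i); Fubini and Lemma \ref{lem:heat_estimate} then give the claimed $C_{d,L,\delta,v}(r_i-s_i)$. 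This yields a proof marginally more elementary than the paper's, which instead applies Cauchy--Schwarz in $\xi$ to the single-coordinate difference and invokes Lemma \ref{lem:T_p} with $p=1$, getting $\absb{\ee^{-\frac12\int_{s_i}^{r_i}\dd t\,\dd s\,v_\eta(\omega(t)-\omega(s))}-1}^{1/2}\leq C\sqrt{\norm{v}_{L^\infty}}\,(r_i-s_i)$; both routes end at the same bound, but as written your argument contains a genuinely wrong step that must be replaced by one of these two.
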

\begin{proof}
By Proposition \ref{prop:Z_field_Wick_ordered} (ii), telescoping, and using \eqref{W_kernel_est}, we conclude that it suffices to prove
\begin{equation} \label{y_est_step2}
\int \mu_{v_\eta}(\dd \xi) \, \absbb{\int \bb W^{r_i, 0}_{x_i, \tilde x_i}(\dd \omega_i) \, \pB{\ee^{\ii \int_0^{r_i} \dd t \, \xi(\omega_i(t))} - \ee^{\ii \int_0^{s_i} \dd t \, \xi(\omega_i(t))}}} \leq C_{L,\delta, v} (r_i - s_i)
\end{equation}
for all $i$.
By Cauchy-Schwarz, the left-hand side of \eqref{y_est_step2} is bounded by
\begin{multline*}
\int \bb W^{r_i, 0}_{x_i, \tilde x_i}(\dd \omega) \, \pBB{ \int \mu_{v_\eta}(\dd \xi) \, \absB{\ee^{\ii \int_{s_i}^{r_i} \dd t \, \xi(\omega(t))} - 1}^2}^{1/2}
\\
\leq  C  \int \bb W^{r_i, 0}_{x_i, \tilde x_i}(\dd \omega) \, \absB{\ee^{-\frac{1}{2} \int_{s_i}^{r_i} \dd t\, \dd s \, v_\eta(\omega(t) - \omega(s))} - 1}^{1/2} \leq C_{d,L,\delta} \sqrt{\norm{v}_{L^\infty}} \, (r_i - s_i)\,.
\end{multline*}
where in the second step we used Lemma \ref{lem:T_p} with $p = 1$ and $f(x)=\int_{s_i}^{r_i} \dd t\, \delta(x-\omega_i(t))$, and in the third step we used \eqref{W_kernel_est}. This concludes the proof.
\end{proof}

\begin{lemma} \label{lem:r_tilde_r}
Let $\f r, \tilde {\f r} \in [\delta,1/\delta]^p$ satisfy $\abs{\f r - \tilde {\f r}} \leq \nu$, and let $\f s$ satisfy $\f s \leq \f r$, $\f s \leq \tilde{\f r}$. Suppose moreover that $\min_i s_i \geq \delta/2$  and $1 \geq \min_{i}(r_i - s_i) \geq 2 \nu$. Then
\begin{equation*}
\absB{\ee^{-\kappa \abs{\f r}} \, y_{\f x, \tilde {\f x}}(\f r, \f s) - \ee^{-\kappa \abs{\tilde {\f r}}} \, y_{\f x, \tilde {\f x}}(\tilde {\f r}, \f s)} \leq C_{p,d,L,\delta,\kappa} \frac{\nu}{\min_i (r_i - s_i)}\,.
\end{equation*}
\end{lemma}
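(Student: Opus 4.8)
The idea is to compare the two expressions term by term in the product over $i = 1, \dots, p$, using the telescoping identity and the uniform bound \eqref{W_kernel_est}, so that everything reduces to estimating a single factor. First I would write $y_{\f x, \tilde {\f x}}(\f r, \f s)$ in the form used in Lemma \ref{lem:y_psi}, i.e.\ with the integrand expanded as a product over $i$ of
$\int \bb W^{r_i, 0}_{x_i, \tilde x_i}(\dd \omega_i) \, \pB{\ee^{\ii \int_0^{s_i} \dd t \, \xi(\omega_i(t))} - 1}$. Using Lemma \ref{lem:Gamma} (i) we split the Brownian bridge over $[0,r_i]$ into a bridge over $[0,s_i]$ carrying the interaction phase, followed by a free heat kernel propagation over $[s_i, r_i]$; concretely, integrating out the part of the path on $[s_i, r_i]$ replaces the bridge $\bb W^{r_i, 0}_{x_i, \tilde x_i}$ by an integral of $\bb W^{s_i, 0}_{x_i, y}$ against the heat kernel $\psi^{r_i - s_i}(\tilde x_i - y)$ in a new endpoint variable $y$. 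Thus the $i$-th factor becomes
$\int \dd y \, \psi^{r_i - s_i}(\tilde x_i - y) \int \bb W^{s_i, 0}_{x_i, y}(\dd \omega_i) \, \pB{\ee^{\ii \int_0^{s_i} \dd t \, \xi(\omega_i(t))} - 1}$, and exactly the same manipulation applies to $y_{\f x, \tilde {\f x}}(\tilde {\f r}, \f s)$ with $\psi^{\tilde r_i - s_i}$ in place of $\psi^{r_i - s_i}$.

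Having done this, the difference $\ee^{-\kappa \abs{\f r}} \, y_{\f x, \tilde {\f x}}(\f r, \f s) - \ee^{-\kappa \abs{\tilde {\f r}}} \, y_{\f x, \tilde {\f x}}(\tilde {\f r}, \f s)$ is, after telescoping over $i$, a sum of $p$ terms, each controlled by the corresponding difference of the single factors
$\ee^{-\kappa r_i} \int \dd y \, \psi^{r_i - s_i}(\tilde x_i - y) g_i(y) - \ee^{-\kappa \tilde r_i} \int \dd y \, \psi^{\tilde r_i - s_i}(\tilde x_i - y) g_i(y)$,
where $g_i(y) \deq \int \bb W^{s_i, 0}_{x_i, y}(\dd \omega_i) \, \pB{\ee^{\ii \int_0^{s_i} \dd t \, \xi(\omega_i(t))} - 1}$; the other factors (both $y$-factors and the surviving exponentials $\ee^{-\kappa s_j}$) are bounded uniformly by \eqref{W_kernel_est}, Lemma \ref{lem:heat_estimate}, and $\kappa > 0$, using $s_j \geq \delta/2$ and $\min_j(r_j - s_j) \geq 2\nu > 0$ together with the hypothesis $\f r, \tilde {\f r} \in [\delta, 1/\delta]^p$. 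Moreover $\abs{g_i(y)} \leq \int \bb W^{s_i,0}_{x_i,y}(\dd \omega) \cdot \mathbf{1}$, which by Lemma \ref{lem:heat_estimate} is $\leq \psi^{s_i}(x_i - y)$, an $L^1_y$ function with mass $1$ (and is also bounded since $s_i \geq \delta/2$). So it suffices to estimate, uniformly in $y$ or after integrating $\dd y$,
$\absb{\ee^{-\kappa r_i} \psi^{r_i - s_i}(z) - \ee^{-\kappa \tilde r_i} \psi^{\tilde r_i - s_i}(z)}$
with $z = \tilde x_i - y$, given $\abs{r_i - \tilde r_i} \leq \nu$, $r_i - s_i \geq 2\nu$, $\tilde r_i - s_i \geq 0$, and $r_i, \tilde r_i \leq 1/\delta$.

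This last estimate is an elementary heat-kernel computation and is the only place where work is needed. Writing $\ee^{-\kappa r} \psi^{r-s}(z) - \ee^{-\kappa \tilde r} \psi^{\tilde r - s}(z) = (\ee^{-\kappa r} - \ee^{-\kappa \tilde r})\psi^{r-s}(z) + \ee^{-\kappa \tilde r}(\psi^{r-s}(z) - \psi^{\tilde r - s}(z))$, the first summand is bounded by $\kappa \abs{r - \tilde r} \sup_z \psi^{r-s}(z) \leq C_{d,L,\delta,\kappa} \nu$ since $r - s \geq 2\nu$ gives a uniform bound on $\psi^{r-s}$ via \eqref{heat_kernel} and Lemma \ref{lem:heat_estimate} — but we want a bound with $\nu/\min_i(r_i - s_i)$, which also follows since $r_i - s_i \leq 1/\delta$. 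For the second summand one uses $\partial_t \psi^t = \frac12 \Delta \psi^t$ and the explicit formula \eqref{heat_kernel}: since both $r - s$ and $\tilde r - s$ lie in an interval $[a, b]$ with $a \geq \min_i(r_i - s_i) - \nu \geq \frac12 \min_i(r_i - s_i)$ (using $\min_i(r_i-s_i)\geq 2\nu$) and $b \leq 1/\delta$, one has $\abs{\psi^{r-s}(z) - \psi^{\tilde r - s}(z)} \leq \abs{(r-s) - (\tilde r - s)} \sup_{a \leq t \leq b} \abs{\partial_t \psi^t(z)}$, and $\sup_z \abs{\partial_t \psi^t(z)} \leq C_{d,L} t^{-d/2 - 1} + C_{d,L}$ for $t$ bounded, so the whole thing is $\leq C_{d,L,\delta} \, \abs{r - \tilde r} / \min_i(r_i - s_i) \leq C_{d,L,\delta} \, \nu / \min_i(r_i - s_i)$. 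Assembling the $p$ telescoped terms, each bounded by $C \nu / \min_i(r_i - s_i)$ times uniformly bounded factors, and using the finiteness of the $\dd y$-integrals, gives the claimed bound $C_{p,d,L,\delta,\kappa} \, \nu / \min_i(r_i - s_i)$. The main (minor) obstacle is simply bookkeeping the uniform-in-$\eta$, uniform-in-$\f x, \tilde{\f x}$ bounds on the auxiliary factors and checking that the constraint $\min_i(r_i - s_i) \geq 2\nu$ is exactly what makes the heat kernels $\psi^{r_i - s_i}$, $\psi^{\tilde r_i - s_i}$ comparable; there is no analytic difficulty beyond the heat-kernel derivative estimate above.
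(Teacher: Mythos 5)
Your overall strategy coincides with the paper's: peel off the free propagation of each bridge over $[s_i,r_i]$, telescope over $i$ using the uniform bounds of type \eqref{W_kernel_est}, and reduce everything to comparing the free heat kernels $\psi^{r_i-s_i}$ and $\psi^{\tilde r_i-s_i}$. The gap is in how you carry out that final comparison: you estimate the difference pointwise, via $\abs{\psi^{r_i-s_i}(z)-\psi^{\tilde r_i-s_i}(z)}\leq \abs{r_i-\tilde r_i}\,\sup_t\sup_z\abs{\partial_t\psi^t(z)}$, and pair this with the $L^1$ mass of $g_i$. But $\sup_z\abs{\partial_t\psi^t(z)}\asymp t^{-d/2-1}$ for small $t$ (take $z=0$ in \eqref{heat_kernel}), not $t^{-1}$, so this route yields $C\nu\,\bigl(\min_i(r_i-s_i)\bigr)^{-d/2-1}$, which exceeds the claimed $\nu/\min_i(r_i-s_i)$ by the unbounded factor $\bigl(\min_i(r_i-s_i)\bigr)^{-d/2}$. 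The hypotheses only give $r_i-s_i\geq 2\nu$, and in the intended application (proof of Lemma \ref{lem:y_sum}) one takes $r_i-s_i\asymp\sqrt{\nu}$, for which your bound is $\nu^{(2-d)/4}$ and does not even vanish as $\nu\to0$ when $d=2,3$. So the step ``hence the whole thing is $\leq C\abs{r-\tilde r}/\min_i(r_i-s_i)$'' does not follow. The same over-optimism occurs in your first summand, where you assert $\sup_z\psi^{r_i-s_i}(z)$ is bounded because $r_i-s_i\geq2\nu$; it is in fact of order $(r_i-s_i)^{-d/2}$.

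The repair — and the paper's actual argument — is to use the opposite H\"older pairing: put the heat-kernel difference in $L^1$ and the interacting factor in $L^\infty$. Indeed $\norm{\partial_t\psi^t}_{L^1}\leq d/t$, so $\norm{\psi^{r_i-s_i}-\psi^{\tilde r_i-s_i}}_{L^1}\leq d\,\absb{\log\tfrac{r_i-s_i}{\tilde r_i-s_i}}\leq C\nu/(r_i-s_i)$, using $\tilde r_i-s_i\geq (r_i-s_i)-\nu\geq (r_i-s_i)/2$; and the factor it multiplies, $\absb{W^{s_i,0}_{y,\tilde x_i}(\ii\xi)}\leq\psi^{s_i}(y-\tilde x_i)\leq C_{d,L,\delta}$, is bounded uniformly precisely because $s_i\geq\delta/2$. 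The first summand should likewise be handled by pairing $\norm{\psi^{r_i-s_i}}_{L^1}=1$ with $\norm{g_i}_{L^\infty}\leq C_{d,L,\delta}$ (or by the convolution identity $\int\dd y\,\psi^{r_i-s_i}\,\psi^{s_i}=\psi^{r_i}$ and $r_i\geq\delta$). With this swap your telescoping argument goes through and is essentially the paper's proof; your remaining deviations — distributing the prefactors $\ee^{-\kappa r_i}$ into the telescoping instead of stripping off $\ee^{-\kappa\abs{\f r}}-\ee^{-\kappa\abs{\tilde{\f r}}}$ first, and the mislabelling of which endpoint the free segment over $[s_i,r_i]$ attaches to — are harmless.
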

\begin{proof}
Since by \eqref{W_kernel_est} we have $\abs{y_{\f x, \tilde {\f x}}(\f r, \f s)} \leq C_{p,d,L,\delta}$, using that $\abs{\f r - \tilde {\f r}} \leq \nu$  and $\min_{i}(r_i - s_i) \leq 1/\delta$, it suffices to prove
\begin{equation}
\absB{ y_{\f x, \tilde {\f x}}(\f r, \f s) - y_{\f x, \tilde {\f x}}(\tilde{\f r}, \f s)} \leq C_{p,d,L,\delta,\kappa} \frac{\nu}{\min_i (r_i - s_i)}\,.
\end{equation}

With the notation \eqref{W_FK}, we have
\begin{multline*}
y_{\f x, \tilde {\f x}}(\f r, \f s) - y_{\f x, \tilde {\f x}}(\tilde{\f r}, \f s) =
\int \mu_{v_\eta}(\dd \xi)\, \ee^{f_2(\xi)} \,  \qBB{\prod_{i = 1}^p \pB{W^{r_i, s_i}(0) W^{s_i, 0}(\ii \xi) - W^{r_i, 0}(0)}_{x_i, \tilde x_i}
\\
- \prod_{i = 1}^p \pB{W^{\tilde r_i, s_i}(0) W^{s_i, 0}(\ii \xi) - W^{\tilde r_i, 0}(0)}_{x_i, \tilde x_i}}\,.
\end{multline*}
Note that \eqref{W_kernel_est} generalizes to
\begin{equation*}
\absb{\pb{W^{\tau, s}(0)W^{s, \tilde \tau}(\ii \xi)}_{x, \tilde x}} \leq \psi^{\tau - \tilde \tau}(x - \tilde x) \leq C_{d,L,\delta}
\end{equation*}
for any $\tilde \tau \leq s \leq \tau$ with $\tau-\tilde \tau \geq \delta$. Thus, by Proposition \ref{prop:Z_field_Wick_ordered} (ii) and telescoping, it suffices to prove
\begin{multline} \label{WW_nu_est}
\int \mu_{v_\eta}(\dd \xi) \, \absbb{\pB{W^{r_i, s_i}(0) W^{s_i, 0}(\ii \xi) - W^{r_i, 0}(0)}_{x_i, \tilde x_i}
- \pB{W^{\tilde r_i, s_i}(0) W^{s_i, 0}(\ii \xi) - W^{\tilde r_i, 0}(0)}_{x_i, \tilde x_i}}
\\
\leq C_{d,L,\delta} \frac{\nu}{r_i - s_i}\,.
\end{multline}
To that end, we estimate
\begin{align*}
&\mspace{-20mu} \absbb{\pB{W^{r_i, s_i}(0) W^{s_i, 0}(\ii \xi)}_{x_i, \tilde x_i}
- \pB{W^{\tilde r_i, s_i}(0) W^{s_i, 0}(\ii \xi)}_{x_i, \tilde x_i}}
\\
&\leq \int \dd y \, \absB{\psi^{r_i - s_i}(x_i - y) - \psi^{\tilde r_i - s_i}(x_i - y)} \absb{W^{s_i,0}_{y,\tilde x_i}(\ii \xi)}
\\
&\leq C_{d,L,\delta} \norm{\psi^{r_i - s_i} - \psi^{\tilde r_i - s_i}}_{L^1}
\\
&\leq C_{d,L,\delta} \absbb{\log \pbb{\frac{r_i - s_i}{\tilde r_i - s_i}}}\,,
\end{align*}
where in the second step we used that $s_i \geq \delta/2$ and \eqref{W_kernel_est}, and in the last step we used, for $s \leq t$,
\begin{equation*}
\norm{\psi^t - \psi^s}_{L^1} \leq \int_s^t \dd u \, \norm{\partial_u \psi^u}_{L^1} \leq \int_s^t \dd u \, \int_{\R^d} \dd x \, \pbb{\frac{d}{2u} + \frac{\abs{x}^2}{2 u^2}} \widetilde \psi^u(x) = d \log (t/s)\,,
\end{equation*}
where $\widetilde \psi^u(x)=(2 \pi u)^{-d/2} \ee^{-\abs{x}^2 / 2u}$ denotes the heat kernel on $\R^d$.
An analogous argument yields
\begin{equation*}
\absb{W^{r_i, 0}_{x_i, \tilde x_i}(0)
- W^{\tilde r_i, 0}_{x_i, \tilde x_i}(0) } \leq C_{d,L,\delta} \absbb{\log \pbb{\frac{r_i}{\tilde r_i}}}\,,
\end{equation*}
and hence \eqref{WW_nu_est} is proved, since by assumption $\abs{r_i - \tilde r_i} \leq \nu$ and $r_i - s_i \geq 2 \nu$.
\end{proof}

Armed with Lemmas \ref{lem:y_psi} and \ref{lem:r_tilde_r}, we may now compare the integral over $\f r$ of $y_{\f x, \tilde {\f x}}(\f r, \f r)$ to the corresponding Riemann sum.

\begin{lemma} \label{lem:y_sum}
For any $\delta > 0$ we have
\begin{equation*}
\absbb{\int_{[\delta,1/\delta]^p} \dd \f r \, \ee^{-\kappa \abs{\f r}} \, y_{\f x, \tilde {\f x}}(\f r, \f r)
-
\nu^p \sum_{\f r \in [\delta,1/\delta]_\nu^p}\, \ee^{-\kappa \abs{\f r}} \, y_{\f x, \tilde {\f x}}(\f r, \f r)
} \leq C_{p,L,\delta,v,\kappa} \sqrt{\nu}
\end{equation*}
for all $\eta > 0$, small enough $\nu > 0$, and $\f x, \tilde {\f x}$.
\end{lemma}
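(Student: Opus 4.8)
\textbf{Proposal for the proof of Lemma \ref{lem:y_sum}.}
The plan is a Riemann-sum comparison in which the two error-producing estimates already available, Lemmas \ref{lem:y_psi} and \ref{lem:r_tilde_r}, are balanced against one another by inserting an auxiliary time-scale. First I would tile $[\delta,1/\delta]^p$ by the half-open boxes $C_{\f r} \deq \prod_{i=1}^p [r_i, r_i + \nu) \cap [\delta,1/\delta]$ indexed by the lattice points $\f r \in [\delta,1/\delta]_\nu^p$, and write $\int_{[\delta,1/\delta]^p} \dd \f r' = \sum_{\f r} \int_{C_{\f r}} \dd \f r'$. Separating the boxes that meet the boundary of $[\delta,1/\delta]^p$ from the interior ones: there are at most $C_{p}(\delta \nu)^{-(p-1)}$ boundary boxes, each of volume at most $\nu^p$ and, by \eqref{W_kernel_est} together with $\abs{\ee^{f_2}} \leq 1$ from Proposition \ref{prop:Z_field_Wick_ordered} (ii), satisfying $\abs{\ee^{-\kappa\abs{\f r'}} y_{\f x, \tilde {\f x}}(\f r', \f r')} \leq C_{p,d,L,\delta}$; hence they contribute at most $C_{p,L,\delta}\,\nu$ to the difference, uniformly in $\eta$ and $\f x, \tilde {\f x}$. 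The same uniform bound shows the boundary slivers of $[\delta,1/\delta]^p$ left uncovered by the interior boxes, and the lattice points missing from the interior part of the sum, also cost only $C_{p,L,\delta}\,\nu$. It therefore remains to control $\sum_{\f r\ \textup{interior}} \int_{C_{\f r}} \dd \f r' \, \bigl( \ee^{-\kappa\abs{\f r'}} y_{\f x, \tilde {\f x}}(\f r', \f r') - \ee^{-\kappa\abs{\f r}} y_{\f x, \tilde {\f x}}(\f r, \f r) \bigr)$, where for interior boxes $\f r, \f r' \in [\delta,1/\delta]^p$.

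For an interior box and $\f r' \in C_{\f r}$ I would introduce the intermediate configuration $\f s$ with $s_i \deq r_i - \sqrt\nu$. For $\nu$ small (say $\nu \leq \delta^2/4$ and $\nu \leq 1/4$) this satisfies $\f s \leq \f r \leq \f r'$, $\min_i s_i \geq \delta/2$, and $\min_i (r_i - s_i) = \sqrt\nu \geq 2\nu$, so both Lemmas \ref{lem:y_psi} and \ref{lem:r_tilde_r} apply with this $\f s$. Writing $y(\cdot,\cdot)$ for $y_{\f x,\tilde{\f x}}(\cdot,\cdot)$, I split
\begin{align*}
\ee^{-\kappa\abs{\f r'}} y(\f r', \f r') - \ee^{-\kappa\abs{\f r}} y(\f r, \f r)
&= \ee^{-\kappa\abs{\f r'}}\bigl(y(\f r', \f r') - y(\f r', \f s)\bigr) \\
&\quad + \bigl(\ee^{-\kappa\abs{\f r'}} y(\f r', \f s) - \ee^{-\kappa\abs{\f r}} y(\f r, \f s)\bigr) + \ee^{-\kappa\abs{\f r}}\bigl(y(\f r, \f s) - y(\f r, \f r)\bigr)\,,
\end{align*}
and bound the first and third terms by Lemma \ref{lem:y_psi} (each is $\leq C \max_i(r_i' - s_i) \leq C\sqrt\nu$ since $r_i' - s_i \leq \nu + \sqrt\nu$), and the middle term by Lemma \ref{lem:r_tilde_r} (it is $\leq C\, \nu / \min_i(r_i - s_i) = C\sqrt\nu$). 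This is precisely where the choice $r_i - s_i = \sqrt\nu$ optimizes the sum of a term linear in $r_i - s_i$ and a term $\propto \nu/(r_i - s_i)$, producing the $\sqrt\nu$ rate rather than $\nu$. To meet the hypothesis $\abs{\f r - \f r'} \leq \nu$ of Lemma \ref{lem:r_tilde_r} (stated in the $\ell^1$-sense, whereas boxes only give $\sum_i\abs{r_i - r_i'} \leq p\nu$), I would pass from $\f r$ to $\f r'$ one coordinate at a time, applying the lemma $p$ times with the common $\f s$ (the intermediate configurations stay in $[\delta,1/\delta]^p$ and lie above $\f s$); this only affects the constant. Hence the interior integrand difference is $\leq C_{p,L,\delta,v,\kappa}\sqrt\nu$ pointwise; since the interior boxes have total volume $\leq (1/\delta)^p$, their contribution is $\leq C_{p,L,\delta,v,\kappa}\sqrt\nu$.

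Adding the interior and boundary contributions yields the claimed bound $C_{p,L,\delta,v,\kappa}\sqrt\nu$, uniformly in $\eta > 0$ and $\f x, \tilde {\f x}$, for all sufficiently small $\nu$. The substantive analytic content has already been isolated in Lemmas \ref{lem:y_psi} and \ref{lem:r_tilde_r}, so there is no real obstacle here; the one point requiring genuine care is the balancing of the two estimates through the intermediate scale $r_i - s_i \sim \sqrt\nu$ (this is what forces the $\sqrt\nu$ and not a better rate), while the $\ell^1$-versus-$\ell^\infty$ discrepancy and the boundary accounting are routine bookkeeping.
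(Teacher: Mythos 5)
Your proposal is correct and follows essentially the same route as the paper: both compare the integral with the Riemann sum by replacing $\f r$ with its lattice approximation (the paper uses $\floor{\f r}_\nu$ inside the integral rather than your box decomposition, with the same $O(\nu)$ boundary bookkeeping), introduce the intermediate configuration $s_i = r_i - \sqrt{\nu}$, and apply Lemma \ref{lem:y_psi} twice and Lemma \ref{lem:r_tilde_r} once, with the balancing at scale $\sqrt{\nu}$ giving the stated rate. Your coordinate-by-coordinate application of Lemma \ref{lem:r_tilde_r} to handle the $\ell^1$-versus-$\ell^\infty$ discrepancy is a harmless refinement (the proof of that lemma in fact only uses the coordinatewise bound $\abs{r_i - \tilde r_i} \leq \nu$, so the paper applies it directly).
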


\begin{proof}
For $r > 0$ denote by $\floor{r}_\nu \deq \max\{u \in \nu \N \col u \leq r\}$, and write $\floor{\f r}_\nu \deq (\floor{r_i}_\nu)_{i = 1}^p$. Since $\abs{y_{\f x, \tilde {\f x}}(\f r, \f r)} \leq C_{p,d,L,\delta}$ for $\f r \in [\delta,1/\delta]^p$ by \eqref{W_kernel_est}, we find
\begin{equation} \label{RS_est1}
\nu^p \sum_{\f r \in [\delta,1/\delta]_\nu^p} \, \ee^{-\kappa \abs{\f r}} \, y_{\f x, \tilde {\f x}}(\f r, \f r) = 
\int_{[\delta,1/\delta]^p} \dd \f r \, \ee^{-\kappa \abs{\floor{\f r}_\nu}} \, y_{\f x, \tilde {\f x}}(\floor{\f r}_\nu, \floor{\f r}_\nu) + C_{p,d,L,\delta} \, O (\nu)\,.
\end{equation}
The error term contains errors arising from the boundary of the Riemann sum approximation.

Next, for each $\f r \in [\delta, 1/\delta]^p$ choose $\f s \equiv \f s(\f r)$ defined by
\begin{equation*}
s_i \deq r_i - \sqrt{\nu}\,.
\end{equation*}
Hence, for $\nu$ small enough (depending on $\delta$), we have for all $i$,
\begin{equation*}
\delta/2 \leq s_i \leq \floor{r_i}_\nu\,, \qquad \max_i(\floor{r_i}_\nu - s_i) \leq C \sqrt{\nu} \,, \qquad  \min_i (\floor{r_i}_\nu - s_i) \geq \sqrt{\nu} / C\,,
\end{equation*}
where in the third inequality we used that $\abs{\f r - \floor{\f r}_\nu} \leq \nu$.

Hence, by Lemma \ref{lem:y_psi} we have
\begin{align}
\absbb{\int_{[\delta,1/\delta]^p} \dd \f r \, \ee^{-\kappa \abs{\f r}} \, y_{\f x, \tilde {\f x}}(\f r, \f r)
- \int_{[\delta,1/\delta]^p} \dd \f r \, \ee^{-\kappa \abs{\f r}} \, y_{\f x, \tilde {\f x}}(\f r, \f s(\f r))} \leq C_{p,d,L,\delta,v,\kappa} \sqrt{\nu}\,,
\\
\absbb{\int_{[\delta,1/\delta]^p} \dd \f r \, \ee^{-\kappa \abs{\floor{\f r}_\nu}} \, y_{\f x, \tilde {\f x}}(\floor{\f r}_\nu, \floor{\f r}_\nu) - \int_{[\delta,1/\delta]^p} \dd \f r \, \ee^{-\kappa \abs{\floor{\f r}_\nu}} \, y_{\f x, \tilde {\f x}}(\floor{\f r}_\nu, \f s(\f r))} \leq C_{p,d,L,\delta,v,\kappa} \sqrt{\nu}\,,
\end{align}
where in the second inequality we used that $\abs{\f r - \floor{\f r}_\nu} \leq \nu$.

Moreover, by Lemma \ref{lem:r_tilde_r} we have
\begin{equation} \label{RS_est2}
\absbb{\int_{[\delta,1/\delta]^p} \dd \f r \, \ee^{-\kappa \abs{\f r}} \, y_{\f x, \tilde {\f x}}(\f r, \f s(\f r)) - \int_{[\delta,1/\delta]^p} \dd \f r \, \ee^{-\kappa \abs{\floor{\f r}_\nu}} \, y_{\f x, \tilde {\f x}}(\floor{\f r}_\nu, \f s(\f r))} \leq C_{p,d,L,\delta, \kappa} \sqrt{\nu}\,.
\end{equation}
Combining \eqref{RS_est1}--\eqref{RS_est2} yields the claim.
\end{proof}

\paragraph{Step 3}
Using Lemma \ref{lem:sigma_zeta} we write
\begin{multline} \label{y_RS3}
\nu^p \sum_{\f r \in [\delta,1/\delta]_\nu^p}\, \ee^{-\kappa \abs{\f r}} \, y_{\f x, \tilde {\f x}}(\f r, \f r)
\\
=
\nu^p \sum_{\f r \in [\delta, 1/\delta]_\nu^p} \ee^{-\kappa \abs{\f r}}
\int \mu_{\cal C_\eta}(\dd \sigma)\, \ee^{f_2(\ang{\sigma})} \, \prod_{i = 1}^p \pBB{\int \bb W^{r_i, 0}_{x_i, \tilde x_i}(\dd \omega_i) \, \pB{\ee^{\ii \int_0^{r_i} \dd t \, \ang{\sigma} (\omega_i(t))} - 1}}\,.
\end{multline}
We compare the right-hand side of \eqref{y_RS3} with
\begin{multline}
\nu^p \sum_{\f r \in [\delta, 1/\delta]_\nu^p} \ee^{-\kappa \abs{\f r}} \, Y_{\f x, \tilde {\f x}}(\f r)
\\
= \nu^p \sum_{\f r \in [\delta, 1/\delta]_\nu^p} \ee^{-\kappa \abs{\f r}} \,
\int \mu_{\cal C_\eta}(\dd \sigma)\, \ee^{F_2(\sigma)} \, \prod_{i = 1}^p \pBB{\int \bb W^{r_i, 0}_{x_i, \tilde x_i}(\dd \omega_i) \, \pB{\ee^{\ii \int_0^{r_i} \dd t \, \sigma([t]_\nu, \omega_i(t))} - 1}}\,.
\end{multline}

Recalling Lemmas \ref{lem:Q_delta}, \ref{lem:q_delta}, and \ref{lem:y_sum}, as well as Cauchy-Schwarz combined with Proposition \ref{prop:fif_renorm} (iv), Proposition \ref{prop:Z_field_Wick_ordered} (ii), Lemma \ref{Ff_L2conv}, and \eqref{W_kernel_est}, we find that to prove Proposition \ref{prop:Qq} it suffices to prove the following result.

\begin{lemma}
Fix $\delta > 0$. Then
\begin{equation*}
\lim_{\nu \to 0} \int \prod_{i = 1}^p \bb W^{r_i, 0}_{x_i, \tilde x_i}(\dd \omega_i) \, \pBB{\int \mu_{\cal C_\eta}(\dd \sigma)\, \absBB{\prod_{i = 1}^p \pB{\ee^{\ii \int_0^{r_i} \dd t \, \ang{\sigma}(\omega_i(t))} - 1} - \prod_{i = 1}^p \pB{\ee^{\ii \int_0^{r_i} \dd t \, \sigma([t]_\nu,\omega_i(t))} - 1}}^2}^{1/2} = 0
\end{equation*}
uniformly in $\f r \in [\delta,1/\delta]_\nu^p$, $\eta > 0$, and $\f x, \tilde {\f x}$.
\end{lemma}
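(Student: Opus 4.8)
The plan is to reduce to the case $p=1$ by telescoping, to recognize the inner $L^2(\mu_{\cal C_\eta})$-integral as the variance of a centred Gaussian which equals an explicit difference of path-path interactions, and finally to make this small uniformly in all parameters using the uniform equicontinuity of $v_\eta$ and the quantitative continuity of Brownian bridges from Lemma \ref{lem:P_cont}. Abbreviate $X_i \deq \int_0^{r_i} \dd t\, \ang{\sigma}(\omega_i(t))$ and $Y_i \deq \int_0^{r_i} \dd t\, \sigma([t]_\nu, \omega_i(t))$; these are real, so $\absb{\ee^{\ii X_i} - 1}, \absb{\ee^{\ii Y_i} - 1} \leq 2$ and $\absb{\ee^{\ii X_i} - \ee^{\ii Y_i}} \leq \absb{X_i - Y_i}$. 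Telescoping the two products and inserting these bounds yields the pointwise estimate $\absB{\prod_i (\ee^{\ii X_i} - 1) - \prod_i (\ee^{\ii Y_i} - 1)}^2 \leq C_p \sum_j \absb{X_j - Y_j}^2$. Integrating over $\sigma$, using that $X_j - Y_j$ depends only on $\omega_j$, taking the square root via $\sqrt{\sum a_j} \leq \sum \sqrt{a_j}$, then factorizing the Brownian-bridge integrals and bounding the spectator factors $\psi^{r_i}(x_i - \tilde x_i) \leq C_{d,L,\delta}$ by Lemma \ref{lem:heat_estimate} (legitimate since $r_i \geq \delta$), reduces the claim to showing, uniformly in $\eta > 0$, $x, \tilde x \in \Lambda$, and $r \in [\delta, 1/\delta]_\nu$,
\begin{equation*}
\lim_{\nu \to 0} \int \bb W^{r,0}_{x,\tilde x}(\dd \omega) \pbb{\int \mu_{\cal C_\eta}(\dd \sigma) \, \absb{X - Y}^2}^{1/2} = 0\,, \qquad X \deq \int_0^r \dd t\, \ang{\sigma}(\omega(t))\,,\quad Y \deq \int_0^r \dd t\, \sigma([t]_\nu, \omega(t))\,.
\end{equation*}

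Under $\mu_{\cal C_\eta}$ the random variable $X - Y$ is centred Gaussian, so the inner integral is $\var(X - Y)$. A direct covariance computation using the mean-field scaling $\lambda = \nu^2$, the normalization $\int_0^\nu \delta_{\eta,\nu} = 1$ from \eqref{properties_of_delta_eta}, and the $\nu$-periodicity of $\delta_{\eta,\nu}$ shows that $\var(X)$ and the covariance of $X$ and $Y$ both equal $\bb V_\eta(\omega, \omega)$ (in each case the integration over the time variables of $\sigma$ collapses the periodic delta), while $\var(Y) = \bb V_{\eta,\nu}(\omega, \omega)$; hence
\begin{equation*}
\int \mu_{\cal C_\eta}(\dd \sigma) \, \absb{X - Y}^2 = \var(X - Y) = \bb V_{\eta,\nu}(\omega, \omega) - \bb V_\eta(\omega, \omega) \geq 0\,.
\end{equation*}
This is precisely the quantity controlled by Lemma \ref{path-path_interaction} (i), but since I need uniformity in the fixed endpoints $x, \tilde x$ — which the $\cal E_{\eta,\nu}$-notation does not directly provide, its convergence being only pointwise in the path — I would re-derive the bound explicitly.

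Since $r \in \nu \Z$, the interval $[0,r]$ is a disjoint union of full periods $P_k \deq [k\nu, (k+1)\nu]$, and by periodicity $\int_{P_k} \dd s\, \pb{\nu \delta_{\eta,\nu}(t - s) - 1} = 0$ for every $t$. Subtracting on each $P_k$ the reference value $v_\eta(\omega(t) - \omega(k\nu))$ and using the uniform equicontinuity of $(v_\eta)_{\eta > 0}$ (Lemma \ref{properties_of_v_eta} (ii)), together with $\norm{v_\eta}_{L^\infty} \leq \norm{v}_{L^\infty}$ (Lemma \ref{properties_of_v_eta} (i)) and $\delta_{\eta,\nu} \geq 0$, gives: for each $\epsilon > 0$ there is $\delta' > 0$ (independent of $\eta$) with
\begin{equation*}
\bb V_{\eta,\nu}(\omega, \omega) - \bb V_\eta(\omega, \omega) \leq \frac{2\epsilon}{\delta^2} + 2 \norm{v}_{L^\infty}\, G_{\eta,\nu}(\omega)\,, \qquad G_{\eta,\nu}(\omega) \deq \int_0^r \dd t \sum_{k} \int_{P_k} \dd s\, \pb{\nu \delta_{\eta,\nu}(t - s) + 1} \, \ind{\absb{\omega(s) - \omega(k\nu)}_L > \delta'}\,.
\end{equation*}

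Finally, writing $\bb W^{r,0}_{x,\tilde x} = \psi^r(x - \tilde x) \P^{r,0}_{x,\tilde x}$ with $\psi^r(x - \tilde x) \leq C_{d,L,\delta}$, applying $\sqrt{a + b} \leq \sqrt a + \sqrt b$, and then Cauchy--Schwarz with respect to the probability measure $\P^{r,0}_{x,\tilde x}$, I obtain
\begin{equation*}
\int \bb W^{r,0}_{x,\tilde x}(\dd \omega) \, \var(X - Y)^{1/2} \leq C_{d,L,\delta} \pbb{\frac{\sqrt{2\epsilon}}{\delta} + \sqrt{2 \norm{v}_{L^\infty}} \pbb{\int \P^{r,0}_{x,\tilde x}(\dd \omega) \, G_{\eta,\nu}(\omega)}^{1/2}}\,.
\end{equation*}
By Chebyshev's inequality and Lemma \ref{lem:P_cont} (using $\absb{x - \tilde x}_L \leq C_{d,L}$ and $r \geq \delta$) one has $\P^{r,0}_{x,\tilde x}(\absb{\omega(s) - \omega(k\nu)}_L > \delta') \leq C_{d,L,\delta}\, \nu / \delta'^2$ uniformly for $s \in P_k$, and since the total $(t,s)$-mass of $\nu \delta_{\eta,\nu}(t - s) + 1$ over $[0,r]^2$ is $2 r^2 \leq 2/\delta^2$, this gives $\int \P^{r,0}_{x,\tilde x}(\dd \omega)\, G_{\eta,\nu}(\omega) \leq C_{d,L,\delta,\delta'}\, \nu$. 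Letting $\nu \to 0$ and then $\epsilon \to 0$ completes the proof. The main obstacle is exactly this last point — obtaining the convergence uniformly in the fixed endpoints $x, \tilde x$ (as well as in $r$ and $\eta$) — which is why one cannot merely invoke Lemma \ref{path-path_interaction} (i) with its $\cal E_{\eta,\nu}$-notation; the explicit period-wise telescoping, the Cauchy--Schwarz step extracting $G_{\eta,\nu}$, and the endpoint-independent estimate of Lemma \ref{lem:P_cont} together resolve it.
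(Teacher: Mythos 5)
Your proof is correct and follows essentially the same route as the paper: telescoping to the single-path case, identifying the inner Gaussian integral with the quadratic form $\bb V_{\eta,\nu}(\omega,\omega)-\bb V_{\eta}(\omega,\omega)$ (the paper reaches the same quantity via Lemma \ref{lem:T_p} with $p=1$ rather than a variance computation), and then controlling this difference uniformly through the equicontinuity of $(v_\eta)$, Chebyshev, and Lemma \ref{lem:P_cont}, with endpoint-uniformity supplied by $\psi^{r}(x-\tilde x)\leq C_{d,L,\delta}$. Your period-endpoint reference $v_\eta(\omega(t)-\omega(k\nu))$ plays the same role as the paper's symmetrized comparison of $v_\eta$ at two pairs of times within the same $\nu$-blocks, so the difference is only cosmetic.
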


\begin{proof}
By telescoping and using \eqref{W_kernel_est}, it suffices to prove
\begin{equation*}
\lim_{\nu \to 0} \int \bb W^{r_i, 0}_{x_i, \tilde x_i}(\dd \omega_i) \, \pBB{\int \mu_{\cal C_\eta}(\dd \sigma)\, \absB{\ee^{\ii \int_0^{r_i} \dd t \,  \ang{\sigma}(\omega_i(t))} - \ee^{\ii \int_0^{r_i} \dd t \, \sigma([t]_\nu,\omega_i(t))}}^2}^{1/2}  = 0
\end{equation*}
for all $i$.
We write
\begin{equation*}
\absB{\ee^{\ii \int_0^{r_i} \dd t \,  \ang{\sigma}(\omega_i(t))} - \ee^{\ii \int_0^{r_i} \dd t \, \sigma([t]_\nu,\omega_i(t))}} = \abs{\ee^{\ii \scalar{\sigma}{f}} - 1}
\end{equation*}
with
\begin{equation*}
f(\tau,x) \deq  \int_0^{r_i} \dd t \, \delta(x - \omega_i(t)) \pbb{\frac{1}{\nu} - \delta(\tau - [t]_\nu)}\,.
\end{equation*}
From Lemma \ref{lem:T_p} with $p = 1$ we therefore get
\begin{align*}
&\qquad \int \mu_{\cal C_\eta}(\dd \sigma) \, \abs{\ee^{\ii \scalar{\sigma}{f}} - 1}^2 \leq C \scalar{f}{\cal C_\eta f}
\\
&
= C \nu \int_0^{r_i} \dd t  \, \dd \tilde t \, v_\eta(\omega_i(t) - \omega_i(\tilde t)) \int_0^\nu \dd \tau\, \dd \tilde \tau  \, \delta_{\eta, \nu}(\tau - \tilde \tau) \pbb{\frac{1}{\nu} - \delta(\tau - [t]_\nu)} \pbb{\frac{1}{\nu} - \delta(\tilde \tau - [\tilde t]_\nu)}
\\
&
= C \nu \int_0^{r_i} \dd t  \, \dd \tilde t \, \delta_{\eta, \nu}([t]_\nu - [\tilde t]_\nu) \, v_\eta(\omega_i(t) - \omega_i(\tilde t)) -\int_0^{r_i} \dd t  \, \dd \tilde t \, v_\eta(\omega_i(t) - \omega_i(\tilde t)) 
\\
&= \frac{C}{\nu} \sum_{s, \tilde s \in [0,r_i)_\nu} \int_0^\nu \dd u_1\, \dd \tilde u_1 \, \dd u_2 \, \dd \tilde u_2 \, \delta_{\eta, \nu}(u_1 - \tilde u_1)
\\
&\qquad \times \pB{v_\eta\pb{\omega_i(s + u_1) - \omega_i(\tilde s + \tilde u_1)} - v_\eta\pb{\omega_i(s + u_2) - \omega_i(\tilde s + \tilde u_2)}}\,,
\end{align*}
where we defined
\begin{equation*}
[a,b)_\nu \deq [a,b-\nu]_\nu\,.
\end{equation*}
By Cauchy-Schwarz and Lemma \ref{properties_of_v_eta} (ii), it suffices to prove that for any $\alpha > 0$,
\begin{equation} \label{W_alpha_est}
\int \bb W^{r_i, 0}_{x_i, \tilde x_i}(\dd \omega_i) \, \frac{1}{\nu} \sum_{s, \tilde s \in [0,r_i]_\nu} \int_0^\nu \dd u_1\, \dd \tilde u_1 \, \dd u_2 \, \dd \tilde u_2 \, \delta_{\eta, \nu}(u_1 - \tilde u_1) \ind{\abs{\omega_i(s + u_1) - \omega_i(\tilde s + \tilde u_1) - \omega_i(s + u_2) + \omega_i(\tilde s + \tilde u_2)}_L > 2 \alpha}
\end{equation}
tends to $0$ as $\nu \to 0$. By the triangle inequality and symmetry of the indices with and without tilde, we find that \eqref{W_alpha_est} is bounded by
\begin{align*}
&\qquad 2 \int \bb W^{r_i, 0}_{x_i, \tilde x_i}(\dd \omega_i) \, \frac{1}{\nu} \sum_{s, \tilde s \in [0,r_i)_\nu} \int_0^\nu \dd u_1\, \dd \tilde u_1 \, \dd u_2 \, \dd \tilde u_2 \, \delta_{\eta, \nu}(u_1 - \tilde u_1) \ind{\abs{\omega_i(s + u_1) - \omega_i(s + u_2)}_L > \alpha}
\\
& \leq \frac{2 r_i}{\nu} \int \bb W^{r_i, 0}_{x_i, \tilde x_i}(\dd \omega_i) \, \sum_{s \in [0,r_i)_\nu} \int_0^\nu \dd u_1\, \dd u_2 \, \ind{\abs{\omega_i(s + u_1) - \omega_i(s + u_2)}_L > \alpha}
\\
&\leq \frac{2 r_i}{\nu \alpha^2} \sum_{s \in [0,r_i)_\nu} \int_0^\nu \dd u_1\, \dd u_2 \, \int \bb W^{r_i, 0}_{x_i, \tilde x_i}(\dd \omega_i) \, \abs{\omega_i(s + u_1) - \omega_i(s + u_2)}_L^2
\\
&\leq C_{d,L,\delta} \frac{1}{\nu \alpha^2} \sum_{s \in [0,r_i]_\nu} \int_0^\nu \dd u_1\, \dd u_2 \, \nu
\\
&\leq C_{d,L,\delta} \frac{\nu}{\alpha^2}\,,
\end{align*}
where in the second step we used Chebyshev's inequality and in the third step we used Lemma \ref{lem:P_cont} together with the observation that $\psi^{r_i}(x_i-\tilde x_i) \leq C_{d,L,\delta}$.
This concludes the proof.
\end{proof}

The proof of Proposition \ref{prop:Qq} is now complete, and thus also of Proposition \ref{prop:main_conv}.

\section{Proofs of Theorems \ref{thm:L2}, \ref{thm:density}, and \ref{thm:main}} \label{sec:proofs_conclusion}

In this section we prove our main results --  Theorems \ref{thm:L2}, \ref{thm:density}, and \ref{thm:main}.

\begin{proof}[Proof of Theorem \ref{thm:main}]
We estimate
\begin{equation*}
\norm{\nu^p \wh \Gamma_p - \wh \gamma_p}_{L^\infty} \leq \nu^p \norm{ \wh \Gamma_p - \wh \Gamma_{p,\eta}}_{L^\infty} + \norm{\nu^p \wh \Gamma_{p,\eta} - \wh \gamma_{p,\eta}}_{L^\infty} + \norm{\wh \gamma_{p,\eta} - \wh \gamma_p}_{L^\infty}
\end{equation*}
From Propositions \ref{prop:gamma_hat}, \ref{prop:gamma_regularization}, and \ref{prop:main_conv}, we deduce that the right-hand side converges to $0$ as $\nu \to 0$, by choosing $\nu$ small enough and then sending $\eta \to 0$.

What remains is to show that $\wh \Gamma_p$ is continuous. By Proposition \ref{prop:gamma_hat}, it suffices to show that $\wh \Gamma_{p,\eta}$ is continuous for any $\eta > 0$. This is a simple consequence of the definition \eqref{def_eh_gamma_eta}, dominated convergence, and the fact that for any continuous field $\sigma$, the Green function $\pb{K(-\kappa + \ii \sigma)^{-1}}^{0,0}_{x, \tilde x}$ is continuous in $(x, \tilde x)$, by Lemma \ref{lem:G_K} and \eqref{W_FK}.
\end{proof}

\begin{proof}[Proof of Theorem \ref{thm:L2}]
The convergence of $Z / Z^0$ to $\zeta$ follows from Propositions \ref{prop:fif_renorm} (iii), \ref{prop:Z_field_Wick_ordered} (i), and \ref{prop:conv_Z}.
To prove the convergence of the reduced density matrices, by Theorem \ref{thm:main}, the forms \eqref{gamma_renormalized} and \eqref{def_Gamma_renormalized}, and an induction argument in $p$, it suffices to show that
\begin{equation*}
\nu^p \Gamma_p^0 \overset{L^r}{\longrightarrow} \gamma_p^0
\end{equation*}
as $\nu \to 0$, for all $p \in \N^*$, with $r$ given in Theorem \ref{thm:L2}. To that end, we recall that by Wick's rule and the Wick theorem (Lemma \ref{Quantum Wick theorem}) we have
\begin{equation*}
\gamma_p^0 = p! P_p \pbb{\frac{1}{h}}^{\otimes p}\,, \qquad
\nu^p \Gamma_p^0 = p! P_p \pbb{\frac{\nu}{\ee^{\nu h} - 1}}^{\otimes p}\,.
\end{equation*}
Hence it suffices to show that
\begin{equation} \label{h_nu_conv}
\frac{\nu}{\ee^{\nu h} - 1} \overset{L^r}{\longrightarrow} \frac{1}{h}
\end{equation}
as $\nu \to 0$. This can be proved using Sobolev embedding. Alternatively, we write
\begin{equation*}
\frac{\nu}{\ee^{\nu h} - 1} - \frac{1}{h} = \nu \sum_{n \geq 1} \ee^{- \nu n h} - \int_0^\infty \dd t \, \ee^{-t h}
\end{equation*}
and estimate
\begin{equation*}
\normbb{\frac{\nu}{\ee^{\nu h} - 1} - \frac{1}{h}}_{L^r} \leq \int_0^\infty \dd t \, f_\nu(t)
\end{equation*}
where we defined
\begin{equation*}
f_\nu(t) \deq \norm{\ee^{-\ceil{t}_\nu h} - \ee^{-t h}}_{L^r}\,, \qquad \ceil{t}_\nu \deq \min \h{s \in \nu \Z \col s \geq t}\,.
\end{equation*}
Since $\norm{\psi^t}_{L^\infty} \leq C \pb{\frac{1}{L^d} + \frac{1}{t^{d/2}}}$ by a simple Riemann sum estimate, we deduce by interpolation that
\begin{equation*}
\norm{\psi^t}_{L^r} \leq C \pbb{\frac{1}{L^d} + \frac{1}{t^{d/2}}}^{1 - 1/r}\,.
\end{equation*}
We conclude
\begin{equation*}
f_\nu(t) \leq 2 \norm{\ee^{-t h}}_{L^r} \leq C \ee^{-\kappa t} \pbb{\frac{1}{L^d} + \frac{1}{t^{d/2}}}^{1 - 1/r}\,,
\end{equation*}
which is integrable by assumption on $r$. Since $\lim_{\nu \to 0} f_\nu(t) = 0$ for all $t > 0$, we deduce \eqref{h_nu_conv}.
\end{proof}

\begin{proof}[Proof of Theorem \ref{thm:density}]
We use the notations from Definition \ref{def:wick_cor} below.
For $I, \tilde I \subset [p]$ we define $\cal B(I, \tilde I)$ as the set of bijections from $I$ onto $\tilde I$ (which is empty unless $\abs{I} = \abs{\tilde I}$), and we denote by $\cal B^*(I, \tilde I)$ the subset of bijections without fixed points. Using the second identity of \eqref{Wick_kernels} for $\Gamma$ and applying Wick's theorem, Lemma \ref{Quantum Wick theorem}, to the factor $\Gamma^0_{{\f x}_{I^c}, {\tilde {\f x}}_{\tilde{I}^{c}}}$, we obtain
\begin{equation}\label{eq:Gammap-for} \Gamma_{\f x, \tilde {\f x}}  =\sum_{I, \tilde I  \subset [p]} \sum_{\lambda \in \cal B(I^c, \tilde I^c)} \widehat{\Gamma}_{{\f x}_I, {\tilde {\f x}}_{\tilde I}} \prod_{j \in I^c} \Gamma^0_{x_j,\tilde x_{\lambda (j)}}\,. \end{equation}

Next, we expand the expectation of the product $\prod_{j=1}^p (a^*(x_j) a(x_j) - (\Gamma_1^0)_{x_j,x_j})$ and we find that the correlation function of the quantum Wick-ordered particle densities is
\[ \Tr_{\cal F} \pbb{\nu^p \, \wick{\fra N(x_1)} \cdots \wick{\fra N(x_p)} \, \frac{\ee^{-H}}{Z}} = \nu^p \sum_{I \subset [p]} (-1)^{\abs{I^c}} \, \Gamma_{{\f x}_I,{\f x}_I} \prod_{i \in I^c} \Gamma^0_{x_i, x_i}\,. \]
Plugging in \eqref{eq:Gammap-for}, and interchanging the order of summation, we conclude
\begin{align*}
&\mspace{-20mu}\Tr_{\cal F} \pbb{\nu^p \, \wick{\fra N(x_1)} \cdots \wick{\fra N(x_p)} \, \frac{\ee^{-H}}{Z}}
\\
&= \nu^p \sum_{I \subset [p]} (-1)^{\abs{I^c}} \prod_{i \in I^c} \Gamma^0_{x_i,x_i}  \sum_{J, \tilde J \subset I}  \sum_{\lambda \in \cal B(I \setminus J, I \setminus \tilde J) }  \widehat{\Gamma}_{{\f x}_J, {\f x}_{\tilde J}} \prod_{j \in I \setminus J} \Gamma^0_{x_i,x_{\lambda (i)}}
\\
&= \nu^p \sum_{J, \tilde J \subset [p]} \widehat{\Gamma}_{{\f x}_J, {\f x}_{\tilde J}} \sum_{J \cup \tilde J \subset I \subset [p]} (-1)^{\abs{I^c}} \sum_{\lambda \in \cal B(I \setminus J, I \setminus \tilde J) } \prod_{i \in I^c} \Gamma^0_{x_i,x_i}      \prod_{j \in I \setminus J} \Gamma^0_{x_i,x_{\lambda (i)}}\,.
\end{align*}
Next, we claim that, for any $J, \tilde J \subset [p]$,
\begin{equation}\label{eq:claim2}
\sum_{J \cup \tilde J \subset I \subset [p]} (-1)^{\abs{I^c}} \sum_{\lambda \in \cal B(I \setminus J, I \setminus \tilde J) } \prod_{i \in I^c} \Gamma^0_{x_i,x_i}      \prod_{j \in I \setminus J} \Gamma^0_{x_i,x_{\lambda (i)}}
= \sum_{\delta \in \cal B^*(J^c, \tilde J^c)}  \prod_{i \in J^c} \Gamma^0_{x_i, x_{\delta (i)}}\,, \end{equation}
whose proof we defer to the end of the proof.
With \eqref{eq:claim2}, we obtain
\begin{equation}\label{Lr_convergence_1}
\Tr_{\cal F} \pbb{\nu^p \, \wick{\fra N(x_1)} \cdots \wick{\fra N(x_p)} \, \frac{\ee^{-H}}{Z}}  = 
\sum_{J, \tilde J \subset [p]} \nu^{\abs{J}} \widehat{\Gamma}_{{\f x}_J,{\f x}_{\tilde J}} \sum_{\delta \in \cal B^*(J^c, \tilde J^c)} \prod_{i \in J^c} \nu \Gamma^0_{x_i, x_{\delta (i)}}\,,
\end{equation}
which is the key identity of our proof, as it relates the density correlation functions to the Wick-ordered reduced density matrices, with weights given by \emph{off-diagonal} elements of the free correlation functions. The same derivation yields
\begin{equation}
\frac{1}{\zeta} \int \mu_{h^{-1}}(\dd \phi) \, \ee^{-W(\phi)} \, \wick{\fra n_K(x_1)} \cdots \wick{\fra n_K(x_p)} =
\sum_{J, \tilde J \subset [p]} \nu^{\abs{J}} \widehat{\gamma}^K_{{\f x}_J,{\f x}_{\tilde J}} \sum_{\delta \in \cal B^*(J^c, \tilde J^c)} \prod_{i \in J^c} \nu \gamma^{K,0}_{x_i, x_{\delta (i)}}\,,
\end{equation}
where $\gamma^{K}$ and $\gamma^{K,0}$ denote the Wick-ordered correlation function and the free correlation function, respectively, of the truncated field $\cal P_K \phi$ (recall \eqref{def_Pk}). For distinct $x_1, \dots, x_p$, we find from Lemma \ref{lem:conv_n_Wick}, from $\lim_{K \to \infty} (\cal P_K h^{-1})_{x,y} = (h^{-1})_{x,y}$ for $x \neq y$, and a simple approximation argument analogous to the one from the proof of Proposition \ref{prop:gamma_weak}, that
\begin{equation} \label{Lr_convergence_2}
\frac{1}{\zeta} \int \mu_{h^{-1}}(\dd \phi) \, \ee^{-W(\phi)} \, \wick{\fra n(x_1)} \cdots \wick{\fra n(x_p)} =
\sum_{J, \tilde J \subset [p]} \widehat{\gamma}_{{\f x}_J,{\f x}_{\tilde J}} \sum_{\delta \in \cal B^*(J^c, \tilde J^c)} \prod_{i \in J^c} \gamma^{0}_{x_i, x_{\delta (i)}}\,.
\end{equation}

From \eqref{Lr_convergence_1} and \eqref{Lr_convergence_2}, noting that each off-diagonal free correlation function $\gamma^{0}_{x_i, x_{\delta (i)}}$ appears at most twice in the product (as $\gamma^{0}_{x_i, x_j}$ and $\gamma^{0}_{x_j, x_i}$ with $j = \delta(i)$ and $i = \delta(j)$), using Theorem \ref{thm:main}, as well as \eqref{h_nu_conv}, we conclude the proof of Theorem \ref{thm:density}.

What remains is the proof of \eqref{eq:claim2}.
To this end, we observe that, with every $\lambda\in \cal B(I \setminus J, I \setminus \tilde J)$ we can associate $\delta \in \cal B(J^c, \tilde J^c)$ defined by $\delta (i) \deq \lambda (i)$ if $i \in I \setminus J$, and $\delta (i) = i$ if $i \in I^c$. Hence,
\begin{equation*}
\prod_{i \in I^c} \Gamma^0_{x_i, x_i} \prod_{j \in I \setminus J} \Gamma^0_{x_j, x_{\lambda (j)}} = \prod_{i \in J^c} \Gamma^0_{x_i, x_{\delta (i)}}\,.
\end{equation*}
This already implies that 
\begin{equation}
 \label{I_lambda_delta}
\sum_{J \cup \tilde J \subset I \subset [p]} (-1)^{\abs{I^c}} \sum_{\lambda \in \cal B(I \setminus J, I \setminus \tilde J) } \prod_{i \in I^c} \Gamma^0_{x_i,x_i}      \prod_{j \in I \setminus J} \Gamma^0_{x_i,x_{\lambda (i)}}
 = \sum_{\delta \in \cal B(J^c, \tilde J^c)} c_\delta \prod_{i \in J^c} \Gamma^0_{x_i, x_{\delta (i)}} 
\end{equation}
for some coefficients $c_\delta \in \mathbb{R}$ (in fact, it is already clear that $c_\delta \in \mathbb{Z}$). For the following argument, we take  $\delta \in \cal B(J^c,\tilde{J}^c)$ that has  $n \deq | \{ i \in J^c : \delta (i) = i \} |$ fixed points. Note that  $n \leq p - |J \cup \tilde J|$  (because, by definition, points in  $J \cup \tilde J$  cannot be fixed points of $\delta$). We shall count the total contribution of all terms on the left-hand side of \eqref{I_lambda_delta} to the term associated with $\delta$ on its right-hand side.

Suppose first that $n > 0$. Then the term associated with $\delta$ arises from one term with $I = [p]$, from ${n \choose 1}$ terms with $|I| = p-1$, choosing  $J \cup \tilde J \subset I \subset [p]$   (we can take $I = [p] \setminus \{ i \}$, for any $i \in J^c$ with $\delta (i) = i$), and in general from ${n \choose k}$ terms with $|I| = p-k$, for any $0 \leq k \leq n$. Thus, we find 
\[ c_\delta = \sum_{k=0}^{n} {n \choose k} (-1)^{k} = 0 \]
If instead $n = 0$, then the term associated with $\delta$ arises from a single term with $I = [p]$. Therefore, in this case, $c_\delta = 1$. This concludes the proof of \eqref{eq:claim2}. 
\end{proof}

\section{Trapping potentials: proof of Theorem \ref{thm:trap}} \label{sec:traps}

The proof of Theorem \ref{thm:trap} is very similar to that of Theorem \ref{thm:main}, and in this section we explain the required modifications.

We follow the argument of Sections \ref{sec:fct_quantum}--\ref{sec:proofs_conclusion}, with minor adjustments. Essentially, we replace every Brownian measure $\bb W^{\tau, \tilde \tau}_{x, \tilde x}(\dd \omega)$ with the \emph{weighted Brownian measure}
\begin{equation*}
\bb W^{\tau, \tilde \tau}_{x, \tilde x}(\dd \omega) \, \ee^{-\int_{\tilde \tau}^\tau \dd t \, U(\omega(t))}\,,
\end{equation*}
so that, for instance, \eqref{e_H_n} becomes
\begin{equation*}
(\ee^{- H_n})_{\f x, \tilde{\f x}} = \ee^{-\nu \kappa n} \int \prod_{i = 1}^n \pbb{\bb W^{\nu,0}_{x_i, \tilde x_i}(\dd \omega_i) \, \ee^{-\int_0^\nu \dd t \, U(\omega_i(t))}} \, \ee^{-\frac{\lambda}{2 \nu} \sum_{i,j = 1}^n \int_0^\nu \dd t \, v(\omega_i(t) - \omega_j(t))}\,.
\end{equation*}
Accordingly, we replace the propagator $W(-\kappa + \ii \sigma)$ and the Green function $K(-\kappa + \ii \sigma)$ with $W(-\kappa - U + \ii \sigma)$ and $K(-\kappa - U + \ii \sigma)$, respectively.
The only changes to our argument arise from the need to obtain sufficient decay in the spatial parameters $x, \tilde x$ from such integrals of Brownian paths, which will in turn allow us to integrate over them and to establish the claimed decay in $\f x, \tilde {\f x}$ in Theorem \ref{thm:trap}. We shall obtain this decay from an analysis of the excursion probabilities of Brownian bridges.

We shall prove the two following results.

\begin{proposition} \label{prop:trap_conv}
Under the assumptions of Theorem \ref{thm:trap}, for any $p \in \N^*$, we have $\lim_{\nu \to 0} \nu^p \wh \Gamma_{p,\nu} = \wh \gamma_{p}$ locally uniformly.
\end{proposition}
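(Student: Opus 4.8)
The strategy mirrors exactly the proof of Theorem \ref{thm:main} given in Sections \ref{sec:fct_quantum}--\ref{sec:proofs_conclusion}, carried out with the substitution described above: every Brownian measure $\bb W^{\tau,\tilde\tau}_{x,\tilde x}(\dd\omega)$ is replaced by the weighted measure $\bb W^{\tau,\tilde\tau}_{x,\tilde x}(\dd\omega)\,\ee^{-\int_{\tilde\tau}^\tau \dd t\, U(\omega(t))}$, and correspondingly $K(-\kappa+\ii\sigma)$ and $W(-\kappa+\ii\sigma)$ are replaced throughout by $K(-\kappa-U+\ii\sigma)$ and $W(-\kappa-U+\ii\sigma)$. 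The functional integral representation of Sections \ref{sec:fct_quantum} and \ref{sec:classical} goes through verbatim once one checks that $h^U$ still satisfies \eqref{tr_h_assump} with $s=2$ uniformly in $\nu$ (Assumption \ref{ass:trap}), since the Schatten-norm bounds used in Proposition \ref{prop:Z_field_Wick_ordered} only used $\tr h^{-2}<\infty$ and the self-adjointness of $h$, and Lemma \ref{lem:resolvent_tools} (ii) holds for any self-adjoint $h>0$. The regularization of the auxiliary field $\sigma$ and the passage $\eta\to 0$ are unchanged, since they are purely local in time and space. Thus the first task is to record that Propositions \ref{prop:fif_renorm}, \ref{prop:red_dens_matrix}, \ref{prop:gamma_hat}, \ref{prop:Z_field_Wick_ordered}, \ref{prop:gamma_regularization} all have exact analogues in the trapped setting, with $h$ now the trapped one-particle Hamiltonian.

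\textbf{Mean-field limit and the decay estimate.} The core of the argument is the analogue of Proposition \ref{prop:main_conv}: one must show $\lim_{\nu\to0}\|\nu^p\wh\Gamma_{p,\eta,\nu}-\wh\gamma_{p,\eta}\|=0$ uniformly in $\eta$, but now with respect to the \emph{weighted} supremum norm appearing in \eqref{Gamma_hat_conv_trap}, i.e.\ one divides by $\Upsilon_{\theta-\epsilon,c}(\f x,\tilde{\f x})$ before taking the sup. The combinatorial structure of Section \ref{sec:mf} --- the representations in Lemma \ref{lem:Ff_rep}, the point-point/point-path/path-path interaction bookkeeping of Definitions \ref{def:interaction}, \ref{def:interaction_general}, the comparison of Riemann sums to integrals in Steps 1--3, and the $L^2(\mu_{\cal C_\eta})$ convergence of $F_2(\sigma)-f_2(\ang\sigma)$ --- is entirely preserved, because the weight $\ee^{-\int U}$ is a bounded multiplicative factor that only \emph{improves} all convergences (it is $\leq 1$ and continuous along Brownian paths). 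The genuinely new ingredient is a quantitative \emph{decay} estimate: one needs, for the weighted Brownian integrals that appear in $Q_{p,\eta}$ and $q_{p,\eta}$ (and in $F_2$, $f_2$), a bound of the form
\begin{equation*}
\int \bb W^{r,0}_{x,\tilde x}(\dd\omega)\, \ee^{-\int_0^r \dd t\, U(\omega(t))} \;\leq\; C\,\ee^{-c|x-\tilde x|}\,(1+|x|+|\tilde x|)^{-\theta(2-d/2)}\,\Xi(r)
\end{equation*}
for an integrable (in $r$, against $\ee^{-\kappa r}$) profile $\Xi$, uniformly on $[\delta,1/\delta]$. This is precisely where the excursion probabilities of Brownian bridges enter: since $U(x)\geq b|x|^\theta$, a Brownian bridge from $\tilde x$ to $x$ in time $r$ must either stay near the segment $[\tilde x,x]$ --- in which case $\int_0^r U(\omega(t))\,\dd t$ is at least of order $r\min(|x|,|\tilde x|)^\theta$ when $|x|,|\tilde x|$ are large --- or make a large excursion, which costs Gaussian probability $\ee^{-c|x-\tilde x|^2/r}$, yielding in particular the factor $\ee^{-c|x-\tilde x|}$ after integrating against $\ee^{-\kappa r}\,\dd r$. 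I would isolate this as a lemma (the analogue of Lemma \ref{lem:heat_estimate}/Lemma \ref{lem:P_cont} for the weighted measure), proved by splitting the path space according to whether $\sup_t \dist(\omega(t),[\tilde x,x])$ exceeds $\tfrac12\min(|x|,|\tilde x|)$ and using the reflection/excursion estimates for the bridge together with the explicit Gaussian density \eqref{heat_kernel}.

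\textbf{Assembling the proof and the main obstacle.} With this decay lemma in hand, the three-step Riemann-sum comparison of Section \ref{sec:mf} is repeated: each of Lemmas \ref{lem:Q_delta}--\ref{lem:y_sum} acquires an extra multiplicative weight $\Upsilon_{\theta,c}(\f x,\tilde{\f x})$ on the right-hand side, and the continuity estimates (Lemma \ref{lem:y_psi}, Lemma \ref{lem:r_tilde_r}) are unchanged in form since they compare propagators with the \emph{same} spatial endpoints and the weight $\ee^{-\int U}$ factors out. One then concludes as in the proof of Theorem \ref{thm:main}: combine the $\eta$-regularization convergence, the $\nu\to0$ mean-field convergence (now in the $\Upsilon_{\theta-\epsilon,c}$-weighted sup norm, losing an arbitrarily small $\epsilon$ in the polynomial exponent because the decay lemma is applied at a slightly smaller power to keep integrability of the $r$-integrals uniform), and the decay bound $|\nu^p(\wh\Gamma_{p})_{\f x,\tilde{\f x}}|\leq C\Upsilon_{\theta,c}$ of Remark \ref{rem:decay_hat_gamma}. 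The main obstacle is the excursion/decay lemma: one must get the \emph{sharp} power $\theta(2-d/2)$ of polynomial decay --- which comes from balancing the number $p$ (here effectively $2-d/2$ worth) of propagator factors, each contributing a time-integral $\int \ee^{-\kappa r}\,\dd r$ that is only marginally convergent near $r=0$ in $d=3$, against the $|x|^\theta$ growth of $U$ --- and do so uniformly in the regularization parameter $\eta$ and along all the concatenated loops $\omega_1,\dots,\omega_p$ (resp.\ $\omega,\tilde\omega$ in the quadratic terms $F_2,f_2$). Handling the path-path interaction terms, where two independent weighted loops are coupled through $v$, requires care but $v$ is bounded (Assumption \ref{ass:v_trap}) so the coupling does not destroy the decay coming from the $U$-weights. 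Secondarily, one must verify that Assumption \ref{ass:U_conv} (with possibly $\nu$-dependent $U_\nu$, and $U_\nu\geq U_0/C$) suffices: the lower bound $U_\nu\geq U_0/C$ gives the decay lemma uniformly in $\nu$, while local uniform convergence $U_\nu\to U_0$ is what is needed to pass the $\nu$-dependent weighted propagators to their limiting (classical, $U_0$-weighted) counterparts in the Riemann-sum step, exactly as continuity of $v$ was used in Lemma \ref{lem:lim_eta}.
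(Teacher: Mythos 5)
The proposition you are asked to prove only asserts \emph{locally uniform} convergence, and this changes the balance of difficulty completely: for $\f x, \tilde{\f x}$ in a compact set the machinery of Sections \ref{sec:fct_quantum}--\ref{sec:proofs_conclusion} applies with the weighted Brownian measures essentially verbatim, and the only genuinely new inputs are (i) control of the infinite-volume traces and spatial integrations, for which Lemma \ref{lem:heat_estimate} (at $L=\infty$) no longer suffices and one instead uses $\int \dd x\int \bb W^{t,0}_{x,x}(\dd\omega)\,\ee^{-\int_0^t \dd s\, U(\omega(s))}=\tr \ee^{-th}$ together with \eqref{tr_h_assump} for $s=2$ (this is how the analogue of Lemma \ref{lem:unif_int} is handled, cf.\ \eqref{U_nu_conv}), and (ii) the replacement of the $\nu$-dependent trap $U_\nu$ by $U_0$, using that $t^{d/2}\int\bb W^{t,0}_{x,\tilde x}(\dd\omega)\,\bigl(\ee^{-\int_0^t \dd s\,U_\nu(\omega(s))}-\ee^{-\int_0^t \dd s\,U_0(\omega(s))}\bigr)\to0$ uniformly in $t$ and $x,\tilde x$, combined with $U_\nu\ge U_0/C$ and dominated convergence. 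Both ingredients do appear in your proposal (your observation that the Schatten bounds only use $\tr h^{-2}<\infty$, and your final paragraph on Assumption \ref{ass:U_conv}), so the kernel of the correct, light proof is present. The excursion estimates, the weight $\Upsilon_{\theta,c}$, and the sharp power $\theta(2-d/2)$ are not needed here at all: in the paper they form the separate $\nu$-uniform decay bound (Proposition \ref{prop:trap_bound}, via Lemma \ref{lem:prop_est}), and the two results are only combined at the level of Theorem \ref{thm:trap}, where the $\epsilon$-loss comes from the elementary interpolation $\Upsilon_{\theta,c}/\Upsilon_{\theta-\epsilon,c}\le R^{-\epsilon(2-d/2)}$ for $\abs{\f x}+\abs{\tilde{\f x}}>R$, not from weakening any decay lemma.

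The gap in what you actually propose is created by this overshoot. You aim to prove the mean-field limit directly in the $\Upsilon_{\theta-\epsilon,c}$-weighted supremum norm, i.e.\ uniformly over all of $\R^{dp}$. But the error terms of Section \ref{sec:mf} are not quantitative in the spatial arguments: the $\cal E_{\eta,\nu}$ bookkeeping, the dominated-convergence steps such as \eqref{DCT_application1}, and Lemmas \ref{lem:Q_delta}--\ref{lem:y_sum} yield $o(1)$ errors uniformly over a compact torus, and asserting that each of these ``acquires an extra multiplicative weight'' does not establish that they are $o(1)\cdot\Upsilon_{\theta-\epsilon,c}(\f x,\tilde{\f x})$ uniformly over unbounded arguments --- that uniformity is precisely the difficulty the split into Propositions \ref{prop:trap_conv} and \ref{prop:trap_bound} is designed to avoid, and your sketch gives no mechanism for it. Moreover, your claim that in the analogues of Lemmas \ref{lem:y_psi} and \ref{lem:r_tilde_r} ``the weight $\ee^{-\int U}$ factors out'' is not correct as stated: Lemma \ref{lem:r_tilde_r} compares propagators over different time horizons, and the potential is integrated along the path over the full horizon, so the weight cannot be pulled out of the difference; one must instead compare the trapped propagators $\ee^{-(r_i-s_i)h}$ and $\ee^{-(\tilde r_i-s_i)h}$ directly, which is an adjustment, not a free consequence. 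For the proposition as stated, drop the weighted norm, fix a compact set, and run your opening paragraph together with items (i) and (ii) above; that is the paper's proof.
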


\begin{proposition} \label{prop:trap_bound}
Under the assumptions of Theorem \ref{thm:trap}, for any $p \in \N^*$, there exist constants $C,c > 0$, depending on $d,v,\kappa,p,b,\theta$, such that $\abs{\nu^p (\wh \Gamma_{p})_{\f x, \tilde {\f x}}} \leq C \Upsilon_{\theta,c}(\f x, \tilde {\f x})$, where we recall the definition of $\Upsilon_{\theta,c}$ from  \eqref{def_Upsilon_p}.
\end{proposition}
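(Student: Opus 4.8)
\textbf{Proof proposal for Proposition \ref{prop:trap_bound}.}

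The plan is to retrace the functional-integral representation of $\wh\Gamma_p$ from Section \ref{sec:fct_quantum}, now with the weighted Brownian measures $\bb W^{\tau,\tilde\tau}_{x,\tilde x}(\dd\omega)\,\ee^{-\int \dd t\, U(\omega(t))}$, and to exploit the extra damping factor $\ee^{-\int \dd t\, U(\omega(t))}$ to extract the two decay mechanisms encoded in $\Upsilon_{\theta,c}$: the exponential off-diagonal decay $\ee^{-c|x-\tilde x|}$ and the polynomial large-distance decay $(1+|x|+|\tilde x|)^{-\theta(2-d/2)}$. Concretely, starting from the analogue of \eqref{wh_Gamma_intro}--\eqref{def_eh_gamma_eta}, $\nu^p\wh\Gamma_{p,\eta}$ is a Gaussian average (over $\sigma$, with $|\ee^{F_2(\sigma)}|\le 1$ by Proposition \ref{prop:fif_renorm}(iv)) of $p!\,P_p$ applied to the $p$-fold tensor power of $\nu\sum_{r\in\nu\N^*}\ee^{-\kappa r}\int \bb W^{r,0}_{x_i,\tilde x_i}(\dd\omega_i)\,\ee^{-\int_0^r U(\omega_i)}\,(\ee^{\ii\int \sigma(\cdot,\omega_i)}-1)$. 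Taking absolute values inside, using $|\ee^{\ii(\cdots)}-1|\le 2$ together with the $L^{2p}$-bound of Lemma \ref{lem:T_p} (exactly as in the estimate leading to \eqref{Y_est2}, which must now be redone with the $U$-weight present — the weight only helps), I reduce matters to bounding, for each factor,
\begin{equation*}
\nu\sum_{r\in\nu\N^*}\ee^{-\kappa r}\int \bb W^{r,0}_{x,\tilde x}(\dd\omega)\,\ee^{-\int_0^r \dd t\, U(\omega(t))}\,\min\{1,Cr\}\,,
\end{equation*}
and then passing to the $\nu\to 0$ limit by the Riemann-sum convergence already established in Section \ref{sec:mf} (Proposition \ref{prop:main_conv} and its trap analogue Proposition \ref{prop:trap_conv}), which also upgrades the bound from $\wh\Gamma_{p,\eta}$ to $\wh\gamma_p$ since all constants are $\nu,\eta$-independent. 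So the whole proposition comes down to a single scalar estimate on the weighted Brownian integral above.

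For that scalar estimate I would use two complementary bounds on $\int\bb W^{r,0}_{x,\tilde x}(\dd\omega)\,\ee^{-\int_0^r U(\omega)}$. First, dropping the weight and using Lemma \ref{lem:heat_estimate} gives $\le \psi^r(x-\tilde x)\le C(r^{-d/2})\ee^{-c|x-\tilde x|^2/r}$ (for $L=\infty$); combined with the extra factor $\min\{1,Cr\}$ and the $\ee^{-\kappa r}$ from the sum/integral over $r$, integrating over $r\in(0,\infty)$ produces the off-diagonal exponential decay $\ee^{-c|x-\tilde x|}$ by the standard subordination bound $\int_0^\infty r^{-d/2}\ee^{-\kappa r-|x-\tilde x|^2/2r}\,\dd r\lesssim \ee^{-c|x-\tilde x|}$ (times a harmless power of $|x-\tilde x|$ absorbed into a smaller $c$), with the $\min\{1,Cr\}$ killing the short-time non-integrability when $d=3$. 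Second, to get the polynomial factor $(1+|x|+|\tilde x|)^{-\theta(2-d/2)}$ I would bring in the excursion probability of the Brownian bridge: since $U(y)\ge b|y|^\theta$, a path from $\tilde x$ to $x$ of duration $r$ must, if $|x|+|\tilde x|$ is large, spend time in a region where $U$ is large unless it travels far, so either $\ee^{-\int_0^r U(\omega)}$ is tiny or the bridge performs a large excursion whose probability decays like a Gaussian in (roughly) $(|x|\wedge|\tilde x|)^2/r$. Optimizing over $r$ (balancing $\ee^{-\kappa r}$, the excursion cost, and the potential cost) yields, after integration over $r$, a bound of the form $\ee^{-c|x-\tilde x|}$ times a negative power of $1+|x|+|\tilde x|$; the precise exponent $\theta(2-d/2)$ should come out of the $p$-body $P_p$-symmetrization combined with the per-factor exponent — each of the $p$ factors contributes a power $\sim\theta(1-d/4)$ after the excursion analysis, and the symmetrized weight $\Upsilon_{\theta,c}(\f x,\tilde{\f x})$ in \eqref{def_Upsilon_p} is exactly the sum over $\pi\in S_p$ of products of the one-factor weights, so matching is bookkeeping once the one-factor bound $\int\bb W^{r,0}_{x,\tilde x}(\dd\omega)\,\ee^{-\int_0^r U(\omega)}\lesssim (1+|x|+|\tilde x|)^{-\theta(2-d/2)/p}\,\cdots$ is in hand — I will instead prove the one-factor bound directly with exponent $\theta(2-d/2)$ for $p=1$ and then note that the $p$-body bound with the symmetrized weight follows because $\Gamma_p$-type objects factor through $P_p$ of tensor powers and each summand in \eqref{def_Upsilon_p} dominates the corresponding product of one-particle kernels.

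The main obstacle is the excursion estimate for the weighted bridge, i.e.\ showing quantitatively that $\int\bb W^{r,0}_{x,\tilde x}(\dd\omega)\,\ee^{-b\int_0^r|\omega(t)|^\theta\,\dd t}$ has the claimed polynomial decay in $1+|x|+|\tilde x|$ uniformly in $r$ (after the $\int_0^\infty \ee^{-\kappa r}(\cdots)\dd r$ has been taken). The clean way is: condition on the bridge, split $[0,r]$ into a piece near the endpoints and a bulk piece; on the bulk the process is essentially at distance $\gtrsim \min(|x|,|\tilde x|)$ from the origin only if it has not returned, and a union bound over dyadic scales controls $\P[\sup_{[0,r]}|\omega|\le \rho]$ versus the potential cost $\ee^{-br\rho^\theta}$ when $|\omega|\ge\rho$; optimizing $\rho$ against $r$ and then $r$ against $\kappa$ gives the exponent. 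One must be careful that the estimate is uniform down to $r\downarrow 0$ (where $\psi^r$ blows up but $\min\{1,Cr\}$ compensates and the endpoints force $|x-\tilde x|$ small) and that the periodic case $L<\infty$ of Section \ref{sec:results1} is genuinely excluded here (Assumption \ref{ass:trap} has $\Lambda=\R^d$), so no $L^{-d}$ terms intervene. Everything else — the reduction to one factor, the $|\ee^{\ii(\cdots)}-1|$ bound via Lemma \ref{lem:T_p}, the passage $\eta\to 0$ and $\nu\to 0$ — is a direct transcription of arguments already carried out in Sections \ref{sec:mf} and \ref{sec:proofs_conclusion}.
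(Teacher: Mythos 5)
Your proposal follows essentially the same route as the paper's proof: reduce, via the $\sigma$-representation with $\abs{\ee^{F_2(\sigma)}}\le 1$ and the Lemma \ref{lem:T_p} estimate exactly as in \eqref{Y_est2}, to the one-particle bound on $\nu\sum_{r\in\nu\N^*}\ee^{-\kappa r}\,r\int \bb W^{r,0}_{x,\tilde x}(\dd\omega)\,\ee^{-\int_0^r U(\omega)}$, and then extract $(1+\abs{x}+\abs{\tilde x})^{-\theta(2-d/2)}\,\ee^{-c\abs{x-\tilde x}}$ by playing the heat-kernel factor against an excursion/potential-cost dichotomy for the bridge (the content of the paper's Lemma \ref{lem:prop_est} and the subsequent four-case $r$-integration), with the same bookkeeping identifying the symmetrized product of one-particle weights with $\Upsilon_{\theta,c}(\f x,\tilde{\f x})$. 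The only portion you leave as a sketch is that excursion estimate itself, and your passing suggestion that each factor might only contribute exponent $\theta(1-d/4)$ is a slip which you then correctly override: each factor must carry the full exponent $\theta(2-d/2)$, which indeed arises, as in the paper, from integrating $r^{1-d/2}\ee^{-c r(\abs{x}+\abs{\tilde x})^{\theta}}$ over $r$ in the stay-near-the-endpoints regime.
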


\begin{proof}[Proof of Theorem \ref{thm:trap}]
By Propositions \ref{prop:trap_conv} and \ref{prop:trap_bound} we have $\abs{(\wh \gamma_p)_{\f x, \tilde {\f x}}} \leq C \Upsilon_{\theta,c}(\f x, \tilde {\f x})$.
Choose $\delta > 0$. Then choose $R \equiv R_\delta$ such that
\begin{equation*}
\sup_{\abs{\f x} + \abs{\tilde {\f x}} > R} \frac{\abs{\nu^p (\Gamma_{p})_{\f x, \tilde {\f x}} - (\gamma_{p})_{\f x, \tilde {\f x}}}}{\Upsilon_{\theta - \epsilon,c}(\f x, \tilde {\f x})} \leq 
\sup_{\abs{\f x} + \abs{\tilde {\f x}} > R} \frac{C \Upsilon_{\theta,c}(\f x, \tilde {\f x})}{\Upsilon_{\theta - \epsilon,c}(\f x, \tilde {\f x})} \leq C R^{-\epsilon (2 - d/2)} \leq \delta\,.
\end{equation*}
By Proposition \ref{prop:trap_conv} we have
\begin{equation*}
\lim_{\nu \to 0} \sup_{\abs{\f x} + \abs{\tilde {\f x}} \leq R} \frac{\abs{\nu^p (\Gamma_{p})_{\f x, \tilde {\f x}} - (\gamma_{p})_{\f x, \tilde {\f x}}}}{\Upsilon_{\theta - \epsilon,c}(\f x, \tilde {\f x})} = 0\,,
\end{equation*}
and the proof is complete.
\end{proof}

The rest of this section is devoted to the proofs of Proposition \ref{prop:trap_conv} and \ref{prop:trap_bound}. Throughout the following we adopt the notations of Sections \ref{sec:preliminaries}--\ref{sec:mf} without further comment, setting $L = \infty$ there, and replacing all Brownian measures with their weighted versions.

As for the torus, we use a regularized covariance \eqref{cov_C_eta}, which should ensure that for $\eta > 0$ the field $\sigma$ is $\mu_{\cal C_\eta}$-almost surely smooth and bounded. To that end, we choose use the cutoff function $\varphi$ from Section \ref{sec:FT} and define the regularized covariance
\begin{equation} \label{cov_C_eta_Rd}
\int \mu_{\cal C_\eta}(\dd \sigma)\, \sigma(\tau, x) \, \sigma(\tilde \tau, \tilde x) = \frac{\lambda}{\nu} \, \delta_{\eta, \nu}(\tau - \tilde \tau) \, v_\eta (x,\tilde x) \eqd (\cal C_\eta)_{x, \tilde x}^{\tau, \tilde \tau}\,,
\end{equation}
where
\begin{equation*}
v_\eta(x, \tilde x) \deq (v * \delta_{\eta,\infty,\varphi})(x - \tilde x) \, \varphi(\eta x) \, \varphi(\eta \tilde x)\,.
\end{equation*}

\begin{proof}[Proof of Proposition \ref{prop:trap_conv}]
We repeat the arguments of Sections \ref{sec:fct_quantum}--\ref{sec:proofs_conclusion} with trivial adjustments. Since $\f x$ and $\tilde {\f x}$ are bounded, the only difference is the need to control the traces and integrations over spatial variables. For the traces, we use the assumption \eqref{tr_h_assump} for $s = 2$. For the integrations over spatial variables, we use
\begin{equation*}
\int \dd x \, \int \bb W^{t,0}_{x,x}(\dd \omega) \, \ee^{-\int_0^t \dd s \, U(\omega(s))} = \tr \ee^{-t h}\,.
\end{equation*}
Thus, for instance, the proof of Lemma \ref{lem:unif_int} is taken care of by the estimate
\begin{multline} \label{U_nu_conv}
\int_{[0,\infty)^3} \dd \f r \, \frac{1}{\abs{\f r}}
\, \int \dd x \int \bb W_{x, x}^{\abs{\f r}, 0}(\dd \omega) \, \ee^{-\int_0^{\abs{\f r}} \dd s \, U(\omega(s))}\, 
(1+r_1)^4
\\
=
\int_{[0,\infty)^3} \dd \f r \, \frac{1}{\abs{\f r}}
\tr \ee^{-\abs{\f r} h}  (1+r_1)^4 \leq C (\tr h^{-2} + \tr h^{-6})\,.
\end{multline}
Finally, to compare the weighted Brownian measure of $U$ to that of $U_0$, we use that
\begin{equation*}
t^{d/2} \int \bb W^{t,0}_{x,\tilde x}(\dd \omega) \pbb{\ee^{-\int_0^t \dd s \, U(\omega(s))} - \ee^{- \int_0^t \dd s \, U_0(\omega(s))}} \to 0
\end{equation*}
as $\nu \to 0$, uniformly in $t > 0$ and $x, \tilde x \in \Lambda$, as follows easily from the assumption on the convergence of $U$ to $U_0$. Replacing $U$ with $U_0$ in expressions like the left-hand side of \eqref{U_nu_conv} is done using the assumption $U \geq U_0 / C$ and dominated convergence.
\end{proof}

In order to prove Proposition \ref{prop:trap_bound}, we shall need the following fundamental estimate on the kernel of $\ee^{-\tau(-\Delta /2 + U)}$.

\begin{lemma} \label{lem:prop_est}
Let $\theta \geq 2$ and $b > 0$, and suppose that $U(x) \geq b \abs{x}^\theta$. Then
\begin{multline} \label{prop_est4}
\int \bb W^{\tau,0}_{x, \tilde x}(\dd \omega) \, \ee^{-\int_0^\tau \dd t \, U(\omega(t))}
\\
\leq C \psi^\tau(x - \tilde x) \pbb{\ind{\tau \leq (\abs{x} + \abs{\tilde x})^{-2 (\theta + 1)}} + \ee^{-c (\abs{x} + \abs{\tilde x})^\theta \tau} + \ee^{-c (\abs{x} + \abs{\tilde x})^{1 + \theta/2}} + \ee^{-c (\sqrt{\tau} (\abs{x} + \abs{\tilde x})^{\theta + 1})^{2/3}}}
\end{multline}
for some constants $C,c > 0$ depending on $\theta$ and $b$.
\end{lemma}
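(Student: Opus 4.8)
The statement is an upper bound on the Feynman--Kac kernel $\int \bb W^{\tau,0}_{x,\tilde x}(\dd\omega)\,\ee^{-\int_0^\tau U(\omega(t))\dd t}$ of the trapped heat semigroup, and the key point is to extract from the trapping potential $U(x)\geq b\abs x^\theta$ four alternative decay mechanisms, one of which always gives a good bound depending on the regime of $\tau$ and $\abs x+\abs{\tilde x}$. The natural approach is to condition on the Brownian bridge and study its excursions: since $\bb W^{\tau,0}_{x,\tilde x}=\psi^\tau(x-\tilde x)\,\P^{\tau,0}_{x,\tilde x}$, it suffices to bound $\int\P^{\tau,0}_{x,\tilde x}(\dd\omega)\,\ee^{-b\int_0^\tau\abs{\omega(t)}^\theta\dd t}$ by the bracketed sum of four terms. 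Write $M\deq\abs x+\abs{\tilde x}$ and note that by the triangle inequality the bridge path $\omega$ stays far from the origin whenever it stays close to either endpoint, so $\int_0^\tau\abs{\omega(t)}^\theta\dd t$ is large unless $\omega$ makes a substantial excursion \emph{toward} the origin; the cost of such an excursion is governed by standard Brownian bridge excursion (small-deviation/tube) estimates.

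\textbf{Key steps.} First I would reduce to the bridge measure as above and split into regimes. (i) If $\tau$ is very small, $\tau\leq M^{-2(\theta+1)}$, we simply bound the exponential by $1$, producing the first term. (ii) If $M^\theta\tau$ is not small, then on the event that the bridge never enters the ball $B(0,M/2)$ we have $\int_0^\tau\abs{\omega(t)}^\theta\dd t\geq (M/2)^\theta\tau$, giving $\ee^{-cM^\theta\tau}$; it remains to control the complementary event that the bridge does reach $B(0,M/2)$. For that, by symmetry assume $\abs x\geq M/2$; the bridge must travel a distance of order $M$ away from its starting point $x$ within time $\tau$, and a Brownian bridge excursion of size $M$ over a time interval of length $\tau$ has probability $\lesssim \ee^{-cM^2/\tau}$ by the reflection principle / Gaussian tail of the bridge supremum (Lemma~\ref{lem:P_cont}-type control, or a direct computation with $\psi^t$). (iii) One then optimizes: on the ``reaching'' event, either the bridge enters $B(0,M/2)$ but spends most of its time outside a smaller ball $B(0,\rho)$ --- contributing $\ee^{-c\rho^\theta\tau}$ from the potential --- or it spends time of order $\delta$ inside $B(0,\rho)$, which for a bridge forced to start and end near distance $M$ costs roughly $\ee^{-cM^2\delta/\tau}$... no: more carefully, the excursion into $B(0,\rho)$ from a point at distance $\geq M/2$ and back has a probability controlled by the Brownian bridge hitting estimate, and combining the excursion cost with the potential gain $\ee^{-b\rho^\theta\cdot(\text{time in }B(0,\rho))}$ and optimizing over $\rho$ and over the duration of the excursion yields the remaining two terms $\ee^{-cM^{1+\theta/2}}$ (the $\tau$-independent ``deep tunneling'' term, relevant when $\tau$ is moderate) and $\ee^{-c(\sqrt\tau M^{\theta+1})^{2/3}}$ (the intermediate regime, where the $2/3$ exponent is the signature of optimizing a sum of the form $A\rho^2/t+B\rho^\theta t$ or similar against the excursion length $t$). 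I would organize the proof by first stating an auxiliary excursion lemma (probability that a bridge from $x$ to $\tilde x$ over $[0,\tau]$ visits $B(0,\rho)$, and that it spends time $\geq t$ there), proving it by reflection and the Markov property, and then assembling \eqref{prop_est4} by the dichotomy above.

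\textbf{Main obstacle.} The routine parts are the small-$\tau$ case and the ``never enters $B(0,M/2)$'' case. The genuine difficulty is the bookkeeping of the excursion-cost versus potential-gain tradeoff that produces the last two terms, in particular getting the exponents $1+\theta/2$ and $(\sqrt\tau M^{\theta+1})^{2/3}$ exactly right: one has to parametrize an excursion of the bridge into a ball of radius $\rho$ lasting for a sub-interval of length $t$, estimate its probability (of order $\exp(-c M^2/(\tau\wedge t)-\ldots)$ or $\exp(-cM^2/t)$ with the remaining bridge reattachment paid for by $\psi$), multiply by the potential factor $\exp(-c\rho^\theta t)$ for the time spent near the origin plus $\exp(-cM^\theta(\tau-t))$ for the time spent near the endpoints, and then minimize over $\rho\in[0,M/2]$ and $t\in[0,\tau]$. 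The two non-trivial terms arise as the two relevant corners/interior critical points of this optimization, and one must also handle uniformity in $x,\tilde x$ and the case $\theta=2$ separately if needed. I expect this optimization, together with verifying that the four terms genuinely cover all regimes of $(\tau,M)$ with no gap, to be where the real work lies; the probabilistic inputs themselves are standard once stated.
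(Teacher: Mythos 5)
Your skeleton --- factoring out $\psi^\tau(x-\tilde x)$ and working with the normalized bridge, getting the first term by bounding the exponential by $1$ for $\tau\le M^{-2(\theta+1)}$ (with $M\deq\abs x+\abs{\tilde x}$), getting the second from paths that never approach the origin, and extracting the exponents $1+\theta/2$ and $2/3$ from a trade-off between potential gain at distance $\sim M$ and the cost of approaching the origin --- is the same excursion philosophy as the paper's proof. But there is a concrete gap, and it sits exactly where you place ``the real work''. The input you propose to use, namely that the bridge makes an excursion of size $M$ within $[0,\tau]$ with probability $\lesssim\ee^{-cM^2/\tau}$, is false precisely when $\abs{x-\tilde x}\sim M$, i.e.\ when one endpoint lies near the origin: the normalized bridge then covers a distance of order $M$ with probability one, the Gaussian cost of that displacement having already been paid in the prefactor $\psi^\tau(x-\tilde x)$. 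In that case no excursion probability is small, and the decay must come from the time the drifting bridge is forced to spend in the region $\{\abs{y}\gtrsim M\}$, which can only be shortened by a large fluctuation of the driving Brownian motion; this is exactly the mechanism behind the fourth term, and your optimization of $M^2/t+M^\theta t$ (which yields $M^{1+\theta/2}$) cannot produce the exponent $(\sqrt\tau M^{\theta+1})^{2/3}$. Your own sketch of the excursion cost (``$\ee^{-cM^2\delta/\tau}$\dots no: more carefully\dots'') shows that the estimate needed in this regime has not been identified, so the plan as written does not close.

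For comparison, the paper's proof needs no case analysis. Assuming $\abs{\tilde x}\ge\abs x$ and $\abs{\tilde x}\ge1$, it writes the bridge as $\omega(t)=\tilde x+B(t)+\frac{t}{\tau}\pb{x-\tilde x-B(\tau)}$ with $B$ a standard Brownian motion, and lets $T$ be the hitting time of level $\abs{\tilde x}/4$ by $\abs{B}$. Then $\abs{\omega(t)}\ge\abs{\tilde x}/2$ for all $t\le\tau\wedge T\wedge\frac{\tau\abs{\tilde x}}{4\abs{x-\tilde x-B(\tau)}}$, so the Feynman--Kac weight is bounded in expectation by $\ee^{-c\abs{\tilde x}^\theta\tau}+\E\,\ee^{-c\abs{\tilde x}^\theta T}+\E\,\ee^{-c\tau\abs{\tilde x}^{\theta+1}/\abs{B(\tau)}}$. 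The middle term is handled by the reflection principle, $\P(T\le t)\le C\ee^{-c\abs{\tilde x}^2/t}$, and a split of the resulting $t$-integral at $\abs{\tilde x}^{1-\theta/2}$, giving $C\ee^{-c\abs{\tilde x}^{1+\theta/2}}$; the last term is averaged against the chi-squared density of $\abs{B(\tau)}^2/\tau$, and optimizing an exponent of the form $u^2+\sqrt\tau\abs{\tilde x}^{\theta+1}/u$ produces $\ind{\tau\le\abs{\tilde x}^{-2(\theta+1)}}+C\ee^{-c(\sqrt\tau\abs{\tilde x}^{\theta+1})^{2/3}}$. If you prefer to keep your dichotomy, the repair is to measure displacements of the bridge relative to its linear interpolant (equivalently, relative to the drift term above) rather than to its starting point; note also that you need not reproduce the four terms verbatim --- any bound dominated by the right-hand side of \eqref{prop_est4} suffices --- but with the excursion estimate as you stated it, the endpoint-near-origin case is simply not covered.
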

\begin{proof}
By symmetry, we may assume that $\abs{\tilde x} \geq \abs{x}$. Moreover, we may assume that $\abs{\tilde x} \geq 1$, for otherwise the claim is an immediate consequence of Lemma \ref{FK_continuous}.

Under the law $\P^{\tau,0}_{x, \tilde x}(\dd \omega)$, the Brownian bridge $\omega$ can be written in terms of standard Brownian motion $B(t) = (B_i(t))_{i=1}^d$ in $\R^d$ as $\omega(t) = \tilde x + B(t) + \frac{t}{\tau} (x - \tilde x - B(\tau))$, as can be verified by comparing the finite-dimensional distributions. Denoting by $\P$ and $\E$ the law of standard Brownian motion $(B(t))_{t \geq 0}$ and the associated expectation, we therefore have
\begin{equation*}
\int \bb P^{\tau,0}_{x, \tilde x}(\dd \omega) \, \ee^{-\int_0^\tau \dd t \, U(\omega(t))}
\leq \E \exp \pBB{- b \int_0^\tau \dd t \, \absbb{\tilde x + B(t) + \frac{t}{\tau} (x - \tilde x - B(\tau))}^\theta}\,.
\end{equation*}

Next, we define the stopping time
\begin{equation*}
T \deq \inf \hb{t \geq 0 \col \abs{B(t)} = \abs{\tilde x} / 4}\,.
\end{equation*}
For\footnote{We use the notation $x \wedge y \deq \min \{x,y\}$.}
\begin{equation*}
0 \leq t \leq \tau \wedge T \wedge \frac{\tau \abs{\tilde x}}{4 \abs{x - \tilde x - B(\tau)}}
\end{equation*}
we have $\absb{\tilde x + B(t) + \frac{t}{\tau} (x - \tilde x - B(\tau))} \geq \abs{\tilde x} / 2$. Thus,
\begin{equation*}
\int \bb P^{\tau,0}_{x, \tilde x}(\dd \omega) \, \ee^{-\int_0^\tau \dd t \, U(\omega(t))} \leq \E \qbb{\ee^{-c \abs{\tilde x}^\theta \tau} + \ee^{-c \abs{\tilde x}^\theta T} + \ee^{-c \abs{\tilde x}^\theta \frac{\tau \abs{\tilde x}}{ \abs{x - \tilde x - B(\tau)}}}}\,,
\end{equation*}
for some constant $c > 0$ depending on $b$ and $\theta$. Using that
\begin{equation*}
\frac{\tau \abs{\tilde x}}{ \abs{x - \tilde x - B(\tau)}} \geq c \pbb{\frac{\tau \abs{\tilde x}}{ \abs{x - \tilde x}} \wedge  \frac{\tau \abs{\tilde x}}{ \abs{B(\tau)}}}
\geq
c \pbb{\tau \wedge  \frac{\tau \abs{\tilde x}}{\abs{B(\tau)}}}
\end{equation*}
(after a renaming of $c$ in the last step),
we therefore find
\begin{equation} \label{U_prop_est}
\int \bb P^{\tau,0}_{x, \tilde x}(\dd \omega) \, \ee^{-\int_0^\tau \dd t \, U(\omega(t))} \leq \ee^{-c \abs{\tilde x}^\theta \tau} + \E \ee^{-c \abs{\tilde x}^\theta T} + \E \ee^{-c \tau \abs{\tilde x}^{\theta + 1} / \abs{B(\tau)}}\,.
\end{equation}

To analyse the second term of \eqref{U_prop_est}, for $1 \leq i \leq d$ we introduce the stopping times
\begin{equation*}
T_{i,\pm} \deq \inf \hbb{t \geq 0 \col B_i(t) = \frac{\pm \abs{\tilde x}}{4\sqrt{d}}}\,,
\end{equation*}
so that $T \geq \min_{i,\pm} T_i$. Thus,
\begin{equation*}
\P(T \leq t) \leq \P\pB{\min_{i,\pm} T_{i,\pm} \leq t} \leq 2d \P(T_{1,+} \leq t) = \frac{4d}{\sqrt{2\pi}} \int_{\abs{\tilde x} / (4 \sqrt{d t})}^\infty \dd u \, \ee^{-u^2/2} \leq C \ee^{- c \abs{\tilde x}^2 / t}\,,
\end{equation*}
where in the second step we used a union bound and the symmetry of Brownian motion, and the third step follows from a standard application of the reflection principle for the standard one-dimensional Brownian motion $B_i(t)$ to compute the law of $T_{1,+}$. By Fubini's theorem, we therefore conclude that
\begin{equation*}
\E \ee^{-c \abs{\tilde x}^\theta T} = c \abs{\tilde x}^\theta \int_0^\infty \dd t \, \ee^{-c \abs{\tilde x}^\theta t} \, \P(T \leq t) \leq C \abs{\tilde x}^\theta \int_0^\infty \dd t \, \ee^{-c \abs{\tilde x}^\theta t - c \abs{\tilde x}^2 / t}\,.
\end{equation*}
We split the integral into two pieces according to whether $t \leq \abs{\tilde x}^{1 - \theta/2}$. The first piece is
\begin{equation*}
C \abs{\tilde x}^\theta \int_0^{\abs{\tilde x}^{1 - \theta/2}} \dd t \, \ee^{-c \abs{\tilde x}^\theta t - c \abs{\tilde x}^2 / t} \leq C \abs{\tilde x}^\theta \int_0^{\abs{\tilde x}^{1 - \theta/2}} \dd t \, \ee^{-c \abs{\tilde x}^2 / t} \leq C \abs{\tilde x}^{2+\theta} \, \ee^{-c \abs{\tilde x}^{1 + \theta/2}} \leq 
C \ee^{-c \abs{\tilde x}^{1 + \theta/2}}\,,
\end{equation*}
where we used Lemma \ref{lem:el_integral} below. The second piece is
\begin{equation*}
C \abs{\tilde x}^\theta \int_{\abs{\tilde x}^{1 - \theta/2}}^\infty \dd t \, \ee^{-c \abs{\tilde x}^\theta t - c \abs{\tilde x}^2 / t} \leq C \abs{\tilde x}^\theta \int_{\abs{\tilde x}^{1 - \theta/2}}^\infty \dd t \, \ee^{-c \abs{\tilde x}^\theta t}
= C \ee^{-c \abs{\tilde x}^{1 + \theta/2}}\,.
\end{equation*}
We conclude that
\begin{equation} \label{eU_est1}
\E \ee^{-c \abs{\tilde x}^\theta T} \leq C \ee^{-c \abs{\tilde x}^{1 + \theta/2}}\,.
\end{equation}

Next, we estimate the third term of \eqref{U_prop_est}. We consider the cases (i) $\tau \leq \abs{\tilde x}^{-2 (\theta + 1)}$ and (ii) $\tau > \abs{\tilde x}^{-2 (\theta + 1)}$ separately. In case (i) we estimate the third term of \eqref{U_prop_est} by $1$. In case (ii), we use that the probability density of $\abs{B(\tau)}^2 / \tau$ is $C u^{d/2 - 1} \ee^{-u/2}$ (a $d$-dimensional Chi-squared distribution), which yields
\begin{equation*}
\E \ee^{-c \tau \abs{\tilde x}^{\theta + 1} / \abs{B(\tau)}} = C \int_0^\infty \dd u \, u^{d/2 - 1} \, \ee^{-u/2} \, \ee^{-c \sqrt{\tau} \abs{\tilde x}^{\theta + 1} / \sqrt{u}} = C \int_0^\infty \dd t \, t^{d - 1} \, \ee^{-t^2 / 2 - c \sqrt{\tau} \abs{\tilde x}^{\theta + 1} / t}\,.
\end{equation*}
We split the integral into two pieces according to whether $t \leq (\sqrt{\tau} \abs{\tilde x}^{\theta + 1})^{1/3}$. The first piece is
\begin{multline*}
C \int_0^{(\sqrt{\tau} \abs{\tilde x}^{\theta + 1})^{1/3}} \dd t \, t^{d - 1} \, \ee^{-t^2 / 2 - c \sqrt{\tau} \abs{\tilde x}^{\theta + 1} / t} \leq C \int_0^{(\sqrt{\tau} \abs{\tilde x}^{\theta + 1})^{1/3}} \dd t \, t^{d - 1} \, \ee^{- c \sqrt{\tau} \abs{\tilde x}^{\theta + 1} / t}
\\
\leq C (\sqrt{\tau} \abs{\tilde x}^{\theta + 1})^{d} \ee^{-c (\sqrt{\tau} \abs{\tilde x}^{\theta + 1})^{2/3}} \leq 
C \ee^{-c (\sqrt{\tau} \abs{\tilde x}^{\theta + 1})^{2/3}}\,,
\end{multline*}
where in the second step we used Lemma \ref{lem:el_integral} and that we are in case (ii). The second piece is
\begin{equation*}
C \int_{(\sqrt{\tau} \abs{\tilde x}^{\theta + 1})^{1/3}}^\infty \dd t \, t^{d - 1} \, \ee^{-t^2 / 2} \leq C \ee^{-c (\sqrt{\tau} \abs{\tilde x}^{\theta + 1})^{2/3}}\,.
\end{equation*}
Summarizing, we have proved that
\begin{equation} \label{eU_est2}
\E \ee^{-c \tau \abs{\tilde x}^{\theta + 1} / \abs{B(\tau)}} \leq \ind{\tau \leq \abs{\tilde x}^{-2 (\theta + 1)}} + C \ee^{-c (\sqrt{\tau} \abs{\tilde x}^{\theta + 1})^{2/3}}\,.
\end{equation}
The claim now follows by plugging \eqref{eU_est1} and \eqref{eU_est2} into \eqref{U_prop_est}.
\end{proof}

\begin{lemma} \label{lem:el_integral}
If $s \leq 2$ and $u \geq v > 0$ then
\begin{equation*}
\int_0^v \dd t \, t^{-s} \ee^{-u/t} \leq u^{1-s} \ee^{-u/v}\,.
\end{equation*}
\end{lemma}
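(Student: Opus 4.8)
The plan is to reduce the integral to an incomplete Gamma integral by the substitution $t = u/r$, and then bound the resulting integrand crudely using the hypotheses $s \leq 2$ and $u \geq v > 0$.

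First I would perform the change of variables $t = u/r$, so that $\dd t = -(u/r^2)\,\dd t$, and the limits $t \in (0,v]$ become $r \in [u/v, \infty)$. This gives
\begin{equation*}
\int_0^v \dd t \, t^{-s} \ee^{-u/t} = u^{1-s} \int_{u/v}^\infty \dd r \, r^{s-2} \, \ee^{-r}\,.
\end{equation*}
Thus the claim is equivalent to the bound $\int_{u/v}^\infty \dd r \, r^{s-2} \, \ee^{-r} \leq \ee^{-u/v}$.

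Next I would observe that the hypothesis $u \geq v > 0$ forces $u/v \geq 1$, so the entire range of integration satisfies $r \geq 1$. Since $s \leq 2$ we have $s - 2 \leq 0$, hence $r^{s-2} \leq 1$ for all $r \geq 1$. Therefore
\begin{equation*}
\int_{u/v}^\infty \dd r \, r^{s-2} \, \ee^{-r} \leq \int_{u/v}^\infty \dd r \, \ee^{-r} = \ee^{-u/v}\,,
\end{equation*}
which combined with the previous display yields the claim.

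There is no real obstacle here: the statement is elementary and the only slightly delicate point is making sure the exponent $s-2$ has the correct sign on the correct region, which is exactly what the two hypotheses $s \leq 2$ and $u \geq v$ guarantee (the latter ensuring $r \geq 1$ throughout). The substitution and the monotonicity estimate are both routine.
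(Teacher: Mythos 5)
Your proof is correct: the substitution $t=u/r$ turns the integral into $u^{1-s}\int_{u/v}^{\infty}\dd r\, r^{s-2}\ee^{-r}$, and since $u/v\geq 1$ and $s-2\leq 0$ the factor $r^{s-2}\leq 1$ on the whole range, giving exactly the claimed bound. The paper states this lemma without proof (it is treated as elementary), so there is nothing to compare against; your argument is the natural one, and the only blemish is the typo $\dd t = -(u/r^2)\,\dd t$, which should read $\dd t = -(u/r^2)\,\dd r$.
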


\begin{proof}[Proof of Proposition \ref{prop:trap_bound}]
Using \eqref{Gamma_Q}, \eqref{def_Q}, \eqref{Q_Y},  and the first estimate of \eqref{Y_est2}, and taking the limit $\eta \to 0$, we find that
\begin{equation*}
\nu^p \abs{(\wh \Gamma_{p})_{\f x, \tilde {\f x}}} \leq C_v \max_{\pi \in S_p} \, \prod_{i = 1}^{p} \pBB{\nu \sum_{r \in \nu \N^*} \ee^{-\kappa r}  r \int  \bb W^{r, 0}_{x_i, \tilde x_{\pi(i)}}(\dd \omega) \, \ee^{-\int_0^{r} \dd t \, U(\omega(t))}}\,.
\end{equation*}
Thus, it suffices to show that
\begin{equation} \label{sum_prop_est}
\nu \sum_{r \in \nu \N^*} \ee^{-\kappa r} \, r \, \int  \bb W^{r, 0}_{x, \tilde x}(\dd \omega) \, \ee^{-\int_0^{r} \dd t \, U(\omega(t))} \leq C (1 + \abs{x} + \abs{\tilde x})^{-\theta (2 - d/2)} \, \ee^{-c \abs{x - \tilde x}}\,.
\end{equation}
Moreover, if $\abs{x}, \abs{\tilde x} \leq 1$ then \eqref{sum_prop_est} is obvious. Hence, for the rest of the proof we assume that $\abs{\tilde x} \geq \abs{x}$ (without less of generality) and $\abs{\tilde x} \geq 1$. We have to show that
\begin{equation} \label{sum_prop_est2}
\nu \sum_{r \in \nu \N^*} \ee^{-\kappa r} \, r \, \int  \bb W^{r, 0}_{x, \tilde x}(\dd \omega) \, \ee^{-\int_0^{r} \dd t \, U(\omega(t))} \leq C \abs{\tilde x}^{-\theta (2 - d/2)} \, \ee^{-c \abs{x - \tilde x}}\,.
\end{equation}

We prove \eqref{sum_prop_est2} using Lemma \ref{lem:prop_est}, so that the left-hand side of \eqref{sum_prop_est2} is bounded by the sum of four terms arising from the four terms on the right-hand side of \eqref{prop_est4}, denoted by (a), (b), (c), (d).

\paragraph{Term \normalfont{(a)}} We have
\begin{equation*}
\text{(a)} = C \nu \sum_{r \in \nu \N^*} \ee^{-\kappa r} \, r^{1-d/2} \, \ee^{-\abs{x - \tilde x}^2 / 2r} \, \ind{r \leq C \abs{\tilde x}^{-2 (\theta + 1)}}\,.
\end{equation*}
If we estimate $\ee^{-\abs{x - \tilde x}^2 / 2r} \leq 1$ we obtain the rough bound
\begin{equation} \label{est_a_rough}
\text{(a)} \leq C \int_0^{C \abs{\tilde x}^{-2 (\theta + 1)}} \dd r \,  r^{1-d/2} \leq C \abs{\tilde x}^{-(4 - d)(\theta + 1)}\,.
\end{equation}
On the other hand, if $\abs{x - \tilde x} \abs{\tilde x}^{2 (\theta + 1)} \geq C$ we use Lemma \ref{lem:el_integral} to obtain
\begin{align}
\text{(a)} &\leq C \int_0^{C \abs{\tilde x}^{-2 (\theta + 1)}} \dd r\,  r^{1-d/2} \, \ee^{-\abs{x - \tilde x}^2 / 2r}
\notag
\\
&\leq C \abs{x - \tilde x}^{2 - d/2} \, \ee^{- c \abs{x - \tilde x}^2 \abs{\tilde x}^{2 (\theta + 1)}}
\notag \\ \label{est_a_sharp}
&\leq C \abs{\tilde x}^{-(4 - d)(\theta + 1)}\, \ee^{- c \abs{x - \tilde x}^2 \abs{\tilde x}^{2 (\theta + 1)}}\,,
\end{align}
and by \eqref{est_a_rough} we find that \eqref{est_a_sharp} holds for all $x, \tilde x$. The right-hand side of \eqref{est_a_sharp} is bounded by the right-hand side of \eqref{sum_prop_est2}.

The strategy for the other three terms is analogous.

\paragraph{Term \normalfont{(b)}}
We have
\begin{equation*}
\text{(b)} = C \nu \sum_{r \in \nu \N^*} \ee^{-\kappa r} \, r^{1-d/2} \, \ee^{-\abs{x - \tilde x}^2 / 2r} \, \ee^{-c \abs{\tilde x}^\theta r}\,.
\end{equation*}
We obtain the rough bound
\begin{equation} \label{term_b_rough}
\text{(b)} \leq C \int_0^\infty \dd r \,  r^{1-d/2} \, \ee^{-c \abs{\tilde x}^\theta r} \leq C \abs{\tilde x}^{-\theta (2 - d/2)}\,.
\end{equation}
For a more precise bound, we estimate
\begin{equation} \label{term_b_est}
\text{(b)} \leq C \int_0^{\abs{x - \tilde x} \abs{\tilde x}^{-\theta/2}} \dd r \, r^{1-d/2} \, \ee^{-\abs{x - \tilde x}^2 / 2r} \, \ee^{-c \abs{\tilde x}^\theta r} + C \int_{\abs{x - \tilde x} \abs{\tilde x}^{-\theta/2}}^\infty \dd r \, r^{1-d/2} \, \ee^{-\abs{x - \tilde x}^2 / 2r} \, \ee^{-c \abs{\tilde x}^\theta r}\,.
\end{equation}
If $\abs{x - \tilde x} \abs{\tilde x}^{\theta/2} \geq C$, the first term of \eqref{term_b_est} is estimated using Lemma \ref{lem:el_integral} by
\begin{equation*}
C \int_0^{\abs{x - \tilde x} \abs{\tilde x}^{-\theta/2}} \dd r \, r^{1-d/2} \, \ee^{-\abs{x - \tilde x}^2 / 2r} \leq C \abs{\tilde x}^{-\theta (2 - d/2)} \, \ee^{-c \abs{x - \tilde x} \abs{\tilde x}^{\theta/2}}\,.
\end{equation*}
The second term of \eqref{term_b_est} is estimated by
\begin{equation*}
C \int_{\abs{x - \tilde x} \abs{\tilde x}^{-\theta/2}}^\infty \dd r \, r^{1-d/2} \, \ee^{-c \abs{\tilde x}^\theta r}
\leq C \abs{\tilde x}^{-\theta (2 - d/2)} \, \ee^{-c \abs{x - \tilde x} \abs{\tilde x}^{\theta/2}}\,.
\end{equation*}
Recalling \eqref{term_b_rough}, we therefore deduce that
\begin{equation*}
\text{(b)} \leq C \abs{\tilde x}^{-\theta (2 - d/2)} \, \ee^{-c \abs{x - \tilde x} \abs{\tilde x}^{\theta/2}}
\end{equation*}
for all $x, \tilde x$. This is bounded by the right-hand side of \eqref{sum_prop_est2}.

\paragraph{Term \normalfont{(c)}}
By splitting the $r$ integral into the regions $r \leq |x-\tilde x|$ and $r>|x-\tilde x|$, an analogous argument yields
\begin{equation*}
\text{(c)} \leq C \int_0^\infty \dd r \,  \ee^{-\kappa r} \, r^{1-d/2} \, \ee^{-\abs{x - \tilde x}^2 / 2r} \, \ee^{-c \abs{\tilde x}^{1 + \theta/2}} \leq C \ee^{-c \abs{\tilde x}^{1 + \theta/2}} \, \ee^{-c \abs{x - \tilde x}}\,,
\end{equation*}
which is bounded by the right-hand side of \eqref{sum_prop_est2}.

\paragraph{Term \normalfont{(d)}} Finally, an analogous argument yields
\begin{equation*}
\text{(d)} \leq C \int_0^\infty \dd r \,  \ee^{-\kappa r} \, r^{1-d/2} \, \ee^{-\abs{x - \tilde x}^2 / 2r} \, \ee^{-c (\sqrt{r} \abs{\tilde x}^{\theta + 1})^{2/3}} \leq C \abs{\tilde x}^{-(4 - d)(\theta + 1)} \, \ee^{-c \abs{x - \tilde x}^{1/2} \abs{\tilde x}^{(\theta + 1)/2}}\,,
\end{equation*}
which is bounded by the right-hand side of \eqref{sum_prop_est2}.

This concludes the proof of \eqref{sum_prop_est2}.
\end{proof}

\appendix

\section{Wick-ordered interaction and correlation functions}

\begin{lemma} \label{lem:constr_W}
Suppose that $v$ is bounded and that \eqref{tr_h_assump} holds with $s = 2$. Let $m \geq 1$.
\begin{enumerate}
\item
The sequence $(W_K^v)_{K \in \N}$ is a Cauchy sequence in $L^m(\mu_{h^{-1}})$, and we denote its limit by $W^v$.
\item
If $v$ and $\tilde v$ are bounded then
\begin{equation*}
\norm{W^v - W^{\tilde v}}_{L^m(\mu_{h^{-1}})} = O(\norm{v - \tilde v}_{L^\infty})\,.
\end{equation*}
\end{enumerate}
\end{lemma}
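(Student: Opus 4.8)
The statement to prove is Lemma~\ref{lem:constr_W}, whose two parts concern the Wick-ordered truncated interaction $W_K^v$ defined in \eqref{def_W_K}. The natural approach is to work entirely at the level of the Gaussian Wiener chaos associated with $\mu_{h^{-1}}$, using hypercontractivity to reduce all $L^m$-estimates to $L^2$-estimates. First I would observe that $\wick{\abs{P_K\phi(x)}^2} = \abs{P_K\phi(x)}^2 - \varrho_K(x)$ lies in the (complex) second Wiener chaos, so that $W_K^v - W_{K'}^v$ (for $K' > K$) lies in a fixed finite sum of chaoses of bounded degree, uniformly in $K$. By the Gaussian hypercontractivity estimate (Nelson's estimate), there is a constant $C_m$ such that $\norm{X}_{L^m(\mu_{h^{-1}})} \le C_m \norm{X}_{L^2(\mu_{h^{-1}})}$ for every $X$ in a chaos of degree $\le 2$; hence it suffices to prove the $L^2$ versions of both claims. (One should note here that the field $\phi$ is complex, and $W_K^v$ mixes $\phi$ and $\bar\phi$; but the gauge invariance $\phi \mapsto \alpha\phi$ noted after \eqref{def_gamma} means only the ``balanced'' pieces survive, and the chaos decomposition still applies.)

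For part~(i), the plan is to compute $\norm{W_K^v - W_{K'}^v}_{L^2(\mu_{h^{-1}})}^2$ explicitly via Wick's rule. Expanding the square and using that $\int\mu_{h^{-1}}(\dd\phi)\,\wick{\abs{P_K\phi(x)}^2}\,\wick{\abs{P_K\phi(\tilde x)}^2} = \abs{(P_K h^{-1})_{x,\tilde x}}^2$ (and the analogous mixed $K,K'$ terms), one finds that
\begin{equation*}
\norm{W_K^v - W_{K'}^v}_{L^2}^2 = \frac{1}{2}\int \dd x\,\dd\tilde x\,\dd y\,\dd\tilde y\,
v(x-\tilde x)\,v(y - \tilde y)\,\Delta_{K,K'}(x,y)\,\Delta_{K,K'}(\tilde x,\tilde y)\,,
\end{equation*}
where $\Delta_{K,K'}(x,y) = (P_K h^{-1})_{x,y} - (P_{K'}h^{-1})_{x,y}$, up to relabelling; more precisely one gets a sum of such terms with some arguments paired as $\abs{(\cdot)}^2$ and others as products $(\cdot)(\cdot)$. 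Bounding $v$ by $\norm{v}_{L^\infty}$ and using Cauchy--Schwarz, each such term is controlled by $\norm{v}_{L^\infty}^2$ times a product of Hilbert--Schmidt-type norms of $P_K h^{-1} - P_{K'} h^{-1}$, which is $\sum_{k > K} \lambda_k^{-2} \to 0$ as $K \to \infty$ by \eqref{tr_h_assump} with $s=2$ (equivalently $\tr h^{-2} < \infty$). This shows the Cauchy property and defines $W^v$.

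For part~(ii), I would run exactly the same computation but for the difference $W_K^v - W_K^{\tilde v}$, which after passing to the limit $K\to\infty$ (justified by part~(i) applied to both $v$ and $\tilde v$) becomes a quadratic form in $v - \tilde v$: one gets $\norm{W^v - W^{\tilde v}}_{L^2}^2$ bounded by a constant times $\norm{v-\tilde v}_{L^\infty}^2 \,(\tr h^{-2})^2$ (again via Cauchy--Schwarz, pulling out $\norm{v - \tilde v}_{L^\infty}$ and leaving Hilbert--Schmidt norms of $h^{-1}$ restricted to finite-rank projections, which are uniformly bounded by $\tr h^{-2}$). Then hypercontractivity upgrades this to the $L^m$ bound $\norm{W^v - W^{\tilde v}}_{L^m(\mu_{h^{-1}})} = O(\norm{v-\tilde v}_{L^\infty})$. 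The main technical obstacle is bookkeeping: carefully enumerating the pairings produced by Wick's rule for the product of two (or four) second-chaos elements involving both $\phi$ and $\bar\phi$, and checking that the terms where the $x$-arguments and $\tilde x$-arguments get ``cross-paired'' are still controlled by the same Hilbert--Schmidt quantities rather than by something like $\tr h^{-1}$ (which may diverge for $d \ge 3$); here the positive-type hypothesis on $v$ is not needed, only its boundedness, and the key point is that every kernel appearing is $(P_K h^{-1})$, which is Hilbert--Schmidt uniformly in $K$ precisely because $\tr h^{-2} < \infty$.
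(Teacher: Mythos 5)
Your proposal is correct and takes essentially the same route as the paper, which also reduces the lemma to a routine application of Wick's rule for $\mu_{h^{-1}}$ together with the key fact that $h^{-1}$ (hence $P_K h^{-1}$, uniformly in $K$) is Hilbert--Schmidt because $\tr h^{-2} < \infty$, so that every pairing term is controlled by products of such Hilbert--Schmidt norms with at least one small factor $\|(P_{K'}-P_K)h^{-1}\|_{\fra S^2}$. The only slip is that $W_K^v$ is quartic in the field, so hypercontractivity must be invoked for chaoses of degree at most $4$ rather than $2$; this changes nothing, since the hypercontractive constant depends only on the fixed maximal degree.
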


\begin{proof}[Proof of Lemma \ref{lem:constr_W}]
The proof is a routine application of Wick's rule for the Gaussian measure $\mu_{h^{-1}}$. The estimates are straightforward because the Green function
\begin{equation*}
(\gamma^0_1)_{x, \tilde x} = \int \mu_{h^{-1}} (\dd \phi) \, \bar \phi(\tilde x) \phi(x)
\end{equation*}
is in $L^2(\Lambda^2) \equiv \fra S^2(\cal H)$, by Assumption \ref{ass:torus}. A detailed argument is given in \cite[Lemma 1.5]{frohlich2017gibbs}.
\end{proof}

\begin{lemma} \label{lem:conv_n_Wick}
Let $\wick{\fra n_K}$ be as in \eqref{def_n_K} and suppose that \eqref{tr_h_assump} holds with $s = 2$. For any bounded $f$ and $m \geq 1$, the sequence $(\scalar{f}{\wick{\fra n_K}})_{K \in \N}$ is a Cauchy sequence in $L^m(\mu_{h^{-1}})$, and we denote its limit by $\scalar{f}{\wick{\fra n}}$.
Under $\mu_{h^{-1}}$, the random variable $\scalar{f}{\wick{\fra n}}$ has subexponential tails: there is a constant $C$, depending on $\norm{f}_{L^\infty}$ and $\tr h^{-2}$, such that $\int \dd \mu_{h^{-1}} \, \scalar{f}{\wick{\fra n}}^m \leq (Cm)^m$ for all $m \in \N^*$.
\end{lemma}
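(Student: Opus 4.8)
The plan is to reduce the statement to a fact about a single Wick-ordered quadratic form in a complex Gaussian sequence and then to invoke hypercontractivity of the second Wiener chaos. Using the representation \eqref{phi_series} of $\phi$ together with \eqref{def_Pk} and \eqref{def_varrho_K}, for bounded real $f$ one has the identity
\[
\scalar{f}{\wick{\fra n_K}} = \sum_{k,l = 0}^{K} B_{kl}\,\pb{\bar X_k X_l - \delta_{kl}}\,, \qquad B_{kl} \deq \scalar{u_k}{h^{-1/2} f h^{-1/2} u_l}\,,
\]
because $\scalar{f}{\varrho_K} = \sum_{k \leq K} \lambda_k^{-1} \scalar{u_k}{f u_k}$ by \eqref{def_varrho_K}; here $B \deq h^{-1/2} f h^{-1/2}$ is self-adjoint (as $f$ is real-valued), so the right-hand side is real. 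First I would record that $B \in \fra S^2$: by H\"older's inequality for Schatten norms (with $\frac14 + \frac1\infty + \frac14 = \frac12$) and \eqref{tr_h_assump} with $s=2$,
\[
\norm{B}_{\fra S^2} \leq \norm{h^{-1/2}}_{\fra S^4}^2 \, \norm{f}_{L^\infty} = (\tr h^{-2})^{1/2} \, \norm{f}_{L^\infty} < \infty\,.
\]

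Next I would establish the $L^2$ Cauchy property. A direct computation with the fourth moments of complex Gaussians, $\E \qb{\bar X_k X_l X_m \bar X_n} = \delta_{kl}\delta_{mn} + \delta_{km}\delta_{ln}$, gives the orthogonality $\E\, \absb{\sum_{k,l} C_{kl}(\bar X_k X_l - \delta_{kl})}^2 = \norm{C}_{\fra S^2}^2$ for any Hilbert--Schmidt matrix $C$. Applying this to $\scalar{f}{\wick{\fra n_{K'}}} - \scalar{f}{\wick{\fra n_K}}$ for $K < K'$, whose matrix has entries $B_{kl} \ind{k,l \leq K',\, \max(k,l) > K}$, yields $\normb{\scalar{f}{\wick{\fra n_{K'}}} - \scalar{f}{\wick{\fra n_K}}}_{L^2}^2 \leq \sum_{\max(k,l) > K} \abs{B_{kl}}^2 \to 0$ since $\sum_{k,l} \abs{B_{kl}}^2 = \norm{B}_{\fra S^2}^2 < \infty$. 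Thus $(\scalar{f}{\wick{\fra n_K}})_K$ is Cauchy in $L^2(\mu_{h^{-1}})$; call the limit $\scalar{f}{\wick{\fra n}}$. To upgrade to $L^m$, write $X_k = (G_k + \ii H_k)/\sqrt{2}$ with $(G_k, H_k)$ real standard Gaussians, so that each $\scalar{f}{\wick{\fra n_K}}$ lies in the (real) Wiener chaos of order at most $2$; by hypercontractivity of the Ornstein--Uhlenbeck semigroup (see e.g.\ \cite{Simon74}) one has $\norm{Z}_{L^m} \leq (m-1) \norm{Z}_{L^2}$ for every such $Z$ and every $m \geq 2$. This turns the $L^2$ Cauchy estimate into an $L^m$ Cauchy estimate for all $m \geq 2$ (the range $1 \leq m \leq 2$ being immediate since $\mu_{h^{-1}}$ is a probability measure), proving the first assertion. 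Passing to the limit gives $\normb{\scalar{f}{\wick{\fra n}}}_{L^m} \leq (m-1)\,\norm{f}_{L^\infty}(\tr h^{-2})^{1/2}$, whence
\[
\int \dd\mu_{h^{-1}} \, \scalar{f}{\wick{\fra n}}^m \leq \normb{\scalar{f}{\wick{\fra n}}}_{L^m}^m \leq \pb{(m-1)\,\norm{f}_{L^\infty}(\tr h^{-2})^{1/2}}^m \leq (Cm)^m
\]
with $C$ a fixed multiple of $\norm{f}_{L^\infty}(\tr h^{-2})^{1/2}$, as claimed.

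The computations in the first two steps are of the routine Wick-rule type already used in the proof of Lemma \ref{lem:constr_W} (cf.\ \cite{frohlich2017gibbs}); the only genuinely delicate points are the bookkeeping of conjugates in the complex chaos and the passage to the real Wiener chaos so that hypercontractivity applies. I expect that to be the main obstacle: a fully self-contained alternative via Wick's theorem and a Feynman-diagram count is possible, but it would naturally involve $\norm{B}_{\fra S^\infty}$ (hence $\lambda_0^{-1}$) as well and give a weaker constant, so it is the use of hypercontractivity that delivers the sharp dependence of $C$ on $\norm{f}_{L^\infty}$ and $\tr h^{-2}$ stated in the lemma.
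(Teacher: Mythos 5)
Your proof is correct, and it takes a partly different route from the paper's. The paper disposes of this lemma by a direct application of Wick's rule, exactly as in Lemma \ref{lem:constr_W}: one expands the $m$-th (or $2m$-th) moment of $\scalar{f}{\wick{\fra n_K}}$ into pairings, organizes them into cycles, and bounds each cycle by traces of powers of the operator you call $B = h^{-1/2} f h^{-1/2}$, the only input being $B \in \fra S^2$ (the detailed diagrammatic argument is deferred to \cite[Lemma 1.5]{frohlich2017gibbs}). You share the same starting point — the identity $\scalar{f}{\wick{\fra n_K}} = \sum_{k,l \leq K} B_{kl}(\bar X_k X_l - \delta_{kl})$ and the bound $\norm{B}_{\fra S^2} \leq \norm{f}_{L^\infty} (\tr h^{-2})^{1/2}$ — and your $L^2$ Cauchy step is the same Wick computation as the paper's, but for the higher moments you replace the pairing count by hypercontractivity of the second Wiener chaos, $\norm{Z}_{L^m} \leq (m-1)\norm{Z}_{L^2}$, after rewriting the complex chaos in terms of the real Gaussians $G_k, H_k$. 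This buys you the clean constant $(Cm)^m$ with no combinatorics, at the cost of invoking Nelson's estimate as a black box; the paper's diagrammatic route is self-contained modulo the reference and is the one its authors have in mind. One small correction to your closing remark: the direct Wick-rule count does not genuinely cost anything, since $\norm{B}_{\fra S^\infty} \leq \norm{B}_{\fra S^2}$, so every cycle of length $\ell \geq 2$ is bounded by $\norm{B}_{\fra S^2}^{\ell}$ and the resulting constant again depends only on $\norm{f}_{L^\infty}$ and $\tr h^{-2}$; no dependence on $\lambda_0^{-1}$ arises. Also, the lemma allows arbitrary bounded (possibly complex) $f$; your argument covers this by splitting $f$ into real and imaginary parts, which is worth a sentence but is not a gap.
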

\begin{proof}
The proof is a simple consequence of Wick's rule, analogous to the proof of Lemma \ref{lem:constr_W}.
\end{proof}

The following notations are useful when dealing with correlation functions and their Wick-ordered versions.
\begin{definition} \label{def:wick_cor}
Recall the notation $[p] = \{ 1, 2, \dots , p \}$ for $p \in \mathbb{N}^*$. For $I = \{ i_1 < \dots < i_k \}  \subset [p]$, we set ${\f x}_I = (x_{i_1}, \dots , x_{i_k})$ and $I^c \deq [p] \setminus I$.
For $I, \tilde I \subset [p]$, we use the shorthand
\begin{equation*}
\gamma_{\f x_I, \tilde {\f x}_{\tilde I}} \deq \ind{\abs{I} = \abs{\tilde I}} \, (\gamma_{\abs{I}})_{\f x_I, \tilde {\f x}_{\tilde I}}\,,
\end{equation*}
and similarly for $\wh \gamma_{\f x_I, \tilde {\f x}_{\tilde I}}$ and $\gamma^0_{\f x_I, \tilde {\f x}_{\tilde I}}$, as well as the corresponding reduced density matrices $\Gamma_{\f x_I, \tilde {\f x}_{\tilde I}}$, $\wh \Gamma_{\f x_I, \tilde {\f x}_{\tilde I}}$, and $\Gamma_{\f x_I, \tilde {\f x}_{\tilde I}}^0$.
\end{definition}

\begin{lemma} \label{lem:wh_gamma_id}
With $\wh \gamma$ defined as in \eqref{def_gamma_hat}, we have
\begin{equation} \label{Wick_operators}
\wh\gamma_p = \sum_{k = 0}^p \binom{p}{k}^2 (-1)^{p - k} \, P_p \pb{\gamma_{k} \otimes \gamma_{p-k}^0} P_p\,, \qquad
\gamma_p = \sum_{k = 0}^p \binom{p}{k}^2  \, P_p \pb{\wh \gamma_{k} \otimes \gamma_{p-k}^0} P_p\,,
\end{equation}
or, in terms of kernels,
\begin{equation} \label{Wick_kernels}
\widehat{\gamma}_{\f x, \tilde {\f x}} =  \sum_{I, \tilde I \subset [p]} (-1)^{\abs{I^c}} \, \gamma_{{\f x}_I,{\tilde {\f x}}_{\tilde I}} \, \gamma^0_{{\f x}_{I^c}, {\tilde {\f x}}_{\tilde{I}^{c}}}\,,
\qquad 
\gamma_{\f x, \tilde {\f x}} =  \sum_{I, \tilde I \subset [p]} \, \wh \gamma_{{\f x}_I,{\tilde {\f x}}_{\tilde I}} \, \gamma^0_{{\f x}_{I^c}, {\tilde {\f x}}_{\tilde{I}^{c}}}\,.
\end{equation}
Under the definition \eqref{def_Gamma_renormalized}, the same relations hold for $\gamma$ replaced by $\Gamma$.
\end{lemma}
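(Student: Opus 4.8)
\textbf{Proof plan for Lemma \ref{lem:wh_gamma_id}.}

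The plan is to prove the two operator identities \eqref{Wick_operators} first, then observe that the kernel identities \eqref{Wick_kernels} are an immediate translation of them, and finally note that the derivation only uses the \emph{algebraic} structure of the definition \eqref{def_Gamma_renormalized} (which is literally \eqref{def_gamma_hat}'s counterpart with $\gamma$ replaced by $\Gamma$), so the last sentence follows with no extra work. The starting point is the combinatorial meaning of Wick ordering: from \eqref{wick_expanded}, extended to complex Gaussians by linearity, applied to the monomial $\bar\phi(\tilde x_1)\cdots\bar\phi(\tilde x_p)\phi(x_1)\cdots\phi(x_p)$ under the Gaussian measure $\mu_{h^{-1}}$. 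Since a pairing of a $\phi$ with a $\phi$ (or a $\bar\phi$ with a $\bar\phi$) contributes zero (the covariance \eqref{cov_h} pairs only $f\in\cal H_{-1}$ against $\phi$ and $\bar\phi$ against $g$), the only surviving partial pairings pair some subset of the $\bar\phi$'s with an equinumerous subset of the $\phi$'s, each such pair contributing a factor $-(\gamma_1^0)$. Writing $I\subset[p]$ for the set of indices $i$ such that $\phi(x_i)$ is \emph{not} paired and $\tilde I\subset[p]$ for the set of indices such that $\bar\phi(\tilde x_j)$ is not paired, a pairing is the data of $I,\tilde I$ together with a bijection between $I^c$ and $\tilde I^c$; summing the $\prod(-\gamma_1^0)$ over all such bijections yields exactly $(-1)^{\abs{I^c}}\gamma^0_{\f x_{I^c},\tilde{\f x}_{\tilde I^c}}$ (using $\gamma_p^0 = p! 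P_p(h^{-1})^{\otimes p}$, i.e.\ $\gamma^0_{\f x,\tilde{\f x}} = \sum_{\pi\in S_p}\prod_i(\gamma_1^0)_{x_i,\tilde x_{\pi(i)}}$). The unpaired factors $\prod_{i\in I}\phi(x_i)\prod_{j\in\tilde I}\bar\phi(\tilde x_j)$ remain inside the integral against $\frac1\zeta\mu_{h^{-1}}(\dd\phi)\ee^{-W}$, producing $\gamma_{\f x_I,\tilde{\f x}_{\tilde I}}$; note this vanishes unless $\abs{I}=\abs{\tilde I}$ by the gauge invariance already recorded after \eqref{def_gamma}. This gives the first identity of \eqref{Wick_kernels} directly.

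To get the first identity of \eqref{Wick_operators} from the kernel identity, one groups the sum over $I,\tilde I$ by $k\deq\abs{I}=\abs{\tilde I}$: there are $\binom pk$ choices of $I$ and $\binom pk$ of $\tilde I$, and after symmetrizing with $P_p$ on both sides (which is legitimate since $\wh\gamma_p$, $\gamma_k$, $\gamma_{p-k}^0$ are all symmetric under simultaneous permutation of their left and right arguments) each such pair contributes the same operator $P_p(\gamma_k\otimes\gamma_{p-k}^0)P_p$; the sign is $(-1)^{p-k}$. This yields $\wh\gamma_p=\sum_{k=0}^p\binom pk^2(-1)^{p-k}P_p(\gamma_k\otimes\gamma_{p-k}^0)P_p$. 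The second (inverse) identity in each of \eqref{Wick_operators} and \eqref{Wick_kernels} is then obtained by Möbius-type inversion: one checks that the linear map sending the family $(\gamma_k)_k$ to $(\wh\gamma_p)_p$ via the ``$\otimes\gamma^0$'' convolution with binomial weights and alternating signs is inverted by the same convolution with binomial weights and all-positive signs. Concretely, substitute the first identity into the claimed second identity and verify the resulting coefficient of $P_p(\gamma_m\otimes\gamma^0_{p-m})P_p$ equals $\delta_{mp}$; this reduces to the elementary identity $\sum_{k=m}^{p}\binom pk\binom km(-1)^{k-m}=\delta_{mp}$ (equivalently $\sum_j\binom{p-m}{j}(-1)^j=\delta_{p,m}$), after using $\binom pk\binom km=\binom pm\binom{p-m}{k-m}$ and the fact that convolution of the free correlation functions is associative and commutative in the sense $P_p(\gamma^0_a\otimes\gamma^0_b)P_p=\gamma^0_{a+b}$ up to the correct binomial factor, which itself follows from $\gamma_p^0=p!P_p(h^{-1})^{\otimes p}$ and $P_p(P_a\otimes P_b) P_p$ acting on symmetric tensors.

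The main obstacle, and the only place demanding care, is the bookkeeping of the binomial factors $\binom pk^2$ versus $\binom pk$ when passing between the kernel form and the operator form, and correspondingly in the inversion step: one must track precisely how many unordered pairs $(I,\tilde I)$ of $k$-element subsets give rise to the single symmetrized operator $P_p(\gamma_k\otimes\gamma_{p-k}^0)P_p$, and confirm that the ``obvious'' convolution identity $P_{a+b}(\gamma^0_a\otimes\gamma^0_b)P_{a+b}=\binom{a+b}{a}^{-1}$-type relation is used with the correct normalization. Once this combinatorial identity $\sum_{k}\binom pk\binom km(-1)^{k-m}=\delta_{mp}$ is in hand, the inverse relations are forced. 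Everything else is routine application of Wick's rule (\eqref{wick_expanded}) and of the already-stated identities $\gamma_p^0 = p!P_p(h^{-1})^{\otimes p}$ and the gauge invariance of $\mu_{h^{-1}}(\dd\phi)\ee^{-W}$; and since none of this used any property of $W$ or of $\ee^{-W}$ beyond gauge invariance and sub-Gaussianity, the identical argument with $\mu_{h^{-1}}(\dd\phi)\ee^{-W(\phi)}$ replaced by $\frac{\ee^{-H}}{Z}$ and Wick's theorem for quasi-free states (Lemma \ref{Quantum Wick theorem}) in place of Wick's rule gives the statement for $\Gamma$, which is exactly the content of definition \eqref{def_Gamma_renormalized}.
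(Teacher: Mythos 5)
Your proof of the first identity coincides with the paper's: expand the Wick-ordered monomial via \eqref{wick_expanded}, use gauge invariance to kill the $\phi$--$\phi$ and $\bar\phi$--$\bar\phi$ pairings, and integrate the unpaired factors against $\frac{1}{\zeta}\mu_{h^{-1}}(\dd \phi)\,\ee^{-W}$. Where you genuinely diverge is the inverse identity. The paper performs no combinatorial inversion at all: it uses the companion expansion $u_1\cdots u_n=\sum_{\Pi\in\fra M([n])}\wick{\prod_{i\in[n]\setminus[\Pi]}u_i}\prod_{\{i,j\}\in\Pi}\pb{\int\mu_{\cal C}(\dd u)\,u_iu_j}$, obtained by applying the Leibniz rule to the generating-function identity behind \eqref{def_Wick}, and then repeats the first argument verbatim to get the second kernel identity. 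Your route---substitute the first relation of \eqref{Wick_operators} into the claimed second one and check that the coefficient of $P_p\pb{\gamma_m\otimes\gamma^0_{p-m}}P_p$ equals $\delta_{mp}$---also closes, and the bookkeeping you flag does work out: with $P_{a+b}\pb{\gamma^0_a\otimes\gamma^0_b}P_{a+b}=\binom{a+b}{a}^{-1}\gamma^0_{a+b}$ and $\binom{p}{k}\binom{k}{m}=\binom{p}{m}\binom{p-m}{k-m}$, the squared binomials and the factorial normalization of the free convolution cancel and the coefficient becomes $\binom{p}{m}^2\sum_{j}\binom{p-m}{j}(-1)^j=\delta_{mp}$. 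The paper's mechanism is more structural (both identities come from one Wick-type expansion, no coefficient check), while yours makes explicit that the inversion is purely algebraic in the families $(\gamma_k)$, $(\wh\gamma_k)$, $(\gamma^0_k)$---which is exactly what makes the $\Gamma$ statement immediate.

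On that last point, one sentence of yours is imprecise: you cannot run ``the identical argument'' with $\ee^{-H}/Z$ in place of $\frac{1}{\zeta}\mu_{h^{-1}}(\dd\phi)\,\ee^{-W}$ and Lemma \ref{Quantum Wick theorem} in place of Wick's rule, because the interacting Gibbs state is not quasi-free (and Wick ordering of non-commuting operators inside such a trace is not what \eqref{def_Gamma_renormalized} says). This does not damage your proof: for $\Gamma$ the first identity of \eqref{Wick_operators} is the definition \eqref{def_Gamma_renormalized}, and your algebraic inversion uses only $\Gamma^0_p=p!\,P_p(\Gamma^0_1)^{\otimes p}$, i.e.\ Lemma \ref{Quantum Wick theorem} applied to the \emph{free} state, so the quantum relations follow from your argument as written---just not by the justification you gave in that sentence.
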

\begin{proof}
To begin with, it is easy to check that the first identities of \eqref{Wick_operators} and \eqref{Wick_kernels} are equivalent, as are the corresponding second identities. Hence, it suffices to prove \eqref{Wick_kernels}.
By gauge invariance $\phi \mapsto \alpha \phi$, $\abs{\alpha} = 1$, of $\mu_{h^{-1}}$ and $W(\phi)$, we deduce that $\int \dd \mu_{h^{-1}}(\phi) \, \phi(x) \, \phi(y) = 0$. Thus we get from \eqref{wick_expanded} that each pairing has to contain one $\bar \phi$ and one $\phi$, which yields
\begin{multline*}
\wick{\bar \phi(\tilde x_1) \cdots \bar \phi(\tilde x_p) \phi(x_1) \cdots \phi(x_p)}
\\
= \sum_{\substack{I,\tilde I \subset [p] \\ \abs{I} = \abs{\tilde I}}} \prod_{j \in \tilde I} \bar \phi(\tilde x_j) \prod_{i \in I} \phi(x_i) \, \sum_{\pi \in \cal B(I^c,\tilde I^c)} \prod_{i \in I^c} \pbb{- \int \mu_{h^{-1}}(\dd \phi) \, \bar \phi(\tilde x_{\pi(i)}) \phi(x_i)}\,,
\end{multline*}
where the sum over $\pi$ ranges over all bijections from $I^c$ to $\tilde I^c$. Integrating with respect to $\frac{1}{\zeta} \mu_{h^{-1}}(\dd \phi) \, \ee^{-W(\phi)}$ and using Wick's rule, we obtain the first identity of \eqref{Wick_kernels}.

In order to prove the second identity of \eqref{Wick_kernels}, we proceed in the same way, except that instead of \eqref{wick_expanded} we use
\begin{equation}
u_1 \cdots u_n = \sum_{\Pi \in \fra M([n])} \wick{\prod_{i \in [n] \setminus [\Pi]} u_i} \prod_{\{i,j\} \in \Pi} \pbb{\int \mu_{\cal C}(\dd u) \, u_i u_j}\,,
\end{equation}
as follows by applying the Leibniz rule to
\begin{equation*}
u_1 \cdots u_n = \frac{\partial^n}{\partial \lambda_1 \dots \partial \lambda_n}  \qBB{\pbb{\frac{\ee^{\lambda \cdot u}}{\int \mu_{\cal C}(\dd u) \, \ee^{\lambda \cdot u}}} \pbb{\int \mu_{\cal C}(\dd u) \, \ee^{\lambda \cdot u}}} \bigg\vert_{\lambda = 0}\,. \qedhere
\end{equation*}
\end{proof}

\section{Adjusting the chemical potential using the interaction} \label{sec:rho}

To explain Remark \ref{rem:rho} more precisely, let $\rho \in \R$, which we should think of as an extra parameter representing the mean density of the particles and which we can choose as large as we wish. Then we replace the Hamiltonian \eqref{Hamiltonian_H_n} by
\begin{equation*}
H_{n,\rho} \deq H_n  - \rho \lambda \hat v(0) n + \frac{\rho^2 \lambda}{2} \hat v(0) \abs{\Lambda}\,,
\end{equation*}
where $\hat v(0) = \int \dd x \, v(x)$ and $\abs{\Lambda} \deq \int \dd x$\,.
Note that, up to an additive constant, this amounts to a change in the chemical potential by $\rho \lambda \hat v(0)$. This chemical potential could in principle be absorbed into $\kappa$, but it is crucial for our arguments that the $\kappa$ that we put into the free Hamiltonian is positive. The possibly very large chemical potential $\rho \lambda \hat v(0)$ will be absorbed into the interaction, without any major changes to our argument.

To see how this works, we note that, under the replacement $H_n \mapsto H_{n,\rho}$, \eqref{def_R_kernel} becomes, by the Hubbard-Stratonovich transformation \eqref{Hubbard_Stratonovich_formula} with the choice $f(t,x) = \sum_{i = 1}^n \delta(x - \omega_i(t)) - \rho$,
\begin{equation}
 (\wt R_{n, \eta,\rho})_{\f x, \tilde {\f x}} \deq
\ee^{-\nu \kappa n} \int \mu_{\cal C_\eta}(\dd \sigma) \int \prod_{i = 1}^n \bb W^{\nu,0}_{x_i, \tilde x_i}(\dd \omega_i)  \prod_{i = 1}^n \ee^{\ii \sum_i \int_0^\nu \dd t \, \sigma(t, \omega_i(t)) - \ii \rho \q{\sigma}}\,,
\end{equation}
where
\begin{equation*}
\q{\sigma} \deq \int_0^\nu \dd t \int \dd x \, \sigma(t,x)\,.
\end{equation*}
Thus, the entire argument from Sections \ref{sec:fct_quantum}--\ref{sec:traps} can be taken over verbatim, provided one adds the extra phase $\ee^{-\ii \rho \q{\sigma}}$ in the $\sigma$-integral, replacing the measure $\mu_{\cal C_\eta}(\dd \sigma)$ with the measure $\mu_{\cal C_\eta}(\dd \sigma) \, \ee^{-\ii \rho \q{\sigma}}$.

In the formal space-time functional integral formulation of \eqref{formal_Z}, \eqref{formal_Gamma}, this modification amounts to replacing in \eqref{def_W_quantum} the density $\abs{\Phi}^2$ with $\abs{\Phi}^2 - \rho$.

\section{Proof of Lemma \ref{lem:P_cont}} \label{Proof_of_lem:P_cont}

Without loss of generality, we can assume $\tilde{\tau} = 0$ and $\tilde{x} =0$. We have to estimate
\begin{equation}\label{eq:BMex1} \frac{1}{\psi^\tau (x)} \int \dd x_1 \, \dd x_2 \, |x_1 - x_2|_L^2 \, \psi^{s} (x_1) \psi^{t-s} (x_2 - x_1) \psi^{\tau-t} (x-x_2)\,.
\end{equation}
For $x_1, x_2 \in \Lambda$ we estimate 
\[ \begin{split} |x_1 - x_2|_L^2 \, \psi^{t-s} (x_2-x_1) &= (2\pi (t-s))^{-d/2} \sum_{n \in \Z^d} |x_1 - x_2|_L^2 \, \ee^{-|x_1 -x_2 -Ln|^2/ 2(t-s)} \\ &\leq (2\pi (t-s))^{-d/2} \sum_{n \in \Z^d} |x_1 - x_2- L n |^2 \, \ee^{-|x_1 -x_2 -Ln|^2/ 2(t-s)} \\ &= 2 (t-s)^2 \, \frac{\dd \psi^{t-s}}{\dd t} (x_2 -x_1) + d (t-s) \, \psi^{t-s} (x_2 - x_1) \,.
\end{split} \]
Plugging this into \eqref{eq:BMex1} yields
\begin{equation}\label{eq:bd1} \begin{split} \frac{1}{\psi^\tau (x)} &\int \dd x_1 \dd x_2 \, |x_1 - x_2|_L^2 \psi^{s} (x_1) \psi^{t-s} (x_2 - x_1) \psi^{\tau-t} (x-x_2) \\ \leq \; &d (t-s) +  \frac{2 (t-s)^2}{\psi^\tau (x)}  \int \dd x_1 \dd x_2 \, \psi^{s} (x_1) \frac{\dd \psi^{t-s}}{\dd t} (x_2 - x_1) \psi^{\tau-t} (x-x_2) \\
=\; &d (t-s) +  \frac{2 (t-s)^2}{\psi^\tau (x)}   \frac{\dd}{\dd t} \left[ \int \dd x_1 \dd x_2 \, \psi^{s} (x_1) \psi^{t-s} (x_2 - x_1) \psi^{\tau-t} (x-x_2) \right] \\ &-  \frac{2 (t-s)^2}{\psi^\tau (x)}   \int \dd x_1 \dd x_2 \, \psi^{s} (x_1) \psi^{t-s} (x_2 - x_1)  \frac{\dd \psi^{\tau-t}}{\dd t} (x-x_2) \\
=\; & d (t-s) + 2 (t-s)^2 \frac{\dd\psi^\tau (x) / \dd\tau}{\psi^\tau (x)} \,.
\end{split} \end{equation}
To estimate 
\[ \frac{\dd \psi^\tau (x)/ \dd\tau}{\psi^\tau (x)} = -\frac{d}{2\tau} + \frac{1}{\psi^\tau (x)} \sum_{n \in \Z^d} \frac{|x-nL |^2}{2\tau^2} \frac{\ee^{-|x-nL|^2/2\tau}}{  (2\pi \tau)^{d/2} } \]
we consider the two cases (i) $L / \sqrt{\tau} \leq 1$ and (ii) $L / \sqrt{\tau} > 1$. For case (i), we use a Riemann sum approximation to get the bound
\begin{equation*}
\frac{1}{\psi^\tau (x)} \sum_{n \in \Z^d} \frac{|x-nL |^2}{2\tau^2} \frac{\ee^{-|x-nL|^2/2\tau}}{  (2\pi \tau)^{d/2} } \leq \frac{C}{\tau} \frac{\int_{\R^d} \dd y \, \abs{x /\sqrt{\tau} - y}^2 \ee^{-\abs{x / \sqrt{\tau} - y}^2/2}}{\int_{\R^d} \dd y \, \ee^{-\abs{x / \sqrt{\tau} - y}^2/2}} = \frac{C}{\tau}\,.
\end{equation*}
For case (ii), we estimate
\begin{equation*}
\frac{1}{\psi^\tau (x)} \sum_{n \in \Z^d} \frac{|x-nL |^2}{2\tau^2} \frac{\ee^{-|x-nL|^2/2\tau}}{  (2\pi \tau)^{d/2} } \leq C \frac{|x|^2}{\tau^2} + \frac{1}{\psi^\tau (x)} \sum_{n \in \Z^d : |n| L > 3 |x|} \frac{\abs{n}^2 L^2}{\tau^2}  \frac{\ee^{-|x-nL|^2/2\tau}}{  (2\pi \tau)^{d/2} } 
\end{equation*}
Estimating $\psi^\tau (x) \geq e^{-x^2/2\tau} / (2\pi \tau)^{d/2}$ and using that $|x-nL|^2 - |x|^2 \geq \abs{n}^2 L^2 /3$ for $n\in \Z^d$ with $|n|L > 3 |x|$, we obtain
\[ \frac{1}{\psi^\tau (x)} \sum_{n \in \Z^d} \frac{|x-nL |^2}{2\tau^2} \frac{\ee^{-|x-nL|^2/2\tau}}{  (2\pi \tau)^{d/2} } \leq C \frac{|x|^2}{\tau^2} + \frac{C}{\tau}  \sum_{n \in \Z^d}  e^{-\abs{n}^2 L^2 / 8 \tau} \leq C \left[ \frac{x^2}{\tau^2} + \frac{1}{\tau} \right] \]
since $L^2 / \tau > 1$. Recalling \eqref{eq:bd1} and using that $(t-s) \leq \tau$, we obtain the claim. \qed

{\small

\bibliography{bibliography} 

\providecommand{\bysame}{\leavevmode\hbox to3em{\hrulefill}\thinspace}
\providecommand{\MR}{\relax\ifhmode\unskip\space\fi MR }
\providecommand{\MRhref}[2]{%
  \href{http://www.ams.org/mathscinet-getitem?mr=#1}{#2}
}
\providecommand{\href}[2]{#2}
\begin{thebibliography}{10}

\bibitem{Bach}
V.~Bach, \emph{{I}onization energies of bosonic {C}oulomb systems}, Letters in
  Mathematical Physics \textbf{21} (1991), no.~2, 139--149.

\bibitem{balaban2008functional}
T.~Balaban, J.~Feldman, H.~Kn{\"o}rrer, and E.~Trubowitz, \emph{A functional
  integral representation for many boson systems {I}: {T}he partition
  function}, Ann. Henri Poincar{\'e} \textbf{9} (2008), no.~7, 1229--1273.

\bibitem{balaban2008functional2}
\bysame, \emph{A functional integral representation for many boson systems
  {II}: {C}orrelation functions}, Ann. Henri Poincar{\'e} \textbf{9} (2008),
  no.~7, 1275--1307.

\bibitem{BL}
R.~Benguria and E.H. Lieb, \emph{Proof of the stability of highly negative ions
  in the absence of the pauli principle}, The Stability of Matter: From Atoms
  to Stars, Springer, 1991, pp.~82--85.

\bibitem{bourgain1994_z}
J.~Bourgain, \emph{{O}n the {C}auchy problem and invariant measure problem for
  the periodic {Z}akharov system}, Duke Math. J. \textbf{76} (1994), no.~1,
  175--202.

\bibitem{bourgain1994periodic}
\bysame, \emph{Periodic nonlinear {S}chr{\"o}dinger equation and invariant
  measures}, Comm. Math. Phys. \textbf{166} (1994), no.~1, 1--26.

\bibitem{bourgain1996_2d}
\bysame, \emph{Invariant measures for the {2D}–defocusing nonlinear
  {S}chr\"odinger equation}, Comm. Math. Phys. \textbf{176} (1996), no.~8,
  421--445.

\bibitem{bourgain1997invariant}
\bysame, \emph{Invariant measures for the {G}ross-{P}itaevskii equation}, J.
  Math. Pures et Appliqu{\'e}es \textbf{76} (1997), no.~8, 649--702.

\bibitem{bourgain2000_infinite_volume}
\bysame, \emph{{I}nvariant measures for {NLS} in infinite volume}, Comm. Math.
  Phys. \textbf{210} (2000), no.~3, 605--620.

\bibitem{BourgainBulut4}
J.~Bourgain and A.~Bulut, \emph{{A}lmost sure global well posedness for the
  radial nonlinear {S}chr\"{o}dinger equation on the unit ball {II}: the 3d
  case}, J. Eur. Math. Soc. \textbf{16} (2014), no.~6, 1289--1325.

\bibitem{BrydgesSlade}
D.C. Brydges and G.~Slade, \emph{{S}tatistical mechanics of the 2-dimensional
  focusing nonlinear {S}chr\"{o}dinger equation}, Comm. Math. Phys.
  \textbf{182} (1996), no.~2, 485--504.

\bibitem{BurqThomannTzvetkov}
N.~Burq, L.~Thomann, and N.~Tzvetkov, \emph{{R}emarks on the {G}ibbs measures
  for nonlinear dispersive equations}, Ann. Fac. Sci. Toulouse Math.
  \textbf{(6) 27} (2018), no.~3, 527--597.

\bibitem{cameron1962ilstow}
R.H. Cameron, \emph{The {I}lstow and {F}eynman integrals}, J. d'Analyse Math.
  \textbf{10} (1962), no.~1, 287--361.

\bibitem{Carlen_Froehlich_Lebowitz_2016}
E.~Carlen, J.~Fr{\"o}hlich, and J.~Lebowitz, \emph{{E}xponential relaxation to
  equilibrium for a one--dimensional focusing non--linear {S}chr{\"o}dinger
  equation with noise}, Comm. Math. Phys. \textbf{342} (2016), no.~1, 303--332.

\bibitem{Carlen_Froehlich_Lebowitz_Wang_2019}
E.~Carlen, J.~Fr{\"o}hlich, J.~Lebowitz, and W.-M. Wang, \emph{{Q}uantitative
  bounds on the rate of approach to equilibrium for some one--dimensional
  stochastic nonlinear {S}chr{\"o}dinger equations}, Nonlinearity \textbf{32}
  (2016), no.~4, 1352--1374.

\bibitem{froh_houches}
T.~Chen, J.~Fr{\"o}hlich, and M.~Seifert, \emph{Renormalization group methods:
  {L}andau-{F}ermi liquid and {BCS} superconductor}, Proceedings of Les Houches
  summer school, 1994.

\bibitem{da2003strong}
G.~Da~Prato and A.~Debussche, \emph{Strong solutions to the stochastic
  quantization equations}, Ann. Prob. \textbf{31} (2003), no.~4, 1900--1916.

\bibitem{BFLP06}
J.~Dolbeault, P.~Felmer, M.~Loss, and E.~Paturel, \emph{{L}ieb–{T}hirring
  type inequalities and {G}agliardo–{N}irenberg inequalities for systems}, J.
  Funct. Anal. \textbf{238} (2006), no.~1, 193 -- 220.

\bibitem{FOSW}
C.~Fan, Y.~Ou, G.~Staffilani, and H.~Wang, \emph{{2D}-{D}efocusing nonlinear
  {S}chr\"{o}dinger equation with random data on irrational tori}, Preprint
  arXiv:1910.03199 (2019).

\bibitem{FSV}
M.~Fannes, H.~Spohn, and A.~Verbeure, \emph{Equilibrium states for mean field
  models}, Journal of Mathematical Physics \textbf{21} (1980), no.~2, 355--358.

\bibitem{froh_houches2}
J~Fr{\"o}hlich, \emph{Quantum theory from small to large scales}, Proceedings
  of Les Houches summer school (J.~Fr{\"o}hlich, M.~Salmhofer, V.~Mastropietro,
  W.~De Roeck, and L.F. Cugliandolo, eds.), 2010.

\bibitem{frohlich2017gibbs}
J.~Fr{\"o}hlich, A.~Knowles, B.~Schlein, and V.~Sohinger, \emph{Gibbs measures
  of nonlinear {S}chr{\"o}dinger equations as limits of many-body quantum
  states in dimensions $d \leq 3$}, Comm. Math. Phys. \textbf{356} (2017),
  no.~3, 883--980.

\bibitem{frohlich2019microscopic}
\bysame, \emph{A microscopic derivation of time-dependent correlation functions
  of the {1D} cubic nonlinear {S}chr{\"o}dinger equation}, Adv. Math.
  \textbf{353} (2019), 67--115.

\bibitem{frohlich2020path}
\bysame, \emph{A path-integral analysis of interacting {B}ose gases and loop
  gases}, J. Stat. Phys. \textbf{180} (2020), no.~1, 810--831.

\bibitem{GLV1}
G.~Genovese, R.~Luc\`{a}, and D.~Valeri, \emph{{G}ibbs measures associated to
  the integrals of motion of the periodic derivative nonlinear
  {S}chr\"{o}dinger equation}, Selecta Math. (N.S.) \textbf{22} (2016), no.~3,
  1663--1702.

\bibitem{GLV2}
\bysame, \emph{{I}nvariant measures for the periodic derivative nonlinear
  {S}chr\"{o}dinger equation}, Math. Ann. \textbf{374} (2019), no.~3--4,
  1075--1138.

\bibitem{ginibre1971some}
J.~Ginibre, \emph{Some applications of functional integration in statistical
  mechanics}, M\'ecanique statistique et th\'eorie quantique des champs, Les
  {H}ouches, 1971, pp.~327--427.

\bibitem{glimm2012quantum}
J.~Glimm and A.~Jaffe, \emph{Quantum physics: a functional integral point of
  view}, Springer, 2012.

\bibitem{GrSei}
P.~Grech and R.~Seiringer, \emph{The excitation spectrum for weakly interacting
  bosons in a trap}, Communications in Mathematical Physics \textbf{322}
  (2013), no.~2, 559--591.

\bibitem{gubinelli2015paracontrolled}
M.~Gubinelli, P.~Imkeller, and N.~Perkowski, \emph{Paracontrolled distributions
  and singular {PDE}s}, Forum of Mathematics, Pi, vol.~3, 2015.

\bibitem{hairer2014theory}
M.~Hairer, \emph{A theory of regularity structures}, Invent. Math. \textbf{198}
  (2014), no.~2, 269--504.

\bibitem{hepp}
K.~Hepp, \emph{The classical limit for quantum mechanical correlation
  functions}, Communications in Mathematical Physics \textbf{35} (1974), no.~4,
  265--277.

\bibitem{Kies}
M.~K.-H. Kiessling, \emph{The {H}artree limit of {B}orn's ensemble for the
  ground state of a bosonic atom or ion}, Journal of Mathematical Physics
  \textbf{53} (2012), no.~9, 095223.

\bibitem{kupiainen2016renormalization}
A.~Kupiainen, \emph{Renormalization group and stochastic {PDE}s}, Ann. Henri
  Poincar{\'e} \textbf{17} (2016), no.~3, 497--535.

\bibitem{lebowitz1988statistical}
J.L. Lebowitz, H.A. Rose, and E.R. Speer, \emph{Statistical mechanics of the
  nonlinear {S}chr{\"o}dinger equation}, J. Stat. Phys. \textbf{50} (1988),
  no.~3-4, 657--687.

\bibitem{LNR0}
M.~Lewin, P.T. Nam, and N.~Rougerie, \emph{Derivation of {H}artree's theory for
  generic mean-field bose systems}, Advances in Mathematics \textbf{254}
  (2014), 570--621.

\bibitem{lewin2015derivation}
\bysame, \emph{Derivation of nonlinear {G}ibbs measures from many-body quantum
  mechanics}, Journal de l’{\'E}cole polytechnique-{M}ath{\'e}matiques
  \textbf{2} (2015), 65--115.

\bibitem{LNR2}
\bysame, \emph{Classical field theory limit of 2{D} many-body quantum {G}ibbs
  states}, arXiv preprint 1810.08370v2 (2018).

\bibitem{lewin2018gibbs}
\bysame, \emph{Gibbs measures based on 1{D} (an)harmonic oscillators as
  mean-field limits}, J. Math. Phys. \textbf{59} (2018), no.~4, 041901.

\bibitem{LNR3}
\bysame, \emph{Classical field theory limit of many-body quantum {G}ibbs states
  in 2{D} and 3{D}}, arXiv preprint 1810.08370 (2020).

\bibitem{LNSS}
M.~Lewin, P.T. Nam, S.~Serfaty, and J.P. Solovej, \emph{Bogoliubov spectrum of
  interacting bose gases}, Communications on Pure and Applied Mathematics
  \textbf{68} (2015), no.~3, 413--471.

\bibitem{lieb2006mathematics}
E.H. Lieb, R.~Seiringer, J.P. Solovej, and J.~Yngvason, \emph{The mathematics
  of the {B}ose gas and its condensation; {O}berwolfach seminar series}, 2006.

\bibitem{LY}
E.H. Lieb and H.-T. Yau, \emph{The {C}handrasekhar theory of stellar collapse
  as the limit of quantum mechanics}, The Stability of Matter: From Atoms to
  Stars, Springer, 2001, pp.~443--470.

\bibitem{McKean_Vaninsky2}
H.P McKean and K.L. Vaninsky, \emph{{C}ubic {S}chr\"{o}dinger: the petit
  canonical ensemble in action–angle variables}, Comm. Pure Appl. Math.
  \textbf{50} (1997), no.~7, 593--622.

\bibitem{Moshe_Zinn-Justin}
M.~Moshe and J.~Zinn-Justin, \emph{{Q}uantum field theory in the large {N}
  limit: a review}, Physics Reports \textbf{385} (2003), 69--228.

\bibitem{NORBS}
A.~Nahmod, T.~Oh, L.~Rey-Bellet, and G.~Staffilani, \emph{{I}nvariant weighted
  {W}iener measures and almost sure global well-posedness for the periodic
  derivative {NLS}}, J. Eur. Math. Soc. \textbf{14} (2012), no.~4, 1275--1330.

\bibitem{NRBSS}
A.~Nahmod, L.~Rey-Bellet, S.~Sheffield, and G.~Staffilani, \emph{{A}bsolute
  continuity of {B}rownian bridges under certain gauge transformations}, Math.
  Res. Lett. \textbf{18} (2011), no.~5, 875--887.

\bibitem{nelson1966derivation}
E.~Nelson, \emph{Derivation of the {S}chr{\"o}dinger equation from {N}ewtonian
  mechanics}, Phys. Rev. \textbf{150} (1966), no.~4, 1079.

\bibitem{nelson1973free}
\bysame, \emph{The free {M}arkoff field}, J. Functional Anal. \textbf{12}
  (1973), no.~2, 211--227.

\bibitem{parisi1981perturbation}
G.~Parisi and Y.S. Wu, \emph{Perturbation theory without gauge fixing}, Sci.
  Sin. \textbf{24} (1981), no.~4, 483--496.

\bibitem{Pizzo17}
A.~Pizzo, \emph{Bose particles in a box {I}-{III}}, Preprints arXiv:1511.07022,
  arXiv:1511.07025, arXiv:1511.07026 (2015).

\bibitem{RW}
G.A. Raggio and R.F. Werner, \emph{Quantum statistical mechanics of general
  mean field systems}, Helv. Phys. Acta \textbf{62} (1989), 980--1003.

\bibitem{RS1}
M.~Reed and B.~Simon, \emph{Methods of modern mathematical physics {I}:
  Functional analysis}, Academic {P}ress, 1980.

\bibitem{seir}
R.~Seiringer, \emph{The excitation spectrum for weakly interacting bosons},
  Communications in mathematical physics \textbf{306} (2011), no.~2, 565--578.

\bibitem{Simon74}
B.~Simon, \emph{The ${P}(\phi)_2$ {E}uclidean ({Q}uantum) {F}ield {T}heory},
  Princeton University Press, 1974.

\bibitem{S19}
V.~Sohinger, \emph{A microscopic derivation of {G}ibbs measures for nonlinear
  {S}chr\"odinger equations with unbounded interaction potentials}, Preprint
  arXiv:1904.08137 (2019).

\bibitem{sol}
J.P. Solovej, \emph{Asymptotics for bosonic atoms}, Letters in Mathematical
  Physics \textbf{20} (1990), no.~2, 165--172.

\bibitem{Sym68}
K.~Symanzik, \emph{Euclidean quantum field theory}, Proceedings of the physics
  school on local quantum theory, Varenna (R.~Jost, ed.), 1968, pp.~152--226.

\bibitem{Thomann_Tzvetkov}
L.~Thomann and N.~Tzvetkov, \emph{{G}ibbs measure for the periodic derivative
  nonlinear {S}chr\"odinger equation}, Nonlinearity \textbf{23} (2010), no.~11,
  2771--2791.

\end{thebibliography}
\bibliographystyle{amsplain}
}

\bigskip

\noindent
J\"urg Fr\"ohlich, ETH Z\"urich, Institute for Theoretical Physics, \href{mailto:juerg@phys.ethz.ch}{juerg@phys.ethz.ch}.
\\[0.3em]
Antti Knowles, University of Geneva, Section of Mathematics, \href{mailto:antti.knowles@unige.ch}{antti.knowles@unige.ch}.
\\[0.3em]
Benjamin Schlein, University of Z\"urich, Institute of Mathematics, \href{mailto:benjamin.schlein@math.uzh.ch}{benjamin.schlein@math.uzh.ch}.
\\[0.3em]
Vedran Sohinger, University of Warwick, Mathematics Institute, \href{mailto:V.Sohinger@warwick.ac.uk}{V.Sohinger@warwick.ac.uk}.

\bigskip

\paragraph{Acknowledgements}
We thank David Brydges, Mathieu Lewin, Phan Th\`anh Nam, Nicolas Rougerie, and Daniel Ueltschi for stimulating discussions.
AK gratefully acknowledges the support of the European Research Council through the RandMat grant and of the Swiss National Science Foundation through the NCCR SwissMAP. BS gratefully acknowledges partial support from the NCCR SwissMAP, from the Swiss National Science Foundation through the Grant ``Dynamical and energetic properties of Bose-Einstein condensates'' and from the European Research Council through the ERC-AdG CLaQS.

\end{document}